\documentclass[11pt]{article}

\usepackage{algorithm}
\usepackage[noend]{algorithmic}

\usepackage{thmtools}
\usepackage{thm-restate}

\usepackage{bbm}
\usepackage{tikz, tikz-3dplot}
\usepackage{wrapfig}

\usepackage{amsfonts}
\usepackage{amsmath}
\usepackage{amsthm}
\usepackage[utf8]{inputenc}
\usepackage[margin=1in]{geometry}
\usepackage[numbers]{natbib}
\usepackage{tikz, tikz-3dplot}
\usepackage{wrapfig}

\newcommand{\calM}{\ensuremath{\mathcal{M}}}

\newcommand{\calO}{\ensuremath{\mathcal{O}}}

\newcommand{\calX}{\ensuremath{\mathcal{X}}}

\newtheorem{lemma}{Lemma}[section]
\newtheorem{theorem}[lemma]{Theorem}

\newtheorem{problem}[lemma]{Problem}
\newtheorem{definition}[lemma]{Definition}

\newtheorem{assumption}[lemma]{Assumption}
\newtheorem{claim}[lemma]{Claim}

\newcommand{\betad}[2]{\mathsf{Beta\left(#1, #2\right)}}

\newcommand{\countball}{B_{\mathsf{count}}}

\newcommand{\E}[2]{\mathbb{E}_{#1}\left[ #2 \right]}
\newcommand{\enc}[1]{\mathsf{Enc\left(#1\right)}}

\newcommand{\eps}{\varepsilon}

\newcommand{\gammad}[2]{\mathsf{Gamma}\left(#1, #2\right)}

\newcommand{\indic}[1]{\mathbbm{1}_{#1}}

\newcommand{\ord}{\mbox{ord}}
\renewcommand{\P}[2]{\mathbb{P}_{#1}\left[#2\right]}

\newcommand{\sumball}{B_{\mathsf{sum}}}

\newcommand{\voteball}{B_{\mathsf{vote}}}

\newcommand{\arxiv}[1]{}

\usepackage[colorlinks, citecolor=blue, linkcolor=blue, urlcolor=blue]{hyperref}
\usepackage{cleveref} 

\title{Some Constructions of Private, Efficient, and Optimal \\ $K$-Norm and Elliptic Gaussian Noise}
\author{Matthew Joseph\thanks{mtjoseph@google.com. Google Research.} \and Alexander Yu\thanks{alexjyu@google.com. Google Research.}}

\begin{document}

\maketitle

\begin{abstract}
    Differentially private computation often begins with a bound on some $d$-dimensional statistic's $\ell_p$ sensitivity. For pure differential privacy, the $K$-norm mechanism can improve on this approach using a norm tailored to the statistic's sensitivity space. Writing down a closed-form description of this optimal norm is often straightforward. However, running the $K$-norm mechanism reduces to uniformly sampling the norm's unit ball; this ball is a $d$-dimensional convex body, so general sampling algorithms can be slow. Turning to concentrated differential privacy, elliptic Gaussian noise offers similar improvement over spherical Gaussian noise. Once the shape of this ellipse is determined, sampling is easy; however, identifying the best such shape may be hard.
    
    This paper solves both problems for the simple statistics of sum, count, and vote. For each statistic, we provide a sampler for the optimal $K$-norm mechanism that runs in time $\tilde O(d^2)$ and derive a closed-form expression for the optimal shape of elliptic Gaussian noise. The resulting algorithms all yield meaningful accuracy improvements while remaining fast and simple enough to be practical. More broadly, we suggest that problem-specific sensitivity space analysis may be an overlooked tool for private additive noise.
\end{abstract}

\section{Introduction}
\label{sec:intro}
The Laplace mechanism~\citep{DMNS06} is a canonical method for computing pure differentially private (DP) statistics. \citet{HT10} showed that it can be viewed as the $K$-norm mechanism, which takes an input database $X$ and privately computes a $d$-dimensional statistic $T$ with $\|\cdot\|$-sensitivity $\Delta$ by outputting a draw from the density $f_X(y) \propto \exp\left(-\frac{\eps}{\Delta} \cdot \|y - T(X)\|\right)$, instantiated with the $\ell_1$ norm.~\citet{AS21} studied the choice of the optimal norm for $T$ and showed that it is uniquely determined by $T$'s sensitivity space, $S(T) = \{T(X) - T(X') \in \mathbb{R}^d \mid X, X' \text{ are neighbors}\}$. If the convex hull of $S(T)$ induces a norm, then it is the optimal norm.

Once a norm has been selected,~\citet{HT10} showed that sampling the $K$-norm mechanism reduces to uniformly sampling the norm unit ball and gave a black-box application of general results for sampling convex bodies. However, repeating this analysis with recent faster samplers tailored to convex polytopes~\citep{LLV20} only improves its arithmetic complexity to $\tilde O(d^{3+\omega})$ ($\omega \geq 2$ is the matrix multiplication exponent; see \Cref{subsec:prelims_k_norm} for details). Sampling the $K$-norm mechanism is therefore impractical for all but the smallest problems.

Turning to concentrated DP, a standard approach is to add spherical Gaussian noise calibrated to a statistic's $\ell_2$ sensitivity. Less coarsely, elliptic Gaussian noise~\citep{NTZ13} tailored to the statistic's sensitivity space is nearly instance optimal~\citep{NT23}. Sampling the noise is easy once its shape has been determined, but determining the best shape reduces to finding the minimum ellipse containing the sensitivity space. The general solution for this problem solves a semidefinite program~\citep{ENU20, NT23} for each $d$ and is only known to be approximately optimal in poly$(d)$ time for certain restricted classes of polytopes. Moreover, even for these classes, the polynomial has an impractically large degree (see \Cref{subsec:prelims_ellipse} for details).

\subsection{Contributions}
\label{subsec:contributions}
We consider three realistic problems: Sum, Count, and Vote. Short descriptions of these problems and results appear below. Throughout, the overall statistic $T$ is simply a linear query over points in the database, but the different assumptions about the data yield different sampling problems.

\begin{problem}[Sum]
\label{problem:sum}
    Each data point $x_i \in \mathbb{R}^d$ has $\|x_i\|_0 \leq k$ and $\|x_i\|_\infty \leq b$, i.e., each user contributes to at most $k$ quantities, and affects each by at most $b$. Systems employed by Google~\citep{WZLDS+20, AGJKV23} and LinkedIn~\citep{RSPDL+20} rely on similar ``contribution bounding'' to compute user-level private statistics.
\end{problem}

\begin{problem}[Count]
\label{problem:count}
    This is Sum with an additional nonnegativity constraint. It includes the histogram and top-$k$ problems used as running examples in the papers referenced in \Cref{problem:sum}.
\end{problem}

\begin{problem}[Vote]
\label{problem:vote}
    Each vector $x_i$ is a permutation of $(0, 1, \ldots, d-1)$. This encodes a setting where users rank $d$ options, and ranks are summed across users to vote. This process is used in several real-world voting systems~\citep{F14, B23}.
\end{problem}

All three problems have sensitivity spaces that yield non-$\ell_p$ optimal norm balls. Our first contribution is constructing efficient samplers for each one. This suffices to efficiently implement the optimal $K$-norm mechanisms (see \Cref{subsec:prelims_k_norm}). We also show that rejection sampling these norm balls is inefficient.

\begin{theorem}[Informal]
\label{thm:informal}
    The optimal $K$-norm mechanisms for Sum, Count, and Vote can be sampled in time $O(d^2)$, $O(d^2\log(d))$, and $O(d^2\log(d))$, respectively. Moreover, for any $p \in [1,\infty]$, rejection sampling any norm ball by sampling the $\ell_p$ ball takes time exponential in $d$.
\end{theorem}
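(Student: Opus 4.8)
The theorem packages a collection of results, so the plan is a sequence of reductions --- one per problem --- together with a volume computation for the rejection claim. The first step is to write down $\mathsf{conv}(S(T))$ explicitly for each problem, since, as recalled in the excerpt, the optimal norm is the gauge of this body and running the optimal $K$-norm mechanism reduces to sampling it uniformly. For Sum one checks $\sumball = \{y : \|y\|_\infty \le b,\ \|y\|_1 \le kb\}$, the convex hull of the $k$-sparse vectors with entries in $[-b,b]$. For Count, writing $P_+$ for the ``truncated cube'' $\{x \ge 0 : \|x\|_\infty \le b,\ \|x\|_1 \le kb\}$, one gets $\countball = \mathsf{conv}(P_+ \cup (-P_+))$, which is strictly smaller than $\sumball$. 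For Vote, $\voteball = \mathsf{conv}(\Pi \cup (-\Pi))$, where $\Pi$ is the permutohedron on $(0,1,\dots,d-1)$.

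\emph{The samplers.} The strategy is to use each body's symmetry group to reduce to a lower-complexity core. $\sumball$ has the full sign-flip $\times$ permutation symmetry, so a uniform draw factors as independent random signs times a uniform point of $P_+$, and the latter is sampled one coordinate at a time: the conditional marginal of the $j$-th coordinate is governed by a closed-form Irwin--Hall volume, for $O(d^2)$ total. For Vote, the key observation is that $\Pi$ is centrally symmetric about its center $c_0 \propto \vecone$, which is orthogonal to $\mathrm{aff}(\Pi)$; hence $\voteball = \Pi_0 \oplus [-c_0, c_0]$ is a right prism over the \emph{centered} permutohedron $\Pi_0$, and a uniform draw factors as a uniform scalar on $[-1,1]$ times a uniform point of $\Pi_0$. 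To sample $\Pi_0$, realize it as the zonotope generated by the vectors $e_i - e_j$: its fine zonotopal tiling has unit-volume cells in bijection with the $d^{d-2}$ spanning trees of $K_d$, so it suffices to draw a uniform spanning tree (Wilson's algorithm, $\tilde O(d^2)$), identify the cell's offset vertex, and sample the associated parallelepiped, for $O(d^2 \log d)$ overall. Count retains only central and permutation symmetry (no sign flips), so its reduction must additionally track how many coordinates are positive versus negative and which ones; doing this carefully yields $O(d^2 \log d)$.

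\emph{Rejection is exponential.} To sample uniformly on a convex body $B$ using a uniform sample on a scaled $\ell_p$ ball as proposal, one needs $\alpha B_p \supseteq B$, and the expected number of trials is exactly $\mathsf{vol}(\alpha B_p)/\mathsf{vol}(B)$, minimized at $\alpha = \max_{y \in B}\|y\|_p$. So it suffices to show $(\max_{y \in B}\|y\|_p)^d\,\mathsf{vol}(B_p)/\mathsf{vol}(B) = 2^{\Omega(d)}$ for each of our three bodies (with parameters in the range where the optimal norm is genuinely not an $\ell_p$ norm) and every $p \in [1,\infty]$. I would plug in closed forms: the $\ell_p$-radius of $B$ is attained at a vertex and read off directly, $\mathsf{vol}(B_p) = (2\Gamma(1+1/p))^d/\Gamma(1 + d/p)$, and $\mathsf{vol}(B)$ is an Irwin--Hall volume for Sum and Count and $\sqrt d\,d^{d-2}$ times the prism height for Vote; the remaining work is a ($p$-dependent) estimate showing such a skewed polytope is volume-approximated by any circumscribing $\ell_p$ ball only up to an exponential factor. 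For instance, at $p=\infty$ the smallest cube containing $\sumball$ has volume $(2b)^d$ while $\sumball$ has volume $(2b)^d\,\P{}{\sum_i U_i \le k}$ for i.i.d.\ uniform $U_i$, which is $e^{-\Omega(d)}$ when $k$ is a suitable constant fraction of $d$; at $p=1$ one uses that the cross-polytope $kb\cdot B_1$ puts almost all its volume in axis-aligned spikes that the constraint $\|y\|_\infty \le b$ removes.

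\emph{Main obstacle.} I expect the Vote sampler to be the crux: $\voteball$ has exponentially many vertices and facets, so general convex-body samplers are too slow and (by the second half of the theorem) rejection from an $\ell_p$ ball is exponential, so one must exploit structure. The prism-over-the-permutohedron reduction is the essential idea, and the saving grace is that the permutohedron's zonotopal tiling, although exponentially large, has equal-volume cells, so a uniform spanning tree suffices; pinning down the cell offsets cleanly is the fiddly part. The Count sampler is second hardest, because of its weaker symmetry. The rejection lower bound should be comparatively routine once the volume formulas are in hand, the only real care being needed for $p$ near $1$, where the exponential base is close to $1$.
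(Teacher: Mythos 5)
Your overall architecture (exploit symmetry, decompose into tractable cells, and bound volume ratios for the rejection claim) matches the paper's, and your Vote idea is a genuinely different and attractive route: instead of the paper's recursive star decomposition of $CH(P_d)$ into pyramids over $(d-2)$-faces (which are direct sums of subpermutohedra, triangulated via an interleaving/type-vector lemma), you use the fact that the permutohedron is the graphical zonotope of $K_d$, whose fine tiling has equal-volume parallelepiped cells indexed by the $d^{d-2}$ spanning trees, so a uniform tree (even via a uniform Pr\"ufer sequence) plus a uniform point of the cell suffices. That would likely be cleaner than the paper's argument, but the step you call ``fiddly''---specifying the translate of each cell (the Shephard/Stanley half-open decomposition with its activity-type offsets)---is exactly the content you would need to supply, so as written this part is a plan rather than a proof.

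There are, however, three genuine gaps. First, Count is essentially asserted: ``tracking how many coordinates are positive versus negative'' is not an algorithm, and the paper's Count sampler needs real new machinery---a characterization of the vertices of $\countball$ in each orthant, an orthant-volume formula in terms of Eulerian-number sums obtained by a shell integral, a $\betad{j}{d-j+1}$-distributed cross-section index, and a reduction of each cross-section to two lower-dimensional Sum samples plus a hypersimplex face. None of this follows from symmetry bookkeeping. Second, your Sum sampler is not exact: the conditional marginal of each coordinate has a CDF that is a piecewise polynomial of degree $\Theta(d)$ (an integrated Irwin--Hall volume), and inverting it has no closed form, so inverse-CDF sampling yields only approximate uniformity and hence approximate DP---precisely the deficiency the theorem is meant to avoid; the paper instead decomposes $\sumball^+$ into slices and simplices counted by Eulerian numbers, so that sampling uses only rational weights, a sorted uniform sample of the fundamental simplex, and an explicit measure-preserving map. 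Third, the rejection lower bound must hold for every $p \in [1,\infty]$, and the intermediate-$p$ case is where the work lies: for Sum/Count the paper proves (via a digamma/trigamma monotonicity analysis of the enclosing-ball volume in $p$) that the minimum over $p$ occurs at $p=1$ or $p=\infty$, and for Vote it runs a separate two-case estimate covering all $p$; your sketch treats only $p=1$ and $p=\infty$ for Sum and defers the rest to ``a $p$-dependent estimate,'' which is the crux rather than a routine remainder.
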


The Sum ball is identical across orthants, so spherical Gaussian noise is optimal. For Count and Vote, our second contribution is deriving closed-form expressions for optimal elliptic Gaussian noise. The result for Count applies only in the sparse-contribution ($k \leq d/2$) setting, while the result for Vote is unrestricted.

\begin{theorem}[Informal]
\label{thm:vote_ellipse_informal}
    The enclosing ellipses for the sparse-contribution Count and Vote norm balls that minimize expected squared $\ell_2$ norm have closed forms and can be sampled in time $O(1)$.
\end{theorem}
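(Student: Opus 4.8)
The plan is to recast ``optimal elliptic Gaussian noise'' as a minimum-trace enclosing-ellipsoid problem, observe that this is a convex program, and then exploit the permutation symmetry of the Count and Vote norm balls to collapse it to a two-parameter optimization with a closed-form solution. After folding the privacy parameter into an overall scalar, adding $Z\sim\calN(0,\Sigma)$ to a statistic with sensitivity space $S$ is zCDP-valid exactly when $v^\top\Sigma^{-1}v\le 1$ for every $v\in S$, i.e.\ when the ellipsoid $E_\Sigma=\{x:x^\top\Sigma^{-1}x\le 1\}$ contains $\mathrm{conv}(S)$, which for Count and Vote is precisely the optimal $K$-norm ball. Since $\mathbb{E}\|Z\|_2^2=\Tr(\Sigma)$, the task is to minimize $\Tr(\Sigma)$ over positive semidefinite $\Sigma$ with $E_\Sigma\supseteq\mathrm{conv}(S)$; as $\Sigma\mapsto v^\top\Sigma^{-1}v$ is convex on the PSD cone, the feasible set is convex and the objective linear, so this is a convex program whose minimum is attained.

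The next step is symmetrization. Both norm balls are invariant under the coordinate-permutation action of $S_d$, so averaging an optimal $\Sigma^\star$ over that group stays feasible (by convexity) and optimal (the trace is linear and $S_d$-invariant); hence there is an optimum of the form $\Sigma=\alpha I+\beta\vecone\vecone^\top$, which is PSD iff $\alpha\ge 0$ and $\alpha+d\beta\ge 0$. By Sherman--Morrison, $v^\top\Sigma^{-1}v=\tfrac{1}{\alpha}\|v\|_2^2-\tfrac{\beta}{\alpha(\alpha+d\beta)}\langle\vecone,v\rangle^2$ when $\Sigma$ is nonsingular, so the constraint depends on a point $v$ only through the pair $(\|v\|_2^2,\langle\vecone,v\rangle)$; since $\mathrm{conv}(S)$ is a polytope, it suffices to impose it at the extreme points. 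The problem is now to minimize $d(\alpha+\beta)$ over two scalars subject to finitely many quadratic inequalities.

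For Count, normalizing the per-coordinate bound to $1$, $\mathrm{conv}(S)$ is the difference body $P-P$ of $P=\{x\in[0,1]^d:\sum_i x_i\le k\}$. Assuming $k\le d/2$, I would show its extreme points are exactly the $\{-1,0,1\}$-vectors in which the larger of the number of $+1$'s and the number of $-1$'s equals $k$; for such a vector with $p$ entries equal to $+1$ and $q$ equal to $-1$ the constraint reads $\tfrac{p+q}{\alpha}-\tfrac{\beta(p-q)^2}{\alpha(\alpha+d\beta)}\le 1$, and a short case analysis reduces the family to the balanced vertex ($p=q=k$, which exists precisely because $2k\le d$) and the one-sided vertex ($p=k$, $q=0$). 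The first forces $\alpha\ge 2k$; then maximizing $|\beta|$ against the second is a one-variable calculation that I expect to yield $\alpha=2k$, $\beta=-\tfrac{2k}{k+d}$, hence $\Sigma^\star$ proportional to $2k\bigl(I-\tfrac{1}{k+d}\vecone\vecone^\top\bigr)$ --- positive definite, and improving on spherical noise by a factor $\tfrac{k+d-1}{k+d}$. For Vote, $\mathrm{conv}(S)$ is the difference body of the permutohedron, which lies in $H=\{x:\langle\vecone,x\rangle=0\}$; every extreme point $v$ then has $\langle\vecone,v\rangle=0$, the $\beta$-term drops out of the constraint, feasibility becomes $\alpha\ge\max_{\sigma,\tau}\|\sigma-\tau\|_2^2$ over permutations $\sigma,\tau$ of $(0,\dots,d-1)$, and minimizing the trace pushes $\beta$ to the PSD boundary $\beta=-\alpha/d$ (still a valid, if degenerate, Gaussian since the noise then lives in $H\supseteq\mathrm{conv}(S)$). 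The rearrangement inequality makes the extremal pair a permutation and its reversal, giving $\sum_{j=0}^{d-1}(2j-d+1)^2=\tfrac{d^3-d}{3}$ and $\Sigma^\star$ proportional to $\tfrac{d^3-d}{3}\bigl(I-\tfrac1d\vecone\vecone^\top\bigr)$; the degeneracy is unavoidable because $H$ carries the irreducible standard representation of $S_d$, whose only origin-centered invariant ellipsoid is a ball. In both cases $\Sigma^\star$ is an $O(1)$-size closed formula (no semidefinite program required), so the corresponding Gaussian --- obtained by adjusting a standard Gaussian with one independent Gaussian scalar along $\vecone$ according to that formula --- costs no more asymptotically than spherical noise.

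I expect the Count case to be the crux: correctly characterizing the extreme points of the difference body $P-P$ and verifying that the semi-infinite family of constraints is controlled by only the balanced and one-sided vertices. This is exactly where the sparsity hypothesis enters --- $k\le d/2$ guarantees that the balanced $\pm1$ vertex exists and is extremal, whereas for $k>d/2$ the constraint $\sum_i x_i\le k$ in the definition of $P$ becomes active, the extreme-point combinatorics change, and so does the optimal shape. Given the active vertices, the remaining two-variable optimization, the Vote diameter computation, and the sampling step are all routine.
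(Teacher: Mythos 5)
There is a genuine gap, and it occurs before any of the optimization: you have identified the wrong convex bodies. The paper works in the add--remove model (\Cref{def:dp}), so the Count and Vote sensitivity hulls are $\countball = CH(V_+ \cup -V_+)$ (\Cref{lem:count_sensitivity_space}) and $\voteball = CH(P_d \cup -P_d)$ (\Cref{lem:vote_sensitivity_space}), i.e.\ convex hulls of a record set and its negation --- not the difference bodies $P-P$ and $CH(P_d)-CH(P_d)$, which would be the right objects under a substitution neighboring relation. For Count this matters because $P-P$ strictly contains $\countball$: mixed-sign vertices such as $e_1-e_2$ lie in $P-P$ but not in $\countball$, whose vertices all have uniformly signed coordinates. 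Your ellipse therefore encloses $\countball$ but is not its minimum ellipse, and its closed form disagrees with \Cref{thm:count_ellipse}; a sanity check is $k=1$, where $\countball$ is the $\ell_1$ ball whose minimum ellipse is the unit sphere (the paper's formula gives $a_1=a_2=1$), while your $\Sigma^\star \propto 2k\bigl(I-\tfrac{1}{k+d}\vecone\vecone^\top\bigr)$ is a strongly flattened ellipsoid. Relatedly, your explanation of where $k\le d/2$ enters (existence of the balanced $\pm1$ vertex of $P-P$) is not the mechanism in the paper: there it is that the cross-section radius $\sqrt{j(d-j)/d}$ of $\countball$ increases up to $j=d/2$, forcing contact at the $j=k$ vertices (\Cref{lem:ellipse_contacts}). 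For Vote the error is worse: $\voteball$ is a full-dimensional cylinder whose vertices have coordinate sum $\pm d(d-1)/2$, so your degenerate $\Sigma^\star$ supported on the hyperplane $\sum_i x_i = 0$ does not enclose $\voteball$ at all, and the corresponding Gaussian would not mask the addition or removal of a single vote. The paper's answer is nondegenerate, with contact at all permutohedron vertices and axis data $\|w_1\|_2 = \tfrac{(d-1)\sqrt d}{2}$, $\|w_2\|_2 = \sqrt{d(d^2-1)/12}$ (your diameter computation recovers $2\|w_2\|_2$, consistent with having centered the permutohedron instead of keeping the off-hyperplane component).

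On method: your symmetrization of a trace-minimal enclosing ellipsoid over the $S_d$ action, followed by Sherman--Morrison to reduce to two scalars, is a legitimate and arguably cleaner alternative to the paper's route (uniqueness via Courant--Fischer in \Cref{lem:count_ellipse_unique} plus reflection-eigenvector analysis in \Cref{lem:axes_directions}), and the objectives agree since both reduce to minimizing $\sum_j a_j^2$. But the proposal as written is not a proof of the stated theorem: you would need to redo the extreme-point and active-constraint analysis for the correct bodies $\countball$ and $\voteball$ (and also justify, rather than assert, the ``short case analysis'' selecting the binding vertices), at which point the constants change to those of \Cref{thm:count_ellipse} and the Vote theorem.
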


Simulations (\Cref{fig:simulations}) show that the five algorithms yield nontrivial error improvements. Based on these results, the primary conceptual message of this paper is that problem-specific sensitivity space analysis is ``worth it'' to obtain practical algorithms.

\begin{figure}[t!]
    \centering
    \includegraphics[scale=0.5]{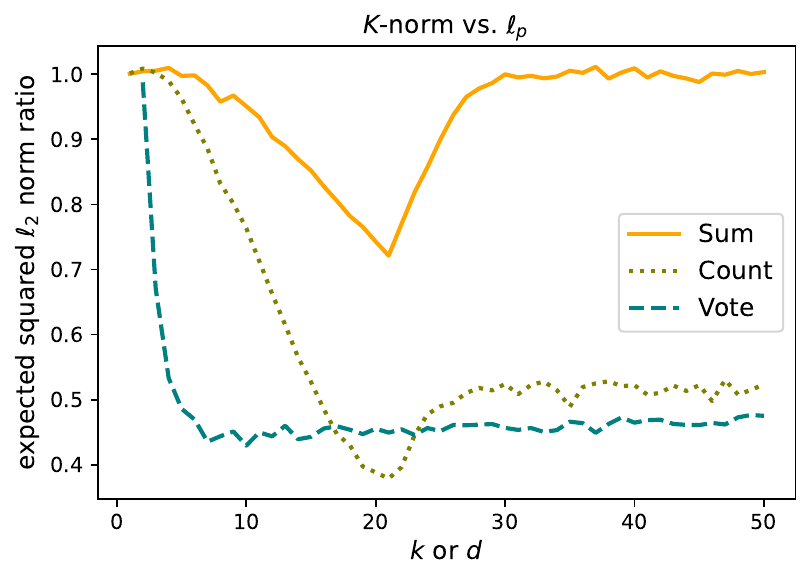}
    \includegraphics[scale=0.5]{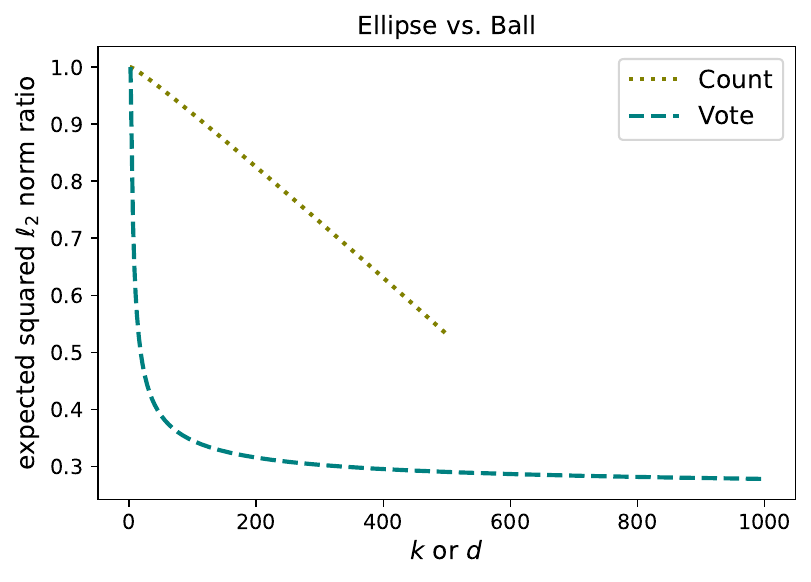}
    \caption{Mean squared $\ell_2$ error ratios. The privacy parameter $\eps$ or $\rho$ controls the scaling of a sample from the induced norm ball ($K$-norm mechanism) or ellipse (elliptic Gaussian noise), so we simply compare expected sample magnitudes for the underlying shapes. For the $K$-norm mechanism (left), we evaluate Sum and Count with dimension $d=50$ and varying contribution bound $k$. We also evaluate Vote, varying $d$ up to $d=50$ (note that Vote does not have a $k$ parameter). Each point compares to the best $\ell_p$ ball at the current parameter over 1,000 trials. For elliptic Gaussian noise (right), we compare to the minimum $\ell_2$ ball, fixing $d=1,000$ and varying $k$ for Count and varying $d$ up to $d=1,000$ for Vote, using closed-form expressions for the expected squared $\ell_2$ norm of a sample from the ellipse or ball in question. The Count ellipse plot covers $k \leq d/2$ because its minimal ellipse result only holds for this sparse-contribution setting. Throughout, a value $< 1$ means our algorithm is better. See Github~\cite{G24} for simulation code.}
    \label{fig:simulations}
\end{figure}

\subsection{Related Work}
\label{subsec:related_work}
Previous work gave efficient samplers for the $K$-norm mechanism using $\ell_2$~\citep{YRUF14} and $\ell_\infty$~\citep{SU16} norms, and efficiently sampling general $\ell_p$ balls reduces to sampling exponential and generalized gamma distributions~\citep{BGMN05}. \citet{HT10} and \citet{BDKT12} introduced better variants of the $K$-norm mechanism when the norm ball is far from isotropic position. However, the former's recursive algorithm relies on repeated estimation of the covariance matrices associated with ``smaller'' versions of the original norm ball, requiring $O(d^4)$ norm ball samples in total. The latter's algorithm requires sampling a randomly perturbed convex body, which falls back on the $O(d^{3+\omega})$ complexity for sampling a general convex body.

A similar line of work has studied private query answering. A common general strategy transforms a collection of queries, privately answers the new queries with oblivious (and typically Laplace or Gaussian) noise, and then translates the results back to the original collection. Solutions in this class include projection~\citep{NTZ13, N23B}, matrix~\citep{LMHMR15, MMHM18}, and factorization~\citep{ENU20, NT23} mechanisms. Instead of computing a better workload of queries to answer with a standard noise distribution, our application of the $K$-norm mechanism instead focuses on answering a single query with a non-standard noise distribution. Our derivations of elliptic Gaussian noise may be viewed as exact, efficient solutions for the optimal workload.

Finally, Vote has been studied in the context of private ranking~\citep{HEM17, AGKM22}. The nonadaptive algorithms in both works are improved by replacing their Laplace and Gaussian noise distributions with our $K$-norm and elliptic Gaussian noise.
\section{Preliminaries}
\label{sec:prelims}
We start with preliminaries from differential privacy. We use both pure and concentrated differential privacy, in the add-remove model.

\begin{definition}[\citet{DMNS06, BS16}]
\label{def:dp}
    Databases $X,X'$ from data domain $\calX$ are \emph{neighbors} $X \sim X'$ if they differ in the presence or absence of a single record. A randomized mechanism $\calM:\calX \to \calO$ is \emph{$\eps$-differentially private} (DP) if for all $X \sim X'\in \calX$ and any $S\subseteq \calO$, $\P{\calM}{\calM(X) \in S} \leq e^{\eps}\P{\calM}{\calM(X') \in S}$. Letting $D_\alpha$ denote $\alpha$-Renyi divergence, a randomized mechanism $\calM:\calX \to \calO$ is \emph{$\rho$-(zero) concentrated differentially private} (CDP) if for all $X \sim X'\in \calX$ and all $\alpha > 1$, $D_\alpha(M(X) \|\| M(X')) \leq \rho \alpha$.
\end{definition}

\subsection{$K$-Norm Mechanism}
\label{subsec:prelims_k_norm}

\begin{lemma}[\citet{HT10}]
\label{lem:k_norm_dp}
    Given statistic $T$ with $\|\cdot\|$-sensitivity $\Delta$ and database $X$, the \emph{$K$-norm mechanism} has output density $f_X(y) \propto \exp\left(-\frac{\eps}{\Delta} \cdot \|y - T(X)\|\right)$ and satisfies $\eps$-DP.
\end{lemma}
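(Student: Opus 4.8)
The statement to prove is that the mechanism with output density $f_X(y) \propto \exp(-\frac{\varepsilon}{\Delta}\|y - T(X)\|)$ satisfies $\varepsilon$-DP. This is a classical result of Huang–Roth (HT10), so my plan is to reproduce the standard density-ratio argument.

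First I would argue that $f_X$ is a well-defined probability density: the function $y \mapsto \exp(-\frac{\varepsilon}{\Delta}\|y - T(X)\|)$ is integrable over $\mathbb{R}^d$ because $\|\cdot\|$ is a genuine norm (its unit ball is a bounded convex body with nonempty interior), so a change of variables $z = y - T(X)$ shows the normalizing constant $Z = \int_{\mathbb{R}^d}\exp(-\frac{\varepsilon}{\Delta}\|z\|)\,dz$ is finite and, crucially, independent of $X$. This translation-invariance of $Z$ is the key structural feature and is worth stating explicitly.

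Next, fix neighbors $X \sim X'$ and a point $y \in \mathbb{R}^d$. Since the normalizing constants coincide, the density ratio is
\[
\frac{f_X(y)}{f_{X'}(y)} = \exp\left(\frac{\varepsilon}{\Delta}\left(\|y - T(X')\| - \|y - T(X)\|\right)\right).
\]
By the triangle inequality, $\|y - T(X')\| - \|y - T(X)\| \le \|T(X) - T(X')\|$, and by the definition of $\|\cdot\|$-sensitivity, $\|T(X) - T(X')\| \le \Delta$ since $T(X) - T(X') \in S(T)$. Hence the ratio is at most $e^{\varepsilon}$ pointwise. Integrating this pointwise bound over any measurable $S \subseteq \mathbb{R}^d$ gives $\P{\calM}{\calM(X) \in S} \le e^{\varepsilon}\,\P{\calM}{\calM(X') \in S}$, which is exactly $\varepsilon$-DP.

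There is no real obstacle here; the only point requiring a modicum of care is the integrability claim establishing that $f_X$ is a legitimate density — one should note that this uses that $\|\cdot\|$ is a norm (in particular positively homogeneous with bounded unit ball), not merely a seminorm, and that the sensitivity $\Delta$ is finite and positive so the exponent is well-behaved. Everything else is the triangle inequality plus the definition of sensitivity.
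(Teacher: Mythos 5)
Your argument is correct and is essentially the standard density-ratio proof from the cited source, which the paper itself does not reprove: translation-invariance of the normalizing constant, the triangle inequality, and the definition of $\|\cdot\|$-sensitivity give the pointwise ratio bound $e^{\eps}$, which integrates to the DP guarantee. One small attribution note: HT10 is Hardt--Talwar, not Huang--Roth.
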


\begin{lemma}[Remark 4.2 \citet{HT10}]
\label{lem:k_norm_sample}
    The following procedure outputs a sample from the $K$-norm mechanism with norm $\|\cdot\|$, norm unit ball $B^d$, statistic $T(X)$, and statistic sensitivity $\Delta = 1$ with respect to $\|\cdot\|$: 1) sample radius $r \sim \gammad{d+1}{1/\eps}$, the Gamma distribution with shape $d+1$ and scale $1/\eps$; 2) uniformly sample $z \sim B^d$; and 3) output $T(X) + rz$.
\end{lemma}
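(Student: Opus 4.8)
The statement is entirely a change-of-variables fact, so the plan is to compute the law of the procedure's output $Y := T(X) + rz$ and verify it has density proportional to $\exp(-\eps\|y - T(X)\|)$, which (since $\Delta = 1$) is the $K$-norm density of \Cref{lem:k_norm_dp}. By translation invariance we may take $T(X) = 0$, so it suffices to show that $Y = rz$ has density $\propto \exp(-\eps\|y\|)$.

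First I would set up ``polar coordinates'' adapted to the norm $\|\cdot\|$. Writing $B^d$ for its unit ball and $\partial B^d$ for the boundary, a standard argument produces a finite cone measure $\mu$ on $\partial B^d$ — characterized by $\mu(A) = d\cdot\mathrm{vol}(\{ru : u \in A,\ 0 \le r \le 1\})$ — such that $\int_{\mathbb{R}^d} g(y)\,dy = \int_{\partial B^d}\int_0^\infty g(ru)\,r^{d-1}\,dr\,d\mu(u)$ for every integrable $g$ (checked against $g = \indic{B^d}$). In these coordinates the target law factors as a product of an independent radial part with density $\propto r^{d-1}e^{-\eps r}$, i.e.\ $\gammad{d}{1/\eps}$, and an angular part equal to $\mu/\mu(\partial B^d)$. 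Applying the same formula to the uniform distribution on $B^d$ shows that, writing $z = \|z\|\cdot(z/\|z\|)$, the radius $\|z\|$ has density $\propto s^{d-1}$ on $[0,1]$, is independent of the direction $z/\|z\| \sim \mu/\mu(\partial B^d)$, and both are independent of $r$. Hence $Y = (r\|z\|)\cdot(z/\|z\|)$ has angular part $z/\|z\| \sim \mu/\mu(\partial B^d)$ independent of its radial part $W := r\|z\|$, so the angular parts of $Y$ and the target already agree and it remains only to check that $W \sim \gammad{d}{1/\eps}$.

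That last step is the one genuine computation: a Beta--Gamma identity stating that the product of a $\gammad{d+1}{\theta}$ variable and an independent variable with density $d\,s^{d-1}$ on $[0,1]$ is $\gammad{d}{\theta}$. I would obtain it by writing the product density as $\int_0^1 f_r(w/s)\,f_{\|z\|}(s)\,s^{-1}\,ds$, substituting $t = w/(s\theta)$, and observing that the $s$-integral collapses (using $\int_{w/\theta}^{\infty} e^{-t}\,dt = e^{-w/\theta}$) to $\frac{1}{(d-1)!}\,\theta^{-d}\,w^{d-1} e^{-w/\theta}$, the $\gammad{d}{\theta}$ density. Combining the two parts, $Y$ has exactly the target density. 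The only points needing care are stating the cone-measure decomposition correctly for an arbitrary (non-smooth, non-$\ell_p$) norm ball — together with the observation that the angular measure produced by ``uniform on $B^d$'' is automatically the same $\mu$ appearing in the decomposition of the target's Lebesgue density, since both arise from one and the same change of variables — and the shape bookkeeping that forces $r$ to have shape $d+1$ rather than $d$ (the radial factor of $z$ absorbs one power of $r$).
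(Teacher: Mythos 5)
Your argument is correct: the polar (cone-measure) decomposition for an arbitrary symmetric convex unit ball is valid, the uniform distribution on $B^d$ does factor into an independent radial part with density $d\,s^{d-1}$ on $[0,1]$ and an angular part given by the normalized cone measure, and your Beta--Gamma computation showing that $\gammad{d+1}{1/\eps}\times\mathsf{Beta}(d,1)$ has law $\gammad{d}{1/\eps}$ is right, which matches the radial law of the density $\propto e^{-\eps\|y\|}$. Note, however, that the paper does not prove \Cref{lem:k_norm_sample} at all --- it is imported verbatim from Remark 4.2 of \citet{HT10} --- so there is no in-paper proof to match; the standard argument behind that remark is also shorter than yours. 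One can avoid both the cone-measure bookkeeping and the Beta--Gamma identity by writing the target density as a mixture of uniform distributions on scaled balls: from $e^{-\eps\|y\|}=\int_0^\infty \eps e^{-\eps r}\,\indic{\|y\|\leq r}\,dr$ and $|rB^d|=r^d|B^d|$, one reads off directly that drawing $r$ with density $\propto r^d e^{-\eps r}$ (i.e., $\gammad{d+1}{1/\eps}$) and then a uniform point of $rB^d$ reproduces the $K$-norm density, with the shape parameter $d+1$ appearing immediately from the volume factor $r^d$. Your route buys a little extra (the explicit independence structure and the radial law $\gammad{d}{1/\eps}$ of the output's norm), at the cost of needing the cone-measure change of variables to be stated carefully for a general, possibly non-smooth norm ball; the mixture argument needs only Fubini and homogeneity of Lebesgue measure.
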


$\gammad{d+1}{1/\eps}$ can be sampled in $O(d)$, so sampling the $K$-norm mechanism reduces to sampling the norm unit ball $B^d$. Constructing these samplers is one of the main technical contributions of this work. Given statistic $T$, we choose a norm based on its \emph{sensitivity space}.

\begin{definition}[\citet{KN16, AS21}]
    The \emph{sensitivity space} of \\ statistic $T$ is $S(T) = \{T(X) - T(X') \mid X, X' \text{ are neighboring databases}\}$.
\end{definition}

By \Cref{lem:k_norm_dp}, given any norm with a unit ball that contains the convex hull of $S(T)$, the $K$-norm mechanism instantiated with that norm and $\Delta = 1$ is $\eps$-DP. We focus on cases where there is a norm whose unit ball is exactly the convex hull of $S(T)$.

\begin{lemma}
\label{lem:induced_norm}
    If set $W$ is convex, bounded, absorbing (for every $u \in \mathbb{R}^d$, there exists $c > 0$ such that $u \in cW$), and symmetric around 0 ($u \in W \Leftrightarrow -u \in W$), then the function $\|\cdot\|_W \colon \mathbb{R}^d \to \mathbb{R}_{\geq 0}$ given by $\|u\|_W = \inf\{c \in \mathbb{R}_{\geq 0} \mid u \in cW\}$ is a norm, and we say $W$ induces $\|\cdot\|_W$.
\end{lemma}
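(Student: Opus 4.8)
The plan is to verify that $\|\cdot\|_W$ satisfies the three norm axioms — positive definiteness, absolute homogeneity, and the triangle inequality — using the four hypotheses on $W$ (convex, bounded, absorbing, symmetric around $0$). This is the classical fact that the Minkowski functional (gauge) of a bounded, absorbing, symmetric convex body is a norm; the work is just bookkeeping with the defining infimum $\|u\|_W = \inf\{c \ge 0 \mid u \in cW\}$.

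First I would record that the set $\{c \ge 0 \mid u \in cW\}$ is nonempty for every $u$ (this is exactly the absorbing hypothesis), so $\|u\|_W$ is a well-defined finite nonnegative real number, and moreover $\|0\|_W = 0$ since $0 \in W$ (which follows from $W$ being nonempty, convex, and symmetric, or directly from absorbing applied to $u = 0$). For positive definiteness I would show $\|u\|_W = 0 \implies u = 0$: if the infimum is $0$, then $u \in cW$ for arbitrarily small $c > 0$, i.e. $u/c \in W$ for arbitrarily large $1/c$; since $W$ is bounded, say $W \subseteq \{v : \|v\|_2 \le R\}$, this forces $\|u\|_2/c \le R$ for all small $c$, hence $\|u\|_2 = 0$. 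For absolute homogeneity, $\|tu\|_W = |t|\,\|u\|_W$ for $t \ne 0$, I would substitute $c = |t| c'$ in the defining set and use symmetry of $W$ to absorb the sign of $t$ (so that $tu \in cW \iff |t|u \in cW \iff u \in (c/|t|)W$); the $t = 0$ case is handled above.

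The main step — and the only place convexity is genuinely used — is the triangle inequality $\|u + v\|_W \le \|u\|_W + \|v\|_W$. The standard argument: fix $\delta > 0$ and pick $a, b$ with $u \in aW$, $v \in bW$, $a \le \|u\|_W + \delta$, $b \le \|v\|_W + \delta$. Write $u = a w_1$, $v = b w_2$ with $w_1, w_2 \in W$. Then
\[
  u + v = (a+b)\left(\frac{a}{a+b} w_1 + \frac{b}{a+b} w_2\right),
\]
and the bracketed point is a convex combination of $w_1, w_2 \in W$, hence in $W$ by convexity; therefore $\|u+v\|_W \le a + b \le \|u\|_W + \|v\|_W + 2\delta$, and letting $\delta \to 0$ finishes. (A minor edge case: if $a = 0$ then $u = 0$ by boundedness and the claim is trivial, so we may assume $a + b > 0$.)

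I don't anticipate a real obstacle here — this is a textbook result — but the one subtlety worth stating carefully is that the defining infimum need not be attained in general (it is attained when $W$ is closed, but the lemma as stated does not assume closedness); so throughout I would work with near-optimal $c$'s and pass to the limit, rather than assuming $u \in \|u\|_W \cdot W$. I would also make explicit at the outset that "bounded" is what delivers both finiteness of the functional on nonzero inputs being positive and the positive-definiteness implication, while "absorbing" is what keeps it finite everywhere.
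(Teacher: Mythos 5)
Your proof is correct: the verification of positive definiteness via boundedness, homogeneity via symmetry, and the triangle inequality via the convex-combination trick (working with near-optimal $a,b$ rather than assuming the infimum is attained) is exactly the standard Minkowski-gauge argument, and the edge cases you flag ($t=0$, $a+b=0$) are handled properly. The paper itself states this lemma without proof, treating it as the classical folklore fact about gauges of bounded, absorbing, symmetric convex sets, so your writeup supplies precisely the argument the paper implicitly relies on; no discrepancy or gap.
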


\citet{AS21} defined two orderings for comparing $K$-norm mechanisms and proved that induced norms are preferred in both orders.

\begin{lemma}[Theorem 3.19~\citet{AS21}]
\label{lem:optimal_k_norm}
        Let $\|\cdot\|_A$ and $\|\cdot\|_B$ be norms with associated unit balls $A$ and $B$. Let $M_V$ and $M_W$ be $K$-norm mechanisms instantiated with $\|\cdot\|_A$ and $\|\cdot\|_B$, respectively. Then we say $M_V$ is preferred over $M_W$ in \emph{containment order} if $\Delta_A \cdot A \subset \Delta_B \cdot B$, where $\Delta$ denotes sensitivity; we say $M_V$ is preferred over $M_W$ in \emph{volume order} if $|\Delta_A \cdot A| \leq |\Delta_B \cdot B|$, where $|\cdot|$ denotes Lebesgue measure.
        
        Suppose statistic $T$ has a sensitivity space $S(T)$ that induces norm $\|\cdot\|$, and let $M_V$ denote the corresponding $K$-norm mechanism. Then for any other norm $\|\cdot\|_K$ with associated $K$-norm mechanism $M_W$, $M_V$ is preferred over $M_W$ in both containment order and volume order.
\end{lemma}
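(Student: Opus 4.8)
The plan is to unpack the definition of sensitivity and observe that, for any norm, the ``noise body'' $\Delta_N\cdot N$ that the $K$-norm mechanism effectively uses (recall from \Cref{lem:k_norm_sample} that the sample is a $\Gamma(d+1,1/\eps)$-scaled draw from the unit ball, so after calibrating to sensitivity the shape is a dilate of $N$ by $\Delta_N/\eps$) is forced to contain the sensitivity space $S(T)$; the induced norm is precisely the choice that makes this containment as tight as possible. Concretely, for a norm $\|\cdot\|_N$ with closed unit ball $N$, the $\|\cdot\|_N$-sensitivity of $T$ is $\Delta_N=\sup_{s\in S(T)}\|s\|_N$, which by the definition of the gauge functional in \Cref{lem:induced_norm} is exactly the smallest $c\geq 0$ with $S(T)\subseteq c\cdot N$; since $c\cdot N$ is convex, this is also the smallest $c$ with $\mathrm{conv}(S(T))\subseteq c\cdot N$. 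First I would record that $S(T)$ is symmetric about the origin (if $X\sim X'$ then $X'\sim X$, so $s\in S(T)\Rightarrow -s\in S(T)$) and bounded (finite sensitivity), so that by \Cref{lem:induced_norm} its convex hull, which I call $V$ and may take to be closed, induces the norm $\|\cdot\|=\|\cdot\|_V$ of the hypothesis.

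Next I would prove the containment-order claim. On one hand, every $s\in S(T)$ lies in $V$, so $\|s\|_V\leq 1$ and hence $\Delta_V=\sup_{s\in S(T)}\|s\|_V\leq 1$, which gives $\Delta_V\cdot V\subseteq V$. On the other hand, for any competing norm $\|\cdot\|_K$ with unit ball $K$, the definition of $\Delta_K$ gives $S(T)\subseteq\Delta_K\cdot K$, and since $\Delta_K\cdot K$ is convex we obtain $V=\mathrm{conv}(S(T))\subseteq\Delta_K\cdot K$. Chaining the two inclusions yields $\Delta_V\cdot V\subseteq V\subseteq\Delta_K\cdot K$, which is exactly the statement that $M_V$ is preferred over $M_W$ in containment order (and strictly so unless $\|\cdot\|_K$ is a positive multiple of $\|\cdot\|$). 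The volume-order claim is then immediate from this same inclusion by monotonicity of Lebesgue measure: $|\Delta_V\cdot V|\leq|\Delta_K\cdot K|$.

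The argument is essentially a purely geometric observation — the induced-norm ball realizes the minimum-volume dilate of any norm ball containing $S(T)$ — so the main obstacle is not conceptual but a handful of boundary/closure technicalities that I would verify carefully: that the gauge functional of $\mathrm{conv}(S(T))$ has closed unit ball equal to $\overline{\mathrm{conv}(S(T))}$ (so that passing to the closure changes neither the induced norm nor any sensitivity computation), that $\Delta_V$ and $\Delta_K$ are finite and positive (from boundedness of $S(T)$ and equivalence of norms on $\mathbb{R}^d$), and the harmless normalization $\Delta_V=1$ when $S(T)$ is compact. Everything past these points is a one-line consequence of convexity of scaled unit balls together with the ``smallest containing dilate'' characterization of the gauge.
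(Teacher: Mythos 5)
This statement is quoted from \citet{AS21} (their Theorem 3.19) and the paper gives no proof of its own, so there is nothing in-paper to compare against; your argument is the standard one underlying that result and it is correct. Your key observation --- that for any norm $\|\cdot\|_K$ the sensitivity $\Delta_K=\sup_{s\in S(T)}\|s\|_K$ is exactly the smallest dilation factor $c$ with $S(T)\subseteq cK$, hence (by convexity of $cK$) with $\mathrm{conv}(S(T))\subseteq cK$ --- immediately yields $\Delta_V\cdot V\subseteq V\subseteq \Delta_K\cdot K$ for the induced ball $V=\mathrm{conv}(S(T))$, and volume order follows by monotonicity of Lebesgue measure, just as you say. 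The only caveats are cosmetic: the lemma's ``$\subset$'' should be read non-strictly (equality occurs exactly when $K$ is a positive multiple of $V$, as you note), and the closure/attainment technicalities you flag are genuinely harmless in $\mathbb{R}^d$ since $S(T)$ is bounded and symmetric, with $\Delta_V=1$ whenever $V$ has nonempty interior.
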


\citet{AS21} further showed that better containment and volume orders also imply better entropy and conditional variance, among other notions. It follows that mechanisms which are optimal with respect to these orders are also optimal with respect to entropy and conditional variance (see Sections 3.2 and 3.3 of their paper for details). As our applications of these results are essentially immediate, we will not discuss them further. Nonetheless, they demonstrate that the three induced $K$-norm mechanisms we will construct enjoy unique utility guarantees.

The induced norm balls for the problems in this paper are all $d$-dimensional polytopes. The general state of the art for sampling these bodies is achieved by~\citet{LLV20}. They showed how to sample a $d$-dimensional polytope with $m$ constraints in time $\tilde O(md^{1+\omega})$, where $\omega \geq 2$ is the matrix multiplication exponent (Theorem 1.5 of~\citet{LLV20}). The polytopes considered in this paper have $\Omega(d)$ constraints, so this becomes $\tilde O(d^{2+\omega})$. Accounting for the mixing time to an approximation sufficient for $O(\eps)$-DP (Appendix A of~\citet{HT10}) increases the complexity to $O(d^{3+\omega})$. In contrast, the samplers introduced in this work are $\eps$-DP and have runtime $\tilde O(d^2)$.

Note that for consistency with the literature on sampling convex bodies, this paper defines time complexity as the number of field operations (addition and multiplication). In reality, runtime for these operations scales with input bit length; accounting for this increases complexity by roughly a factor of $d\log(d)$, as some of our algorithms involve arithmetic on $d$-bit numbers.

\subsection{Elliptic Gaussian Mechanism}
\label{subsec:prelims_ellipse}
Our second mechanism is elliptic Gaussian noise. It uses the fact that, to privately compute a statistic with sensitivity space $S$, it suffices to linearly transform the convex hull of $S$ to fit into the unit $\ell_2$ ball, add spherical Gaussian noise, and then invert the linear transformation as post-processing. Deriving these problem-specific linear transformations --- or, equivalently, computing minimum ellipses enclosing different sensitivity spaces --- is the other main technical contribution of this work.

\begin{lemma}[Adapted From \citet{NTZ13, NT23}]
\label{lem:elliptic_gaussian}
    Let $S$ be a convex body in $\mathbb{R}^d$ with $M \in \mathbb{R}^{d \times d}$ such that $S \subset MB_2^d$. Then the mechanism that on input $X \in S^n$ outputs $\sum_i X_i + Z$ where $Z \sim N(0, \tfrac{1}{2\rho}MM^T)$ is $\rho$-CDP.
\end{lemma}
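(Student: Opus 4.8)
The plan is to reduce to the spherical Gaussian mechanism for zCDP via a linear change of coordinates. First I would assume $M$ is invertible (the rank-deficient case is handled at the end) and consider the transformed statistic $X \mapsto M^{-1}\sum_i X_i$. For neighboring $X \sim X'$ in the add--remove model, $\sum_i X_i - \sum_i X_i'$ equals $\pm x$ for a single record $x \in S$; since $S \subset MB_2^d$ we have $M^{-1}x \in B_2^d$, and because $B_2^d$ is symmetric about the origin it follows that the transformed statistic has $\ell_2$-sensitivity at most $1$.

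Second, I would invoke the standard fact (e.g.\ \citet{BS16}) that perturbing a statistic of $\ell_2$-sensitivity $\Delta$ by $N(0,\sigma^2 I)$ is $\tfrac{\Delta^2}{2\sigma^2}$-CDP; taking $\sigma^2 = \tfrac{1}{2\rho}$ and $\Delta \le 1$ makes $X \mapsto M^{-1}\sum_i X_i + N(0,\tfrac{1}{2\rho}I)$ satisfy $\rho$-CDP. Concretely this is immediate from the closed form $D_\alpha\big(N(\mu_1,\Sigma)\,\|\,N(\mu_2,\Sigma)\big) = \tfrac{\alpha}{2}(\mu_1-\mu_2)^T\Sigma^{-1}(\mu_1-\mu_2)$, which here is at most $\tfrac{\alpha}{2}\cdot 2\rho \cdot \|M^{-1}x\|_2^2 \le \rho\alpha$. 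Finally, the post-processing map $y \mapsto My$ carries this mechanism to $\sum_i X_i + N(0,\tfrac{1}{2\rho}MM^T)$, exactly the claimed output, and zCDP is closed under post-processing, so the lemma follows. Equivalently, one can bypass $M^{-1}$ and bound the Rényi divergence of the stated output directly: the two output laws are $N(\mu_1,\tfrac{1}{2\rho}MM^T)$ and $N(\mu_2,\tfrac{1}{2\rho}MM^T)$ with $\mu_1 - \mu_2 = \pm Mu$ for some $\|u\|_2 \le 1$, and $(Mu)^T(MM^T)^{-1}(Mu) = u^Tu \le 1$.

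The one point needing care is that $M$ need not be invertible. If $M$ is rank-deficient then $MB_2^d$, hence $S$, lies in the proper subspace $\mathrm{col}(M)$, and $Z$ is a degenerate Gaussian supported there; I would rerun the computation within $\mathrm{col}(M)$, replacing $M^{-1}$ by the Moore--Penrose pseudoinverse and using that $M^T(MM^T)^+M$ acts as the identity on the row space of $M$ (which one can take to contain the minimum-norm $u$ with $Mu = \pm(\mu_1-\mu_2)\in S$, still of norm at most $1$). This is routine linear algebra rather than a genuine obstacle: the substance of the lemma is just the coordinate change plus the standard Gaussian-mechanism guarantee, so I do not anticipate any hard step.
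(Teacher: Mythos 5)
Your proof is correct. The paper does not actually prove \Cref{lem:elliptic_gaussian} itself --- it is stated as adapted from \citet{NTZ13, NT23} and used as a black box --- and your argument (map by $M^{-1}$ so the sum has $\ell_2$-sensitivity at most $1$, apply the standard Gaussian-mechanism zCDP bound with $\sigma^2 = \tfrac{1}{2\rho}$, then post-process by $M$, or equivalently compute the R\'enyi divergence directly using $(Mu)^T(MM^T)^{-1}(Mu) = u^Tu \le 1$) is exactly the standard derivation underlying those cited works; your pseudoinverse treatment of the rank-deficient case is also fine.
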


The next lemma, proved in \Cref{sec:appendix_prelims}, establishes that sampling the $Z$ in \Cref{lem:elliptic_gaussian} reduces to sampling from a random scaling of $MB_2^d$, the ellipse containing the desired convex body. We therefore focus on deriving the ``best'' such ellipse, minimizing expected squared $\ell_2$ norm.

\begin{restatable}{lemma}{randomEllipse}
\label{def:random_ellipse}
    Let $E$ be an ellipse with axis lengths $\{a_1,...,a_d\}$ and corresponding orthonormal eigenvectors $\{v_1,...,v_d\}$. Let $D$ be the diagonal matrix where $D_{ii} = a_i$, and let $C$ be the matrix such that $Cv_i = e_i$ where $\{e_1,...,e_d\}$ is the standard basis. Let $M = C^{-1}DC$. Then $\countball \subset MB_{2}^{d}$, and drawing a uniform sample from $\mathcal{N}(0,MM^{T})$ reduces to uniform sampling from the random ellipse $RE$ where $R \sim \chi_{d}$, a Chi distribution with $d$ degrees of freedom.
\end{restatable}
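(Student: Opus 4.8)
The plan is to verify the two assertions separately: first that the image $M B_2^d$ is exactly the ellipse $E$ (so that the containment $\countball \subset M B_2^d$ is immediate from the standing assumption that $E$ encloses $\countball$), and then that a Gaussian with covariance $MM^\top$ is a uniform draw from a randomly radially rescaled copy of $E$. For the first part I would unpack $C$. Since $Cv_i = e_i$ for the orthonormal system $\{v_i\}$, setting $V = [v_1 \mid \cdots \mid v_d]$ gives $C = V^\top$, which is orthogonal, so $C^{-1} = V = C^\top$. Hence $M = C^{-1} D C = V D V^\top$ is symmetric positive definite with eigenvalues $a_i$ and eigenvectors $v_i$, and in particular $MM^\top = M^2 = V D^2 V^\top$. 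For any $y$, $y \in M B_2^d$ iff $\|M^{-1}y\|_2 \le 1$ iff $\sum_i (v_i^\top y)^2 / a_i^2 \le 1$, which is precisely the defining inequality of the ellipse with semi-axes $a_i$ along the $v_i$; thus $M B_2^d = E$, and since $E \supseteq \countball$ we get $\countball \subset M B_2^d$.

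For the sampling claim I would use the polar decomposition of the isotropic Gaussian. Because $M$ is invertible, $Z \sim \mathcal{N}(0, MM^\top)$ has the same law as $MG$ with $G \sim \mathcal{N}(0, I_d)$. Write $G = RU$ with $R = \|G\|_2$ and $U = G/\|G\|_2$; by the rotational invariance of $\mathcal{N}(0,I_d)$, $U$ is uniform on $S^{d-1}$, independent of $R$, while $\|G\|_2^2 \sim \chi_d^2$ so that $R \sim \chi_d$. Then $Z \stackrel{d}{=} M(RU) = R \cdot (MU)$. Under the paper's convention, a uniform sample from an ellipse $F = NB_2^d$ is $N$ applied to a uniform point of $S^{d-1}$; since $RE = (RM)B_2^d$, the vector $R\cdot(MU) = (RM)U$ is exactly a uniform sample from the random ellipse $RE$ with $R \sim \chi_d$, which is the asserted reduction.

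The only genuinely delicate point is bookkeeping: pinning down that ``uniform sampling from the ellipse'' means applying the ellipse's defining linear map to a uniform point on the sphere (not the surface-area measure on the ellipsoid boundary), and invoking the independence of the radial and directional parts of an isotropic Gaussian. Both are standard, and once the convention is matched to $MU$ the remainder is a one-line change of variables. The substantive ingredient is the identification $M B_2^d = E$, where orthonormality of $\{v_i\}$ is essential: it is what forces $C$ to be orthogonal, hence $M = VDV^\top$ to be the symmetric square root realizing $E$ with the prescribed axis directions; without it $MM^\top$ need not have the right eigenstructure.
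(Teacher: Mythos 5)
Your proposal is correct and follows essentially the same route as the paper: establish that $C$ is orthogonal so that $M$ (equivalently the paper's factor $C^TD$) maps $B_2^d$ onto $E$, then decompose the isotropic Gaussian into an independent $\chi_d$ radius and a uniform direction on the sphere and push it through the linear map. The only cosmetic difference is that you use the symmetric square root $M = VDV^\top$ directly where the paper uses the factorization $MM^T = (C^TD)(C^TD)^T$, which changes nothing substantive.
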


The state of the art for finding these ellipses casts the problem as a semidefinite program (Theorem 32 of~\citet{NT23}). However, an approximately optimal solution is only guaranteed to be found in poly$(d)$ time for restricted classes of polytopes. Specifically, applying their result to our polytopes requires bounding the ``cotype-2 constant'' that arises from analyzing random walks in the dual polytope. We were not able to verify this bound for our polytopes, but even if we assume that it holds, the resulting algorithm relies on a sequence of oracles that all have unspecified poly$(d)$ runtimes. Unpacking the proofs of~\citet{NT23} and (generously) assuming linear runtimes for its constituent oracles yields a back of the envelope overall runtime of $O(d^5)$. In contrast, we explicitly identify closed-form expressions for exact minimum ellipses for our problems.

\subsection{Geometry}
For completeness, we briefly define vertices and other useful geometric terms.

\begin{definition}
\label{def:geometry}
    Let $X_n$ be any $n$-dimensional polyhedron in $\mathbb{R}^{d}$. For $1 \leq k \leq n-1$, we backwards inductively define $X_{k}$ to be all sets of the form $H_{k} \cap \partial{X_{k+1}}$ where $H_k$ is a $k$-dimensional (possibly affine) subspace in $\mathbb{R}^{d}$, $\partial{X_{k+1}}$ is the \emph{boundary} of $X_{k+1}$, and $\mu_k(H_{k} \cap \partial{X_{k+1}}) > 0$ where $\mu_k$ is $k$-dimensional Lebesgue measure. Lastly, we define $X_0$ to be the set $\partial{X_1}$. We call $X_k$ the $k$-dimensional \emph{faces} of $X_n$. Similarly, $X_0$ is the \emph{vertices} of $X_n$, and $X_1$ is the \emph{edges} of $X_n$. If two vertices are joined by an edge, we say that those vertices are \emph{neighboring}. For finite set $X$, let $CH(X)$ denote its \emph{convex hull}, and let $c(CH(X))$ be its \emph{center}, i.e., the mean of its vertices.
\end{definition}

Finally, we make a note about measure, often shorthanded ``volume'', that simplifies our sampling analysis by ignoring points with repeated coordinates. A proof appears in \Cref{sec:appendix_prelims}.

\begin{restatable}{lemma}{lebesgue}
\label{lem:lebesgue}
    Let $|U|$ denote the Lebesgue measure of set $U$, and let $E \subset[0,1]^{d}$ be the set of elements with repeated coordinates. Then $|E| = 0$.
\end{restatable}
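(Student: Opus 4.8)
The plan is to express $E$ as a finite union of coordinate-hyperplane slices of the cube and then invoke finite subadditivity of Lebesgue measure together with the fact that a hyperplane is a null set. A point $x \in [0,1]^d$ has a repeated coordinate exactly when $x_i = x_j$ for some pair $i \neq j$, so
\[
    E = \bigcup_{1 \le i < j \le d} E_{ij}, \qquad E_{ij} := \{x \in [0,1]^d : x_i = x_j\},
\]
a union of only $\binom{d}{2}$ sets.

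The next step is to show $|E_{ij}| = 0$ for each pair $i < j$. The cleanest route is Fubini's theorem: writing $\int_{[0,1]^d} \indic{E_{ij}}$ as an iterated integral whose innermost integration is over the coordinate $x_i$, the inner integral is $\int_0^1 \indic{x_i = x_j}\, dx_i = 0$, since a single point has one-dimensional Lebesgue measure zero; integrating the resulting zero function over the remaining $d-1$ coordinates gives $|E_{ij}| = 0$. Equivalently, one may simply note that $E_{ij}$ is contained in the affine hyperplane $\{x \in \mathbb{R}^d : x_i = x_j\}$, a standard example of a $d$-dimensional null set. Finally, by finite subadditivity,
\[
    |E| \le \sum_{1 \le i < j \le d} |E_{ij}| = 0,
\]
which gives $|E| = 0$.

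Every step here is routine, so there is no genuine obstacle. The only point that merits a moment's care is the measure-zero claim for a single hyperplane: depending on how much elementary measure theory the paper is willing to take as given, this can either be cited directly or supplied by the one-line Fubini argument indicated above. I would include the Fubini version in the appendix for self-containedness.
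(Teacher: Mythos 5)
Your proof is correct and rests on the same basic idea as the paper's: cover $E$ by finitely many Lebesgue-null affine sets and conclude by finite subadditivity. The only real difference is the choice of cover. You use the $\binom{d}{2}$ hyperplanes $E_{ij} = \{x \in [0,1]^d : x_i = x_j\}$, while the paper covers $E$ by the subspaces $\mathrm{span}\{v_1,\dots,v_n\}$ spanned by the indicator vectors of the equivalence classes of equal coordinates, each of dimension $n < d$. Your cover is smaller and more elementary (quadratically many sets rather than one per partition of the index set), and you explicitly justify the null-set claim for a single hyperplane via the one-line Fubini computation, whereas the paper simply asserts that a proper subspace has measure zero. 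Both arguments are complete; yours is marginally more self-contained, the paper's is marginally more compact.
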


\begin{assumption}
\label{lem:assumption}
    For the rest of this paper, whenever we consider a subset $X \subseteq [0,1]^{d}$ we will actually mean $X - E$, where $-$ denotes set difference, without explicitly writing this. By \Cref{lem:lebesgue}, this does not affect any of the subroutines that sample from a region of $[0,1]^{d}$ with nonzero measure.
\end{assumption}
\section{Sum}
\label{sec:sum}

\subsection{Sum Ball Sampler}
\label{subsec:sum}

Recall from the introduction that each Sum vector $x_i$ contains at most $k$ nonzero entries, each having absolute value at most $b$, and we compute the statistic $T = \sum_i x_i$. $b$ only affects scaling, so without loss of generality let $b=1$. We first derive the convex hull $\sumball$ of the sum sensitivity space

\begin{lemma}
\label{lem:sum_sensitivity_space}
    Let $B_{1,k}^d$ denote the $d$-dimensional $\ell_1$ ball of radius $k$ and let $B_\infty^d$ denote the $d$-dimensional $\ell_\infty$ unit ball. Then $\sumball =  B_{1,k}^d \cap B_\infty^d$, and $\sumball$ induces a norm.
\end{lemma}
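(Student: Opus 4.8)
The plan is to first pin down the Sum sensitivity space exactly, and then recognize the claimed intersection $B_{1,k}^d \cap B_\infty^d$ as the convex hull of a finite set of ``easy'' points, all of which visibly lie in the sensitivity space. Taking $b=1$ without loss of generality and noting that $\|x_i\|_0 \le k$ forces $k$ to be (effectively) a positive integer with $k \le d$, the first step is to observe that in the add-remove model two neighbors $X \sim X'$ differ in a single record $x_i$, so $T(X)-T(X') = \pm x_i$ for an admissible point. The admissible points are exactly the $x \in \mathbb{R}^d$ with $\|x\|_0 \le k$ and $\|x\|_\infty \le 1$, and this set is symmetric about the origin, so $S(T) = \{x \in \mathbb{R}^d : \|x\|_0 \le k,\ \|x\|_\infty \le 1\}$ and $\sumball = CH(S(T))$.

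For the inclusion $\sumball \subseteq B_{1,k}^d \cap B_\infty^d$: every $x \in S(T)$ has at most $k$ nonzero coordinates of magnitude at most $1$, hence $\|x\|_1 \le k$ and $\|x\|_\infty \le 1$; since $B_{1,k}^d \cap B_\infty^d$ is convex, it contains $CH(S(T))$. The reverse inclusion is the crux. It suffices to show that every vertex of the polytope $B_{1,k}^d \cap B_\infty^d$ lies in $\{-1,0,1\}^d$ with at most $k$ nonzero entries, since every such vector belongs to $S(T)$, and a polytope is the convex hull of its vertices. I would exploit that both $B_{1,k}^d \cap B_\infty^d$ and the candidate vertex set are invariant under coordinate permutations and sign flips to reduce to showing that the ``capped cube'' $\{y \in [0,1]^d : \sum_i y_i \le k\}$ has only $0/1$ vertices. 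This follows either from total unimodularity of its defining constraint matrix — the box constraints together with a single all-ones row form an interval/network matrix, which is TU, so the polytope is integral — or, if a self-contained argument is preferred, from a direct rounding step: given a point with a fractional coordinate, move that coordinate toward an endpoint in both directions while preserving feasibility, exhibiting the point as an average of two feasible points with strictly fewer fractional coordinates, and induct. Either route yields $B_{1,k}^d \cap B_\infty^d \subseteq \sumball$, hence equality.

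Finally, to see that $\sumball$ induces a norm I would invoke \Cref{lem:induced_norm}: $\sumball$ is convex and bounded (it lies in $[-1,1]^d$), symmetric about $0$ (both $B_{1,k}^d$ and $B_\infty^d$ are), and absorbing because for $k \ge 1$ it contains the $\ell_\infty$ ball of radius $\min(1, k/d)$ about the origin, as such a ball has $\ell_1$ norm at most $k$. I expect the main obstacle to be the vertex characterization in the reverse inclusion: all of the genuine geometric content lies in verifying that the intersection of the scaled cross-polytope with the cube is exactly the convex hull of the sparse sign vectors, and the total-unimodularity observation (or the equivalent explicit rounding argument) is the key ingredient; the remaining steps are routine.
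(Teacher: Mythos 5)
Your proposal is correct and follows essentially the same route as the paper: identify $S(T)$ as the set of $k$-sparse vectors with $\ell_\infty$ norm at most $1$, identify its convex hull with $B_{1,k}^d \cap B_\infty^d$, and check the conditions of \Cref{lem:induced_norm}. The only difference is that you actually prove the convex-hull identity (via vertex integrality of the capped cube, e.g.\ total unimodularity) where the paper simply asserts it; just note that your alternative rounding sketch needs the standard two-coordinate perturbation when the constraint $\sum_i y_i = k$ is tight, since a single fractional coordinate cannot then be moved in both directions feasibly, and that a sum of one fractional and otherwise integral coordinates cannot equal the integer $k$, so a second fractional coordinate always exists in that case.
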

\begin{proof}
    Since $T$ is a sum, $S(T) = \{T(X) - T(X') \mid X, X' \text{ are neighbors}\}$, the collection of all possible data vectors $X_i$ and their negations. Each point has $\leq k$ nonzero coordinates, each of which has absolute value  $\leq 1$, so the sensitivity space has vertices where between 1 and $k$ coordinates are $\pm 1$ and the remaining coordinates are 0. The convex hull of these vertices is $ B_{1,k}^d \cap B_\infty^d$.
    
    It remains to verify that $V$ induces a norm, using \Cref{lem:induced_norm}: $V$ is convex because it is a convex hull, bounded because it is an intersection of bounded sets, absorbing because it contains $B_{1,1}^d$, and symmetric around 0 because it is an intersection of symmetric sets.
\end{proof}

\begin{figure}
\centering
\tdplotsetmaincoords{75}{60}
\begin{tikzpicture} [scale=2.0, tdplot_main_coords, axis/.style={black,thick}, vector/.style={-stealth,black,very thick}, vector guide/.style={dashed,black,thick}]

        \coordinate (origin) at (0,0,0);
        \coordinate (k11) at (1,0,0);
        \coordinate (k12) at (0,1,0);
        \coordinate (k13) at (0,0,1);
        \coordinate (k21) at (0,1,1);
        \coordinate (k22) at (1,0,1);
        \coordinate (k23) at (1,1,0);
        \coordinate (k31) at (1,1,1);
        \coordinate (center) at (1,1,1);

        \draw[axis] (0,0,0) -- (1,0,0) node[anchor=north east]{};
        \draw[axis] (0,0,0) -- (0,1,0) node[anchor=north west]{};
        \draw[axis] (0,0,0) -- (0,0,1) node[anchor=south]{};

        \draw[fill=black] (origin) circle[radius=0.5pt] node[anchor=east]{origin};
        
        \foreach \aa in {k11,k12,k13}{
        \draw[line width=1pt] (origin) -- (\aa) node[yshift=0.5cm, anchor=south]{};
        }
        \foreach \aa in {k21,k22,k23}{
        \draw[line width=1pt] (k31) -- (\aa) node[yshift=0.5cm, anchor=south]{};
        }
        \foreach \aa in {k21,k22}{
        \draw[line width=1pt] (k13) -- (\aa) node[yshift=0.5cm, anchor=south]{};
        }
        \foreach \aa in {k21,k23}{
        \draw[line width=1pt] (k12) -- (\aa) node[yshift=0.5cm, anchor=south]{};
        }
        \foreach \aa in {k22,k23}{
        \draw[line width=1pt] (k11) -- (\aa) node[yshift=0.5cm, anchor=south]{};
        }
        
        \draw[line width=0.25mm,fill=blue!42,opacity=0.5] (k11) -- (k12) -- (k13) -- (k11);
        
        \draw[line width=0.5mm,fill=blue!42,opacity=0.0] (k21) -- (k22) -- (k23) -- (k21);
        
        \draw[line width=0.25mm,fill=blue!42,opacity=0.5] (k11) -- (k12) -- (k23) -- (k11);
        
        \draw[line width=0.25mm,fill=blue!42,opacity=0.5] (k13) -- (k21) -- (k22) -- (k13);
        
        \draw[line width=0.25mm,fill=blue!42,opacity=0.5] (k11) -- (k22) -- (k23) -- (k11);
        
        \draw[line width=0.25mm,fill=blue!42,opacity=0.5] (k13) -- (k22) -- (k11) -- (k13);
        
        \draw[line width=0.25mm,fill=blue!42,opacity=0.5] (k12) -- (k13) -- (k21) -- (k12);
        
        \draw[line width=0.25mm,fill=blue!42,opacity=0.5] (k12) -- (k23) -- (k21) -- (k12);
        
    \end{tikzpicture}
\tdplotsetmaincoords{75}{60}
\begin{tikzpicture} [scale=1.375, tdplot_main_coords, axis/.style={->,black,thick}, vector/.style={-stealth,black,very thick}, vector guide/.style={dashed,black,thick}]

        \coordinate (origin) at (0,0,0);
        \coordinate (k11) at (1,0,0);
        \coordinate (k12) at (0,1,0);
        \coordinate (k13) at (0,0,1);
        \coordinate (k21) at (0,1,1);
        \coordinate (k22) at (1,0,1);
        \coordinate (k23) at (1,1,0);
        \coordinate (k31) at (1,1,1);
        \coordinate (nk11) at (-1,0,0);
        \coordinate (nk12) at (0,-1,0);
        \coordinate (nk13) at (0,0,-1);
        \coordinate (nk21) at (0,-1,-1);
        \coordinate (nk22) at (-1,0,-1);
        \coordinate (nk23) at (-1,-1,0);
        \coordinate (nk31) at (-1,-1,-1);

        \draw[fill=black] (origin) circle[radius=0.5pt] node[anchor=north]{origin};

        
        \draw[line width=0.25mm,fill=blue!42,opacity=0.5] (k13) -- (k21) -- (k22) -- (k13);
        
        \draw[line width=0.25mm,fill=blue!42,opacity=0.5] (k11) -- (k22) -- (k23) -- (k11);
        
        \draw[line width=0.25mm,fill=blue!42,opacity=0.5] (k21) -- (k22) -- (k23) -- (k21);
        
        \draw[line width=0.25mm,fill=blue!42,opacity=0.5] (k12) -- (k21) -- (k23) -- (k12);
        
        \draw[line width=0.25mm,fill=blue!42,opacity=0.5] (nk13) -- (nk21) -- (nk22) -- (nk13);
        
        \draw[line width=0.25mm,fill=blue!42,opacity=0.5] (nk11) -- (nk22) -- (nk23) -- (nk11);
        
        \draw[line width=0.25mm,fill=blue!42,opacity=0.5] (nk21) -- (nk22) -- (nk23) -- (nk21);
        
        \draw[line width=0.25mm,fill=blue!42,opacity=0.5] (nk12) -- (nk21) -- (nk23) -- (nk12);
        
        \draw[line width=0.25mm,fill=blue!42,opacity=0.5] (k13) -- (k22) -- (nk12) -- (nk23) -- (k13);
    
        \draw[line width=0.25mm,fill=blue!42,opacity=0.5] (k13) -- (k21) -- (nk11) -- (nk23) -- (k13);
        
        \draw[line width=0.25mm,fill=blue!42,opacity=0.5] (k11) -- (k22) -- (nk12) -- (nk21) -- (k11);
        
        \draw[line width=0.25mm,fill=blue!42,opacity=0.5] (nk11) -- (k21) -- (k12) -- (nk22) -- (nk11);
        
        \draw[line width=0.25mm,fill=blue!42,opacity=0.5] (k11) -- (k23) -- (nk13) -- (nk21) -- (k11);
    
        \draw[line width=0.25mm,fill=blue!42,opacity=0.5] (k12) -- (k23) -- (nk13) -- (nk22) -- (k12);
        
    \end{tikzpicture}
\tdplotsetmaincoords{75}{60}
\begin{tikzpicture} [scale=0.8, tdplot_main_coords, axis/.style={->,black,thick}, vector/.style={-stealth,black,very thick}, vector guide/.style={dashed,black,thick}]

        \coordinate (origin) at (0,0,0);
        \coordinate (a) at (0,1,2);
        \coordinate (b) at (1,0,2);
        \coordinate (c) at (2,0,1);
        \coordinate (d) at (2,1,0);
        \coordinate (e) at (1,2,0);
        \coordinate (f) at (0,2,1);
        \coordinate (g) at (2,0,0);
        \coordinate (h) at (0,2,0);
        \coordinate (i) at (0,0,2);
        \coordinate (j) at (2,2,0);
        \coordinate (k) at (2,0,2);
        \coordinate (l) at (0,2,2);
        \coordinate (m) at (2,2,2);
        \coordinate (center) at (1,1,1);

        \draw[fill=black] (origin) circle[radius=0.5pt] node[anchor=north]{origin};
        
        \draw[line width=0.25mm,fill=blue!42,opacity=0.75] (a) -- (b) -- (c) -- (d) -- (e) -- (f) -- (a);
        
        \draw[line width=0.25mm,fill=blue!42,opacity=0.3] ($(a) - (m)$) -- ($(b) - (m)$) -- ($(c) - (m)$) -- ($(d) - (m)$) -- ($(e) - (m)$) -- ($(f) - (m)$) -- ($(a) - (m)$);
        
        \draw[line width=0.25mm,fill=blue!42,opacity=0.3] (a) -- (b) -- ($(b) - (m)$) -- ($(a) - (m)$) -- (a);
        
        \draw[line width=0.25mm,fill=blue!42,opacity=0.3] (b) -- (c) -- ($(c) - (m)$) -- ($(b) - (m)$) -- (b);
        
        \draw[line width=0.25mm,fill=blue!42,opacity=0.3] (c) -- (d) -- ($(d) - (m)$) -- ($(c) - (m)$) -- (c);
        
        \draw[line width=0.25mm,fill=blue!42,opacity=0.3] (d) -- (e) -- ($(e) - (m)$) -- ($(d) - (m)$) -- (d);
        
        \draw[line width=0.25mm,fill=blue!42,opacity=0.3] (e) -- (f) -- ($(f) - (m)$) -- ($(e) - (m)$) -- (e);
        
        \draw[line width=0.25mm,fill=blue!42,opacity=0.3] (f) -- (a) -- ($(a) - (m)$) -- ($(f) - (m)$) -- (f);
        
    \end{tikzpicture}
    \caption{Left: $R_{3,2}$ is the shaded region of the cube. Center: $\countball$, $k=2$; $R_{3,2}$ reappears in the upper right corner. Right: $\voteball$; $CH(P_3)$ is a regular polytope, but this is not true for general $d$.} \label{fig:illustrations}
\end{figure}
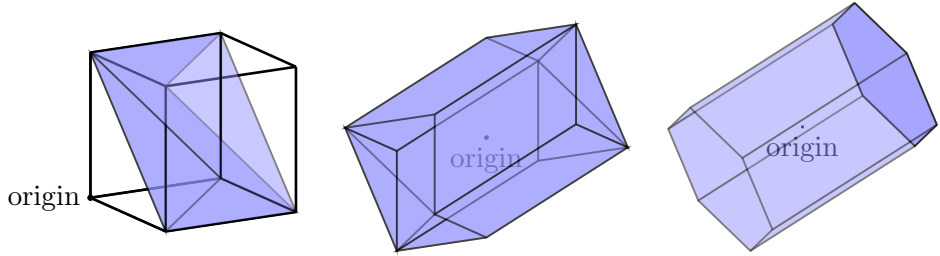

For both Sum and Vote (\Cref{sec:vote}), our sampler decomposes the polytope into simplices, randomly samples a simplex, and then returns a uniform sample from that simplex. We sample from the simplex using the following (folklore) result.

\begin{restatable}{lemma}{simplexSample}
\label{lem:simplex_sample}
    A collection of points $x_0, \ldots, x_{d} \in \mathbb{R}^n$ with $n \geq d$ are \emph{affinely independent} if $\sum_{i=0}^d \alpha_{i}x_i = 0$ and $\sum_{i=0}^d \alpha_i = 0$ implies $\alpha = 0$. A \emph{$d$-simplex} is the convex hull of $d+1$ affinely independent points and can be uniformly sampled in time $O(d\log(d))$.
\end{restatable}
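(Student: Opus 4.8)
The definitional part of the statement (affine independence and the notion of a $d$-simplex) just fixes terminology, so the work is in the sampling claim. The plan is to use the classical ``sorted uniforms'' construction. First I would draw $u_1,\dots,u_d$ i.i.d.\ uniform on $[0,1]$ and sort them, in $O(d\log d)$ time, into $0 \le u_{(1)} \le \cdots \le u_{(d)} \le 1$. Then I set $\lambda_0 = u_{(1)}$, $\lambda_i = u_{(i+1)}-u_{(i)}$ for $1 \le i \le d-1$, and $\lambda_d = 1 - u_{(d)}$, so that $\lambda = (\lambda_0,\dots,\lambda_d)$ lies in the standard simplex $\Sigma_d = \{\lambda \in \mathbb{R}^{d+1}_{\ge 0} : \sum_i \lambda_i = 1\}$, and output the point $\sum_{i=0}^d \lambda_i x_i$. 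The sort dominates the nontrivial work: drawing the uniforms and forming the spacings is $O(d)$ operations, and the output is the affine combination of the given vertices, so the overall cost is $O(d\log d)$.

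Correctness has two parts. The first is the order-statistics fact that $\lambda$ is uniformly distributed on $\Sigma_d$. I would prove this by a change of variables: drop the redundant coordinate $\lambda_d$, so that $(u_{(1)},\dots,u_{(d)}) \mapsto (\lambda_0,\dots,\lambda_{d-1})$ is a linear map of determinant $\pm 1$; the joint density of the sorted uniforms is the constant $d!$ on the order region $\{0 \le t_1 \le \cdots \le t_d \le 1\}$, so it transforms to a constant density on the image of that region, which is precisely the projection of $\Sigma_d$ onto its first $d$ coordinates. Hence $\lambda$ has constant density on $\Sigma_d$, i.e., it is uniform. The second part is that the affine map $\phi(\lambda) = \sum_{i=0}^d \lambda_i x_i$ carries the uniform distribution on $\Sigma_d$ to the uniform distribution on $CH(x_0,\dots,x_d)$. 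Affine independence of the $x_i$ means that $\phi$, restricted to the affine hull of $\Sigma_d$, is an affine isomorphism onto the affine hull of the $x_i$; an affine isomorphism has constant Jacobian, so it pushes a constant density to a constant density, and being a bijection it sends the support $\Sigma_d$ onto $CH(x_0,\dots,x_d)$. Composing the two parts gives the claim.

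The only real subtlety --- an issue of care rather than difficulty --- is pinning down what ``uniform sample from a $d$-simplex'' means when the simplex sits in a higher-dimensional $\mathbb{R}^n$: it is the $d$-dimensional Hausdorff measure restricted to the simplex, normalized to a probability measure. With that convention, the ``constant Jacobian'' step needs the observation that a global affine map restricted to a $d$-dimensional affine subspace still has a single constant Jacobian factor with respect to $d$-dimensional volume on that subspace; the clean way to see this is to precompose $\phi$ with an isometry $\mathbb{R}^d \to \mathrm{aff}(\Sigma_d)$, after which the usual constant-Jacobian argument for affine maps of $\mathbb{R}^d$ applies verbatim. Everything else is the standard order-statistics computation plus bookkeeping.
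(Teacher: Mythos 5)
Your proposal is correct and follows essentially the same route as the paper's proof: generate barycentric coordinates uniformly on the standard simplex via sorted uniform spacings (the $O(d\log d)$ sort dominating), and transfer uniformity to the target simplex through an affine change of variables with constant Jacobian, with affine independence guaranteeing the map is an isomorphism. Your treatment of the $n \geq d$ embedding (uniformity with respect to $d$-dimensional Hausdorff measure, precomposing with an isometry) is a bit more careful than the paper's, which implicitly works in the full-dimensional case, but it is a refinement of the same argument rather than a different one.
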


The rest of this section is a simplified sketch of our sampler; a full exposition with pseudocode appears in \Cref{sec:appendix_sum}. The first step is to observe that, since $\sumball$ is symmetric around the origin, it suffices to uniformly sample the portion of $\sumball$ lying in the $\{+\}^d$ orthant (denoted $\sumball^+$) and then randomly permute its signs. Restricting attention to $\sumball^+$, we decompose it into $k$ ``slices''.

\begin{definition}
    For $j \in [k]$, define $H_j = \{x \in \mathbb{R}^d \mid \sum_{i=1}^d x_i \leq j\}$, $I_j = (0,1)^d \cap H_j$, and $R_j = I_j - I_{j-1}$ (sometimes denoted $R_{d,j}$ to make the ambient dimension $d$ explicit).
\end{definition}

Since $\cup_{j \in [k]} R_j = V^+$, the $R_j$ partition $\sumball^+$ (\Cref{fig:illustrations}). This decomposition is useful because it is closely connected to the sets of permutations with a fixed number of ascents.

\begin{definition}
\label{def:s_d}
    Let $S_d$ be the symmetric group on $d$ elements, i.e., the collection of permutations of $[d]$. Define the group action of $\sigma \in S_d$ on $x \in \mathbb{R}^{d}$ by $\sigma(x) = \sigma(x_1,...,x_d) = (x_{\sigma(1)},...,x_{\sigma(n)})$. For $X \subseteq \mathbb{R}^{d}$, define $\sigma(X) = \{\sigma(x) : x \in X\}$. A permutation $\sigma \in S_d$ has an \emph{ascent} at position $i$ if $\sigma(i) < \sigma(i+1)$. Let $S_{d, k} = \{\sigma \in S_d \mid \sigma \text{ has exactly } k \text{ ascents}\}$. For $d, j \in \mathbb{Z}_{\geq 0}$ the \emph{Eulerian number} $A_{d,j}$ is defined to be $|S_{d,j}|$.
\end{definition}

We can show that the cube $(0,1)^d$ may be partitioned into equal volume simplices, with exactly one simplex (of volume $1/(d!)$) for each permutation in $S_d$. Moreover, a similar bijection applies to individual slices, and each $R_j$ can be partitioned into $|S_{d,j-1}| = A_{d,j-1}$ simplices. It remains to (1) sample an $R_j$ from $\{R_j\}_{j=1}^k$ according to weights $\{A_{d,j-1}\}_{j=1}^{k}$, (2) uniformly sample a permutation from $S_{d,j-1}$, and (3) uniformly sample that permutation's corresponding simplex in $R_j$.

Step (1) uses the (folklore) identity $A_{x,y} = (x-y)A_{x-1,y-1} + (y+1)A_{x-1,y}$. Repeated application yields the relevant $A$ values for the weights in time $O(d^2)$.

Step (2) reuses these $A$ values. Having sampled slice index $j^*+1$, we uniformly sample $S_{d,j^*}$ by flipping a sequence of $d$ coins weighted by the $A$ values: starting with the first flip, a permutation in $S_{d,j^*}$ arises either from inserting an ascent into a permutation in $S_{d-1,j^*-1}$ or inserting a non-ascent into a permutation in $S_{d-1,j^*}$. We therefore apply the identity from step (1) and flip a coin with
\begin{equation*}
    \P{}{\text{heads}} = \frac{(d-j^*)A_{d-1,j^*-1}}{(d-j^*)A_{d-1,j^*-1} + (j^*+1)A_{d-1,j^*}}
\end{equation*}
and recursively sample $S_{d-1,j^*-1}$ if we get heads and $S_{d-1,j^*}$ if we get tails. This process determines when $j^*$ ascents are inserted during our final iterative construction of the permutation, though some additional care is required to ensure uniformity.

Finally, step (3) bridges the gap between discrete permutations and points in continuous space. To do so, we apply \Cref{lem:simplex_sample} to uniformly sample the ``fundamental simplex'' consisting of all points in the cube $(0,1)^d$ with increasing coordinates. Permuting the sample coordinates by the permutation from step (2) produces a uniformly sampled point with $j^*$ ascents. Finally, we apply an explicit bijection, constructed by~\citet{S77}, from such points to the points of $R_{j^*+1}$

The overall sampling time for $\sumball$ is dominated by the $O(d^2)$ computation of the $A$ values. We note that any subsequent samples only take time $O(d)$ each.

\subsection{Rejection Sampling the Sum Ball Is Inefficient}
\label{subsec:sum_rejection}
All of our rejection sampling results use the following result about $\ell_p$ ball volume.

\begin{lemma}[\cite{W05}]
\label{lem:lp_volume}
    Let $V_p^d(r)$ denote the volume of the $d$-dimensional $\ell_p$ ball of radius $r$. For $p \in [1,\infty)$, $V_p^d(r) = \left[2r\Gamma\left(1+\tfrac{1}{p}\right)\right]^d / \Gamma\left(1+\tfrac{d}{p}\right)$, and $V_\infty^d(r) = (2r)^d$.
\end{lemma}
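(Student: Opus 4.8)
The plan is to reduce to the unit ball by scaling and then evaluate the unit-ball volume through a Gaussian-type integral identity, which is the standard route for this classical fact (originally in \cite{W05}).

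First I would record the scaling reduction: the linear map $x \mapsto rx$ multiplies $d$-dimensional Lebesgue measure by $r^d$ and carries the unit $\ell_p$ ball onto the $\ell_p$ ball of radius $r$, so $V_p^d(r) = r^d V_p^d(1)$. The case $p = \infty$ is then immediate, since the unit $\ell_\infty$ ball is the cube $[-1,1]^d$, giving $V_\infty^d(r) = (2r)^d$. For $p \in [1,\infty)$ it remains to compute $V_p^d(1)$, and for this I would evaluate the integral $\int_{\mathbb{R}^d} \exp\!\left(-\|x\|_p^p\right)\,dx$ in two ways. On one hand, $\|x\|_p^p = \sum_{i=1}^d |x_i|^p$, so Fubini factors the integral as $\left(\int_{\mathbb{R}} e^{-|t|^p}\,dt\right)^d$; the substitution $u = t^p$ turns the one-dimensional integral into $\frac{2}{p}\int_0^\infty e^{-u} u^{1/p - 1}\,du = \frac{2}{p}\Gamma(1/p) = 2\,\Gamma(1 + \tfrac1p)$, yielding $\left[2\,\Gamma(1+\tfrac1p)\right]^d$. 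On the other hand, writing $\rho = \|x\|_p$ and using the radial (layer-cake / coarea) decomposition together with the scaling fact $\mathrm{vol}\{x : \|x\|_p \le \rho\} = \rho^d V_p^d(1)$, the same integral equals $V_p^d(1)\, d \int_0^\infty e^{-\rho^p}\rho^{d-1}\,d\rho$; the substitution $u = \rho^p$ reduces this to $\frac{d}{p} V_p^d(1)\,\Gamma(d/p) = V_p^d(1)\,\Gamma(1 + \tfrac{d}{p})$. Equating the two evaluations gives $V_p^d(1) = \left[2\,\Gamma(1+\tfrac1p)\right]^d / \Gamma(1+\tfrac{d}{p})$, and multiplying by $r^d$ yields the claimed formula.

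The only places needing care are the two change-of-variables computations, the scaling identity $\mathrm{vol}\{\|x\|_p \le \rho\} = \rho^d V_p^d(1)$, and the justification of the radial decomposition of the integral — e.g.\ by applying Fubini to the region $\{(x,s) : 0 \le s \le e^{-\|x\|_p^p}\}$, or by noting that $\rho \mapsto \mathrm{vol}\{\|x\|_p \le \rho\}$ is absolutely continuous with derivative $d\rho^{d-1}V_p^d(1)$. None of these is a genuine obstacle, so I would present the argument tersely; the computation is entirely routine once the two-way evaluation of $\int_{\mathbb{R}^d} e^{-\|x\|_p^p}\,dx$ is set up.
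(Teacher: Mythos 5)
Your argument is correct: the two-way evaluation of $\int_{\mathbb{R}^d} e^{-\|x\|_p^p}\,dx$, combined with the scaling identity $V_p^d(r) = r^d V_p^d(1)$ and the trivial cube case for $p=\infty$, is the standard derivation of this classical formula, and each change of variables you perform checks out ($\tfrac{2}{p}\Gamma(\tfrac1p) = 2\Gamma(1+\tfrac1p)$ and $\tfrac{d}{p}\Gamma(\tfrac{d}{p}) = \Gamma(1+\tfrac{d}{p})$). The paper does not prove this lemma at all --- it simply cites \cite{W05} --- so there is no internal proof to compare against; your writeup is a complete and correct substitute for the citation.
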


It is easy to derive, for each $p \in [1,\infty]$, the minimum-radius $\ell_p$ ball around $\sumball$. The key technical step for our result is the following lemma, which we prove by analyzing the first and second derivatives of the expression in \Cref{lem:lp_volume} with respect to $p$.

\begin{restatable}{lemma}{OneOrInfBall}
\label{lem:1_or_inf} 
    The minimum-volume $\ell_p$ ball enclosing $\sumball$ is either the $\ell_1$ ball or the $\ell_\infty$ ball.
\end{restatable}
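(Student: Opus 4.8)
The plan is to compute the volume of the smallest enclosing $\ell_p$ ball as an explicit function of $p$ and show it is minimized at $p=1$ or $p=\infty$. By central symmetry of $\sumball$ (if $v \in \sumball$ then $-v \in \sumball$), some smallest enclosing $\ell_p$ ball is centered at the origin, and its radius is $\max_{v \in \sumball}\|v\|_p$; since $\|\cdot\|_p$ is convex this maximum is attained at a vertex. By \Cref{lem:sum_sensitivity_space} the vertices of $\sumball$ are exactly the points with $m \in \{1,\dots,k\}$ coordinates equal to $\pm1$ and the rest $0$, so the optimal radius is $r(p) = k^{1/p}$ for $p \in [1,\infty)$ and $r(\infty) = 1$. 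Substituting $r(p)$ into \Cref{lem:lp_volume}, taking logarithms, and writing $t = 1/p \in [0,1]$, it suffices to show that
\[
  h(t) \;=\; d\log 2 + dt\log k + d\log\Gamma(1+t) - \log\Gamma(1+dt)
\]
attains its minimum over $[0,1]$ at $t \in \{0,1\}$; here $t=1$ is the $\ell_1$ ball and $t=0$ (the continuous extension, with value $d\log 2 = \log|[-1,1]^d|$) is the $\ell_\infty$ ball.

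I would then show $h$ is concave on $[0,1]$, which gives $\min_{[0,1]} h = \min\{h(0),h(1)\}$ and finishes the proof. Letting $\psi = (\log\Gamma)'$ denote the digamma function, $h''(t) = d\,\psi'(1+t) - d^2\,\psi'(1+dt)$, so concavity is equivalent to the trigamma inequality $\psi'(1+t) \le d\,\psi'(1+dt)$ for $t \in [0,1]$ and integer $d \ge 2$. Since $\frac{d}{dx}\big[(x-1)\psi'(x)\big] = \psi'(x) + (x-1)\psi''(x)$, applying monotonicity of $x \mapsto (x-1)\psi'(x)$ at $x = 1+t$ and $x = 1+dt$ (and dividing by $t$) recovers exactly this inequality. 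So everything reduces to:
\[
  \psi'(x) + (x-1)\,\psi''(x) \;\ge\; 0 \qquad \text{for all } x \ge 1.
\]

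This last inequality is the one genuinely delicate point. Via the series $\psi'(x) = \sum_{n\ge0}(x+n)^{-2}$ and $\psi''(x) = -2\sum_{n\ge0}(x+n)^{-3}$ it becomes $\sum_{n\ge0}\frac{n+2-x}{(x+n)^3} \ge 0$. For $1 \le x \le 2$ every term is nonnegative, so that range is immediate; for $x > 2$ the first few terms are negative, so a term-by-term comparison fails and the tail must be controlled. I would handle this by comparing the series to the integral $\int_0^\infty \frac{n+2-x}{(x+n)^3}\,dn = x^{-2} > 0$, isolating the single negative lump of terms near $n=0$ and bounding the series--integral discrepancy using the sign pattern of the (increasing-then-decreasing) integrand --- e.g.\ a midpoint/Euler--Maclaurin estimate. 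Equivalently, one can use the representation $\psi'(x)+(x-1)\psi''(x) = \int_0^\infty \frac{w\,(1-(x-1)w)\,e^{-xw}}{1-e^{-w}}\,dw$ and exploit the monotonicity of $w \mapsto \frac{w}{1-e^{-w}}$ to bound the negative ($w > \tfrac1{x-1}$) part of the integral against the positive part. Once this estimate is in place, the concavity of $h$ --- and hence the lemma --- follows.
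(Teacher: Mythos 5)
Your reduction is sound and, in fact, it is essentially the paper's own argument after the substitution $t=1/p$: the paper also fixes the enclosing radius $r(p)=k^{1/p}$ (its Lemma~\ref{lem:sum_enclosing}), differentiates the log-volume, and its key Claim is that $Q(d,p)=\psi(1+\tfrac{d}{p})-\psi(1+\tfrac{1}{p})-\ln k$ is decreasing in $p$; since $h'(t)=-d\,Q(d,1/t)$, that Claim is literally your statement that $h$ is concave on $(0,1]$, and your endpoint conclusion then follows without the paper's case analysis on $k$. So the skeleton is fine. The problem is that you stop exactly at the only nontrivial step: you reduce everything to $\psi'(x)+(x-1)\psi''(x)\ge 0$ for $x\ge 1$, call it ``the one genuinely delicate point,'' and offer two sketches rather than a proof. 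That inequality \emph{is} the lemma; without it nothing is established.

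Moreover, your second sketch does not work as described. Writing the integrand as $f(w)\,\mu(w)$ with $f(w)=w\bigl(1-(x-1)w\bigr)e^{-xw}$ and the decreasing weight $\mu(w)=\tfrac{1}{1-e^{-w}}$, the plain monotone-weight (Chebyshev-type) comparison at the sign change $w_0=\tfrac{1}{x-1}$ only gives $\int f\mu \ge \mu(w_0)\int f = \mu(w_0)\,\tfrac{2-x}{x^3}$, which is negative for $x>2$; pairing an increasing factor $w/(1-e^{-w})$ with the sign-changing remainder runs into the same issue in the other direction. You need genuinely quantitative two-sided bounds, e.g.\ $\tfrac1w+\tfrac12\le \tfrac{1}{1-e^{-w}}\le \tfrac1w+1$, applied separately on the positive and negative ranges, and then a short residual estimate. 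The easier fixes are: (i) prove $\psi'(1+t)\le d\,\psi'(1+dt)$ directly from the trigamma double inequality $\tfrac{1}{x+6/\pi^2}+\tfrac{1}{x^2}<\psi'(x)<\tfrac{1}{x+1/2}+\tfrac{1}{x^2}$ (this is exactly how the paper closes its equivalent Claim), or (ii) use the standard Bernoulli-remainder bounds $\psi'(x)\ge \tfrac1x+\tfrac{1}{2x^2}+\tfrac{1}{6x^3}-\tfrac{1}{30x^5}$ and $-\psi''(x)\le \tfrac{1}{x^2}+\tfrac{1}{x^3}+\tfrac{1}{2x^4}$, which give $\psi'(x)+(x-1)\psi''(x)\ge \tfrac{1}{2x^2}+\tfrac{2}{3x^3}+\tfrac{1}{2x^4}-\tfrac{1}{30x^5}>0$ for $x\ge 1$. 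With any one of these spelled out, your proof is complete and slightly cleaner than the paper's (no case split on $k$); as submitted, it has a gap at its central inequality.
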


The remainder of the argument applies previous work bounding the volume of $\sumball$ to show that it is exponentially smaller than the $\ell_1$ or $\ell_\infty$ ball volumes given by~\Cref{lem:lp_volume}.
\section{Count}
\label{sec:count}

\subsection{Count Ball Sampler}
\label{subsec:count}
Recall that Count is Sum with an additional nonnegativity constraint.
\begin{lemma}
\label{lem:count_sensitivity_space}
    Let $V_+ = \{x \mid 0 \leq x_1, \ldots, x_d \leq 1 \text{ and } \|x\|_1 \leq k\}$. Then the convex hull of the count sensitivity space is $\countball = CH(V_+ \cup -V_+)$, and it induces a norm.
\end{lemma}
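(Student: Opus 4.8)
The plan is to follow the same template as the proof of \Cref{lem:sum_sensitivity_space}: first compute the convex hull of the Count sensitivity space, then verify the four hypotheses of \Cref{lem:induced_norm}. As in the Sum case take $b = 1$ without loss of generality, and write $D = \{x \in \mathbb{R}^d : x \geq 0,\ \|x\|_0 \leq k,\ \|x\|_\infty \leq 1\}$ for the set of admissible Count data vectors. Because $T = \sum_i x_i$ is linear, inserting or deleting a single record shifts $T$ by exactly $\pm x_i$ for some admissible $x_i$, so $S(T) = D \cup (-D)$ and hence $CH(S(T)) = CH(D \cup -D)$.

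The main computation is the identity $CH(D) = V_+$. One inclusion is immediate: a vector with at most $k$ nonzero coordinates, each of absolute value at most $1$, has $\ell_1$ norm at most $k$, so $D \subseteq V_+$, and $CH(D) \subseteq V_+$ since $V_+$ is convex. For the reverse inclusion I would appeal to the integrality of the polytope $V_+ = [0,1]^d \cap \{x : \sum_i x_i \leq k\}$: since $k$ is an integer, the slicing hyperplane $\sum_i x_i = k$ meets the cube only at $0/1$ vertices and introduces no new vertices in the relative interiors of edges, so the vertices of $V_+$ are exactly the $0/1$ vectors with at most $k$ ones, all of which lie in $D$; thus $V_+ = CH(\text{its vertices}) \subseteq CH(D)$. (Equivalently, this is the standard fact that $\{0 \leq x_i \leq 1 \text{ for all } i,\ \sum_i x_i \leq k\}$ has only integral vertices.) Combining this with the elementary identity $CH(A \cup B) = CH(CH(A) \cup CH(B))$, applied to $A = D$ and $B = -D$ and using $CH(-D) = -CH(D) = -V_+$, yields $CH(S(T)) = CH(D \cup -D) = CH(V_+ \cup -V_+) = \countball$, as claimed.

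It then remains to check the conditions of \Cref{lem:induced_norm} for $\countball$: it is convex as a convex hull; bounded because $V_+ \cup -V_+ \subseteq [-1,1]^d$ and the convex hull of a bounded set is bounded; symmetric around $0$ because negation swaps $V_+$ and $-V_+$ and hence fixes their joint convex hull; and absorbing because (for $k \geq 1$) every $\pm e_i$ lies in $D \cup -D \subseteq \countball$, so $\countball$ contains $CH(\{\pm e_i\}) = B_1^d$, which is already absorbing. I do not expect a genuine obstacle; the only step requiring any care is the $0/1$-vertex description of $V_+$ — i.e., confirming that intersecting the cube with the integer-level half-space creates no fractional vertices — and this is precisely the integrality phenomenon behind the vertex count in \Cref{lem:sum_sensitivity_space}, so it should transfer without difficulty.
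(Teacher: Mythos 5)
Your proposal is correct and follows essentially the same route as the paper: identify the convex hull of the Count sensitivity space as $CH(V_+\cup -V_+)$ and then verify the hypotheses of \Cref{lem:induced_norm} (convex, bounded, absorbing via the enclosed $\ell_1$ ball, symmetric because negation preserves the vertex set). The only difference is that you make explicit the integrality argument showing $CH(D)=V_+$ (no fractional vertices when the cube is cut by $\sum_i x_i\le k$ with $k$ integral), a step the paper leaves implicit by referencing the reasoning of \Cref{lem:sum_sensitivity_space}.
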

\begin{proof}
    By the same reasoning from \Cref{lem:sum_sensitivity_space}, the sensitivity space has vertices where between 1 and $k$ coordinates are nonzero. However, the nonnegativity constraint additionally means that the nonzero coordinates all have the same sign. This produces $\countball = CH(V_+ \cup -V_+)$.
    
    The same logic from \Cref{lem:sum_sensitivity_space} shows that $\countball$ is convex, bounded, and absorbing. Finally, it is symmetric around 0 because it is the convex hull of vertices that are symmetric around 0.
\end{proof}

$\countball$ is still symmetric around the origin, but it does not have the same shape in every orthant. Instead, we will see that the $2^d$ orthants fall into classes determined by the number of positive coordinates.

\begin{definition}
\label{def:j_stuff}
    Let $J_{0}^{d} = (1,...,1)$ be the vector of $d$ 1s, and define orthant $O(J_0^d) = \{x \in \mathbb{R}^d \mid x_1, \ldots, x_d \geq 0\}$. Given $J \in \{-1,1\}^d$, we define orthant $O(J) = \{J * v: v \in O(J_{0}^{d})\}$ where $*$ is element-wise multiplication, and define $J_+, J_- \subseteq [d]$ as the sets of coordinates at which $J$ equals 1 and -1, respectively. Finally, we define $V_J$ to be the vertices of $\countball$ in $O(J)$.
\end{definition}

Proofs of the following lemma, and other results in this section, appear in \Cref{sec:appendix_count}.

\begin{restatable}{lemma}{characterizationVJ}
\label{lem: characterization of V_J}
    Given $J \in \{-1,1\}^{d}$, $V_J$ consists of the subset of $V_{J_0}$ with support contained in $J_{+}$ and the subset of $V_{-J_0}$ with support contained in $J_{-}$.
\end{restatable}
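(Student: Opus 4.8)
The plan is to first pin down the vertex set of $\countball$ and then, for an arbitrary sign pattern $J$, simply determine which of those vertices lie in the closed orthant $O(J)$.

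For the first step, recall from the proof of \Cref{lem:count_sensitivity_space} that $\countball = CH(V_+ \cup -V_+)$, where $V_+$ is the polytope $\{x \mid 0 \le x_i \le 1,\ \|x\|_1 \le k\}$. The extreme points of $V_+$ are exactly the $\{0,1\}$-vectors with at most $k$ ones (including the zero vector), so the extreme points of $\countball$ lie among these vectors together with their negations. The zero vector is not extreme, since it is the midpoint of $e_1$ and $-e_1$. A short argument confirms that every remaining candidate is extreme: if a nonzero $\{0,1\}$-vector $v$ with support $S$ were written $v = \tfrac12(a+b)$ with $a,b \in \countball \subseteq [-1,1]^d$, then $a_i = b_i = 1$ for $i \in S$; expanding $a$ as a convex combination of vertices of $\countball$ forces every vertex in that combination to have $i$-th coordinate $1$ for all $i \in S$, hence to be a nonnegative vector, hence $a \ge 0$ (and likewise $b \ge 0$), and then $a_i + b_i = 2v_i = 0$ forces $a_i = b_i = 0$ off $S$, so $a = b = v$. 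Thus the vertices of $\countball$ are precisely the vectors with between $1$ and $k$ nonzero coordinates all equal to $+1$ or all equal to $-1$; the all-$+1$ ones constitute $V_{J_0}$ and the all-$-1$ ones constitute $V_{-J_0}$ (and these two sets are disjoint). If this characterization is taken as already supplied by \Cref{lem:count_sensitivity_space}, this step collapses to a citation.

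For the second step, write $O(J) = \{x \mid x_i \ge 0 \text{ for } i \in J_+,\ x_i \le 0 \text{ for } i \in J_-\}$ and intersect with the vertex set. If $v \in V_{J_0}$, then $v$ has entries in $\{0,1\}$, so $x_i \ge 0$ holds automatically for $i \in J_+$, and $v \in O(J)$ if and only if $v_i = 0$ for every $i \in J_-$, i.e., if and only if $\mathrm{supp}(v) \subseteq J_+$. Symmetrically, if $v \in V_{-J_0}$ then $v \in O(J)$ if and only if $\mathrm{supp}(v) \subseteq J_-$. Since every vertex of $\countball$ falls into exactly one of these two cases, $V_J$ is exactly the union of $\{v \in V_{J_0} : \mathrm{supp}(v) \subseteq J_+\}$ and $\{v \in V_{-J_0} : \mathrm{supp}(v) \subseteq J_-\}$, which is the claim.

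The only place that requires any care is the vertex characterization in the first step — in particular, verifying that passing to the convex hull of $V_+ \cup -V_+$ introduces no new extreme points and that the zero vector drops out. Once that is in hand, the orthant-and-support bookkeeping of the second step is immediate, so I expect this is essentially the intended argument, with the first step either cited from the preceding lemma or dispatched by the brief extreme-point computation above.
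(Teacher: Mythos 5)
Your proposal proves a weaker statement than the one the paper needs, because it reads $V_J$ as ``the vertices of $\countball$ that happen to lie in $O(J)$,'' whereas the lemma (as used and as proved in the paper) is about the vertex set of the restricted polytope $T_J = \countball \cap O(J)$. This is visible both in how the lemma is consumed --- $T(J) = CH(V_J)$ must coincide with the orthant piece $\countball \cap O(J)$ for the volume formula in \Cref{lem: volume of orthant} and the per-orthant sampler to be correct --- and in the paper's own proof, which explicitly entertains a candidate $z \in V_J$ that is \emph{not} a vertex of $\countball$ and spends essentially all of its length ruling such points out. (It is also why the origin appears in $V_{J_0}$ there: it is a vertex of $T_+^d$, though not of $\countball$.) Cutting a centrally symmetric polytope with an orthant generically \emph{does} create new vertices on the coordinate hyperplanes (an edge joining vertices in different orthants can cross a hyperplane $x_i = 0$ and the crossing point becomes a vertex of the slice), so the substantive content of the lemma is precisely that this does not happen for $\countball$.

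Your two steps --- characterizing the vertices of $\countball$ (which is correct, and essentially already available from \Cref{lem:count_sensitivity_space}) and checking which of them satisfy the sign condition of $O(J)$ --- only give the easy inclusion: every vertex of $\countball$ lying in $O(J)$ is a vertex of $T_J$, and these are exactly the vectors with support in $J_+$ or in $J_-$. What is missing is the converse, i.e., that $T_J$ has no further vertices on the bounding hyperplanes of the orthant; equivalently, that $\countball \cap O(J)$ equals the convex hull of the origin together with those vertices. The paper's proof handles this by showing that any such putative vertex $z$ must lie in the relative interior of an edge $(v,w)$ of $\countball$, and then eliminating all cases for $v,w$: both in $V_{J_0}$ or both in $V_{-J_0}$ lead to contradictions with $z$ lying on a bounding hyperplane and being extreme in the slice, while the mixed case uses a supporting linear functional to show $v$ and $w$ can have no sign disagreement, hence have disjoint supports inside $J_+$ and $J_-$, forcing $v,w \in O(J)$ and again a contradiction. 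Without an argument of this kind, the orthant decomposition used downstream (orthant volumes, cross-section sampling) is unjustified, so the proposal as written has a genuine gap.
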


\Cref{lem: characterization of V_J} is our primary tool for reasoning about the shape of $T(J) = CH(V_J)$ in each orthant $J$. It enables us to view the shape as an interpolation between the convex hull of its vertices with support contained in $J_+$ and the convex hull of its vertices with support contained in $J_-$. These convex hulls are identical to Sum balls with dimension $|J_+|$ and $|J_-|$, respectively. We can therefore reuse the knowledge of the Sum ball developed in \Cref{sec:sum}. However, the resulting argument is technically different. Instead of decomposing the relevant shape into simplices and reducing to an essentially discrete problem, we directly evaluate the integral for $|T(J)|$ by analyzing the infinitesimal ``shells'' of the interpolation. For $|J_+| = j$, this produces the expression
\begin{equation}
\label{eq:beta}
    |T(J)| = j\left(\sum_{i=1}^{k}\frac{A_{j,i-1}}{j!}\right)\left(\sum_{i=1}^{k}\frac{A_{d-j,i-1}}{(d-j)!}\right)\int_{0}^{1}t^{j-1}(1-t)^{d-j}\partial t
\end{equation}
where $A$ denotes Eulerian numbers (\Cref{def:s_d}). Evaluating the integral yields the following.
\begin{restatable}{lemma}{volumeOrthant}
\label{lem: volume of orthant}
    Given $J \in \{-1,1\}^d$ with $|J_{+}| = j$, $|T(J)| = \left(\sum_{i=1}^{k}A_{j,i-1}\right)\left(\sum_{i=1}^{k}A_{d-j,i-1}\right)\frac{1}{d!}$.
\end{restatable}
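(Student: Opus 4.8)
The plan is to treat \Cref{lem: volume of orthant} as the evaluation step of a two-part argument: first establish the integral formula \Cref{eq:beta} from \Cref{lem: characterization of V_J}, then simplify it with a Beta-function identity. If \Cref{eq:beta} is taken as already given from the surrounding text, only the second part is needed, and it is a one-line computation.

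\emph{Deriving \Cref{eq:beta}.} Fix $J$ with $|J_+| = j$. By \Cref{lem: characterization of V_J}, $V_J = V_J^+ \cup V_J^-$, where $V_J^+$ consists of the vertices of $\countball$ supported on $J_+$ and $V_J^-$ of those supported on $J_-$. Writing points by their $J_+$- and $J_-$-coordinate blocks, $V_J^+$ is the vertex set of (a copy of) the positive part of a $j$-dimensional Sum ball living in the $\mathbb{R}^{J_+}$ block with the $\mathbb{R}^{J_-}$ block zeroed out, and symmetrically $V_J^-$ in the $\mathbb{R}^{J_-}$ block. Let $P := CH(V_J^+) \subset \mathbb{R}^{J_+}$ and $N := CH(V_J^-) \subset \mathbb{R}^{J_-}$; both are convex bodies with the origin as a vertex, sitting in complementary coordinate subspaces, so a short calculation gives
\[
  T(J) \;=\; CH(P \cup N) \;=\; \{\,((1-t)p,\, tq) \,:\, t \in [0,1],\ p \in P,\ q \in N\,\}.
\]
Using the Minkowski gauges $\|\cdot\|_P$ and $\|\cdot\|_N$ (each finite exactly on the orthant spanned by its body), one checks $(\xi, \eta) \in T(J) \iff \|\xi\|_P + \|\eta\|_N \le 1$. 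For fixed $\xi$ the admissible $\eta$ form $(1-\|\xi\|_P)N$, of $(d-j)$-volume $(1-\|\xi\|_P)^{d-j}|N|$, and sweeping $\|\xi\|_P$ radially over $P$ (the ``infinitesimal shells'', using $\mathrm{vol}_j(sP) = s^{\,j}|P|$ and hence density $j s^{\,j-1}|P|$) yields $|T(J)| = j\,|P|\,|N| \int_0^1 t^{\,j-1}(1-t)^{d-j}\,dt$. Plugging in the volumes $|P| = \tfrac{1}{j!}\sum_{i=1}^{k} A_{j,i-1}$ and $|N| = \tfrac{1}{(d-j)!}\sum_{i=1}^{k} A_{d-j,i-1}$ from \Cref{sec:sum} (each $R$-slice of a Sum ball has volume $A_{\cdot,\cdot-1}/(\cdot)!$ by the Eulerian decomposition of \Cref{def:s_d}) is exactly \Cref{eq:beta}.

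\emph{Evaluating the integral.} Recognize $\int_0^1 t^{\,j-1}(1-t)^{d-j}\,dt$ as the Beta integral $B(j, d-j+1) = \Gamma(j)\Gamma(d-j+1)/\Gamma(d+1) = (j-1)!\,(d-j)!/d!$. Substituting into \Cref{eq:beta}, the $(d-j)!$ in the denominator cancels, $j\cdot(j-1)!/j! = 1$, and we are left with $\big(\sum_{i=1}^{k} A_{j,i-1}\big)\big(\sum_{i=1}^{k} A_{d-j,i-1}\big)/d!$, as claimed. The boundary cases $j \in \{0, d\}$ (one coordinate block empty, where \Cref{eq:beta} degenerates) are handled directly: with the convention $A_{0,0} = 1$ one has $\sum_{i=1}^{k} A_{0,i-1} = 1$ for $k \ge 1$, and $T(J)$ is just a reflected $d$-dimensional Sum-ball half of volume $\tfrac{1}{d!}\sum_{i=1}^k A_{d,i-1}$, which matches the formula.

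\emph{Main obstacle.} The real content is in the first part: correctly reading off from \Cref{lem: characterization of V_J} that $T(J)$ is the convex-hull join $CH(P \cup N)$ of two lower-dimensional Sum-ball pieces, verifying the gauge description, and computing $|P|$ and $|N|$ via the Eulerian-number machinery of \Cref{sec:sum}. The radial ``shell'' integration that turns this into \Cref{eq:beta} and the final Beta-function evaluation are then routine.
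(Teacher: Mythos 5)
Your proposal is correct and follows essentially the same route as the paper: use the vertex characterization to view $T(J)$ as the join of a $j$-dimensional and a $(d-j)$-dimensional Sum-ball piece (origin included), integrate over ``shells'' to arrive at exactly \Cref{eq:beta}, and then evaluate the integral. The only cosmetic differences are that you justify the shell integral via Minkowski gauges and Fubini rather than the paper's minimal-$t$ partition into cross-sections, and you evaluate $\int_0^1 t^{j-1}(1-t)^{d-j}\,dt$ by the Beta-function identity instead of repeated integration by parts.
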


\Cref{lem: volume of orthant} provides the weights to sample an orthant index $J \in \{-1,1\}^d$ of $\countball$. To sample the orthant subshape $T(J)$, \Cref{eq:beta} shows that we can sample a cross-section of $T(J)$ by sampling a $\betad{j}{d-j+1}$ distribution, which has density $f(t) \propto t^{j-1}(1-t)^{d-j}$.

After sampling a cross-section index $t$, the last task is sampling the cross-section. We do so by decomposing the cross-section into subshapes, each of which is identical to a lower-dimensional Sum ball\footnote{A possible exception is a subshape identical to a hypersimplex (see \Cref{sec:appendix_count} for details).}, and then applying the Sum sampler from \Cref{sec:sum} twice, once for each of the two convex hulls in our interpolation. As a result, the final runtime is dominated by the $O(d^2)$ runtime of the two $\sumball$ samples. 

\subsection{Rejection Sampling the Count Ball Is Inefficient}
\label{subsec:count_rejection}
$\countball$ is contained inside $\sumball$ but has the same minimum containing $\ell_p$ balls, so a negative result for rejection sampling $\countball$ follows from the negative result for rejection sampling $\sumball$.

\subsection{Count Ellipse}
\label{subsec:count_ellipse}
This section derives a closed form for the $\ell_2^2$-minimizing ellipse containing $\countball$. We combine this with \Cref{lem:elliptic_gaussian} to obtain better Gaussian noise for Count.

\begin{definition}
\label{def:ellipse}
    A \emph{minimum ellipse} $E$ of a shape $X$ is an ellipse enclosing $X$ with minimum expected squared $\ell_2$ norm on the $d$-dimensional space it encloses, denoted $\enc{E}$. Given positive definite $A \in \mathbb{R}^{d \times d}$, we define $E_A = \{x \mid x^{T}Ax = 1\}$, sometimes denoted $E$ if $A$ is clear from context. Given a basis of eigenvectors $s_1, \ldots, s_d$ and eigenvalues $\lambda_1, \ldots, \lambda_d$ of $A$, $E_A$ has \emph{axis directions} $s_1, \ldots, s_d$ and \emph{axis lengths} $a_1 = 1/\sqrt{\lambda_1}, \ldots, a_d = 1/\sqrt{\lambda_d}$.
\end{definition}

The first result allows us to restrict attention to origin-centered ellipses. The proof argues that any ellipse not centered at the origin can be transformed into an origin-centered one with a strictly smaller expected squared $\ell_2$ norm. Proofs for this and the following results appear in \Cref{sec:appendix_count}.

\begin{restatable}{lemma}{ellipseCentering}
\label{lem:centering_lemma}
    Any minimum ellipse of $\countball$ is origin-centered.
\end{restatable}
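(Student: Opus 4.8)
\textbf{Proof proposal for \Cref{lem:centering_lemma}.}

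The plan is to show that if $E$ is any ellipse enclosing $\countball$ that is \emph{not} centered at the origin, then there is an origin-centered ellipse $E'$ that also encloses $\countball$ and has strictly smaller $\enc{E'}$; hence a minimum ellipse must be origin-centered. The symmetry hypothesis we exploit is that $\countball$ is symmetric about the origin, i.e.\ $\countball = -\countball$ (this was established in \Cref{lem:count_sensitivity_space}). Concretely, write $E = \{x : (x-c)^T A (x-c) = 1\}$ for positive definite $A$ and center $c \neq 0$; the solid ellipse it bounds is $\bar E = \{x : (x-c)^T A (x-c) \leq 1\}$. Define the reflected ellipse $E^- = \{x : (x+c)^T A (x+c) = 1\}$. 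Since $\countball = -\countball \subseteq -\bar E = \overline{E^-}$, the body $\countball$ is contained in both $\bar E$ and $\overline{E^-}$, hence in their intersection. I would then take $E'$ to be the ellipse $\{x : x^T A x = 1 - c^T A c\}$ --- equivalently, rescaled so it is an enclosing ellipse --- and argue it is the natural origin-centered candidate: averaging the two quadratic forms $(x-c)^TA(x-c)$ and $(x+c)^TA(x+c)$ gives $x^TAx + c^TAc$, so $\{x : x^TAx + c^TAc \leq 1\}$ is contained in $\bar E \cap \overline{E^-}$ by convexity (if both $(x-c)^TA(x-c)\le 1$ and $(x+c)^TA(x+c)\le 1$ then their average $x^TAx + c^TAc \le 1$), and therefore contains $\countball$. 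This $E'$ uses the \emph{same} matrix $A$ (same axis directions and relative axis lengths) but is shrunk by the factor $\sqrt{1 - c^TAc} < 1$ in every direction, provided $c^TAc < 1$; I would note $c^TAc < 1$ always holds because $\countball$ is nonempty and $0 \in \countball$ forces $c$ strictly inside $\bar E$, i.e.\ $c^TAc < 1$ (using that $0\in\countball\subseteq\bar E$ means $(0-c)^TA(0-c)=c^TAc\le 1$, and strictness because $\countball$ has nonempty interior so $0$ is interior).

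The remaining point is that shrinking strictly decreases $\enc{\cdot}$. With axis lengths $a_1,\dots,a_d$, \Cref{def:random_ellipse} (and the preceding discussion of sampling the elliptic Gaussian) identifies $\enc{E}$ up to a universal constant with $\sum_i a_i^2$; scaling every $a_i$ by $\gamma = \sqrt{1-c^TAc} < 1$ scales $\enc{\cdot}$ by $\gamma^2 < 1$, so $\enc{E'} = \gamma^2 \enc{E} < \enc{E}$. (If one prefers to avoid quoting the $\sum a_i^2$ formula, the same conclusion follows directly: a uniform scaling by $\gamma<1$ maps the enclosed region into a strictly smaller region, and expected squared $\ell_2$ norm is monotone under this containment --- indeed it scales exactly by $\gamma^2$.) Combining, $E'$ is origin-centered, encloses $\countball$, and has strictly smaller $\enc{\cdot}$, contradicting minimality of any non-centered $E$.

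The main obstacle --- really the only place care is needed --- is verifying the containment $\countball \subseteq \{x : x^TAx \le 1 - c^TAc\}$ cleanly, i.e.\ that the ``symmetrized'' ellipse still encloses $\countball$. The clean way is the averaging argument above applied pointwise to each $x \in \countball$: since both $x \in \bar E$ (given) and $x \in \overline{E^-}$ (by origin-symmetry of $\countball$), adding the two inequalities $(x-c)^TA(x-c)\le 1$ and $(x+c)^TA(x+c)\le 1$ and dividing by $2$ yields $x^TAx + c^TAc \le 1$, which is exactly the claim. One should also double check the degenerate edge case $c^TAc$ close to its extreme value is impossible here, which follows from $\countball$ having nonempty interior and containing $0$ as an interior point, so that $0$ --- and hence $c$, which lies on the segment from $0$ reflected appropriately --- is strictly interior to $\bar E$. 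With these observations the proof is short; I expect the paper's version to phrase the shrink factor and the $\enc{\cdot}$ monotonicity slightly differently but along these lines.
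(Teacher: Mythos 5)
Your proof is correct, but it takes a genuinely different route from the paper. The paper keeps the \emph{same} ellipse shape and merely translates its center $v$ to the origin: it shows $\E{}{\|p(E)\|_2^2} = \|v\|_2^2 + \E{}{\|p(E_c)\|_2^2}$ (so translation strictly helps), and then proves the translated ellipse still contains $\countball$ by mapping $E$ to a unit sphere with a linear operator $U$ and arguing that the origin-symmetric body $U(\countball)$ inside a (non-centered) unit sphere must in fact lie in the origin-centered unit ball of radius $R<1$. You instead exploit origin-symmetry algebraically: averaging $(x-c)^TA(x-c)\le 1$ and $(x+c)^TA(x+c)\le 1$ gives $x^TAx \le 1-c^TAc$ for every $x\in\countball$, so the origin-centered ellipse with the same matrix $A$ shrunk by $\sqrt{1-c^TAc}<1$ still encloses $\countball$. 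This is more self-contained (no sphere-mapping geometry) and even quantitatively stronger, since you get a strictly smaller centered ellipse rather than just a recentered one; the paper's argument, in exchange, isolates the clean fact that recentering alone already beats any off-center ellipse, which is the fact reused for $\voteball$.

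One imprecision to fix: you apply the formula $\enc{E}\propto\sum_i a_i^2$ (equivalently, assert $\enc{E'}=\gamma^2\enc{E}$, or appeal to monotonicity of expected squared norm under containment, which is false in general) to the \emph{non}-centered ellipse $E$; the paper's Lemma on $\tfrac{1}{d+2}\sum_j a_j^2$ is only for origin-centered ellipses. The fix is one line and is exactly the paper's translation identity: since a uniform sample $Z$ from the centered copy of $E$ has mean zero, $\E{}{\|c+Z\|_2^2}=\|c\|_2^2+\E{}{\|Z\|_2^2}\ge \tfrac{1}{d+2}\sum_j a_j^2 > \gamma^2\cdot\tfrac{1}{d+2}\sum_j a_j^2 = \enc{E'}$, so $\enc{E'}<\enc{E}$ and your contradiction goes through. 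Your handling of the degenerate case ($c^TAc\ge 1$ is impossible because $0$ is interior to $\countball$) is fine.
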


Next, we relate an (origin-centered) ellipse $E_A$'s axis lengths to the magnitude of a random sample from $\enc{E_A}$. This will be useful for identifying a minimum ellipse.

\begin{restatable}{lemma}{ellipseNorm}
\label{lem:ellipse_squared_l2}
    Let ellipse $E_A$ have axis lengths $a_1, \ldots, a_d$, and let $Z$ be a uniform sample from $\enc{E_A}$. Then $\E{}{\|Z\|_2^2} = \tfrac{1}{d+2}\left(\sum_{j=1}^d a_j^2\right)$.
\end{restatable}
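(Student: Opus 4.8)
The plan is to reduce the ellipsoid to the unit $\ell_2$ ball by an affine change of coordinates and then evaluate a single one-dimensional radial integral. First I would diagonalize $A$: writing $A = Q\Lambda Q^T$ where $Q$ is the orthogonal matrix whose columns are the eigenvectors $s_1,\dots,s_d$ and $\Lambda = \mathrm{diag}(\lambda_1,\dots,\lambda_d)$, the substitution $x = Qy$ sends $\enc{E_A} = \{x : x^TAx \le 1\}$ to $\{y : \sum_j \lambda_j y_j^2 \le 1\}$. This map preserves Lebesgue measure (so a uniform sample maps to a uniform sample) and preserves $\|\cdot\|_2$ because $Q$ is orthogonal, so $\E{}{\|Z\|_2^2}$ is unchanged and we may assume the ellipsoid is axis-aligned with semi-axes $a_j = 1/\sqrt{\lambda_j}$ (matching the axis lengths of \Cref{def:ellipse}).

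Next I would rescale each coordinate by its semi-axis via $y_j = a_j u_j$. This is linear with constant Jacobian $\prod_j a_j$, so a uniform sample $Z$ from the ellipsoid corresponds to a uniform sample $U$ from $B_2^d$, and $\|Z\|_2^2 = \sum_j a_j^2 U_j^2$. Taking expectations and using that the coordinates of a uniform point in $B_2^d$ are exchangeable gives $\E{}{\|Z\|_2^2} = \big(\sum_j a_j^2\big)\E{}{U_1^2} = \tfrac{1}{d}\big(\sum_j a_j^2\big)\E{}{\|U\|_2^2}$. Finally I would compute $\E{}{\|U\|_2^2}$ for $U$ uniform on $B_2^d$ using polar coordinates: the radial part has density proportional to $r^{d-1}$ on $[0,1]$, so $\E{}{\|U\|_2^2} = \int_0^1 r^2\cdot d\,r^{d-1}\,dr = d/(d+2)$. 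Substituting back yields $\E{}{\|Z\|_2^2} = \tfrac{1}{d}\cdot\tfrac{d}{d+2}\sum_j a_j^2 = \tfrac{1}{d+2}\sum_j a_j^2$, as claimed. (Alternatively, $\E{}{\|U\|_2^2}$ can be obtained directly from the volume formula $V_2^d(r)$ in \Cref{lem:lp_volume} as $\int_0^1 r^2\, d(V_2^d(r))/V_2^d(1)$, avoiding polar coordinates.)

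There is no serious obstacle here; the proof is essentially a change-of-variables computation. The only points requiring a line of justification are that the affine map genuinely carries the uniform distribution on the ellipsoid to the uniform distribution on $B_2^d$ (which follows since its Jacobian is a nonzero constant), and the bookkeeping that identifies the axis lengths $a_j$ in the statement with the reciprocal square roots of the eigenvalues of $A$. I expect the writeup to be just a few lines.
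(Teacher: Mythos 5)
Your proof is correct and takes essentially the same approach as the paper's: both push the uniform measure through the linear map defining the ellipsoid and reduce the problem to a per-coordinate symmetry factor $\frac{1}{d}\sum_j a_j^2$ times a one-dimensional radial integral yielding $d/(d+2)$. The only cosmetic difference is that you change variables to the solid unit ball up front, whereas the paper integrates shell-by-shell over scaled copies of the ellipsoid boundary.
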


We now prove that the minimum ellipse of $\countball$ is unique. The proof analyzes the ``average'' ellipse that arises from combining two distinct minimum ellipses of $\countball$ and applies the Courant-Fischer theorem to argue that this average ellipse has smaller axes while still containing $\countball$. By the preceding lemma, this contradicts the assumption that the initial ellipses were minimal.

\begin{restatable}{lemma}{ellipseUnique}
\label{lem:count_ellipse_unique}
    The minimum ellipse of $\countball$ is unique.
\end{restatable}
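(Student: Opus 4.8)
I would first reformulate uniqueness of the minimum ellipse as uniqueness of a minimizer of a convex program over matrices. By \Cref{lem:centering_lemma} every minimum ellipse of $\countball$ is origin-centered, hence of the form $E_A$ for some positive definite $A$, and ``$E_A$ encloses $\countball$'' means $\countball \subseteq \{x \mid x^T A x \le 1\}$. By \Cref{lem:ellipse_squared_l2} and \Cref{def:ellipse}, the quantity being minimized is $\enc{E_A} = \tfrac{1}{d+2}\sum_{j=1}^d a_j^2 = \tfrac{1}{d+2}\sum_{j=1}^d \lambda_j(A)^{-1} = \tfrac{1}{d+2}\Tr(A^{-1})$, where $\lambda_1(A), \ldots, \lambda_d(A)$ are the eigenvalues of $A$. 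So it suffices to show that $A \mapsto \Tr(A^{-1})$ has a unique minimizer over the convex set $\mathcal{F} = \{A \succ 0 \mid x^T A x \le 1 \text{ for all } x \in \countball\}$; existence of a minimizer I would take as given, via a routine compactness argument (the constraint $x^T A x \le 1$ caps the eigenvalues of $A$ from above, and a sublevel set $\{A \in \mathcal{F} \mid \Tr(A^{-1}) \le c\}$ then also caps them from below, so the feasible region intersected with any sublevel set is compact).

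The plan for uniqueness is the averaging argument sketched just before the statement. Suppose toward a contradiction that $A_1 \neq A_2$ are both minimizers and set $A = \tfrac12(A_1 + A_2)$. Then $A \succ 0$ as an average of positive definite matrices, and $x^T A x = \tfrac12 x^T A_1 x + \tfrac12 x^T A_2 x \le 1$ for every $x \in \countball$, so $A \in \mathcal{F}$ and $E_A$ is a legitimate enclosing ellipse of $\countball$. It then remains to prove the strict inequality
\[
    \Tr(A^{-1}) \;<\; \tfrac12\Tr(A_1^{-1}) + \tfrac12\Tr(A_2^{-1}) \;=\; \Tr(A_1^{-1}),
\]
which via \Cref{lem:ellipse_squared_l2} says that $E_A$ has strictly smaller expected squared $\ell_2$ norm than $E_1$ --- informally, its axes are smaller in the only aggregate sense that matters here --- contradicting the minimality of $E_1$ and hence forcing $A_1 = A_2$.

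The displayed inequality is exactly the strict convexity of $X \mapsto \Tr(X^{-1})$ on the positive definite cone, and I would prove it using the Courant--Fischer / Rayleigh-quotient description of eigenvalues. Writing $\Tr(X^{-1}) = \sum_j u_j^T X^{-1} u_j$ for an orthonormal basis $\{u_j\}$ and using the variational identity $u^T X^{-1} u = \max_w(2u^T w - w^T X w)$ (attained at $w = X^{-1} u$), $\Tr(X^{-1})$ is a maximum of affine functions of $X$, hence convex. Taking the $u_j$ to be eigenvectors of $A$, so the maximizers are $w_j = A^{-1} u_j$, and splitting $w_j^T A w_j = \tfrac12 w_j^T A_1 w_j + \tfrac12 w_j^T A_2 w_j$, each term $2u_j^T w_j - w_j^T A w_j$ is at most $\tfrac12 u_j^T A_1^{-1} u_j + \tfrac12 u_j^T A_2^{-1} u_j$; summing gives the inequality, and equality everywhere forces $w_j = A_1^{-1} u_j = A_2^{-1} u_j$ for all $j$, hence $A_1^{-1} = A_2^{-1}$ and $A_1 = A_2$. (Equivalently, one can cite operator convexity of the matrix inverse, i.e.\ the matrix arithmetic--harmonic-mean inequality $(\tfrac{A_1 + A_2}{2})^{-1} \preceq \tfrac12(A_1^{-1} + A_2^{-1})$ with equality iff $A_1 = A_2$, and take traces using Loewner monotonicity of eigenvalues, a corollary of Courant--Fischer.) The main obstacle is precisely the strict part: the non-strict version of this matrix inequality is standard, but checking that equality holds only when $A_1 = A_2$ --- the step that turns convexity of the program into a genuinely unique minimizer --- is where the argument must be handled carefully.
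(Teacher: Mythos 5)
Your proof is correct, and it follows the same high-level strategy as the paper --- restrict to origin-centered ellipses via \Cref{lem:centering_lemma}, reduce the objective to the sum of squared axis lengths (equivalently $\Tr(A^{-1})$) via \Cref{lem:ellipse_squared_l2}, and derive a contradiction by averaging two putative minimizers, noting that the averaged constraint still contains $\countball$ --- but the key technical step is genuinely different. The paper establishes the strict improvement of the average ellipse by applying Courant--Fischer to the Rayleigh quotients $R_{A+B} = R_A + R_B$, concluding $\lambda_j(A+B) > \max\{\lambda_j(A),\lambda_j(B)\}$ and hence $\sum_j \lambda_j(A+B)^{-1} < \sum_j \lambda_j(A)^{-1}$; you instead prove strict convexity of $X \mapsto \Tr(X^{-1})$ on the positive definite cone via the variational identity $u^T X^{-1} u = \max_w (2u^Tw - w^TXw)$, with the equality analysis (uniqueness of the maximizer of the strictly concave quadratic) forcing $A_1 = A_2$. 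Your route buys something real here: it works with the correctly normalized average $A = \tfrac12(A_1+A_2)$, whereas the paper's ``average'' ellipse $x^T(A+B)x = 2$ has squared axis lengths $2/\lambda_j(A+B)$, so the inequality the paper actually derives, $\sum_j \lambda_j(A+B)^{-1} < \sum_j \lambda_j(A)^{-1}$, is off by a factor of 2 from what the contradiction needs ($\sum_j 2/\lambda_j(A+B) < \sum_j 1/\lambda_j(A)$); indeed the needed inequality becomes an equality when $A = B$, so no bound of the form $\lambda_j(A+B) > \max\{\lambda_j(A),\lambda_j(B)\}$ alone can suffice --- the strictness must come from $A_1 \neq A_2$, which is exactly what your equality-case analysis (equivalently, the strict matrix AM--HM inequality) supplies. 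In that sense your argument is not only correct but is the cleaner and fully rigorous version of the step the paper gestures at; the only things you leave informal --- existence of a minimizer and the routine trace computation --- are treated no more carefully in the paper.
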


It remains to derive explicit properties of this minimum ellipse, starting with its axes. The proof observes that transposing any two coordinates of the minimum ellipse produces another origin-centered ellipse containing $\countball$. By its minimality (\Cref{lem:ellipse_squared_l2}) and uniqueness (\Cref{lem:count_ellipse_unique}), this is exactly the minimum ellipse. Further analysis of the symmetries of the ellipse yields the claim.

\begin{restatable}{lemma}{ellipseAxes}
\label{lem:axes_directions}
    The minimum ellipse $E$ of $\countball$ has an axis along the $(1, \ldots, 1)$ direction, and the remaining axis lengths are equal, $a_2 = a_3 = \cdots = a_d$.
\end{restatable}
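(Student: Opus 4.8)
The plan is to push the coordinate-permutation symmetry of $\countball$ through the uniqueness of its minimum ellipse. First I would record that $\countball$ is invariant under every coordinate permutation: the set $V_+ = \{x : 0 \le x_i \le 1,\ \|x\|_1 \le k\}$ is defined symmetrically in the coordinates, and reflection through the origin commutes with permutations, so $\pi(\countball) = \countball$ for every permutation matrix $\pi$.

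Next, fix the minimum ellipse $E = E_A$, which is origin-centered by \Cref{lem:centering_lemma}, and let $\pi$ be any transposition matrix. Since $\pi$ is orthogonal, $\pi(E_A) = E_{\pi A \pi^T}$ is again an origin-centered ellipse; it contains $\pi(\countball) = \countball$; and because $\pi$ preserves $\ell_2$ norms and Lebesgue measure, \Cref{lem:ellipse_squared_l2} gives $\enc{\pi(E_A)} = \enc{E_A}$. Hence $\pi(E_A)$ is also a minimum ellipse of $\countball$, so \Cref{lem:count_ellipse_unique} forces $\pi(E_A) = E_A$; since distinct positive definite matrices induce distinct ellipses, this means $\pi A \pi^T = A$. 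Transpositions generate $S_d$, so $A$ commutes with every permutation matrix.

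The remaining step is purely linear-algebraic. From $\pi A \pi^T = A$ for all $\pi \in S_d$ we get $A_{ii} = a$ for a common value $a$ and $A_{ij} = c$ for a common value $c$ when $i \neq j$, i.e., $A = (a-c)I + c\,\vecone\vecone^T$. Such a matrix has $\vecone = (1,\dots,1)$ as an eigenvector with eigenvalue $a - c + cd$, and $\vecone^{\perp}$ as an eigenspace with eigenvalue $a-c$ of multiplicity $d-1$ (positive definiteness makes both eigenvalues positive). Translating eigenvalues into axis lengths via \Cref{def:ellipse} yields an axis of $E$ along $(1,\dots,1)$ and the other $d-1$ axis lengths all equal to $1/\sqrt{a-c}$, which is the claim.

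I do not anticipate a serious obstacle; the only points requiring a little care are (i) that $A \mapsto E_A$ is injective on positive definite matrices — so that equality of the ellipses as sets upgrades to equality of the matrices, which follows by homogeneity since the boundary determines the quadratic form — and (ii) that orthogonal transformations leave $\enc{\cdot}$ unchanged, which is immediate from \Cref{lem:ellipse_squared_l2} because they preserve both the multiset of axis lengths and volume.
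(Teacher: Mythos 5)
Your proposal is correct. It shares the paper's opening move --- $\countball$ is invariant under coordinate transpositions, transposing is an origin-fixing isometry so it preserves the expected squared $\ell_2$ norm (\Cref{lem:ellipse_squared_l2}), and uniqueness (\Cref{lem:count_ellipse_unique}) then forces $\pi(E) = E$ --- but from that point the two arguments diverge. The paper stays geometric: it interprets $\sigma_{i,j}(E)=E$ as reflection symmetry across the hyperplane $\Pi_{i,j}$, proves by induction on dimension that a reflecting vector orthogonal to such a hyperplane must be an eigenvector of $A$ (its Claim on reflection eigenvectors), obtains the eigenvectors $e_j - e_1$ for $2 \le j \le d$, and then argues that since no two of these are orthogonal while distinct eigenspaces of a symmetric matrix are orthogonal, they all share one eigenvalue, leaving $(1,\ldots,1)$ as the last axis. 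You instead upgrade $\pi(E_A)=E_A$ to the matrix identity $\pi A \pi^T = A$ (using that the unit level set determines a positive definite form, as you note), let transpositions generate $S_d$, and invoke the standard classification $A = (a-c)I + c\,\vecone\vecone^T$ of symmetric matrices commuting with all permutations, from which the eigenstructure --- $\vecone$ with eigenvalue $a-c+cd$ and $\vecone^\perp$ with eigenvalue $a-c$ of multiplicity $d-1$ --- is immediate. Your route replaces the paper's inductive reflection-eigenvector claim and the non-orthogonality argument with a shorter, purely algebraic step; the paper's version, on the other hand, isolates a reusable geometric fact about reflection symmetries of ellipses that does not depend on having the full permutation group available. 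Both handle the degenerate spherical case ($c=0$, respectively $a_1=a_2$) without extra effort.
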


The final lemma identifies contact points between the minimum ellipse and $\countball$. This result relies on $k \leq d/2$. Informally, its proof argues that the polytope cross-section radius around the $(1, 1, \ldots, 1)$ vector varies as a parabola that peaks at $\|x\|_1 = d/2$, while the ellipse cross-section radius simply decreases with distance from the origin. For $k \leq d/2$, a minimum ellipse that contains the whole polytope must contact the polytope at the cross-section at $\|x\|_1 = k$. The argument does not extend to $k > d/2$ because the polytope cross-section radius is decreasing over this range.

\begin{restatable}{lemma}{ellipseContacts}
\label{lem:ellipse_contacts}
    For $k \leq d/2$, the minimum ellipse of $\countball$ contacts points with $k$ 1s and $d-k$ 0s.
\end{restatable}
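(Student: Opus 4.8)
The plan is to make the containment constraint fully explicit, collapse it to a single scalar inequality using $k\le d/2$, and then read off where that inequality is tight. By \Cref{lem:centering_lemma,lem:count_ellipse_unique,lem:axes_directions}, the minimum ellipse $E$ of $\countball$ is the unique origin-centered ellipse having one axis of length $a_1$ along $(1,\dots,1)$ and all $d-1$ remaining axes of a common length $a_2$. Since $E$ is the overall minimizer of $\enc{\cdot}$ and has this restricted form, \Cref{lem:ellipse_squared_l2} implies that its pair $(a_1,a_2)$ minimizes $a_1^2+(d-1)a_2^2$ among all $(a_1,a_2)$ whose associated restricted-form ellipse contains $\countball$. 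A solid ellipsoid is convex, so it contains $\countball$ iff it contains every vertex of $\countball$; by the proof of \Cref{lem:count_sensitivity_space} these vertices are precisely the vectors $\pm\indic{S}$ (the $0/1$ indicator vector of a set $S$) with $1\le|S|\le k$, with no mixed-sign or fractional vertices, since the vertices of $CH(V_+\cup -V_+)$ lie among those of $V_+$ and of $-V_+$ and the only remaining candidate, the origin, is interior once $k\ge1$. Writing $u$ for the unit vector along $(1,\dots,1)$ and setting $p=1/a_1^2$, $q=1/a_2^2$, the restricted-form ellipse is $\{x:x^{T}Ax\le1\}$ for $A=p\,uu^{T}+q(I-uu^{T})$, and a vertex $\indic{S}$ with $m=|S|$ satisfies $\langle\indic{S},u\rangle^2=m^2/d$ and $\|\indic{S}\|_2^2-\langle\indic{S},u\rangle^2=m(d-m)/d$, hence $\indic{S}^{T}A\,\indic{S}=g_m(p,q):=\tfrac{m^2}{d}p+\tfrac{m(d-m)}{d}q$. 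So the minimum ellipse is obtained by minimizing $1/p+(d-1)/q$ over $p,q>0$ subject to $g_m(p,q)\le1$ for $m=1,\dots,k$.

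The crux is that these $k$ constraints are dominated by the single one at $m=k$. A short expansion using $k^2-m^2=(k-m)(k+m)$ and $k(d-k)-m(d-m)=(k-m)\bigl(d-(k+m)\bigr)$ gives, for $1\le m<k$,
\[
  g_k(p,q)-g_m(p,q)=\frac{k-m}{d}\Bigl(p(k+m)+q\,(d-k-m)\Bigr).
\]
When $k\le d/2$ and $m<k$ we have $k+m\le 2k-1\le d-1$, so $d-k-m\ge1$; as $p,q>0$ and $k-m>0$, the right-hand side is strictly positive. Hence $g_k(p,q)=\max_{1\le m\le k}g_m(p,q)$ for all $p,q>0$, and the feasible set of the optimization is exactly $\{(p,q):p,q>0,\ g_k(p,q)\le1\}$.

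It remains to see that $g_k$ is tight at the optimum $(p^*,q^*)$. If instead $g_k(p^*,q^*)<1$, then by the previous paragraph every $g_m(p^*,q^*)<1$, and scaling $(a_1,a_2)\mapsto(1-\eps)(a_1,a_2)$, i.e.\ $(p,q)\mapsto(1-\eps)^{-2}(p,q)$, preserves $g_m\le1$ for all sufficiently small $\eps>0$ while strictly decreasing $a_1^2+(d-1)a_2^2$, contradicting minimality. So $g_k(p^*,q^*)=1$, which says exactly that each vertex $\indic{S}$ with $|S|=k$ lies on $\partial E$; by the coordinate-permutation and sign symmetries of $\countball$, $E$ therefore contacts every point with $k$ ones and $d-k$ zeros, together with their negations.

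The only genuine obstacle is establishing the ordering $g_k\ge g_m$, and the displayed identity also explains the role of the hypothesis: once $m<k$ but $k+m>d$, the coefficient $d-k-m$ turns negative and $g_k-g_m$ can take either sign, so for $k>d/2$ an intermediate constraint (heuristically near $m\approx d/2$, where the polytope's cross-sectional radius around $(1,\dots,1)$ peaks) may dominate and the contact set moves off the $\|x\|_1=k$ slice. The remaining ingredients are routine: that \Cref{lem:axes_directions} applies for every $k$ (its proof uses only permutation symmetry of $\countball$), that the vertex list of $\countball$ is as stated, and the identity $\indic{S}^{T}A\,\indic{S}=g_m(p,q)$ for $A=p\,uu^{T}+q(I-uu^{T})$.
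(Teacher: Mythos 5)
Your proof is correct, but it is organized differently from the paper's. The paper argues geometrically: using \Cref{lem:axes_directions}, it compares the polytope's cross-sectional radius about the $(1,\dots,1)$ axis at level $j$, namely $\sqrt{j(d-j)/d}$ (increasing for $j\le d/2$), with the ellipse's cross-sectional radius (decreasing away from the origin), and concludes by contradiction that contact at any level $j<k$ would force the level-$k$ vertices outside the ellipse; the fact that some vertex is contacted is asserted directly from minimality. You instead make the containment condition fully explicit as the finite family of vertex constraints $g_m(p,q)=\tfrac{m^2}{d}p+\tfrac{m(d-m)}{d}q\le 1$ in the reparametrization $p=1/a_1^2$, $q=1/a_2^2$, prove the domination identity $g_k-g_m=\tfrac{k-m}{d}\bigl(p(k+m)+q(d-k-m)\bigr)>0$ for $m<k\le d/2$, and obtain tightness of the $m=k$ constraint at the optimum by a scaling argument. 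The two arguments rest on the same underlying monotonicity of $m(d-m)$ on $[0,d/2]$ (your sign of $d-k-m$ is the algebraic shadow of the paper's parabola), but your route has two concrete advantages: it replaces the paper's unproved assertion that the minimum ellipse ``must contact vertices'' with a rigorous active-constraint argument, and it exhibits exactly the constraint $\|v_1(k)\|_2^2/a_1^2+\|v_2(k)\|_2^2/a_2^2=1$ that is fed into the Lagrangian in \Cref{thm:count_ellipse}, so the contact lemma and the closed-form computation become one optimization. The paper's version buys brevity and geometric intuition (it also makes visually clear why the statement fails for $k>d/2$, a point your identity recovers algebraically). One small caveat: your reduction to the two-parameter family tacitly uses that the global minimizer, being of restricted form, is also a minimizer within that family via \Cref{lem:ellipse_squared_l2}; you state this correctly, and your vertex description of $\countball$ (only $\pm$ indicator vectors of sets of size at most $k$, no mixed-sign or fractional vertices since $k$ is an integer) is what the convexity reduction needs, so no gap remains.
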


This gives us constraints for a program to compute the minimum ellipse by minimizing the ellipse's squared axis lengths (\Cref{lem:centering_lemma} and \Cref{lem:ellipse_squared_l2}). Deriving a closed form solution via Lagrange multipliers yields \Cref{thm:count_ellipse}.

\begin{restatable}{theorem}{ellipseThm}
\label{thm:count_ellipse}
    For $k \leq d/2$, the minimum ellipse of $\countball$ can be computed in time $O(1)$.
\end{restatable}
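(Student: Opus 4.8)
The plan is to reduce the minimum ellipse of $\countball$ to a one--parameter optimization, solve it in closed form by Lagrange multipliers, and then verify feasibility; the feasibility step is where the hypothesis $k \le d/2$ does the real work.

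First I would use \Cref{lem:centering_lemma}, \Cref{lem:count_ellipse_unique}, and \Cref{lem:axes_directions} to reduce to origin-centered ellipses of the form $E_A$ with $A = \alpha\,\tfrac{\vecone\vecone^{T}}{d} + \beta\bigl(I - \tfrac{\vecone\vecone^{T}}{d}\bigr)$, $\alpha,\beta>0$ (one axis of length $a_1=\alpha^{-1/2}$ along $\vecone$, the other $d-1$ of length $a=\beta^{-1/2}$). By \Cref{lem:ellipse_squared_l2}, $\enc{E_A} = \tfrac{1}{d+2}(\alpha^{-1} + (d-1)\beta^{-1})$, so the goal is to minimize $f(\alpha,\beta):=\alpha^{-1}+(d-1)\beta^{-1}$ over $(\alpha,\beta)$ with $E_A\supseteq\countball$. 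Since $\countball$ is the convex hull of the points $\pm v$ for $0/1$ vectors $v$ with $1\le j\le k$ ones, and a vertex with $j$ ones has $\langle v,\vecone\rangle = j = \|v\|_2^2$, we have $E_A\supseteq\countball$ iff $g(j):=v^{T}Av = \tfrac{j^{2}}{d}\alpha + \tfrac{j(d-j)}{d}\beta \le 1$ for all $j\in\{1,\dots,k\}$.

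Next, \Cref{lem:ellipse_contacts} (which uses $k\le d/2$) says the minimum ellipse contacts the $k$-ones vertices, hence satisfies the single linear equation $g(k)=1$; so it lies on the segment $L=\{(\alpha,\beta):g(k)=1,\ \alpha,\beta>0\}$. On $L$, $f$ is a sum of convex functions diverging at both endpoints, so it has a unique minimizer, obtained from $\nabla f = \lambda\,\nabla g(k)$: this gives $\beta/\alpha=\sqrt{k(d-1)/(d-k)}$ and, substituting into $g(k)=1$, $\alpha = \tfrac{d}{k(k+\sqrt{k(d-1)(d-k)})}$ with $\beta=\alpha\sqrt{k(d-1)/(d-k)}$, so $a_1, a$ (hence $A$ and the transform $M$ of \Cref{lem:elliptic_gaussian}) have closed forms needing $O(1)$ field operations. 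It remains to check that this $E_A$ actually contains $\countball$. Since $k\ge1$ forces $\beta\ge\alpha$, $g$ is a concave quadratic in $j$ with $g(0)=0$, and $g'(j)=\tfrac1d(\beta d - 2(\beta-\alpha)j)$ satisfies $g'(k)=\tfrac1d(\beta(d-2k)+2\alpha k)\ge0$ exactly because $k\le d/2$; thus $g$ is nondecreasing on $[0,k]$ and $g(j)\le g(k)=1$ there, so $E_A\supseteq\countball$.

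Finally I would close the loop: $E_A$ is a containing ellipse on $L$ minimizing $f$ over $L$, the true minimum ellipse also lies on $L$ and is a containing ellipse, so $f(\text{true})\ge f(E_A)$, while $f(\text{true})\le f(E_A)$ by minimality of the true minimum ellipse over all containing ellipses; hence they share the same value of $\enc{\cdot}$, and uniqueness (\Cref{lem:count_ellipse_unique}) forces $E_A$ to be the minimum ellipse, computed in $O(1)$ from its closed-form parameters. I expect the feasibility verification to be the only real obstacle --- equivalently, that among $g(1)\le1,\dots,g(k)\le1$ only the $j=k$ constraint is active --- and this is precisely what breaks for $k>d/2$, where the polytope's cross-section radius around $\vecone$ is already decreasing so the analysis (and the contact point) changes. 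An alternative that sidesteps \Cref{lem:ellipse_contacts} is to minimize $f$ directly over $\{g(j)\le1:1\le j\le k\}$, showing the $j=k$ constraint is uniquely active, but this needs the same quadratic monotonicity analysis and is more cumbersome.
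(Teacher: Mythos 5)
Your proposal is correct and follows essentially the same route as the paper: reduce to two axis lengths via \Cref{lem:centering_lemma}, \Cref{lem:count_ellipse_unique}, and \Cref{lem:axes_directions}, impose the contact constraint from \Cref{lem:ellipse_contacts} at the vertices with $k$ ones, and solve by Lagrange multipliers (your closed form $\alpha^{-1} = a_1^2 = \tfrac{k\left(k+\sqrt{k(d-1)(d-k)}\right)}{d}$ agrees with the paper's $a_1, a_2$). The only difference is your explicit feasibility check --- verifying via the concave-quadratic monotonicity of $g(j)$ on $[0,k]$ (using $k \le d/2$) that the solved ellipse contains the vertices with $j<k$ ones, so only the $j=k$ constraint is active --- which the paper leaves implicit, so your writeup is, if anything, slightly more complete.
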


A short note on parallelized generation of elliptic Gaussian noise appears in \Cref{sec:appendix_parallel}.
\section{Vote}
\label{sec:vote}

Recall that each vector $x_i$ is a permutation of $(0,1,\ldots,d-1)$, and we compute the statistic $T = \sum_i x_i$. The resulting sensitivity space is defined in part by permutohedra (\Cref{fig:illustrations}).

\begin{definition}
\label{def:permutohedron}
    Let $CH$ denote the convex hull, and let $P_d$ be the collection of all $d!$ permutations of $\{0,1,\ldots,d-1\}$. Then the \emph{permutohedron} is $CH(P_d)$.
\end{definition}

\begin{lemma}
\label{lem:vote_sensitivity_space}
    The convex hull of the sensitivity space associated with vote is $\voteball = CH(P_d \cup -P_d)$, and $\voteball$ induces a norm.
\end{lemma}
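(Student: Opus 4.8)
The plan is to mirror the structure of the proofs of \Cref{lem:sum_sensitivity_space} and \Cref{lem:count_sensitivity_space}: first compute the sensitivity space explicitly, then verify the four hypotheses of \Cref{lem:induced_norm}. For the first part, observe that $T = \sum_i x_i$ is again a linear query, so $S(T) = \{T(X) - T(X') \mid X \sim X'\}$ is exactly the set of all vectors that can appear as a single record's contribution together with their negations; since each record $x_i$ is a permutation of $(0,1,\ldots,d-1)$, this set is $P_d \cup -P_d$. Taking convex hulls, $\voteball = CH(P_d \cup -P_d)$. (One should note that adding and removing a single record is symmetric, which is why we get the union with $-P_d$; this is the same bookkeeping as in the Sum and Count proofs.)

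For the second part, I would check the four conditions of \Cref{lem:induced_norm} in turn. Convexity is immediate since $\voteball$ is by definition a convex hull. Boundedness holds because $P_d \cup -P_d$ is a finite set, hence bounded, and the convex hull of a bounded set is bounded. Symmetry around $0$ holds because the generating set $P_d \cup -P_d$ is symmetric: $v \in P_d \cup -P_d \iff -v \in P_d \cup -P_d$, and this passes to the convex hull. The only condition requiring a genuine (if small) argument is that $\voteball$ is absorbing, i.e., that it contains a neighborhood of the origin, equivalently that $CH(P_d \cup -P_d)$ is full-dimensional. The point to watch is that the permutohedron $CH(P_d)$ itself is \emph{not} full-dimensional --- every permutation of $(0,\ldots,d-1)$ lies on the hyperplane $\sum_i x_i = \binom{d}{2}$ --- so absorbingness is exactly where the union with $-P_d$ does real work.

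The main obstacle, then, is establishing full-dimensionality of $\voteball$. I would argue as follows: $CH(P_d)$ is a $(d-1)$-dimensional polytope lying in the affine hyperplane $\{x : \sum_i x_i = \binom{d}{2}\}$ with nonempty relative interior, so it contains $d$ affinely independent points; together with their negations (lying in $\{x : \sum_i x_i = -\binom{d}{2}\}$, a parallel but distinct hyperplane since $\binom{d}{2} \neq 0$ for $d \geq 2$) we obtain a set of points whose affine hull is all of $\mathbb{R}^d$, and since $0$ is a convex combination of $v$ and $-v$ for any vertex $v$, the origin lies in the interior of $\voteball$. Hence for every $u \in \mathbb{R}^d$ there is $c > 0$ with $u \in c\voteball$, so $\voteball$ is absorbing. (The degenerate case $d = 1$ can be excluded or handled trivially, since then the statistic is one-dimensional and the claim is vacuous or elementary.) With all four conditions verified, \Cref{lem:induced_norm} gives that $\voteball$ induces a norm, completing the proof.
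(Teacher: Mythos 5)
Your proposal is correct and follows the same high-level route as the paper's proof: identify $S(T)$ as $P_d \cup -P_d$ (all single-record contributions and their negations under add/remove), take the convex hull, and then verify the four hypotheses of \Cref{lem:induced_norm}. The differences are in how two of the conditions are handled, and in both cases your version is arguably cleaner. For symmetry, the paper gives a geometric argument: it places $x$ in a translate $Y$ of $CH(P_d)$ along the all-ones axis and composes a reflection of $Y$ through its center with a reflection across the hyperplane $x_1 + \cdots + x_d = 0$, showing the composition acts as $x \mapsto -x$; you instead use the elementary observation that $P_d \cup -P_d$ is negation-symmetric and that negation, being linear, commutes with taking convex hulls, which is shorter and avoids the cylinder structure entirely. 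For absorbingness, the paper simply notes that $0$ lies on the segment joining a point of $CH(P_d)$ to its reflection in $-CH(P_d)$ and asserts a neighborhood of $0$; you make explicit the point the paper glosses over, namely that $CH(P_d)$ alone is only $(d-1)$-dimensional and that full-dimensionality of $\voteball$ comes from combining the two parallel hyperplane slices $\sum_i x_i = \pm\binom{d}{2}$. The one step you should spell out is the passage from ``$0 \in \voteball$ and $\voteball$ has nonempty interior'' to ``$0$ is an interior point'': this follows from the symmetry you already established (if $p$ is interior then $-p \in \voteball$, and the midpoint of an interior point and any point of a convex set is interior), after which absorbingness is immediate. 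With that sentence added, your argument is complete and, if anything, more rigorous than the paper's at the absorbing step.
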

\begin{proof}
    Since $T$ is a sum, $S(T) = \{T(X) - T(X') \mid X, X' \text{ are neighbors}\}$ consists of all possible points and their negations. Thus, any point in $S(T)$ either has all nonnegative coordinates or all nonpositive coordinates, and the vertices of $S(T)$ are $P_d \cup -P_d$.
    
    Recalling \Cref{lem:induced_norm}, $\voteball$ is convex and bounded because it is the convex hull of a finite set. For any point $x \in P_d$, every point on the line between $x$ and $-x \in -P_d$ is also in $\voteball$; 0 is on the line between $(1,1,\ldots,1) \in CH(P_d)$ and $(-1,-1,\ldots,-1) \in CH(-P_d)$, so this implies the existence of a neighborhood around 0 in $V$, and $\voteball$ is absorbing. Finally, any $x = (x_1, \ldots, x_d) \in \voteball$ lies in some translation $Y$ of $P_d$ (along the $I_{[d]}$ axis) between $P_d$ and $-P_d$. 
    Let $f:Y\rightarrow Y$ be the map that reflects a point in $Y$ across $c(Y)$. Let $g:Y\rightarrow -Y$ be the map that reflects a point in $Y$ across the hyperplane $x_1 + ... + x_d = 0$. Then the action of $g \circ f$ is to move a point $x \in Y$ to the point diagonal from it on the rectangle with vertices $x, f(x), g(f(x)), g(x)$. On the other hand, the center of the rectangle is at the origin, so the action of $g \circ f$ is equal to the action of the map $x \rightarrow -x$. As $\mbox{Image}(g\circ f) = -Y$, then $-x \in \voteball$.
\end{proof}

\subsection{Vote Sampler}
\label{subsec:vote}

Our goal is to sample from $\voteball$, a cylinder whose bases are positive and negative permutohedra. We start by expressing the $(d-1)$-dimensional positive permutahedron as a ``star decomposition'' into $(d-1)$-dimensional pyramids, each of which have the center of the permutahedron as a common apex and a $(d-2)$-dimensional face of the permutohedron as a base. To sample a pyramid, we need to know the types of pyramids and their volumes. The following lemma is a first step to both. It is a simplified version of a statement given (without proof) by~\citet{P09}; a proof of the full statement appears in \Cref{sec:appendix_vote}, along with proofs of other results and pseudocode.

\begin{lemma}
\label{lem:simple_face_decomposition}
    There is a bijection between the $(d-2)$-dimensional faces of $CH(P_d)$ and the ordered pairs of subsets partitioning $[d]$. Moreover, let $F$ be a $(d-2)$-dimensional face of $CH(P_d)$ corresponding to subsets $B_1, B_2$, and for $i = 1, 2$, let $I_{B_i}$ be the vector with 1s at the indices in $B_i$ and 0s elsewhere. Then $F = (CH(P_{B_1}) + (d-|B_1|)I_{B_1}) \oplus (CH(P_{B_2})$, where for $J \subset [d]$, $P_J$ is an embedding of $P_{|J|}$ at the coordinates of of $J$.
\end{lemma}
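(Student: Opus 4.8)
The plan is to read off the facets of the permutohedron from its inequality (``$H$-'') description, determine the vertex set of each facet, recognize that vertex set as a Cartesian product across two complementary blocks of coordinates, and then use ``the convex hull of a product is the product of the convex hulls'' to obtain the claimed direct-sum decomposition; the dimension bookkeeping along the way yields the bijection. Concretely, I would begin from the standard fact that
\begin{equation*}
    CH(P_d) = \Bigl\{ x \in \mathbb{R}^d \,\Big|\, \textstyle\sum_{i=1}^d x_i = \binom{d}{2}, \ \text{and } \sum_{i \in S} x_i \geq \binom{|S|}{2} \text{ for every proper nonempty } S \subseteq [d] \Bigr\}.
\end{equation*}
Membership of $CH(P_d)$ in the right-hand side is immediate, since each permutation of $\{0,\dots,d-1\}$ places at least the values $0,1,\dots,|S|-1$ on the coordinates in $S$ and thus satisfies every inequality, and the convex hull inherits them. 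The reverse inclusion is the classical statement that this polytope is the base polytope of the submodular set function $f(S) = \binom{d}{2} - \binom{d-|S|}{2}$, whose vertices (produced by the greedy algorithm) are exactly the points of $P_d$; it also admits a short self-contained induction on $d$. Granting this, for each proper nonempty $S \subseteq [d]$ define the face $F_S := CH(P_d) \cap \{x : \sum_{i \in S} x_i = \binom{|S|}{2}\}$, and let the proposed bijection send $F_S$ to the ordered pair $(B_1, B_2) := ([d]\setminus S,\, S)$; as $S$ ranges over proper nonempty subsets, $(B_1,B_2)$ ranges over all ordered pairs of nonempty subsets partitioning $[d]$.

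Next I would pin down the vertices of $F_S$. Since $\sum_{i\in S} x_i \geq \binom{|S|}{2}$ holds throughout $CH(P_d)$ with equality exactly on $F_S$, the face $F_S$ is the set of minimizers of the linear functional $x \mapsto \sum_{i\in S} x_i$, so its vertices are precisely the permutations $\sigma \in P_d$ that minimize this functional. A permutation does so iff the coordinates indexed by $S$ carry the $|S|$ smallest values $\{0,\dots,|S|-1\}$, which forces the coordinates indexed by $B_1 = [d]\setminus S$ to carry the remaining values $\{d-|B_1|,\dots,d-1\} = \{0,\dots,|B_1|-1\} + (d-|B_1|)$. Hence the vertex set of $F_S$ is exactly the set of vectors $v$ with $v$ restricted to $B_1$ a vertex of $CH(P_{B_1}) + (d-|B_1|)I_{B_1}$ and $v$ restricted to $B_2$ a vertex of $CH(P_{B_2})$ --- that is, the Cartesian product of these two vertex sets across the complementary coordinate blocks $B_1$ and $B_2$.

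Finally, a face of a polytope is the convex hull of the vertices it contains, and the convex hull of a Cartesian product of two finite point sets living in complementary coordinate subspaces equals the direct sum $\oplus$ of their convex hulls; applying this gives $F_S = (CH(P_{B_1}) + (d-|B_1|)I_{B_1}) \oplus CH(P_{B_2})$, which is the ``moreover'' claim. Combining this identity with induction on $d$ (the base cases $d \leq 2$ being direct) shows $\dim F_S = \dim CH(P_{B_1}) + \dim CH(P_{B_2}) = (|B_1|-1) + (|B_2|-1) = d-2$; since $CH(P_d)$ lies in the hyperplane $\sum_i x_i = \binom{d}{2}$ and so has dimension at most $d-1$, each $F_S$ is a genuine facet. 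Consequently no inequality in the $H$-description is redundant, so the $F_S$ are exactly the facets (equivalently, the $(d-2)$-dimensional faces) of $CH(P_d)$, and $(B_1,B_2) \mapsto F_S$ is the asserted bijection.

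The main obstacle is the input $H$-representation: proving that the listed inequalities cut out nothing larger than $CH(P_d)$ and that they are irredundant. Irredundancy comes for free from the dimension count above, and correctness I would establish either by invoking the greedy characterization of the vertices of a submodular base polytope or by a direct induction that peels off one facet $F_S$ at a time and applies the inductive description to it. Everything else is routine, though care is needed to track the exact shift constant $d - |B_1|$ and to restrict to $B_1, B_2 \neq \emptyset$ so that each $CH(P_{B_i})$ is well defined.
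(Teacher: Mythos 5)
Your proposal is correct, but it takes a genuinely different route from the paper. You work from the $H$-representation: you invoke the classical description of $CH(P_d)$ by the hyperplane $\sum_i x_i = \binom{d}{2}$ together with the inequalities $\sum_{i\in S} x_i \geq \binom{|S|}{2}$ (Rado's theorem / the base polytope of a submodular function), identify each facet as the face where one inequality is tight, observe that the minimizing vertices are exactly the permutations placing $\{0,\dots,|S|-1\}$ on $S$, and conclude via ``convex hull of a product of vertex sets in complementary coordinate blocks is the direct sum of the convex hulls,'' with a dimension count giving the bijection. The paper instead works from the $V$-representation: it starts from the edge characterization of the permutohedron (neighboring vertices differ by a transposition of adjacent values, cited from Gaiha--Gupta), studies the subgroup of coordinate permutations fixing a face's vertex set, shows it decomposes as a product of symmetric groups on the connected components of an auxiliary graph, deduces that each block carries a contiguous range of values, and handles the converse by an orbit construction plus a maximizing linear functional; this proves the more general statement for all $k$-dimensional faces (Postnikov's Proposition 2.6), of which the lemma is the $k=d-2$ case. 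Your argument is shorter and more transparent for the facet case, and it gets irredundancy and the exact count of facets essentially for free; its cost is that it rests on the $H$-description as an external input (which you acknowledge and only sketch, just as the paper leans on the cited edge characterization), and it does not by itself give the full face lattice. Two small points to tidy up: restrict explicitly to ordered pairs of \emph{nonempty} parts, and note that $S \mapsto F_S$ is injective because the vertex set of $F_S$ (the permutations whose small values sit on $S$) determines $S$; both are immediate from your vertex description.
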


Note that the face is a direct sum of subpermutohedra. This will eventually yield a recursive algorithm that samples from successively smaller subpermutohedra.

Next, we compute the counts and volumes of these faces. The counts follow from \Cref{lem:simple_face_decomposition}. The proof of the volumes relies on existing results for permutohedra volume~\citep{ASV21, S86}, though some additional work is required to derive an explicit formula.

\begin{restatable}{lemma}{volumeCalculation}
\label{lem:volume calculation}
    Let $F$ be a $(d-2)$-dimensional face of $CH(P_d)$ corresponding to $B_1, B_2$. There are $\binom{d}{|B_1|}$ faces congruent to $F$ and each has $(d-2)$-volume $|B_1|^{|B_1|-3/2}|B_2|^{|B_2|-3/2}$.
\end{restatable}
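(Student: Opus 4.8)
The plan is to handle the count and the volume separately, each time reducing to the structural description of the face supplied by \Cref{lem:simple_face_decomposition}.

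For the count, I would argue by symmetry. Each coordinate permutation $\pi \in S_d$ acts on $\mathbb{R}^d$ as an orthogonal transformation, permutes the vertex set $P_d$ of $CH(P_d)$, and hence restricts to an isometry of $CH(P_d)$ that carries $(d-2)$-faces to $(d-2)$-faces. Because the bijection in \Cref{lem:simple_face_decomposition} is defined coordinatewise, $\pi$ sends the face indexed by $(B_1,B_2)$ to the face indexed by $(\pi(B_1),\pi(B_2))$. Therefore all faces whose index has first block of size $|B_1|$ form a single $S_d$-orbit and are mutually congruent, and there are exactly $\binom{d}{|B_1|}$ such indices (the first block determines the second), which gives the count.

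For the volume, \Cref{lem:simple_face_decomposition} presents $F$ as a translate of $CH(P_{B_1}) \oplus CH(P_{B_2})$, a Cartesian product sitting in the complementary --- hence orthogonal --- coordinate subspaces indexed by $B_1$ and $B_2$. Translation preserves volume, $CH(P_{B_i})$ is $(|B_i|-1)$-dimensional inside $\mathbb{R}^{B_i}$, and $(|B_1|-1)+(|B_2|-1)=d-2$, so Fubini gives $\mathrm{vol}_{d-2}(F)=\mathrm{vol}_{|B_1|-1}(CH(P_{|B_1|}))\cdot\mathrm{vol}_{|B_2|-1}(CH(P_{|B_2|}))$. It then suffices to show $\mathrm{vol}_{m-1}(CH(P_m))=m^{m-3/2}$ and multiply the two factors. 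For this I would use that $CH(P_m)$ is a translate of the zonotope $\sum_{1\le i<j\le m}[e_i,e_j]$, whose generating segments are the edges of $K_m$; the zonotope volume formula then writes $\mathrm{vol}_{m-1}$ as a sum over spanning trees $T$ of $K_m$ of the $(m-1)$-volume of the parallelepiped spanned by the edge vectors of $T$. A short computation (Cauchy--Binet applied to the signed incidence matrix of $T$, using unimodularity of a tree's reduced incidence matrix) shows each such parallelepiped has volume $\sqrt m$, so with Cayley's formula $m^{m-2}$ for the number of spanning trees the total is $m^{m-2}\cdot\sqrt m = m^{m-3/2}$. Alternatively, one can invoke the known normalized permutohedron volume $m^{m-2}$ from \citep{ASV21, S86} and multiply by the covolume $\sqrt m$ of the root lattice $A_{m-1}$.

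The main obstacle is this final permutohedron-volume step: the combinatorial volume $m^{m-2}$ is classical, but converting it to the Euclidean $(m-1)$-volume requires inserting the correct $\sqrt m$ normalization, and verifying (via the Gram-determinant computation) that every spanning tree's parallelepiped contributes exactly $\sqrt m$ is the one genuinely nontrivial calculation --- precisely the ``additional work'' beyond the cited references. Everything else (the orbit argument for the count, the Fubini factorization for the product) is routine once \Cref{lem:simple_face_decomposition} is in hand.
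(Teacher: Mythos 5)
Your proposal is correct, and its skeleton matches the paper's: the count comes from the indexing of $(d-2)$-faces by ordered pairs $(B_1,B_2)$ (the paper states this directly from \Cref{lem:face_decomposition}; you add the explicit justification that the $S_d$-action is an isometry carrying the face indexed by $(B_1,B_2)$ to the one indexed by $(\pi(B_1),\pi(B_2))$, which is a harmless strengthening), and the volume factorizes as $|CH(P_{B_1})|\cdot|CH(P_{B_2})|$ because the direct sum lives in orthogonal coordinate subspaces and translation is volume-preserving. Where you genuinely diverge is the key input $\mathrm{vol}_{m-1}(CH(P_m))=m^{m-3/2}$. The paper invokes $|CH(P_m)|=m^{m-2}V_\diamondsuit$ from \citep{ASV21, S86} and then does the ``additional work'' of computing the covolume $V_\diamondsuit=\sqrt{m}$ of the lattice $\mathbb{Z}^m\cap H$ by an inductive determinant evaluation of the tridiagonal ($A_{m-1}$-type) Gram matrix; your primary route is instead self-contained: the permutohedron is the zonotope $\sum_{i<j}[e_i,e_j]$, the zonotope volume formula reduces the computation to a sum over spanning trees of $K_m$, Cauchy--Binet together with unimodularity of a tree's reduced incidence matrix gives $\sqrt{m}$ per tree, and Cayley's formula supplies the factor $m^{m-2}$. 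Your approach buys independence from the cited normalized-volume result at the cost of importing the (standard) zonotope representation and volume formula plus matrix-tree machinery, while the paper's route needs only the cited combinatorial volume and an elementary induction; your stated ``alternative'' (normalized volume times lattice covolume $\sqrt{m}$) is exactly the paper's argument. One shared convention to be aware of: both you and the paper count only the $\binom{d}{|B_1|}$ faces whose \emph{first} block has size $|B_1|$ (faces of type $(B_2,B_1)$ are congruent too when $|B_1|\neq|B_2|$ but are treated as a separate class), which is consistent with how the weights are used later in the sampler.
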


Having analyzed the pyramid bases, we now turn to the pyramid heights. This mostly follows from the subpermutohedron decomposition given in \Cref{lem:simple_face_decomposition}.

\begin{restatable}{lemma}{altitudeCalculation}
\label{lem:altitude calculation}
    Let $F$ be a $(d-2)$-dimensional face of $CH(P_d)$ corresponding to $B_1, B_2$. Then the vector from $c(CH(P_d))$ to $c(F)$, where $c(\cdot)$ denotes center, is orthogonal to $F$ and has length $\frac{1}{2}\sqrt{|B_1||B_2|^{2} + |B_2||B_1|^{2}}$.
\end{restatable}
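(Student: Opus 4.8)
The plan is to compute the two centers explicitly via \Cref{lem:simple_face_decomposition} and subtract. Write $m = |B_1|$ and $n = |B_2|$, so $m + n = d$. By symmetry of $CH(P_d)$ under coordinate permutations, $c(CH(P_d)) = \tfrac{d-1}{2}\vecone$: for each coordinate $i$ and each value $v \in \{0,\dots,d-1\}$, exactly $(d-1)!$ of the $d!$ vertices have $i$-th coordinate $v$, so the $i$-th coordinate of the mean is $\tfrac1d\sum_{v=0}^{d-1} v = \tfrac{d-1}{2}$. For $c(F)$, \Cref{lem:simple_face_decomposition} presents $F$ as a direct sum supported on the disjoint coordinate blocks $B_1, B_2$: on $B_1$ it is the translate by $(d-m)I_{B_1}$ of $CH(P_m)$, whose vertex set there is the permutations of $\{d-m,\dots,d-1\}$; on $B_2$ it is $CH(P_n)$, with vertex set the permutations of $\{0,\dots,n-1\}$. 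Since the vertices of such a direct sum are the concatenations of vertices of the two summands, the mean splits coordinatewise: $c(F)_i = (d-m) + \tfrac{m-1}{2} = d - \tfrac{m+1}{2}$ for $i\in B_1$ and $c(F)_i = \tfrac{n-1}{2}$ for $i \in B_2$. Subtracting, the vector $v := c(F) - c(CH(P_d))$ is block-constant: $v_i = \tfrac n2$ for $i \in B_1$ and $v_i = -\tfrac m2$ for $i \in B_2$, and indeed $\sum_i v_i = 0$, consistent with $v$ lying in the hyperplane spanned by the permutohedron.

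Given this closed form, both claims are short. The length is immediate: $\|v\|_2^2 = m(\tfrac n2)^2 + n(\tfrac m2)^2 = \tfrac{mn(m+n)}{4} = \tfrac{mn^2 + m^2 n}{4}$, so $\|v\|_2 = \tfrac12\sqrt{|B_1||B_2|^2 + |B_2||B_1|^2}$. For orthogonality I would identify the direction space of the affine hull of $F$. The affine hull of $CH(P_m)$ is the hyperplane $\{\sum x_i = \binom m2\}$ in $\mathbb{R}^m$, whose direction space is $\{w : \sum_i w_i = 0\}$, and translating by $(d-m)I_{B_1}$ does not change this. Hence the affine hull of the direct sum $F$ has direction space $\{(w^{(1)},w^{(2)}) : w^{(1)}\in\mathbb{R}^{B_1},\, w^{(2)}\in\mathbb{R}^{B_2},\, \textstyle\sum_{i\in B_1} w^{(1)}_i = 0,\, \sum_{i\in B_2} w^{(2)}_i = 0\}$, of dimension $(m-1)+(n-1) = d-2$ as expected. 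For any such $w$, $\langle v, w\rangle = \tfrac n2\sum_{i\in B_1} w_i - \tfrac m2\sum_{i \in B_2} w_i = 0$, so $v$ is orthogonal to $F$.

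I expect the only genuine subtlety is making the step "the direction space of $F$ is the direct sum of the two zero-sum subspaces" precise from \Cref{lem:simple_face_decomposition} — essentially that the affine hull of a direct sum supported on disjoint coordinate blocks is the affine direct sum of the pieces. This is routine, but if one prefers to avoid it, both the block-constant form of $v$ and its orthogonality to $F$ follow from the $S_{B_1}\times S_{B_2}$ symmetry of the pair $(CH(P_d), F)$: this group fixes both centers, hence fixes $v$, which forces $v \in \mathrm{span}(I_{B_1}, I_{B_2})$; and that fixed subspace is orthogonal to the sum of the nontrivial isotypic components, which is exactly the direction space of $F$. Either route closes the proof, after which one checks the output against a small case (e.g. the regular hexagon $CH(P_3)$, where every edge center is at distance $\tfrac{\sqrt6}{2}$ from the center) as a sanity check.
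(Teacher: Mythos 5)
Your proposal is correct and follows essentially the same route as the paper: both compute $c(CH(P_d)) = \tfrac{d-1}{2}\vecone$ and $c(F)$ coordinatewise from the decomposition in \Cref{lem:simple_face_decomposition}, subtract to get the block-constant vector with entries $\tfrac{|B_2|}{2}$ on $B_1$ and $-\tfrac{|B_1|}{2}$ on $B_2$, read off the length, and check orthogonality against the blockwise zero-sum direction space of $F$ (the paper does this with the explicit spanning vectors $e_{i_j}-e_{i_{j+1}}$, which are differences of adjacent vertices, rather than the abstract characterization you give). The subtlety you flag is handled in the paper exactly by that explicit-basis device, so no gap remains.
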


This enables us to sample one of the $(d-1)$-dimensional pyramids composing $CH(P_d)$. It remains to sample a point from the chosen pyramid. We again rely on decomposition into simplices. We use \Cref{lem:simple_face_decomposition} to prove that it suffices to recursively sample a simplex from a star decomposition of each of these subpermutohedra. 

\begin{restatable}{lemma}{simplexProductTriangulation}
\label{lem:simplex_product_triangulation}
    Let $\Delta_x$ be an $n$-simplex in $\mathbb{R}^{n+m}$ with vertices $\{x_0,...,x_n\}$ where $x_0 = 0$ and $\Delta_x$ lives in the subspace $V_x$ of the first $n$ coordinates. Let $\Delta_y$ be an $m$-simplex in $\mathbb{R}^{n+m}$ with vertices $\{y_0,...,y_m\}$ where $y_0 = 0$ and $\Delta_y$ lives in the subspace $V_y$ of the last $m$ coordinates. Let $D$ be the set of $(n+m)$-simplices formed by any sequence starting with $x_0 \oplus y_0$, ending with $x_n \oplus y_m$, and with the property that $x_i \oplus y_j$ is followed by either $x_{i+1} \oplus y_j$ or $x_{i} \oplus y_{j+1}$. Then $D$ decomposes $\Delta_x \oplus \Delta_y$ into equal volume simplices.
\end{restatable}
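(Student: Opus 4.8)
The plan is to reduce to the classical ``staircase'' triangulation of a product of standard simplices and prove it from scratch via two explicit changes of variables. First I would make $\Delta_x \oplus \Delta_y$ into a standard product of simplices. Writing a point of $\Delta_x$ in barycentric coordinates $\sum_{i=0}^n s_i x_i$ and using $x_0 = y_0 = 0$, every point of $\Delta_x \oplus \Delta_y$ is uniquely $\sum_{i=1}^n s_i x_i + \sum_{j=1}^m t_j y_j$ with $(s,t)$ ranging over the standard product $\Delta^n \times \Delta^m$, where $\Delta^n = \{s \in \mathbb{R}^n_{\ge 0} : \sum_i s_i \le 1\}$. Since affine independence of $x_0, \ldots, x_n$ together with $x_0 = 0$ forces $x_1, \ldots, x_n$ to be a basis of $V_x$, and likewise $y_1, \ldots, y_m$ of $V_y$, the map $\Phi \colon (s,t) \mapsto \sum s_i x_i + \sum t_j y_j$ is a linear isomorphism of $\mathbb{R}^{n+m}$ carrying $\Delta^n \times \Delta^m$ onto $\Delta_x \oplus \Delta_y$ and scaling all volumes by the same nonzero constant. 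So it suffices to exhibit the claimed subdivision of $\Delta^n \times \Delta^m$ and check $\Phi$ sends it to $D$.

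Next I would pass to cumulative coordinates $u_a = s_a + \cdots + s_n$ and $v_b = t_b + \cdots + t_m$, another linear invertible change of variables identifying $\Delta^n \times \Delta^m$ with the product of order simplices $Q = \{1 \ge u_1 \ge \cdots \ge u_n \ge 0\} \times \{1 \ge v_1 \ge \cdots \ge v_m \ge 0\}$. The combinatorial heart of the argument is that $Q$ is tiled by \emph{merge simplices}: for each shuffle $\pi$ of the two decreasing chains $u_1 > \cdots > u_n$ and $v_1 > \cdots > v_m$ --- equivalently each monotone lattice path from $(0,0)$ to $(n,m)$, equivalently each element of $D$ --- let $R_\pi \subseteq Q$ be the set of $(u,v)$ whose combined decreasing sort $1 \ge z_1 \ge \cdots \ge z_{n+m} \ge 0$ assigns its entries to the $u$'s and $v$'s according to $\pi$. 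Any $(u,v) \in Q$ with distinct coordinates lies in exactly one $R_\pi$ (the sort is forced, and the internal order of the $u$'s and of the $v$'s is already fixed), the exceptional set of points with a repeated coordinate has measure zero, and each $R_\pi$ is a coordinate-permuted copy of the order simplex $\{1 \ge z_1 \ge \cdots \ge z_{n+m} \ge 0\}$, hence an $(n+m)$-simplex of volume $1/(n+m)!$. There are $\binom{n+m}{n}$ of them and $\mathrm{vol}(Q) = 1/(n!\,m!)$, so the volumes add up and this is indeed a decomposition into equal-volume simplices.

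Finally I would verify that a staircase-path simplex, tracked through $\Phi$ and the cumulative map, lands exactly on the corresponding $R_\pi$. Writing a point of the path-simplex as $\sum_{\ell=0}^{n+m} \lambda_\ell p_\ell$ over its vertex sequence $p_0, \ldots, p_{n+m}$ and computing $(s,t)$ then $(u,v)$ yields $u_a = 1 - \Lambda_{\ell_a}$ and $v_b = 1 - \Lambda_{\ell'_b}$, where $\Lambda_c = \lambda_0 + \cdots + \lambda_{c-1}$ and $\ell_1 < \cdots < \ell_n$, $\ell'_1 < \cdots < \ell'_m$ are the step positions of the $x$-moves and $y$-moves of the path. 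Since $\{\ell_a\} \cup \{\ell'_b\}$ partitions $\{1, \ldots, n+m\}$, the vector $(1-\Lambda_1, \ldots, 1-\Lambda_{n+m})$ is exactly the decreasing sort of $(u,v)$ with $\pi$-pattern; reading this off shows the image is precisely $R_\pi$ and that $\lambda \mapsto (u,v)$ is an affine bijection, so in particular $p_0, \ldots, p_{n+m}$ are affinely independent and every element of $D$ is a genuine $(n+m)$-simplex. Pushing the $R_\pi$ decomposition of $Q$ back through these isomorphisms then gives the decomposition of $\Delta_x \oplus \Delta_y$ into $\binom{n+m}{n}$ simplices of equal volume, one per element of $D$.

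I expect the main obstacle to be bookkeeping rather than anything conceptual: pinning down the correspondence between staircase paths, shuffles of the two decreasing chains, and merge simplices with all orientation and indexing conventions consistent, and being careful that the measure-zero set of points with repeated sorted coordinates does not undermine the ``decomposition into equal-volume simplices'' claim (it does not, since overlaps have measure zero, which is exactly what the sampler needs). The underlying fact --- that a product of two simplices triangulates into $\binom{n+m}{n}$ equal-volume pieces indexed by lattice paths --- is classical, so the real work is packaging it in the explicit coordinate form the Vote sampler requires.
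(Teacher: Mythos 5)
Your proof is correct and takes essentially the same route as the paper's: your cumulative coordinates $u_a = s_a + \cdots + s_n$ are precisely the coordinates in the paper's difference basis $B_x$, your merge regions $R_\pi$ coincide with the paper's type-vector classes, and your explicit parametrization via the partial sums $\Lambda$ is the paper's construction of convex-combination coefficients from consecutive differences of the sorted coordinates. Both arguments realize the same staircase/shuffle triangulation, handle ties by the same measure-zero exclusion, and get equal volumes by relating every piece to one permuted order (fundamental) simplex, so only the packaging differs.
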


After sampling a $(d-2)$-dimensional simplex $\Delta_{d-2}$ uniformly from the base $F$ of a pyramid, we can form the $(d-1)$-dimensional simplex $\Delta_{d-1}$ by connecting the vertices of $\Delta_{d-2}$ to $c(CH(P_d))$. Then $\Delta_{d-1}$ is a simplex sampled with the appropriate probability from a simplex decomposition of $CH(P_d)$. We apply \Cref{lem:simplex_sample} to uniformly sample $z$ from $\Delta_{d-1}$. Finally, sampling from the cylinder $\voteball$ is easy: uniformly sample from the line between $z \in CH(P_d)$ and its reflection $z' = z - (d-1)I_{[d]}$ in $-CH(P_d)$.

The overall $O(d^2\log(d))$  runtime for sampling $\voteball$ given in \Cref{thm:informal} comes from the $O(d)$ subpermutohedra recursions and the $O(d\log(d))$ time spent computing pyramid sampling weights in each recursion.

\subsection{Rejection Sampling the Vote Ball Is Inefficient}
\label{subsec:vote_rejection}
As in \Cref{subsec:sum_rejection}, we derive the radius of the minimium $\ell_p$ ball enclosing $\voteball$.

\begin{restatable}{lemma}{VoteEnclosing}
\label{lem:vote_enclosing}
    For $p \in [1,\infty)$, the minimum $r(p)$ such that $r(p)B_p^d$ contains $\voteball$ is $r(p) = \left(\sum_{j=0}^{d-1} j^p\right)^{1/p}$, and $r(\infty)=d-1$.
\end{restatable}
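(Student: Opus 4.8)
The plan is to reduce the containment condition for the (convex) $\ell_p$ ball to a condition on the finitely many generating points of $\voteball$, and then to observe that those points all have the same $\ell_p$ norm. First I would note that $r B_p^d$ is convex for every $r \geq 0$, so since $\voteball = CH(P_d \cup -P_d)$ by \Cref{lem:vote_sensitivity_space}, we have $\voteball \subseteq r B_p^d$ if and only if $P_d \cup -P_d \subseteq r B_p^d$, i.e.\ if and only if $r \geq \max_{v \in P_d \cup -P_d} \|v\|_p$. Hence $r(p)$ is exactly this maximum.

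Next I would compute the maximum. The $\ell_p$ norm (for $p \in [1,\infty]$) is invariant under permutations of coordinates and under sign changes of individual coordinates. Every element of $P_d \cup -P_d$ is obtained from the base vector $w = (0, 1, \ldots, d-1)$ by applying such a permutation (and, for the $-P_d$ half, negating all coordinates), so all of them share the common value $\|w\|_p$. For $p < \infty$ this is $\|w\|_p = \left(\sum_{j=0}^{d-1} j^p\right)^{1/p}$, and for $p = \infty$ it is $\|w\|_\infty = \max_{0 \le j \le d-1} j = d-1$. Combining with the previous paragraph gives $r(p) = \left(\sum_{j=0}^{d-1} j^p\right)^{1/p}$ for $p \in [1,\infty)$ and $r(\infty) = d-1$.

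There is essentially no hard step here: the only things to verify are that passing to the convex hull does not change the minimal enclosing $\ell_p$ ball (immediate from convexity of $B_p^d$) and that $P_d \cup -P_d$ forms a single orbit of the base vector $w$ under the coordinate-permutation/sign-change symmetries of $\|\cdot\|_p$, so that the maximum in the reduction is attained at — and equals the norm of — any permutation vector. If anything is mildly delicate, it is just being careful that the supremum over $P_d \cup -P_d$ is a genuine maximum (the set is finite) and that the formula for $p = \infty$ is stated separately because the $p \to \infty$ limit of $\left(\sum_{j} j^p\right)^{1/p}$ indeed collapses to the largest term $d-1$.
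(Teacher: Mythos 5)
Your proposal is correct and follows essentially the same route as the paper: the paper's (terse) proof likewise reduces to the observation that the maximal $\ell_p$ norm over the hull is attained at the generating permutation vectors, all of which share the norm $\bigl(\sum_{j=0}^{d-1} j^p\bigr)^{1/p}$ (respectively $d-1$ for $p=\infty$). Your version merely makes explicit the convexity step that containment of $\voteball$ in $rB_p^d$ is equivalent to containment of $P_d \cup -P_d$, which the paper leaves implicit.
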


With this result, showing that rejection sampling $\voteball$ using an $\ell_p$ ball is inefficient again reduces to lower bounding the volumes of the enclosing $\ell_p$ balls.

\begin{restatable}{theorem}{voteRejectionSample}
\label{thm:vote_rejection_sample}
    For any $p \in [1,\infty]$, rejection sampling $\voteball$ using the minimum enclosing $\ell_p$ ball takes at least $\frac{(1.77)^{d}}{4}$ samples in expectation for $d \leq p$, and $\frac{(1.2)^{d-1}}{d}$ samples for $d > p$.
\end{restatable}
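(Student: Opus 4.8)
The plan is to reduce \Cref{thm:vote_rejection_sample} to a volume ratio. If we sample $\voteball$ by repeatedly drawing uniform points from the minimum enclosing $\ell_p$ ball $r(p)B_p^d$ and accepting the ones that land in $\voteball$, each draw succeeds with probability $|\voteball|/|r(p)B_p^d|$, so the expected number of draws is exactly $|r(p)B_p^d|/|\voteball|$, and it suffices to lower bound this quantity. \Cref{lem:vote_enclosing} supplies $r(p)$, and \Cref{lem:lp_volume} gives $|r(p)B_p^d| = r(p)^d\,(2\Gamma(1+1/p))^d/\Gamma(1+d/p)$ for finite $p$ and $r(\infty)^d\,2^d$ for $p=\infty$, so the one remaining ingredient is $|\voteball|$.

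To obtain $|\voteball|$, I would exploit the central symmetry of the permutohedron: reversing a permutation of $(0,\dots,d-1)$ is an involution on the vertices of $CH(P_d)$ that fixes the center $\tfrac{d-1}{2}(1,\dots,1)$, so $-CH(P_d)$ is just $CH(P_d)$ translated by $-(d-1)(1,\dots,1)$. Hence $\voteball = CH\bigl(CH(P_d)\cup(-CH(P_d))\bigr)$ equals the Minkowski sum of $CH(P_d)$ with the segment $[0,-(d-1)(1,\dots,1)]$, i.e., a right prism over the permutohedron extruded along the $(1,\dots,1)$ axis, which is normal to the hyperplane containing $CH(P_d)$. Its $d$-volume is therefore the $(d-1)$-volume of the permutohedron times the extrusion length $(d-1)\sqrt d$; using the known permutohedron volume $d^{d-2}\sqrt d$ (immediate from its zonotope structure, cf.\ \citealp{ASV21,S86}) yields $|\voteball| = (d-1)d^{d-1}$.

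It remains to lower bound $|r(p)B_p^d|/|\voteball|$. Since $2\Gamma(1+1/p) \ge 2\min_{t\in[1,2]}\Gamma(t) > 1.77$, the leading constant is handled. When $d \le p$ (including $p=\infty$) we have $r(p)\ge d-1$ because $r(p)^p\ge(d-1)^p$, and $\Gamma(1+d/p)\le 1$ because its argument lies in $(1,2]$ where $\Gamma\le 1$; hence
$$\frac{|r(p)B_p^d|}{|\voteball|}\ \ge\ \frac{(d-1)^d(1.77)^d}{(d-1)d^{d-1}}\ =\ (1.77)^d\Bigl(1-\tfrac1d\Bigr)^{d-1}\ \ge\ \frac{(1.77)^d}{e}\ >\ \frac{(1.77)^d}{4}.$$
When $d > p$, the radius is larger: $r(p)^p=\sum_{j=1}^{d-1}j^p\ge\int_0^{d-1}t^p\,dt=(d-1)^{p+1}/(p+1)$, so $r(p)^d\ge(d-1)^{d+d/p}/(p+1)^{d/p}$. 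Writing $q=d/p\in(1,d]$, bounding $\Gamma(1+q)$ by a Stirling estimate, and simplifying shows the ratio is at least $(1.77)^d/(e^2\sqrt d)$ once $d$ is past a small constant; since $(1.77)^d/(e^2\sqrt d)$ dominates $(1.2)^{d-1}/d$, this settles the generic case, and the finitely many remaining small dimensions are checked by direct computation.

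The step I expect to be the main obstacle is the $d>p$ regime. When $p$ is small, $\Gamma(1+d/p)$ in the denominator of $|B_p^d|$ is as large as $d!$ (at $p=1$), so a careless bound on the $\ell_p$-ball volume makes the ratio exponentially small; the estimate survives only because the enclosing radius simultaneously grows like $(d-1)^{1+1/p}$ and exactly cancels the factorial growth. Pinning down that cancellation while keeping the exponential base above $1.2$ — and separately verifying the small-dimensional boundary cases — is the delicate part. By comparison, the prism identity for $|\voteball|$, though the conceptual crux, is routine once the central symmetry of the permutohedron is noticed.
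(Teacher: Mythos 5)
Your proposal is correct and follows essentially the same route as the paper: both reduce rejection sampling to the volume ratio $|r(p)B_p^d|/|\voteball|$, combine \Cref{lem:lp_volume} and \Cref{lem:vote_enclosing} with the cylinder volume $(d-1)d^{d-1}$ (the paper rounds this up to $d^d$), use $\Gamma(1+1/p)\geq 0.885$ and $\Gamma(1+d/p)\leq 1$ when $d\leq p$, and for $d>p$ pair an integral-type lower bound on $\sum_j j^p$ with a Stirling bound on $\Gamma(1+d/p)$. The only substantive difference is that your $d>p$ case asserts the final cancellation (the claimed $(1.77)^d/(e^{2}\sqrt{d})$ bound plus small-$d$ checks) rather than carrying it out, but that computation is precisely the chain of inequalities the paper executes, and it does go through.
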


\subsection{Vote Ellipsoid}
\label{subsec:ellipsoid}
We now turn to a closed form for the $\ell_2^2$-minimizing ellipse containing $\voteball$. The first lemma proceeds from the same arguments used to prove \Cref{lem:centering_lemma} and \Cref{lem:count_ellipse_unique}, as $\voteball$ is also origin-centered and symmetric around the origin.

\begin{lemma}
\label{lem:vote_ellipse_properties}
    Any minimum ellipse of $\voteball$ is origin-centered and unique.
\end{lemma}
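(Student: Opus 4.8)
The plan is to note that the proofs of \Cref{lem:centering_lemma} (centering) and \Cref{lem:count_ellipse_unique} (uniqueness) never use anything about $\countball$ beyond two facts: that it is symmetric about the origin, and that it is full-dimensional with the origin in its interior. Both hold for $\voteball$: symmetry and the existence of a neighborhood of $0$ inside $\voteball$ are exactly what the proof of \Cref{lem:vote_sensitivity_space} establishes, and full-dimensionality follows from that neighborhood. So the strategy is simply to replay the two arguments with $\voteball$ in place of $\countball$. I would first handle centering, then uniqueness.

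For centering: suppose $E = \{x : (x-c)^\top A(x-c)\le 1\}$ is a minimum ellipse of $\voteball$ with center $c$ and positive definite $A$. Since $\voteball = -\voteball$, the reflected ellipse $-E$ also contains $\voteball$, i.e. $(x+c)^\top A(x+c)\le 1$ for all $x\in\voteball$; averaging with the defining inequality of $E$ gives $x^\top A x \le 1 - c^\top A c$ for all $x\in\voteball$. Because $0$ lies in the interior of $\voteball\subseteq E$ we have $c^\top A c < 1$, so $E' = \{x : x^\top A x \le 1 - c^\top A c\}$ is a genuine origin-centered ellipse containing $\voteball$, whose axis lengths are those of $E$ rescaled by $\sqrt{1 - c^\top A c}\le 1$. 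Using the translated form of \Cref{lem:ellipse_squared_l2}, namely $\enc{E} = \|c\|_2^2 + \tfrac{1}{d+2}\sum_j a_j^2$, one gets $\enc{E'} = (1 - c^\top A c)\,\tfrac{1}{d+2}\sum_j a_j^2 < \enc{E}$ whenever $c\ne 0$, contradicting minimality. Hence $c=0$.

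For uniqueness: by the centering step it suffices to preclude two distinct origin-centered minimum ellipses $E_{A_1}$ and $E_{A_2}$. Put $A = \tfrac12(A_1+A_2)$; averaging the containments $x^\top A_i x\le 1$ shows $E_A\supseteq\voteball$, while \Cref{lem:ellipse_squared_l2} gives $\enc{E_A} = \tfrac{1}{d+2}\Tr(A^{-1})$ and likewise for $A_1,A_2$. Strict convexity of $B\mapsto\Tr(B^{-1})$ on the positive-definite cone --- the Courant--Fischer step from the proof of \Cref{lem:count_ellipse_unique} --- then yields $\Tr(A^{-1}) < \tfrac12\Tr(A_1^{-1}) + \tfrac12\Tr(A_2^{-1})$ since $A_1\ne A_2$, so $\enc{E_A}$ is strictly smaller than the common minimum value $\enc{E_{A_1}} = \enc{E_{A_2}}$, a contradiction. (If one also wants existence of a minimum, it follows by restricting to origin-centered ellipses and noting that full-dimensionality of $\voteball$ makes the feasible matrices $A$ with $\Tr(A^{-1})$ bounded a compact set.)

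I do not expect a genuine obstacle: all the content is in transporting the two structural hypotheses from $\countball$ to $\voteball$, which \Cref{lem:vote_sensitivity_space} already supplies, and in the one matrix-analytic fact (strict convexity of the trace-inverse), which was already proved for Count. The only point requiring slight care is that $\voteball$, unlike $\countball$, is not built as an interpolation of Sum balls, so one should verify directly --- rather than by analogy with the Count orthant analysis --- that it contains a ball around the origin; this is immediate from the line-segment argument in the proof of \Cref{lem:vote_sensitivity_space}.
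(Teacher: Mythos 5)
Your proposal is correct, and at the top level it is exactly the paper's route: the paper proves \Cref{lem:vote_ellipse_properties} with a one-line reduction, observing that the arguments of \Cref{lem:centering_lemma} and \Cref{lem:count_ellipse_unique} carry over because $\voteball$ is symmetric around the origin (and, via the absorbing property established in \Cref{lem:vote_sensitivity_space}, contains a neighborhood of the origin). Where you differ is in how the two transferred arguments are actually carried out. For centering, the paper maps the putative minimum ellipse to a sphere, shows the image of the body lies in a strictly smaller concentric ball, and then translates; you instead average the quadratic inequalities for $E$ and $-E$ to obtain the origin-centered ellipse $\{x : x^{T}Ax \le 1 - c^{T}Ac\}$ containing $\voteball$ with strictly smaller $\enc{\cdot}$, a cleaner symmetrization that avoids the transformation step entirely. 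For uniqueness, the paper averages the two quadratic forms just as you do, but then argues eigenvalue-by-eigenvalue via Courant--Fischer that $\lambda_j(A_1+A_2) > \max\{\lambda_j(A_1),\lambda_j(A_2)\}$, whereas you invoke strict convexity of $B \mapsto \Tr(B^{-1})$ on the positive-definite cone. These are not the same step, so your attribution of the convexity fact to the paper's Courant--Fischer argument is slightly off --- but your version is the right tool: with the paper's normalization $x^{T}(A_1+A_2)x = 2$, the averaged ellipse has squared axis lengths $2/\lambda_j(A_1+A_2)$, and the eigenvalue-wise bound only controls this by $\Tr(A_1^{-1}) + \Tr(A_2^{-1})$, i.e.\ twice the common minimum, while convexity gives $\Tr\bigl(\bigl(\tfrac{A_1+A_2}{2}\bigr)^{-1}\bigr) \le \tfrac12\Tr(A_1^{-1}) + \tfrac12\Tr(A_2^{-1})$ with strict inequality for $A_1 \neq A_2$, which is precisely what the contradiction requires. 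So your write-up matches the paper's intent, and your uniqueness step is in fact tighter than the paper's written argument; I see no gap.
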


Its axis directions are also identical to those of $\countball$. The proof from \Cref{lem:axes_directions} still applies, because transposing arbitrary coordinates of any vertex in $\voteball$ produces another vertex in $\voteball$; see \Cref{lem:transposition} in the Appendix for details.

\begin{lemma}
\label{lem:vote_axes_directions}
    The minimum ellipse of $\voteball$ has an axis along the $(1, \ldots, 1)$ direction, and the remaining axis lengths are equal, $a_2 = a_3 = \cdots = a_d$.
\end{lemma}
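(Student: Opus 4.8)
The plan is to transcribe the proof of \Cref{lem:axes_directions} almost verbatim, replacing the coordinate-transposition symmetry of $\countball$ with the full permutation symmetry of $\voteball$. The starting point is that $\voteball = CH(P_d \cup -P_d)$ (\Cref{lem:vote_sensitivity_space}) is invariant under every coordinate permutation: permuting the entries of a permutation of $(0,1,\ldots,d-1)$ produces another such permutation, and likewise for its negation, so the vertex set $P_d \cup -P_d$ --- and hence its convex hull --- is fixed by each $\sigma \in S_d$. This is \Cref{lem:transposition}.

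Next I would argue that the (unique, origin-centered) minimum ellipse $E = E_A$ of $\voteball$, guaranteed by \Cref{lem:vote_ellipse_properties}, is itself $S_d$-invariant. Fix a transposition $\tau$ and let $P_\tau$ be its permutation matrix, which is orthogonal. Because $\voteball$ is $\tau$-invariant, $\voteball = P_\tau\voteball \subseteq P_\tau E$, so $P_\tau E$ is again an origin-centered ellipse enclosing $\voteball$; since $P_\tau$ is orthogonal, $P_\tau E$ has exactly the same multiset of axis lengths as $E$, hence (by \Cref{lem:ellipse_squared_l2}) $\enc{P_\tau E}$ has the same expected squared $\ell_2$ norm. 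Thus $P_\tau E$ is also a minimum ellipse, and uniqueness forces $P_\tau E = E$. Since transpositions generate $S_d$, $E$ is invariant under all of $S_d$, equivalently $A$ commutes with every permutation matrix.

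Finally I would extract the geometry: a symmetric matrix commuting with all $d\times d$ permutation matrices must be constant on the diagonal and constant off it, so $A = \alpha I + \beta\vecone\vecone^T$ for some scalars with $\alpha>0$ and $\alpha+\beta d>0$ (positive definiteness). Its eigenpairs are $\vecone=(1,\ldots,1)$ with eigenvalue $\alpha+\beta d$ and the hyperplane $\vecone^\perp$ with eigenvalue $\alpha$ of multiplicity $d-1$; by \Cref{def:ellipse} these are the axis directions of $E$, with axis length $1/\sqrt{\alpha+\beta d}$ along $(1,\ldots,1)$ and $1/\sqrt{\alpha}$ for all $d-1$ remaining axes, which is the claim. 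The only step needing any care is the linear-algebra characterization of permutation-commuting symmetric matrices --- routine by comparing entries swapped by transpositions --- together with the bookkeeping that orthogonality of $P_\tau$ is exactly what preserves both ``origin-centered'' and the axis lengths that \Cref{lem:ellipse_squared_l2} depends on; no genuinely new obstacle arises beyond what was already handled for $\countball$.
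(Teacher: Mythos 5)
Your proposal is correct, and its first move is exactly the paper's: the paper (via the proof of \Cref{lem:axes_directions}, which it says carries over because coordinate transpositions map vertices of $\voteball$ to vertices) also argues that each transposition is an origin-fixing isometry, so the transposed ellipse still encloses the ball with the same expected squared $\ell_2$ norm (\Cref{lem:ellipse_squared_l2}), and uniqueness (\Cref{lem:vote_ellipse_properties}) forces invariance of $E$. Where you genuinely diverge is in how the eigenstructure is extracted from that invariance. The paper stays geometric: it proves a separate claim (by induction on dimension) that a reflection symmetry of $E$ across a hyperplane forces the normal vector to be an eigenvector of $A$, applies it to the hyperplanes $\Pi_{1,j}$ to get eigenvectors parallel to $e_j - e_1$, and then uses the fact that no two of these are orthogonal (while distinct eigenspaces of a symmetric matrix are) to conclude $a_2 = \cdots = a_d$, with $(1,\ldots,1)$ spanning the orthogonal complement. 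You instead upgrade transposition invariance to full $S_d$-invariance, note that $P_\sigma E = E$ for positive definite $A$ means $P_\sigma^T A P_\sigma = A$, and characterize the commutant: $A = \alpha I + \beta\vecone\vecone^T$, whose spectrum immediately gives the axis along $\vecone$ and a $(d-1)$-fold degenerate remaining eigenvalue. Your route is shorter and avoids both the inductive reflection-eigenvector claim and the non-orthogonality argument, at the cost of the (routine, but worth writing out) entrywise verification that a symmetric matrix commuting with all permutation matrices has constant diagonal and constant off-diagonal; the paper's route yields a reusable geometric lemma and lets the Count and Vote proofs be literally identical. One small citation nit: \Cref{lem:transposition} actually characterizes neighboring vertices via adjacent-value swaps; the invariance of $P_d \cup -P_d$ under arbitrary coordinate permutations is what you need, and it is immediate from the definition (the paper's own pointer to that lemma is similarly loose).
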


It remains to find a contact point between the minimum ellipse and $\voteball$. The minimum ellipse must contact at least one vertex of $\voteball$, but because of \Cref{lem:vote_ellipse_properties} and \Cref{lem:vote_axes_directions}, and the fact that all elements of $CH(P_d)$ are equidistant from the $(1, 1, \ldots, 1)$ axis, contacting one means that it contacts all of them.

\begin{restatable}{lemma}{voteEllipseContacts}
\label{lem:vote_ellipse_contacts}
    The minimum ellipse of $\voteball$ contacts the vertices of $CH(P_d)$.
\end{restatable}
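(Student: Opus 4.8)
The plan is to show that the minimum ellipse $E$ must touch the boundary of $\voteball$ at a vertex, and then to propagate that single contact point to \emph{all} vertices of $CH(P_d)$ using the rotational symmetry of $E$ about the $(1,\dots,1)$ axis together with the fact that every permutation of $(0,1,\dots,d-1)$ has the same $\ell_2$ norm and the same coordinate sum.

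First I would argue that $E$ — more precisely, its enclosed ellipsoid $\bar E_A = \{x : x^T A x \le 1\}$, with $A$ positive definite since $\voteball$ is full-dimensional and by \Cref{lem:vote_ellipse_properties} — must have a vertex of $\voteball$ on its boundary. Because $\voteball = CH(P_d \cup -P_d)$ and $\bar E_A$ is convex, the containment $\voteball \subseteq \bar E_A$ is equivalent to $P_d \cup -P_d \subseteq \bar E_A$. If every $v \in P_d \cup -P_d$ lay in the open interior, then $c := \max_v v^T A v < 1$, so the ellipse $E_{A/c}$ would still enclose all these vertices, hence $\voteball$, while having axis lengths $a_j\sqrt c < a_j$; by \Cref{lem:ellipse_squared_l2} this strictly decreases $\E{}{\|Z\|_2^2}$, contradicting minimality of $E$. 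Hence some vertex $v_0 \in P_d \cup -P_d$ satisfies $v_0^T A v_0 = 1$, and since both $\voteball$ and $E_A$ are symmetric about the origin, $(-v_0)^T A (-v_0) = 1$ too, so we may assume $v_0 \in P_d$.

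Next I would invoke \Cref{lem:vote_axes_directions}: $E_A$ has one axis along $\vecone = (1,\dots,1)$ of length $a_1$ and all remaining axes of a common length $a_2$. Writing $\lambda_1 = 1/a_1^2$ and $\lambda_2 = 1/a_2^2$, this forces $A = \lambda_2 I + \tfrac{\lambda_1 - \lambda_2}{d}\,\vecone\vecone^T$, so for every $v \in \mathbb{R}^d$,
\[
    v^T A v \;=\; \lambda_2\|v\|_2^2 \;+\; \frac{\lambda_1-\lambda_2}{d}\Big(\sum_{i=1}^d v_i\Big)^2 .
\]
The right-hand side depends only on $\|v\|_2^2$ and $\sum_i v_i$, both invariant under permuting the coordinates of $v$. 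Since every element of $P_d$ is a coordinate permutation of $(0,1,\dots,d-1)$, the value $v^T A v$ is constant over $v \in P_d$; together with $v_0^T A v_0 = 1$ this yields $v^T A v = 1$ for all $v \in P_d$, i.e.\ $E$ contacts every vertex of $CH(P_d)$. (By origin symmetry it also contacts every vertex of $-P_d$.)

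I do not anticipate a serious obstacle: the substantive content sits in the already-established \Cref{lem:vote_ellipse_properties} and \Cref{lem:vote_axes_directions}. The one step requiring a little care is the first one — ensuring that a contact is \emph{forced}, i.e.\ that an enclosing ellipsoid touching nothing can always be shrunk — which is exactly where the explicit, axis-length-monotone variance formula of \Cref{lem:ellipse_squared_l2} does the work.
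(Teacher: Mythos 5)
Your proposal is correct and follows essentially the same route as the paper: minimality forces the ellipse to touch some vertex of $P_d \cup -P_d$, and the symmetry established in \Cref{lem:vote_ellipse_properties} and \Cref{lem:vote_axes_directions} (equivalently, your explicit form $v^TAv = \lambda_2\|v\|_2^2 + \tfrac{\lambda_1-\lambda_2}{d}(\sum_i v_i)^2$, which is constant over $P_d$) propagates that contact to every vertex of $CH(P_d)$. Your write-up is in fact a bit more complete than the paper's terse argument, since you make explicit the shrinking step (scaling $A$ to $A/c$) that forces a contact in the first place.
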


This again yields a program that can be solved using Lagrange multipliers.

\begin{restatable}{theorem}{voteEllipseTheorem}
    The minimum ellipse of $\voteball$ can be computed in time $O(1)$.
\end{restatable}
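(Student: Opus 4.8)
The plan is to reduce the theorem to a two‑variable constrained optimization problem that has a closed‑form solution. By \Cref{lem:vote_ellipse_properties}, \Cref{lem:vote_axes_directions}, and \Cref{lem:vote_ellipse_contacts}, the minimum ellipse $E$ is origin‑centered, has one axis of length $a_1$ in the direction $u_1 := \tfrac{1}{\sqrt{d}}(1,\ldots,1)$, has equal axis lengths $a_2 = \cdots = a_d =: a$ on the orthogonal complement $u_1^{\perp}$, and touches every vertex of $CH(P_d)$ (hence, by origin‑symmetry, every vertex of $-CH(P_d)$ as well). So $E$ is pinned down by the pair $(a_1,a)$, and by \Cref{lem:ellipse_squared_l2} its objective is $\enc{E} = \tfrac{1}{d+2}\bigl(a_1^2 + (d-1)a^2\bigr)$.

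First I would record the two quantities describing a vertex $v$ of $CH(P_d)$ in the coordinates adapted to $E$: the component along the axis, $h_0 := v \cdot u_1$, and the norm of the orthogonal component, $\rho := \|v - (v\cdot u_1)u_1\|_2$. Since every $v \in P_d$ is a permutation of $(0,1,\ldots,d-1)$, both are independent of the choice of vertex, and a short computation gives $h_0^2 = \tfrac{d(d-1)^2}{4}$ and $\rho^2 = \sum_{j=0}^{d-1} j^2 - h_0^2 = \tfrac{d(d^2-1)}{12}$. Next I would argue that, for a spheroid of the above form, containing $\voteball$ is equivalent to the single scalar inequality $\tfrac{h_0^2}{a_1^2} + \tfrac{\rho^2}{a^2} \le 1$. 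This uses that $\voteball$ is a cylinder along $u_1$ (from the proof of \Cref{lem:vote_sensitivity_space}) whose cross‑section at every height is a congruent translate of the $(d-1)$‑dimensional permutohedron, which lies in the ball of radius $\rho$ about its center with all vertices on that sphere. Hence any point of $\voteball$ can be written $h u_1 + w$ with $|h| \le h_0$ and $\|w\|_2 \le \rho$, so $x^{\top} A x = \tfrac{h^2}{a_1^2} + \tfrac{\|w\|_2^2}{a^2} \le \tfrac{h_0^2}{a_1^2} + \tfrac{\rho^2}{a^2}$, and the bound at $(h_0,\rho)$ — attained at a vertex of $CH(P_d)$ — is the only one that binds. \Cref{lem:vote_ellipse_contacts} makes it an equality at the optimum.

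Putting these together, the minimum ellipse solves: minimize $a_1^2 + (d-1)a^2$ subject to $\tfrac{h_0^2}{a_1^2} + \tfrac{\rho^2}{a^2} = 1$. Substituting $u = a_1^2$, $v = a^2$ and applying Lagrange multipliers (or eliminating one variable and differentiating) yields $\tfrac{v}{u} = \tfrac{\rho}{h_0\sqrt{d-1}}$, and then $a_1^2 = h_0\bigl(h_0 + \rho\sqrt{d-1}\bigr)$ and $a_2^2 = \cdots = a_d^2 = \tfrac{\rho\bigl(h_0 + \rho\sqrt{d-1}\bigr)}{\sqrt{d-1}}$. A routine second‑order check (the objective is convex on the feasible region and blows up at its boundary) confirms this is the minimizer. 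These axis lengths depend only on $d$, and the axis directions are just $u_1$ and its orthogonal complement, so the full closed‑form description of the minimum ellipse — equivalently, its defining matrix $A$, which is a scalar multiple of the identity plus a rank‑one term — is produced in $O(1)$ time.

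I expect the main obstacle to be the second step: showing that the cylinder‑containment constraint collapses to the single ``corner'' inequality $\tfrac{h_0^2}{a_1^2} + \tfrac{\rho^2}{a^2} \le 1$. This requires the cylinder structure of $\voteball$ together with the fact that the circumradius of the permutohedral cross‑section is attained exactly at the vertices and equals $\rho$ — the same geometric fact already exploited in the proof of \Cref{lem:vote_ellipse_contacts}. Once this reduction is in hand, the optimization and its closed‑form solution are routine.
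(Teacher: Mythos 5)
Your proposal is correct and follows essentially the same route as the paper: it uses the centering, axis-direction, and vertex-contact lemmas to reduce to minimizing $a_1^2+(d-1)a_2^2$ subject to the single constraint at a vertex of $CH(P_d)$, with your $h_0$ and $\rho$ equal to the paper's $\|w_1\|_2=\tfrac{(d-1)\sqrt{d}}{2}$ and $\|w_2\|_2=\sqrt{d(d^2-1)/12}$, and the Lagrange-multiplier solution matches the paper's closed form $a_1^2=h_0(h_0+\rho\sqrt{d-1})$, $a_2^2=\rho(h_0+\rho\sqrt{d-1})/\sqrt{d-1}$. Your explicit two-sided argument that cylinder containment collapses to the corner inequality is a small, welcome tightening of what the paper leaves implicit, but it is not a different method.
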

\section{Acknowledgments}
We thank Jennifer Gillenwater, Alex Kulesza, Nikita Lvov, Andr\'es Mu\~noz Medina, and Thomas Steinke for helpful comments.

\newpage

\bibliographystyle{plainnat}
\bibliography{references}

\newpage

\section{Proofs For Preliminaries}
\label{sec:appendix_prelims}

\randomEllipse*
\begin{proof}
    Note that $Mv_i = C^{-1}DCv_i = C^{-1}De_i = C^{-1}(a_{i}e_{i}) = a_{i}v_{i}$, so $M$ is the linear transformation that scales eigenvector $v_i$ by $a_i$. In other words, $MB_{2}^{d} = E$, so $\countball \subset MB_{2}^{d}$. Since for $i \in [d]$ we have $C^{-1}e_i = v_i$, the columns of $C^{-1}$ are $\{v_1, \ldots, v_d\}$. Similarly, $Cv_i = e_i$ implies that the rows of $C$ are $\{v_1, \ldots, v_d\}$, so $C$ is unitary, and $MM^{T} = C^{-1}D^{2}C = (C^{T}D)(DC) = (C^{T}D)(C^{T}D)^{T}$. It follows that $\mathcal{N}(0,MM^{T}) = C^{T}D\mathcal{N}(0,I_d)$. 
    
    Suppose $X \sim \mathcal{N}(0,I_d)$. Equivalently, $X$ is generated by first drawing a radius $R$ from a Chi distribution $\chi_d$, sampling $Y$ from the unit sphere, and computing $X = RY$. As $RY$ is a uniform sample from $RB_{2}^{d}$, $C^{T}DX = C^{T}DRY$ is a uniformly random sample from $RE$ (since the linearity of the transform preserves uniformity).
\end{proof}

\lebesgue*
\begin{proof}
    Each $x \in E$ induces an equivalence class partition of indices $C = \{I_{1},...,I_{n}\}$ where $I_j \subset \{1,2,...,d\}$ and indices $i,j \in \{1,...,d\}$ are equivalent if $x_i = x_j$. Define $V_C = \mbox{span}\{v_1,...,v_n\}$ where $v_j \in \{0,1\}^{d}$ is the vector with coordinates equal to 1 exactly at each index in $I_j$. Since $n < d$, $|V_C| = 0$. As there are finitely many possible equivalence class partitions of indices, say $\{C_1,...,C_m\}$, then $E \subseteq \cup_{i=1}^{m}V_{C_i}$ and $0 \leq |E| \leq \sum_{i=1}^{m}|V_{C_i}| = 0$ so $|E| = 0$. 
\end{proof}
\section{Proofs For Sum}
\label{sec:appendix_sum}

\subsection{Proofs For Sum Sampler}

\simplexSample*
\begin{proof}
    Denote the simplex in question by $\Delta$, with vertices $x_0, \ldots, x_d$. By definition, each point of $\Delta$ can be expressed as a convex combination of $x_0, \ldots, x_d$. If we have two such convex combinations $\sum_{i=0}^d \alpha_{i}x_i$ and $\sum_{i=0}^d \beta_{i}x_{i}$ with distinct $\alpha$ and $\beta$, then $\sum_{i=0}^d (\alpha_i - \beta_i)x_i = 0$, and $\sum_{i=0}^d (\alpha_i - \beta_i) = 1-1 = 0$, so affine independence implies $\alpha = \beta$. It follows that every point from $\Delta$ has a unique expression as a convex combination of $x_0, \ldots, x_d$. 
    
    Let $B = \{e_1,..,e_d\}$ be the standard basis in $\mathbb{R}^{d}$. We will show that a uniform distribution over the basis $B$ corresponds to a uniform distribution when we change to the basis $B_{x} = \{x_1,...,x_d\}$. Let $f$ be the uniform density function over the simplex with vertices in $B_{x}$. Then $\int_{x \in \Delta}f dB = 1$. Let $M$ be the matrix whose $i$th row is equal to $x_i$ written with coordinates in $B$. When we switch from integration over $B$ to integration over $B_{x}$, we need to calculate the Jacobian matrix which is $M^{-1}$.  Then $1 = \int_{x \in \Delta}f dB = \int_{x \in \Delta_s}f |\det{M^{-1}}|dB_{x}$ where $\Delta_s$ is the standard simplex, i.e., the simplex with vertices in $B$. Since $f$ is uniform, it follows that $f |\det{M^{-1}}|$ is a uniform density function over $\Delta_s$ when we switch to the $B_{x}$ basis, so sampling a point uniformly from $\Delta$ in the $B$ basis corresponds to sampling a point uniformly from $\Delta_s$ in the $B_{x}$ basis. We can do the latter in time $O(d\log(d))$ by drawing $d-1$ samples from $U(0,1)$, appending 0 and 1, sorting the $d+1$ elements, and taking the $d$ distances $\{\alpha_0,...,\alpha_d\}$ between adjacent elements~\cite{R81}. Then we return $\sum_{i=0}^{d}\alpha_{i}x_i$.
\end{proof}

We start by defining the fundamental simplex.

\begin{definition}
\label{def:fundamental_simplex}
    An \emph{open} simplex is a simplex minus its boundary. The \emph{fundamental $d$-simplex} $\Delta_d$ is the open simplex with vertices $\{f_0, f_1,f_2,...,f_d\}$ where $f_i \in \{0,1\}^{d}$ is the vector whose first $d-i$ coordinates are 0 and whose last $i$ coordinates are 1.
\end{definition}

We will repeatedly view points in $(0,1)^d$ as permutations of points in $\Delta_d$. \Cref{def:fundamental_simplex} makes it clear that $\Delta_d$ is a simplex, but the following lemma provides an equivalent description that will be easier to reason about algebraically.

\begin{lemma}
\label{lem:fundamental_simplex_increasing}
    The fundamental simplex $\Delta_d = \{x \in (0,1)^d: x_1 < ... < x_d\}$.
\end{lemma}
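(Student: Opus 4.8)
The plan is a direct computation that identifies strict convex combinations of the vertices $f_0,\dots,f_d$ with strictly increasing tuples in $(0,1)^d$. First I would pin down what ``open simplex'' means here: the $f_i$ are affinely independent (by \Cref{def:fundamental_simplex}, the successive differences $f_i-f_{i-1}$ are distinct standard basis vectors), so $\Delta_d$ is a full-dimensional $d$-simplex in $\mathbb{R}^d$, and deleting its boundary leaves exactly the relative interior, i.e.\ the set of \emph{strict} convex combinations $\{\sum_{i=0}^d \alpha_i f_i : \alpha_i > 0,\ \sum_{i=0}^d \alpha_i = 1\}$.

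Next I would write out the coordinates. Since $f_i$ has $1$s precisely in coordinates $d-i+1,\dots,d$, we have $(f_i)_j = \mathbbm{1}[\,i \geq d-j+1\,]$, so for $x=\sum_{i=0}^d \alpha_i f_i$,
\[
  x_j = \sum_{i=d-j+1}^{d} \alpha_i .
\]
In particular $x_1 = \alpha_d$, $x_d = 1-\alpha_0$, and $x_{j+1}-x_j = \alpha_{d-j}$ for $1 \le j \le d-1$. When every $\alpha_i$ is strictly positive this forces $0 < x_1 < x_2 < \dots < x_d < 1$, giving the inclusion $\Delta_d \subseteq \{x \in (0,1)^d : x_1 < \dots < x_d\}$.

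For the reverse inclusion, given any $x$ with $0 < x_1 < \dots < x_d < 1$ I would define $\alpha_d = x_1$, $\alpha_{d-j} = x_{j+1}-x_j$ for $1 \le j \le d-1$, and $\alpha_0 = 1-x_d$; the strict inequalities make each $\alpha_i > 0$, they sum to $1$ by telescoping, and the coordinate formula above checks that $\sum_i \alpha_i f_i = x$, so $x \in \Delta_d$. I do not expect a genuine obstacle in this argument; the only points that need care are stating that ``open'' refers to the relative interior (hence all inequalities involved are strict, matching $(0,1)^d$ and the strict chain $x_1<\dots<x_d$) and keeping the reversed indexing straight, since coordinate $j$ of $\sum_i\alpha_i f_i$ depends on $\alpha_{d-j+1},\dots,\alpha_d$.
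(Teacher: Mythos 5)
Your proposal is correct and follows essentially the same route as the paper's proof: express coordinates of a convex combination as the partial sums $x_j=\sum_{i=d-j+1}^d \alpha_i$ to get the strict increasing chain, then invert via the telescoping differences $\alpha_d=x_1$, $\alpha_{d-j}=x_{j+1}-x_j$, $\alpha_0=1-x_d$. Your explicit remark that ``open simplex'' means strict convex combinations is a small point of extra care beyond the paper's writeup, but the argument is otherwise identical.
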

\begin{proof}
    Given $x \in \Delta_d$, it is a convex combination of $\{f_0, f_1, \ldots, f_d\}$, so we can write $x = \sum_{i=0}^d c_{i}f_{i}$  where $c_i \in (0,1)$ and $\sum_{i=0}^d c_{i} = 1$. Then $x_j = \sum_{i=d-j+1}^d c_i$ for all $1\leq j \leq d$, so $x_1 < ... < x_d$. Conversely, given $x \in (0,1)^d$ with $x_1 < ... < x_d$, then we can define $c_d = x_1$, for $2 \leq j \leq d$ define $c_{d-j+1} = x_j - x_{j-1}$, and finally define $c_0 = 1 - x_d$ so $\sum_{i=0}^d c_i = 1$. Then $(c_df_d)_1 = c_d = x_1$, $(c_df_d + c_{d-1}f_{d-1})_2 = c_d + c_{d-1} = x_2$, and in general $x = \sum_{i=0}^d c_{i}f_{i}$ is a convex combination of $\{f_0, \ldots, f_d\}$.
\end{proof}

To connect regions and permutations, we apply $S_d$ to $\Delta_d$ to obtain a partition of $(0,1)^d$.

\begin{lemma}
\label{lem:partition_cube}
    $S_{d}(\Delta_d) = \{\sigma(\Delta_d) : \sigma \in S_d\}$ partitions $(0,1)^{d}$ into disjoint open simplices.
\end{lemma}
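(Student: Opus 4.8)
The plan is to combine the inequality description of $\Delta_d$ from \Cref{lem:fundamental_simplex_increasing} with a bookkeeping of how the $S_d$-action permutes coordinates, and then reduce disjointness and covering to the elementary fact that a point with pairwise distinct coordinates has a unique sorting permutation. First I would observe that each $\sigma \in S_d$ acts on $\mathbb{R}^d$ as the invertible linear map given by its permutation matrix. Since $\Delta_d$ is the open simplex with vertices $f_0, \ldots, f_d$ (\Cref{def:fundamental_simplex}), its image $\sigma(\Delta_d)$ is the open simplex with vertices $\sigma(f_0), \ldots, \sigma(f_d)$, and affine independence of these images is immediate from affine independence of $f_0, \ldots, f_d$ together with invertibility of $\sigma$. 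This settles the ``open simplices'' part of the statement.

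Next I would derive the explicit description
\[
    \sigma(\Delta_d) = \{x \in (0,1)^d : x_{\sigma^{-1}(1)} < x_{\sigma^{-1}(2)} < \cdots < x_{\sigma^{-1}(d)}\}.
\]
Indeed, if $y \in \Delta_d$ then $y_1 < \cdots < y_d$ by \Cref{lem:fundamental_simplex_increasing}, and setting $x = \sigma(y)$ means $x_i = y_{\sigma(i)}$, i.e.\ $y_j = x_{\sigma^{-1}(j)}$, which turns the chain $y_1 < \cdots < y_d$ into the chain above; conversely any $x$ satisfying these inequalities arises this way from the $y$ it defines.

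For disjointness and covering I would invoke \Cref{lem:assumption}, so that it suffices to treat $x \in (0,1)^d$ with pairwise distinct coordinates (the excluded set $E$ has measure zero by \Cref{lem:lebesgue}). Such an $x$ admits a unique permutation $\tau$ with $x_{\tau(1)} < \cdots < x_{\tau(d)}$, and by the displayed description $x \in \sigma(\Delta_d)$ holds precisely when $\sigma^{-1} = \tau$, i.e.\ $\sigma = \tau^{-1}$. Hence every such $x$ lies in exactly one member of $S_d(\Delta_d)$, giving both pairwise disjointness of the family and $\bigcup_{\sigma \in S_d} \sigma(\Delta_d) = (0,1)^d$ under the convention of \Cref{lem:assumption}. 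I do not expect a genuine obstacle here: the only delicate point is boundary behavior, where coordinate ties make the sorting permutation ambiguous, and this is exactly why the pieces are taken to be \emph{open} simplices and why \Cref{lem:assumption} is in force throughout, so the argument is airtight once those conventions are fixed.
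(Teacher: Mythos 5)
Your proof is correct and follows essentially the same route as the paper's: derive the inequality description $\sigma(\Delta_d) = \{x \in (0,1)^d : x_{\sigma^{-1}(1)} < \cdots < x_{\sigma^{-1}(d)}\}$ and then use the uniqueness of the sorting permutation for points with pairwise distinct coordinates (with \Cref{lem:assumption} handling ties), which is exactly the paper's argument. Your added verification that each image is genuinely an open simplex (affine independence preserved by the invertible permutation map) is left implicit in the paper but is a fine, harmless addition.
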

\begin{proof}
    For $\sigma \in S_d$, $\sigma(\Delta_d) = \{(x_{\sigma(1)},...,x_{\sigma(d)}) : x \in \Delta_d\} = \{x \in (0,1)^{d} : x_{\sigma^{-1}(1)} < ... < x_{\sigma^{-1}(d)}\}$. For every $x \in (0,1)^{d}$ there is exactly one $\sigma_x \in S_d$ such that $x_{\sigma_x^{-1}(1)} < ... < x_{\sigma_x^{-1}(d)}$, so $x \in \sigma_x(\Delta_d)$.
\end{proof}

Moreover, there is a concrete bijection between regions $\sigma(\Delta_d)$ and permutations.

\begin{lemma}
\label{lem:G_dk_map}
    Fix $0 \leq k < d$. Let $T_{d,k} = \{\sigma(\Delta_d) \in S_d(\Delta_d): \mbox{every } x \in \sigma(\Delta_d) \mbox{ has exactly k ascents} \}$. Then $T_{d,k} = \{\sigma(\Delta_d): \sigma \in S_{d,k}\}$ and, defining $G_d(\sigma) = \sigma(\Delta_d)$, its restriction $G_{d,k}$ to $S_{d,k}$ is a bijection between $S_{d,k}$ and $T_{d,k}$.
\end{lemma}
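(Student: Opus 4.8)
The plan is to pin down, for a fixed $\sigma \in S_d$, exactly how many ascents a point $x \in \sigma(\Delta_d)$ has — showing this count is the same for every such $x$ and equals the number of ascents of $\sigma$ itself. Once this is established, both halves of the statement (the set equality $T_{d,k} = \{\sigma(\Delta_d) : \sigma \in S_{d,k}\}$ and the bijectivity of $G_{d,k}$) drop out with essentially no extra work, using the partition structure from \Cref{lem:partition_cube}.

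Concretely, I would first recall from \Cref{lem:fundamental_simplex_increasing} and the proof of \Cref{lem:partition_cube} that $\sigma(\Delta_d) = \{x \in (0,1)^d : x_{\sigma^{-1}(1)} < \cdots < x_{\sigma^{-1}(d)}\}$. The key observation is that for any $x$ in this set, the rank of coordinate $i$ among $x_1, \ldots, x_d$ — i.e.\ the number of coordinates $\le x_i$ — is exactly $\sigma(i)$, since $x_i = x_{\sigma^{-1}(\sigma(i))}$ is by definition the $\sigma(i)$-th smallest coordinate. Hence, interpreting ``$x$ has an ascent at position $i$'' as $x_i < x_{i+1}$, we get $x_i < x_{i+1}$ if and only if $\sigma(i) < \sigma(i+1)$, i.e.\ exactly when $\sigma$ has an ascent at position $i$ in the sense of \Cref{def:s_d}. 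Therefore every $x \in \sigma(\Delta_d)$ has precisely $(\text{number of ascents of }\sigma)$ ascents, independent of $x$. I expect this index bookkeeping — keeping straight $\sigma$ versus $\sigma^{-1}$, and ``position of an ascent'' versus ``value at a position'' — to be the only delicate point; the rest is formal.

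With this in hand, $\sigma(\Delta_d) \in T_{d,k}$ holds iff $\sigma$ has exactly $k$ ascents, i.e.\ iff $\sigma \in S_{d,k}$, which gives the set equality $T_{d,k} = \{\sigma(\Delta_d) : \sigma \in S_{d,k}\}$, and in particular $T_{d,k}$ is exactly the image of $S_{d,k}$ under $G_d$. For the bijection, I would note that $G_d$ is injective on all of $S_d$: by \Cref{lem:partition_cube} the sets $\sigma(\Delta_d)$ are pairwise disjoint, and each is nonempty (it contains, e.g., the point with $x_{\sigma^{-1}(j)} = j/(d+1)$), so $\sigma \ne \tau$ forces $\sigma(\Delta_d) \ne \tau(\Delta_d)$. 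Restricting to $S_{d,k}$ preserves injectivity, and by the previous step the image of this restriction is precisely $T_{d,k}$; hence $G_{d,k} : S_{d,k} \to T_{d,k}$ is a bijection.
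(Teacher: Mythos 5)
Your proposal is correct and follows essentially the same route as the paper: both arguments reduce to showing that the ascent positions of any $x \in \sigma(\Delta_d)$ coincide with the ascent positions of $\sigma$ (you via the rank identity from the $\sigma^{-1}$-ordering of coordinates, the paper by writing $x_i = x'_{\sigma(i)}$ for increasing $x'$), and then conclude the set equality and bijectivity, with your injectivity step just making explicit the disjointness-plus-nonemptiness reasoning the paper compresses into the statement $G_{d,k}^{-1}(\sigma(\Delta_d)) = \sigma$.
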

\begin{proof}
    $x \in \sigma(\Delta_d) \in T_{d,k}$ if and only if $x$ has exactly $k$ ascents and $x = (x_{\sigma(1)}', \ldots, x_{\sigma(d)}')$  for some $x' \in \Delta_d$. $x' \in \Delta_d$ if and only if $x_1' < \cdots < x_d'$. Thus $x_{\sigma(i)}' < x_{\sigma(i+1)}'$ if and only if $\sigma(i) < \sigma(i+1)$. Thus, $x$ has exactly $k$ ascents if and only if $\sigma$ has exactly $k$ ascents, so $T_{d,k} = \{\sigma(\Delta_d) \mid \sigma \in S_{d,k}\}$. To see that $G_{d,k}$ is a bijection, we use $G_{d,k}^{-1}(\sigma(\Delta_d)) = \sigma$.
\end{proof}

Recapping the argument so far, the slices $R_1, \ldots, R_k$ partition $V^+$, permuting $\Delta_d$ partitions $(0,1)^d$ into simplices (\Cref{lem:partition_cube}), and there is a bijection between those simplices and partitions in terms of ascents (\Cref{lem:G_dk_map}). The last step connecting regions and permutations relies on an explicit map $\varphi$ introduced by~\citet{S77} .

\begin{lemma}[\cite{S77}]
\label{lem:phi_bijection}
    Define $\varphi \colon (0,1)^d \to (0,1)^d$ by $\varphi(x) = y$ where $y_j = x_{j-1} - x_j + \indic{x_{j-1} < x_j}$ and  we define $x_0 = 0$. Except on a set of measure 0, $\varphi$ is a measure-preserving bijection from $U_j = \{x \in (0,1)^d \mid x \text{ has exactly } j \text{ ascents}\}$ to $R_{j+1}$. 
\end{lemma}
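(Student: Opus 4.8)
The plan is to produce an explicit two-sided inverse of $\varphi$ and read off injectivity, surjectivity onto $R_{j+1}$, the ascent correspondence, and measure-preservation from its piecewise-affine structure. First I would evaluate $\varphi$ directly on $(0,1)^d$ (with repeated-coordinate points discarded as in \Cref{lem:assumption}): since $x_0 = 0 < x_1$ we get $y_1 = 1 - x_1 \in (0,1)$, and for $j \ge 2$ the carry $\indic{x_{j-1} < x_j}$ is exactly the $\{0,1\}$-valued correction that returns $x_{j-1} - x_j \in (-1,1)$ to $(0,1)$, so $y_j \equiv x_{j-1} - x_j \pmod 1$ and $\varphi(x) \in (0,1)^d$. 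Telescoping the defining identities gives $\sum_j y_j = 1 + (\#\text{ascents of }x) - x_d$, so $x \in U_j$ forces $\sum_j y_j \in (j, j+1)$, i.e.\ $\varphi(U_j) \subseteq R_{j+1}$ up to the null boundary $\{\sum_i y_i = j+1\}$.

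Next I would define the candidate inverse $\psi$ by $\psi(y)_0 = 0$ and $\psi(y)_j = \{\psi(y)_{j-1} - y_j\}$ (fractional part). A one-line induction using $x_{j-1} - y_j = x_j - \indic{x_{j-1}<x_j} \in \{x_j,\, x_j - 1\}$ shows $\psi(\varphi(x)) = x$ for every $x$, so $\varphi$ is injective. For the reverse composition, let $N$ be the measure-zero set (a finite union of affine slices, by induction on $j$) of $y$ for which some intermediate $\psi(y)_{j-1} - y_j$ is an integer; off $N$ each $\psi(y)_j \in (0,1)$, and since $\psi(y)_{j-1} - y_j$ already lies in $(-1,1)$ the integer offset $m_j$ in $\psi(y)_{j-1} - y_j = \psi(y)_j + m_j$ is forced into $\{0,-1\}$, and checking both cases yields $\varphi(\psi(y)) = y$. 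These offsets also read off ascents: $m_j = -1 \iff \psi(y)_{j-1} < \psi(y)_j$, so $\psi(y)$ has $\bigl(\sum_j \indic{m_j = -1}\bigr) - 1$ ascents (subtracting the always-present ascent from $x_0 = 0$ to $x_1$), and re-running the telescoping sum shows this count equals $j$ exactly when $\sum_i y_i \in (j, j+1)$, which holds a.e.\ on $R_{j+1}$. Hence $\varphi|_{U_j}$ and $\psi|_{R_{j+1}}$ are mutually inverse bijections outside a null set. Finally, partitioning $(0,1)^d$ (up to measure zero) according to which coordinates satisfy $x_{j-1} < x_j$, on each block $\varphi$ is the affine map $x \mapsto Mx + c$ with $M$ lower bidiagonal having $-1$ on the diagonal and $+1$ on the subdiagonal, so $|\det M| = 1$ and $\varphi$ preserves Lebesgue measure there; summing over the blocks gives global measure-preservation.

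I expect the main obstacle to be the careful handling of the exceptional null set together with the ascent bookkeeping — in particular keeping straight that an ascent of $x$ at position $\ell \in \{1,\dots,d-1\}$ corresponds to the offset $m_{\ell+1}$, and that position $0$ always contributes a ``phantom'' ascent because $x_0 = 0$, so that the count ``$j$ ascents on $U_j$'' matches ``$\sum_i y_i \in (j, j+1)$'' on $R_{j+1}$. The modular-arithmetic identities, the unit Jacobian, and the null-set estimates are all routine once this dictionary is fixed. As an alternative to explicitly verifying surjectivity: once $\varphi$ is known to be injective, measure-preserving, and to satisfy $\varphi(U_j)\subseteq R_{j+1}$, the identities $\sum_{j=0}^{d-1}|U_j| = 1 = \sum_{m=1}^d |R_m|$ force $|U_j| = |R_{j+1}|$ for each $j$, hence a.e.\ surjectivity.
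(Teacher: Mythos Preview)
Your proposal is correct. Note, however, that the paper does not supply its own proof of this lemma: it is attributed to \cite{S77} and invoked as a black box. The only overlap with your argument appears in the proof of the subsequent \Cref{lem:R_volume}, where the paper records exactly the piecewise-affine structure you exploit, writing $\varphi(x) = Mx + b$ with $M$ lower bidiagonal ($-1$ on the diagonal, $1$ on the subdiagonal) and observing $\det(M) = (-1)^d$ to conclude measure-preservation on each simplex $\sigma(\Delta_d)$. Your treatment goes further than anything in the paper by explicitly constructing the inverse $\psi(y)_j = \{\psi(y)_{j-1} - y_j\}$, verifying both compositions, and tracking the ascent/offset dictionary $m_j = -1 \iff \psi(y)_{j-1} < \psi(y)_j$ to pin down the image $R_{j+1}$; the paper simply delegates all of this to Stanley. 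Your alternative ``volume-counting'' route to surjectivity (injective, measure-preserving, $\varphi(U_j) \subseteq R_{j+1}$, and $\sum_j |U_j| = 1 = \sum_m |R_m|$) is also sound and arguably cleaner than chasing the inverse on the exceptional set.
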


The following lemma brings these ideas together by using $\varphi$ to compute the volumes of the $R_j$ slices. Perhaps unsurprisingly, the volumes are characterized by counting permutations.

\begin{lemma}
\label{lem:R_volume}
    For $d, j \in \mathbb{Z}_{\geq 0}$ define \emph{Eulerian number} $A_{d,j} = |\{\sigma \in S_d \mid \sigma \text{ has exactly } j \text{ ascents}\}|$. Then the $d \times d$ table $A$ can be computed in time $O(d^2)$. Moreover, for $j \in [k]$, $|R_j| = A_{d,j-1} / (d!)$.
\end{lemma}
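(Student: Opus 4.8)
The plan is to handle the two assertions in \Cref{lem:R_volume} separately: the running-time bound on tabulating Eulerian numbers, and then the volume formula $|R_j| = A_{d,j-1}/(d!)$, which I will obtain by assembling the structural lemmas already in place.

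\textbf{Tabulating $A$.} I would use the folklore Eulerian recurrence $A_{d,j} = (d-j)A_{d-1,j-1} + (j+1)A_{d-1,j}$ (the identity quoted in the Sum sketch). A one-line bijective check suffices if desired: deleting the largest element from $\sigma \in S_{d,j}$ leaves a permutation of $[d-1]$, and $\sigma$ is recovered by reinserting that element either into one of the $j+1$ slots (ascent gaps, or the front) of a permutation with $j$ ascents, or into one of the $d-j$ slots (descent gaps, or the end) of a permutation with $j-1$ ascents. With base cases $A_{0,0}=1$ and $A_{d,j}=0$ for $j<0$ or $j\ge d$, a dynamic program fills the $d\times d$ array in increasing order of $d$, spending $O(1)$ field operations per entry, hence $O(d^2)$ in total.

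\textbf{Volume of a slice.} First I would pin down the volume of the fundamental simplex: $|\Delta_d| = 1/(d!)$. By \Cref{lem:partition_cube}, the $d!$ sets $\{\sigma(\Delta_d) : \sigma \in S_d\}$ are pairwise disjoint and cover $(0,1)^d$; each $\sigma$ acts on $\mathbb{R}^d$ as a permutation matrix, which is orthogonal and hence Lebesgue-measure preserving, so all $d!$ pieces share a common volume, and since they tile the unit cube (the omitted boundaries having measure zero by \Cref{lem:lebesgue} and \Cref{lem:assumption}) that volume is $1/(d!)$. Next, by \Cref{lem:G_dk_map}, the set $U_{j-1}$ of points of $(0,1)^d$ with exactly $j-1$ ascents is, up to a null set, the disjoint union of the $A_{d,j-1} = |S_{d,j-1}|$ simplices $\{\sigma(\Delta_d) : \sigma \in S_{d,j-1}\}$, so $|U_{j-1}| = A_{d,j-1}/(d!)$.

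\textbf{Transferring to $R_j$.} Finally I would apply \Cref{lem:phi_bijection} with its parameter set to $j-1$: the Stanley map $\varphi$ is a measure-preserving bijection from $U_{j-1}$ onto $R_j$ outside a measure-zero set, so $|R_j| = |U_{j-1}| = A_{d,j-1}/(d!)$. This works for every $1 \le j \le d$, and in particular for $j \in [k]$. I do not expect any genuine obstacle here once \Cref{lem:partition_cube}, \Cref{lem:G_dk_map}, and especially \Cref{lem:phi_bijection} are available; the only point requiring care is to work consistently modulo the relevant measure-zero sets (repeated-coordinate points and simplex boundaries), which is precisely what \Cref{lem:lebesgue} and \Cref{lem:assumption} license.
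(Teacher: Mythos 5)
Your proposal is correct and follows essentially the same route as the paper: the same Eulerian recurrence with base cases for the $O(d^2)$ table, and the same identification of $R_j$ with the $A_{d,j-1}$ simplices of volume $1/(d!)$ coming from \Cref{lem:partition_cube}, \Cref{lem:G_dk_map}, and the map $\varphi$. The only difference is that you take the measure-preservation clause of \Cref{lem:phi_bijection} as given, whereas the paper re-verifies it explicitly by writing $\varphi(x) = Mx + b$ with $|\det M| = 1$ on each simplex; since the lemma is stated with that clause, your shortcut is legitimate.
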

\begin{proof}
    To compute $A$, we repeatedly apply the (folklore) identities $A_{x,y} = (x-y)A_{x-1,y-1} + (y+1)A_{x-1,y}$ and $A_{0,0} = 1$ and $A_{0,y} = 0$ for all $y \neq 0$.
    
    $|R_j| = A_{d,j-1} / (d!)$ has been described as ``implicit in the work of Laplace''~\citep{S77}, but we prove it explicitly here. First, we can rewrite $\varphi(x) = Mx + b$, where $M$ is lower triangular with -1's on the diagonal, 1's on the subdiagonal, and 0's elsewhere, and $b_j$ is the indicator that $x_{j-1} < x_j$. Note that, for any fixed $\sigma$, $b$ is constant over $x \in \sigma(\Delta_d)$. As $\sigma(\Delta_d)$ is convex with vertices $\{\sigma(f_1),...,\sigma(f_d)\}$, $M(\sigma(\Delta_d))$ is convex with vertices $\{M(\sigma(f_1),...,M(\sigma(f_d))\}$, i.e. $M(\sigma(\Delta_d))$ is a simplex and so is its translation $M(\sigma(\Delta_d)) + b$. Then $\det(M) = (-1)^{d}$, so as a volume-preserving transformation of the fundamental $d$-simplex, which has volume $\tfrac{1}{d!}|\det(f_1 - f_0, \ldots, f_d - f_0)| = \tfrac{1}{d!}$, we get $|\sigma(\Delta_d)| = |M(\sigma(\Delta_d))| = |\varphi(\sigma(\Delta_d))| = 1/(d!)$.
    
    By \Cref{lem:G_dk_map}, $G_{d,j-1}(S_{d,j-1}) = \{\sigma(\Delta_d) \mid \sigma \in S_{d,j-1}\} = T_{d,j-1}$ partitions $U_{d,j-1}$ into simplices. Thus $\{\varphi(\sigma(\Delta_d)) : \sigma \in S_{d,j-1}\}$ partitions $R_{d,j}$ into $A_{d,j-1}$ simplices, and $|R_{d,j}| = A_{d,j-1}/(d!)$. 
\end{proof}

We have established how to sample a slice $R_{d,j}$ proportionally to its volume. The remaining task is to sample uniformly from $R_{d,j}$. By \Cref{lem:phi_bijection} and \Cref{lem:R_volume}, $R_{d,j}$ admits a partition into $A_{d,j-1}$ simplices, each of which corresponds to a unique $\sigma \in S_{d,j-1}$. Thus, two steps remain: uniformly sampling a permutation from $S_{d,j-1}$, and finally uniformly sampling a point from the associated simplex (\Cref{lem:simplex_sample}).

\begin{lemma}
\label{lem:sampling S_{d,k}}
    We can uniformly sample an element of $S_{d,j}$ in time $O(d^2)$.
\end{lemma}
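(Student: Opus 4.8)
\emph{Proof plan.} The algorithm has three phases: (i) compute the $d\times d$ table of Eulerian numbers; (ii) using a sequence of biased coin flips — one per symbol $n\in\{2,\dots,d\}$, weighted by entries of this table — decide for each $n$ whether inserting $n$ will create a new ascent; and (iii) build the permutation by inserting $2,3,\dots,d$ one at a time into a growing word, at each insertion choosing a gap uniformly at random among the gaps of the type dictated by phase (ii). Correctness comes from reading the Eulerian recurrence $A_{n,k}=(k+1)A_{n-1,k}+(n-k)A_{n-1,k-1}$ as a bijection.

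First I would establish that bijection. Fix $\pi\in S_{n-1}$ with $m$ ascents and insert $n$ into one of its $n$ gaps (before $\pi(1)$, between consecutive entries, or after $\pi(n-1)$). Checking the one or two new adjacencies shows the ascent count is unchanged exactly when $n$ is placed before $\pi(1)$ or immediately after an ascent of $\pi$ ($m+1$ gaps), and increases by exactly $1$ when $n$ is placed after $\pi(n-1)$ or immediately after a descent of $\pi$ ($(n-2-m)+1=n-1-m$ gaps). Hence ``(word, gap) $\mapsto$ word with $n$ inserted'' is a bijection from
\[
\bigl\{(\pi,g)\;:\;\pi\in S_{n-1,k},\ g\text{ ascent-preserving}\bigr\}\;\sqcup\;\bigl\{(\pi,g)\;:\;\pi\in S_{n-1,k-1},\ g\text{ ascent-creating}\bigr\}
\]
onto $S_{n,k}$, and the two left-hand sets have sizes $(k+1)A_{n-1,k}$ and $(n-k)A_{n-1,k-1}$. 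This both re-derives the recurrence and shows how to sample uniformly: choose the first branch with probability $(k+1)A_{n-1,k}/A_{n,k}$ and the second (``heads'', as in \Cref{subsec:sum}) with probability $(n-k)A_{n-1,k-1}/A_{n,k}$, recursively draw $\pi$ uniformly from the corresponding set, then insert $n$ into a uniformly random gap of the matching type. Induction on $n$, with base case $S_{1,0}=\{(1)\}$, gives uniformity on $S_{n,k}$. Unrolling the recursion yields the three-phase algorithm: every branch choice depends only on the current pair $(n,k)$, so all choices can be made up front in phase (ii), and the chain reaches $n=1$ with a valid target because $A_{1,k}=0$ for $k\neq 0$ forces the running ascent target down to $0$.

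For the running time: phase (i) costs $O(d^2)$ by the recurrence (\Cref{lem:R_volume}); phase (ii) makes $O(d)$ coin flips, each $O(1)$ given the table; and in phase (iii), storing the current word in an array, locating the chosen gap of the prescribed type and inserting $n$ there cost $O(n)$ at step $n$, for $O(d^2)$ in total. The overall cost is $O(d^2)$.

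The main obstacle is making the uniformity claim rigorous — the ``additional care'' flagged in the body. The coin flips alone only record \emph{when} each new ascent is introduced; the crux is that augmenting each flip with an independent uniform choice of gap \emph{within} the selected class precisely implements the bijection above, so the induction goes through. A minor additional point, quickly handled, is verifying that the branch probabilities are consistent all the way down, i.e., that the running ascent target equals $0$ exactly when $n$ reaches $1$.
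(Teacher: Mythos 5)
Your proposal is correct and follows essentially the same route as the paper: compute the Eulerian table via the recurrence, read the recurrence as a bijection to decide via weighted coin flips at which steps an ascent is created, then build the permutation by inserting $2,\ldots,d$ into a uniformly random gap of the prescribed type, for $O(d^2)$ total. Your explicit bijection-and-induction argument just spells out the uniformity claim that the paper treats more tersely.
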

\begin{proof}
    Viewing permutation $\sigma$ as the list $\{\sigma(1),...,\sigma(d)\}$, any $\sigma \in S_d$ with $j$ ascents arises from two possibilities of inserting $d$ into a permutation  $\sigma \in S_{d-1}$. There are $d$ possibilities for insertion (at the beginning of the list, between two elements, and at the end of the list), so the two possible cases are
    \begin{enumerate}
        \item $\sigma \in S_{d-1,j-1}$. Then inserting $d$ increases the number of ascents, so $d$ must be inserted in a place in $\sigma$ that is currently a descent or at the end of the list. $\sigma$ has $j-1$ ascents, and of the remaining $d-(j-1) = d+1-j$ spots, one is at the beginning of the list, where inserting $d$ would not increase the number of ascents. Thus, there are $d-j$ possible places.
        \item $\sigma \in S_{d-1,j}$. Then inserting $d$ maintains the number of ascents, so $d$ must be inserted in a place in $\sigma$ that is currently an ascent, or at the beginning. $\sigma$ has $j$ ascents, so there are $j+1$ possible places.
    \end{enumerate}
    Thus to sample a uniformly random element of $S_{d,k}$, we first flip a coin with probability of heads \begin{equation*}
        \frac{(d-k)A_{d-1,k-1}}{(d-k)A_{d-1,k-1} + (k+1)A_{d-1,k}}.
    \end{equation*}
    If heads, we recursively uniformly sample a random element of $S_{d-1,k-1}$. If tails, we recursively uniformly sample a random element of $S_{d-1,k}$. At the end of the process, we have a sequence of $d$ coin flips with $j$ heads and $d-j$ tails. Starting from the permutation $(1)$, we successively add $2, 3, \ldots, d$ by either inserting it in one of the current descents or end of the list (if heads) or the current ascents or beginning of the list (if tails), choosing the position uniformly at random.

    By $A_{x,y} = (x-y)A_{x-1,y-1} + (y+1)A_{x-1,y}$, flipping the $d$ coins and building the permutation each take $O(d^2)$ arithmetic operations.
\end{proof}

Having described the sampler components, we collect them into \Cref{alg:sum_sampler}, and the final guarantee is \Cref{thm:sample_sum}.

\begin{theorem}
\label{thm:sample_sum}
    The polytope $V$ described in \Cref{lem:sum_sensitivity_space} can be sampled in time $O(d^2)$.
\end{theorem}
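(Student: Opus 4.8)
The plan is to chain together the primitives established in \Cref{subsec:sum} and the appendix into one sampler and then account for its cost. The first step exploits symmetry: by \Cref{lem:sum_sensitivity_space}, $V = B_{1,k}^d \cap B_\infty^d$ is symmetric about the origin, and the $2^d$ sets $V \cap O(J)$ (for $J \in \{-1,1\}^d$) are congruent reflections of $V^+ := V \cap \{x : x_i \geq 0 \text{ for all } i\}$, all of equal volume. Since the coordinate hyperplanes separating them carry no Lebesgue measure (\Cref{lem:lebesgue}, \Cref{lem:assumption}), it suffices to draw a uniform point $y$ from $V^+$ and return $S * y$, where $S \in \{-1,1\}^d$ has i.i.d.\ uniform coordinates; this costs an extra $O(d)$ and yields a uniform point of $V$.

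To sample $V^+$ uniformly, recall that $V^+ = \bigcup_{j=1}^k R_j$ is a disjoint union of the slices $R_j = I_j \setminus I_{j-1}$, that the Eulerian table $A$ can be built in $O(d^2)$ field operations, and that $|R_j| = A_{d,j-1}/(d!)$ (\Cref{lem:R_volume}). The sampler therefore computes $A$, then picks a slice index $j$ with probability $A_{d,j-1}/\sum_{i=1}^k A_{d,i-1} = |R_j|/|V^+|$. Next it must draw a uniform point of $R_j$. By \Cref{lem:phi_bijection}, the explicit map $\varphi$ is a measure-preserving bijection (off a measure-zero set) from $U_{d,j-1}$, the points of $(0,1)^d$ with exactly $j-1$ ascents, onto $R_j$, so it is enough to sample $U_{d,j-1}$ uniformly and apply $\varphi$ (cost $O(d)$). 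In turn, \Cref{lem:partition_cube} and \Cref{lem:G_dk_map} decompose $U_{d,j-1}$ into the $A_{d,j-1}$ equal-volume ($1/(d!)$ each) simplices $\{\sigma(\Delta_d) : \sigma \in S_{d,j-1}\}$; so the sampler uniformly draws $\sigma \in S_{d,j-1}$ via \Cref{lem:sampling S_{d,k}}, uniformly draws $z \in \Delta_d$ via \Cref{lem:simplex_sample}, and returns $\sigma(z)$. Because each simplex is selected with probability $1/A_{d,j-1}$ and $z$ is uniform within it, $\sigma(z)$ is uniform on $U_{d,j-1}$, hence $\varphi(\sigma(z))$ is uniform on $R_j$, hence (after the sign flip) the overall output is uniform on $V$.

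For the runtime: building $A$ is $O(d^2)$, sampling the slice index is $O(k) = O(d)$, uniformly sampling $\sigma \in S_{d,j-1}$ is $O(d^2)$ since it reuses the Eulerian recursion (\Cref{lem:sampling S_{d,k}}), sampling $z \in \Delta_d$ is $O(d\log d)$ (\Cref{lem:simplex_sample}), and applying $\varphi$ together with the random signs is $O(d)$. The total is $O(d^2)$, and any further samples after $A$ is cached cost only $O(d)$.

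The routine primitives are all in hand, so the main work --- and the only real subtlety --- is the correctness bookkeeping: checking that weighting each level of the decomposition (orthant $O(J)$, then slice $R_j$, then simplex $\sigma(\Delta_d)$) by its Lebesgue measure makes the composed distribution uniform on $V$, and that the measure-zero exceptional sets appearing in \Cref{lem:partition_cube} and \Cref{lem:phi_bijection} can be discarded without affecting the output law, which is exactly what \Cref{lem:lebesgue} and \Cref{lem:assumption} license. No new estimates are needed beyond a summation of the per-step costs.
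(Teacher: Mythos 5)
Your proposal is correct and follows essentially the same route as the paper: it reconstructs Algorithm~\ref{alg:sum_sampler} exactly — sign symmetry to reduce to $\sumball^+$, Eulerian-number slice weights from \Cref{lem:R_volume}, uniform sampling of $S_{d,j-1}$ via \Cref{lem:sampling S_{d,k}}, simplex sampling via \Cref{lem:simplex_sample}, and the Stanley map $\varphi$ from \Cref{lem:phi_bijection} — with the same $O(d^2)$ accounting dominated by the Eulerian table and the permutation sampler. No gaps.
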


\begin{algorithm}
    \begin{algorithmic}[1]
    \STATE {\bfseries Input:} Dimension $d$ and $\ell_0$ bound $k$
    \FOR{$j=1, \ldots, k$}
        \STATE Compute $|R_j|$ using \Cref{lem:R_volume}
    \ENDFOR
    \STATE Sample $j \propto |R_j|$
    \label{alglnlabel:sum_slice_sample}
    \STATE Uniformly sample $\sigma$ from $S_{d,j-1}$ using \Cref{lem:sampling S_{d,k}}
    \STATE Sample $x$ from fundamental simplex $\Delta_d$ using \Cref{lem:simplex_sample}
    \STATE Compute $y = \varphi(\sigma(x))$ using \Cref{lem:phi_bijection}
    \STATE Randomly set the sign of each coordinate of $y$
    \STATE Return $y$
    \caption{Sum Sampler}
    \label{alg:sum_sampler}
    \end{algorithmic}
\end{algorithm}

\subsection{Proofs For Sum Rejection Sampling}

We first derive the radius of the minimum $\ell_p$ ball enclosing $V = kB_1^d \cap B_\infty^d$ (\Cref{lem:sum_sensitivity_space}).

\begin{lemma}
\label{lem:sum_enclosing}
    For $p \in [1,\infty)$, the minimum $r$ such that $rB_p^d$ contains $V$ is $r = k^{1/p}$. The minimum $r$ such that $rB_\infty^d$ contains $V$ is $r=1$.
\end{lemma}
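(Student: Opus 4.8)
The plan is to reduce the statement to a one-line optimization: $rB_p^d$ contains $V = kB_1^d \cap B_\infty^d$ if and only if $\|x\|_p \le r$ for every $x \in V$, so the minimum such $r$ is exactly $r(p) = \max_{x \in V} \|x\|_p$. Thus the whole lemma amounts to computing $\max_{x\in V}\|x\|_p$, where $V = \{x \in \mathbb{R}^d : \|x\|_1 \le k,\ \|x\|_\infty \le 1\}$ and (as in Problem~\ref{problem:sum}) $k$ is a positive integer with $k \le d$.

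The $p = \infty$ case is immediate: $\|x\|_\infty \le 1$ is literally a defining constraint of $V$, and the standard basis vector $e_1 \in V$ (one nonzero coordinate of absolute value $1$, $\ell_1$-norm $1 \le k$) attains $\|e_1\|_\infty = 1$. Hence $r(\infty) = 1$.

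For $p \in [1,\infty)$ I would work with $\|x\|_p^p = \sum_{i=1}^d |x_i|^p$. The key elementary fact is that $t \mapsto t^p$ satisfies $t^p \le t$ on $[0,1]$ whenever $p \ge 1$; applying this coordinatewise (each $|x_i| \le 1$ for $x \in V$) gives $\sum_i |x_i|^p \le \sum_i |x_i| = \|x\|_1 \le k$, so $\|x\|_p \le k^{1/p}$ on all of $V$. For the matching lower bound, take the point $v$ with $k$ coordinates equal to $1$ and the remaining $d-k$ equal to $0$; this is one of the vertices of $\sumball$ identified in the proof of \Cref{lem:sum_sensitivity_space} (between $1$ and $k$ coordinates equal to $\pm 1$), it lies in $V$, and $\|v\|_p = k^{1/p}$. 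Therefore $r(p) = k^{1/p}$.

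I do not expect any genuine obstacle here; the only points requiring care are (i) checking that the extremal point actually lies in $V$, which is exactly the vertex structure already established in \Cref{lem:sum_sensitivity_space}, and (ii) getting the direction of $t^p \le t$ on $[0,1]$ right — the inequality flips at $p=1$, which is precisely why the clean value $k^{1/p}$ holds for all $p \ge 1$. (If one wished to drop the integrality of $k$, the same argument goes through with the point having $\lfloor k \rfloor$ ones and one coordinate equal to $k - \lfloor k \rfloor$, but under the stated assumptions this refinement is unnecessary.)
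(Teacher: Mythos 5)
Your proof is correct, and its skeleton matches the paper's: reduce to computing $r(p)=\max_{x\in V}\|x\|_p$, exhibit the extremal point with $k$ coordinates equal to $\pm 1$ and $d-k$ zeros, and handle $p=\infty$ separately. The one place you diverge is the justification of the upper bound $\|x\|_p\le k^{1/p}$: the paper argues (quite tersely) by maximizing the convex function $\|\cdot\|_p$ over the polytope $V=kB_1^d\cap B_\infty^d$ and asserting that the maximum occurs at the vertices with $k$ coordinates of $\pm 1$, whereas you prove the bound directly from the coordinatewise inequality $|x_i|^p\le |x_i|$ valid for $|x_i|\le 1$ and $p\ge 1$, giving $\|x\|_p^p\le\|x\|_1\le k$ on all of $V$. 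Your route is more self-contained and arguably tighter as a written argument---it needs the vertex structure of $V$ only to exhibit the point attaining the bound, not to locate the maximum---while the paper's version is shorter but leans on an implicit extreme-point argument. There is no gap in your proposal; both the $p=\infty$ case and the attainment check (the point with $k$ ones lies in $V$) are handled correctly.
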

\begin{proof}
    The $\ell_p$ norm of points from $kB_1^d$ is maximized at the vertices on the axes, so the maximum $\ell_p$ norm of a point in $V$ is at any of the vertices consisting of $k$ coordinates of $\pm 1$ and $d-k$ coordinates of 0, which have norm $k^{1/p}$ for $p < \infty$ and 1 for $p=\infty$.
\end{proof}

The next step shows that it suffices to restrict our attention to the two extremes $p=1$ and $p=\infty$. The analysis reduces to two cases: when $k$ is large, the $\ell_p$ ball volume is minimized at $p=\infty$, and when $k$ is small, it is minimized at either $p=1$ or $p=\infty$.

\OneOrInfBall*
\begin{proof}
    Since $\ell_p$ balls are symmetric across orthants, we drop the $2^d$ factor in \Cref{lem:lp_volume} and focus on single-orthant volume.
    By \Cref{lem:lp_volume} and \Cref{lem:sum_enclosing}, the one-orthant volume of the minimum $\ell_p$ ball enclosing $V$ is
    \begin{equation}
    \label{eq:ball_vol}
        W_{p}^{d}(k^{1/p}) = \frac{\left[k^{1/p}\Gamma\left(1+\tfrac{1}{p}\right)\right]^d}{ \Gamma\left(1+\tfrac{d}{p}\right)}
    \end{equation}
    We will use the following result to analyze how $\ell_p$ ball volume changes with $p$.
    \begin{claim}{4.3.1}
    \label{claim:gamma_derivative}
        $\frac{\partial}{\partial p} \frac{\Gamma(1+\frac{1}{p})^d}{\Gamma(1+\frac{d}{p})} = \frac{d \cdot \Gamma(1+\frac{1}{p})^d}{p^2\Gamma(1+\frac{d}{p})} \cdot \left[\psi\left(\frac{d}{p}\right) + \frac{p}{d} - \psi\left(\frac{1}{p}\right) - p\right]$.
    \end{claim}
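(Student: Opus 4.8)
The plan is to compute the derivative by logarithmic differentiation, since $f(p) := \Gamma(1+\tfrac1p)^d / \Gamma(1+\tfrac dp)$ is a product/quotient of powers of Gamma values and its logarithm splits additively. First I would write $\log f(p) = d\log\Gamma(1+\tfrac1p) - \log\Gamma(1+\tfrac dp)$ and differentiate termwise. Using $\tfrac{d}{dp}\log\Gamma(g(p)) = \psi(g(p))\,g'(p)$, where $\psi = \Gamma'/\Gamma$ is the digamma function, together with $\tfrac{d}{dp}(1/p) = -1/p^2$ and $\tfrac{d}{dp}(d/p) = -d/p^2$, this gives
\begin{equation*}
    \frac{f'(p)}{f(p)} = -\frac{d}{p^2}\,\psi\!\left(1+\tfrac1p\right) + \frac{d}{p^2}\,\psi\!\left(1+\tfrac dp\right) = \frac{d}{p^2}\left[\psi\!\left(1+\tfrac dp\right) - \psi\!\left(1+\tfrac1p\right)\right].
\end{equation*}

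Next I would apply the digamma recurrence $\psi(1+x) = \psi(x) + \tfrac1x$ (valid for all $x > 0$, which holds here since $p \geq 1$) to both terms: $\psi(1+\tfrac dp) = \psi(\tfrac dp) + \tfrac pd$ and $\psi(1+\tfrac1p) = \psi(\tfrac1p) + p$. Substituting, the bracket becomes $\psi(\tfrac dp) + \tfrac pd - \psi(\tfrac1p) - p$, which is exactly the bracketed factor in the claim. Finally, multiplying both sides by $f(p) = \Gamma(1+\tfrac1p)^d/\Gamma(1+\tfrac dp)$ yields
\begin{equation*}
    f'(p) = \frac{d\,\Gamma(1+\tfrac1p)^d}{p^2\,\Gamma(1+\tfrac dp)}\left[\psi\!\left(\tfrac dp\right) + \tfrac pd - \psi\!\left(\tfrac1p\right) - p\right],
\end{equation*}
which is the stated identity.

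There is no real obstacle here; the only places to be careful are the chain-rule signs (both inner derivatives are negative, which is why the surviving expression has the ``$\psi$ at $d/p$ minus $\psi$ at $1/p$'' orientation rather than the reverse) and invoking the digamma recurrence in the correct direction to pass from arguments $1+\tfrac dp, 1+\tfrac1p$ down to $\tfrac dp, \tfrac1p$. I would also note in passing that $f$ is smooth and positive on $p \in [1,\infty)$, so the differentiation is justified throughout the range of interest.
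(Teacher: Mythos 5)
Your proof is correct and follows essentially the same route as the paper: both reduce the computation to $\Gamma' = \Gamma\psi$ together with the recurrence $\psi(1+x) = \psi(x) + \tfrac1x$; the paper just carries out the derivative via the quotient rule rather than through $\log f$. Your logarithmic-differentiation organization is a slightly cleaner way to arrive at the same bracketed factor.
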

    \begin{proof}
    $\Gamma'(x) = \Gamma(x)\psi(x)$ where $\psi$ is the digamma function, so
    \begin{align*}
        \frac{\partial}{\partial p} \frac{\Gamma(1+\frac{1}{p})^d}{\Gamma(1+\frac{d}{p})} =&\ \frac{\Gamma(1+\frac{d}{p}) \cdot d \cdot \Gamma(1+\frac{1}{p})^{d-1} \cdot \frac{\partial}{\partial p}\Gamma(1+\frac{1}{p}) - \Gamma(1+\frac{1}{p})^d \frac{\partial}{\partial p}\Gamma(1+\frac{d}{p})}{\Gamma(1+\frac{d}{p})^2} \\
        =&\ \frac{-\Gamma(1+\frac{d}{p}) \cdot d \cdot \Gamma(1+\frac{1}{p})^{d-1} \cdot \Gamma(1+\frac{1}{p})\psi(1+\frac{1}{p}) + \Gamma(1+\frac{1}{p})^d}{p^2\Gamma(1+\frac{d}{p})^2} \\
        +&\ \frac{d \cdot \Gamma(1+\frac{1}{p})^d \cdot \Gamma(1+\frac{d}{p})\psi(1+\frac{d}{p}) }{p^2\Gamma(1+\frac{d}{p})^2} \\
        =&\ \frac{d \cdot \Gamma(1+\frac{1}{p})^d}{p^2\Gamma(1+\frac{d}{p})} \cdot \left[\psi\left(1+\frac{d}{p}\right) - \psi\left(1+\frac{1}{p}\right)\right] \\
        =&\ \frac{d \cdot \Gamma(1+\frac{1}{p})^d}{p^2\Gamma(1+\frac{d}{p})} \cdot \left[\psi\left(\frac{d}{p}\right) + \frac{p}{d} - \psi\left(\frac{1}{p}\right) - p\right]
    \end{align*}
        by the general fact $\psi(1+x) = \psi(x) + \tfrac{1}{x}$
    \end{proof}
    Thus
    \begin{align*}
        \frac{\partial}{\partial p} W_{p}^{d}(k^{1/p}) =&\ k^{d/p} \cdot \frac{d \Gamma(1+\frac{1}{p})^d}{p^2\Gamma(1+\frac{d}{p})} \cdot \left[\psi\left(\frac{d}{p}\right) + \frac{p}{d} - \psi\left(\frac{1}{p}\right) - p\right] - \frac{dk^{d/p}\ln(k)}{p^2} \cdot \frac{\Gamma(1+\frac{1}{p})^d}{\Gamma(1+\frac{d}{p})} \\
        =&\ k^{d/p} \cdot \frac{d \Gamma(1+\frac{1}{p})^d}{p^2\Gamma(1+\frac{d}{p})} \cdot \left[\psi\left(\frac{d}{p}\right) + \frac{p}{d} - \psi\left(\frac{1}{p}\right) - p - \ln(k)\right].
    \end{align*}
    The first two terms in this product are always positive, so we continue by analyzing the third term, which we shorthand as $Q(d,p)$. We split into two cases for $k$. The following result, which is agnostic to $k$, will be useful in both.
    
    \begin{claim}{4.3.2}
    \label{claim:second_derivative}
    Let $d \geq 2$ and $p \geq 1$. Then
    \begin{equation*}
        \frac{\partial}{\partial p} Q(d,p) < 0.
    \end{equation*}
    \end{claim}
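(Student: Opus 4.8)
The plan is to reduce the claim, after a short calculation, to a clean inequality between trigamma values. Since $\ln(k)$ is constant in $p$, I would differentiate $Q(d,p)=\psi(d/p)+p/d-\psi(1/p)-p-\ln(k)$ directly to obtain
\[
\frac{\partial}{\partial p}Q(d,p)=\frac{1}{p^{2}}\left[\psi'\!\left(\tfrac1p\right)-d\,\psi'\!\left(\tfrac dp\right)\right]+\frac1d-1 ,
\]
and then apply the reflection identity $\psi'(y)=y^{-2}+\psi'(1+y)$ (obtained by differentiating $\psi(1+y)=\psi(y)+1/y$, the identity already invoked in Claim 4.3.1) to both trigamma terms. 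The $y^{-2}$ pieces contribute exactly $1-\tfrac1d$ after multiplication by $p^{-2}$, which cancels the trailing $\tfrac1d-1$, leaving the compact form
\[
\frac{\partial}{\partial p}Q(d,p)=\frac{1}{p^{2}}\left[\psi'\!\left(1+\tfrac1p\right)-d\,\psi'\!\left(1+\tfrac dp\right)\right].
\]
So it suffices to show $\psi'(1+\tfrac1p)<d\,\psi'(1+\tfrac dp)$ for every $p\ge1$ and every integer $d\ge2$.

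For this I would sandwich $\psi'(1+y)=\sum_{n=1}^{\infty}(n+y)^{-2}$ (valid for $y>0$) between two integrals of $t\mapsto(t+y)^{-2}$. Because this function is strictly convex on $[\tfrac12,\infty)$, the midpoint (Hermite--Hadamard) inequality bounds the $n$-th summand $(n+y)^{-2}$ above by its integral over $[n-\tfrac12,n+\tfrac12]$, so summing gives $\psi'(1+y)<\int_{1/2}^{\infty}(t+y)^{-2}\,dt=(y+\tfrac12)^{-1}$; and because the function is decreasing, the $n$-th summand exceeds its integral over $[n,n+1]$, so $\psi'(1+y)>\int_{1}^{\infty}(t+y)^{-2}\,dt=(y+1)^{-1}$. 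Applying the upper bound at $y=\tfrac1p$ and the lower bound at $y=\tfrac dp$ yields $\psi'(1+\tfrac1p)<\tfrac{2p}{p+2}$ and $d\,\psi'(1+\tfrac dp)>\tfrac{dp}{d+p}$.

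The remaining check is purely algebraic: $\tfrac{2p}{p+2}\le\tfrac{dp}{d+p}$, which after cancelling $p>0$ and clearing denominators is $2d+2p\le dp+2d$, i.e. $2\le d$ — exactly the hypothesis. Chaining $\psi'(1+\tfrac1p)<\tfrac{2p}{p+2}\le\tfrac{dp}{d+p}<d\,\psi'(1+\tfrac dp)$ proves the needed trigamma inequality, and since $p^{-2}>0$ this gives $\frac{\partial}{\partial p}Q(d,p)<0$.

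I expect the delicate point to be quantitative tightness rather than any conceptual difficulty: the ratio $d\,\psi'(1+\tfrac dp)/\psi'(1+\tfrac1p)$ is close to $1$ near $(d,p)=(2,1)$ (there $\psi'(2)=\tfrac{\pi^2}{6}-1\approx0.645$ while $2\psi'(3)=\tfrac{\pi^2}{3}-\tfrac52\approx0.79$, and the two algebraic bounds $\tfrac{2p}{p+2}$ and $\tfrac{dp}{d+p}$ coincide there). Consequently softer estimates such as $\psi'(1+y)\le(1+y)^{-2}+(1+y)^{-1}$ do not leave enough room to absorb the factor $d\ge2$; the midpoint refinement on the upper bound is precisely what makes the argument go through.
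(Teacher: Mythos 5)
Your proof is correct, and it takes a mildly but genuinely different route from the paper's. Both arguments start from the same derivative computation, $\frac{\partial}{\partial p} Q(d,p) = \frac{1}{p^2}\left[\psi'(1/p) - d\,\psi'(d/p)\right] + \frac{1}{d} - 1$, and both ultimately reduce to a trigamma comparison. The paper then invokes the two-sided bounds of \Cref{claim:trigamma} (Theorem 1 of \citealp{GQZL15}), $\frac{1}{x+6/\pi^2} + \frac{1}{x^2} < \psi'(x) < \frac{1}{x+1/2} + \frac{1}{x^2}$, applied at $x = 1/p$ and $x = d/p$; the $1/x^2$ pieces reproduce the $p^2(1-\tfrac1d)$ slack, and the residual comparison requires $\tfrac{6}{\pi^2 d} \le \tfrac12$, i.e.\ $d \ge 12/\pi^2 \approx 1.21$. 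You instead use the reflection identity $\psi'(y) = y^{-2} + \psi'(1+y)$ to cancel the $\tfrac1d - 1$ term exactly, reducing the claim to the clean inequality $\psi'(1+\tfrac1p) < d\,\psi'(1+\tfrac dp)$, and then prove the needed bounds $\frac{1}{y+1} < \psi'(1+y) < \frac{1}{y+1/2}$ from scratch via the series $\psi'(1+y) = \sum_{n\ge1}(n+y)^{-2}$, using the midpoint (Hermite--Hadamard) inequality for the upper bound and monotone integral comparison for the lower bound; the final algebra then needs exactly $d \ge 2$. Your upper bound coincides with the one the paper imports (after the shift by the reflection identity), while your lower bound is weaker than the $6/\pi^2$ bound of \citealp{GQZL15} but still sufficient under the hypothesis $d \ge 2$. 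What your version buys is self-containment (no external trigamma theorem) and a sharper structural cancellation; what the paper's version buys is brevity by citation and, via the tighter constant $6/\pi^2$, validity down to $d \ge 12/\pi^2$, though this extra range is never used since the claim assumes $d \ge 2$. Your closing observation about tightness is also accurate: the cruder bound $\psi'(1+y) \le (1+y)^{-2} + (1+y)^{-1}$ fails at $(d,p) = (2,1)$, so the half-unit shift in the upper bound is genuinely needed.
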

    \begin{proof}
        \begin{align*}
            \frac{\partial}{\partial p}\left[\psi\left(\frac{d}{p}\right) + \frac{p}{d} - \psi\left(\frac{1}{p}\right) - p\right] =&\ -\frac{d \cdot \psi'(d/p)}{p^2} + \frac{1}{d} + \frac{\psi'(1/p)}{p^2} - 1 \\
        =&\ \frac{1}{p^2}[\psi'(1/p) - d \cdot \psi'(d/p)] + \frac{1}{d} - 1.
        \end{align*}
        It is now enough to prove $\psi'(1/p) - d \cdot \psi'(d/p) < p^2(1 - \tfrac{1}{d})$ for $p \geq 1$. We employ the following bounds on $\psi'(x)$.
        \begin{claim}{4.3.3}[Theorem 1~\cite{GQZL15}]
        \label{claim:trigamma}
        For $x > 0$,
        \begin{equation*}
            \frac{1}{x+\frac{6}{\pi^2}} + \frac{1}{x^2} < \psi'(x) < \frac{1}{x+\frac{1}{2}} + \frac{1}{x^2}.
        \end{equation*}
        \end{claim}
        Applying \Cref{claim:trigamma} to upper bound $\psi'(1/p)$ and lower bound $\psi'(d/p)$ yields
        \begin{align*}
            \psi'(1/p) - d \cdot \psi'(d/p) <&\ \frac{1}{\frac{1}{p} + \frac{1}{2}} + p^2 - d\left(\frac{1}{\frac{d}{p} + \frac{6}{\pi^2}} + \frac{p^2}{d^2}\right) \\
            =&\ \frac{1}{\frac{1}{p} + \frac{1}{2}} - \frac{d}{\frac{d}{p} + \frac{6}{\pi^2}} + p^2\left(1-\frac{1}{d}\right).
        \end{align*}
        The final step is proving that the difference of the first two terms above is nonpositive. By
        \begin{equation*}
            \frac{1}{\frac{1}{p} + \frac{1}{2}} - \frac{d}{\frac{d}{p} + \frac{6}{\pi^2}} = \frac{1}{\frac{1}{p} + \frac{1}{2}} - \frac{1}{\frac{1}{p} + \frac{6}{\pi^2d}},
        \end{equation*}
        it suffices to have $\tfrac{6}{\pi^2d} \leq \frac{1}{2}$, or $d \geq \tfrac{12}{\pi^2} \approx 1.21$.
    \end{proof}
    
    With \Cref{claim:second_derivative} in hand, the two cases for $k$ are simple.
    
    \underline{Case 1}: $k > de^{\gamma-1}$, where $\gamma \approx 0.58$ is the Euler-Mascheroni constant. We use the upper bound $\psi(x) < -\tfrac{1}{x} + \ln(x+e^{-\gamma}) $~\cite{EGP00} at $p=1$ to rewrite $Q(d,p)$ as
    \begin{equation*}
        \psi(d) + \tfrac{1}{d} - \psi(1) - 1 - \ln(k) < \ln(d+e^{-\gamma}) + \gamma - 1 - \ln(k) \leq 0
    \end{equation*}
    by $\psi(1) = -\gamma$ and our assumption on $k$. It now suffices to prove that $\tfrac{\partial}{\partial p} Q(d,p)$ is negative, as this implies the minimum volume $\ell_p$ ball containing $V$ occurs at $p=\infty$. \Cref{claim:second_derivative} accomplishes this.
    
    \underline{Case 2}: $k \leq de^{\gamma-1}$. 
    If $k = 1$, the sum sampling shape is exactly the $l_1$ ball of radius 1. Suppose $k > 1$. Then
    \begin{equation*}
        Q(d, 1) = \psi(d) +\frac{1}{d} - \psi(1) - 1 - \ln(k) > \ln(d) + \gamma - 1 - \ln(k)
    \end{equation*}
    by the lower bound $\psi(x) > \ln(x) - \tfrac{1}{x}$~\cite[Equation 2.2]{A97}. This is nonnegative by our assumption on $k$. At $p=d$, the second term is instead
    \begin{align*}
        \psi(1) + 1 - \psi(1/d) - d - \ln(k) =&\ -[\psi(1/d) + d] - [\ln(k) + \gamma - 1] \\
        =&\ -\psi(1+1/d) - \ln(k) - \gamma + 1
    \end{align*}
    by  $\psi(1/d) = \psi(1+1/d) - d$. We know $\psi(x)$ increases from $\psi(1) = -\gamma$ to $\psi(2) = 1-\gamma$, so $d \geq 2$ implies
    \begin{equation*}
        -\psi(1+1/d)-\ln(k) - \gamma + 1 \leq -\ln(k) < 0.
    \end{equation*}
    $Q(d, p)$ is positive at $p=1$ and negative at $p=d$, so it suffices to show that it is monotonically decreasing in $p$, i.e., that its derivative with respect to $p$ is always negative. This implies that the minimum enclosing $\ell_p$ ball volume is minimized at either $p=1$ or $p=\infty$. \Cref{claim:second_derivative} therefore completes the result.
\end{proof}

It remains to show that the volume of $V$ is much smaller than that of the enclosing $\ell_1$ or $\ell_\infty$ ball for some values of $k$. We use the following result to bound the volume of $V$ at $k = d/e$. By \Cref{lem:R_volume}, the following lemma gives an estimate of the volume of $V$ in a single orthant, denoted $W_x$. Note that their statement is for volume normalized to a single orthant, which we maintain.

\begin{lemma}[Theorem 1~\citep{CKSS72}]
\label{lem:Eulerian number asymptotics}
    If $k = x\sqrt{\frac{d+1}{12}} + \frac{d+1}{2}$, then \\ $W_x = \lim_{d\rightarrow \infty}\sum_{j=1}^{k_{x,d}}\frac{A_{d,k_{x,d}}}{d!} = \frac{1}{\sqrt{2\pi}}\int_{-\infty}^{x}e^{-t^2/2}dt$.
\end{lemma}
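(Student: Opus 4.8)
The plan is to recognize the normalized Eulerian numbers $A_{d,j}/d!$ as the law of the integer part of a sum of i.i.d.\ uniforms, after which the statement is just an instance of the ordinary central limit theorem. Let $U_1,\dots,U_d$ be i.i.d.\ uniform on $[0,1]$ and set $S_d=\sum_{i=1}^d U_i$. The first step is the identity $\mathbb{P}(\lfloor S_d\rfloor=m)=A_{d,m}/d!$ for $0\le m\le d-1$. In this paper that identity is essentially already available: it is exactly \Cref{lem:R_volume} rewritten, since $R_{m+1}=\{x\in(0,1)^d : m\le\sum_i x_i < m+1\}$ has volume $A_{d,m}/d!$, so $\mathbb{P}(m\le S_d<m+1)=A_{d,m}/d!$. (A self-contained argument is also easy: induct on $d$ using $S_d=S_{d-1}+U_d$ and the recurrence $A_{d,m}=(m+1)A_{d-1,m}+(d-m)A_{d-1,m-1}$, or directly match the inclusion-exclusion formula for $A_{d,m}$ against $d!$ times the corresponding cube-slice volume.)

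Given this, the second step is pure telescoping. Reading the summand as $A_{d,j-1}/d!$, so that the sum is the one-orthant volume $\sum_{j=1}^{k}|R_j|$ of the Sum ball as in \Cref{lem:R_volume}, we get
\begin{equation*}
    \sum_{j=1}^{k}\frac{A_{d,j-1}}{d!} \;=\; \sum_{m=0}^{k-1}\mathbb{P}\bigl(\lfloor S_d\rfloor=m\bigr) \;=\; \mathbb{P}\bigl(\lfloor S_d\rfloor\le k-1\bigr) \;=\; \mathbb{P}(S_d<k),
\end{equation*}
using $\lfloor S_d\rfloor\le k-1\iff S_d<k$. When $k=k_{x,d}$ is not an integer this is read with a floor; the resulting $O(1)$ shift of the index is negligible against the $\Theta(\sqrt d)$ spread of $S_d$ and washes out in the limit. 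So the quantity to understand is $\mathbb{P}(S_d<k_{x,d})$.

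The third step is the classical central limit theorem. Each $U_i$ has mean $\tfrac12$ and variance $\tfrac1{12}$, so $\mathbb{E}[S_d]=d/2$, $\mathrm{Var}(S_d)=d/12$, and $Z_d:=\frac{S_d-d/2}{\sqrt{d/12}}$ converges in distribution to $N(0,1)$. Plugging in $k_{x,d}=x\sqrt{\tfrac{d+1}{12}}+\tfrac{d+1}{2}$, the standardized threshold is
\begin{equation*}
    y_d \;:=\; \frac{k_{x,d}-d/2}{\sqrt{d/12}} \;=\; x\sqrt{\tfrac{d+1}{d}}+\sqrt{\tfrac{3}{d}} \;\xrightarrow[d\to\infty]{}\; x .
\end{equation*}
Since the limit CDF $\Phi$ is continuous, convergence in distribution is uniform in the argument, so
\begin{equation*}
    \bigl|\mathbb{P}(S_d<k_{x,d})-\Phi(x)\bigr| \;\le\; \sup_{y}\bigl|\mathbb{P}(Z_d\le y)-\Phi(y)\bigr| + \bigl|\Phi(y_d)-\Phi(x)\bigr| \;\longrightarrow\; 0,
\end{equation*}
which gives $W_x=\Phi(x)=\tfrac{1}{\sqrt{2\pi}}\int_{-\infty}^x e^{-t^2/2}\,dt$.

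The \emph{only} real content is the uniform-sum identity of the first step; everything after is the textbook CLT plus bookkeeping, and since that identity is already in hand as \Cref{lem:R_volume}, there is essentially no obstacle. The one place needing a little care is just confirming that the various $O(1)$ mismatches --- the $A_{d,j-1}$ versus $A_{d,j}$ indexing, the floor on the non-integer $k_{x,d}$, and the normalization $\sqrt{(d+1)/12}$ versus $\sqrt{d/12}$ --- all disappear in the limit, which they do because the fluctuations of $S_d$ are of order $\sqrt d$.
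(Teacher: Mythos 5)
Your proof is correct. Note, though, that the paper does not prove this lemma at all: it is imported verbatim as Theorem~1 of \citet{CKSS72}, and is only ever used as a black box in the rejection-sampling lower bound. So there is no internal proof to compare against; what you have done is supply a self-contained argument for the cited result. Your route is the classical one: the identity $\P{}{\lfloor S_d\rfloor = m} = A_{d,m}/d!$ for $S_d$ a sum of $d$ i.i.d.\ uniforms is exactly the content of \Cref{lem:R_volume} (the slice $R_{m+1}$ of the cube has volume $A_{d,m}/d!$), the cumulative sum telescopes to $\P{}{S_d < k}$, and the CLT with mean $d/2$ and variance $d/12$, together with P\'olya's theorem to absorb the drifting threshold $y_d = x\sqrt{(d+1)/d}+\sqrt{3/d}\to x$, finishes the limit. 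Your bookkeeping on the small mismatches is right: the continuous distribution of $S_d$ makes the strict-versus-weak inequality irrelevant, the floor on a non-integer $k_{x,d}$ and the $A_{d,j-1}$ indexing (the displayed statement's $A_{d,k_{x,d}}$ is evidently a typo) shift the threshold by $O(1)$, and the $(d+1)/12$ versus $d/12$ normalization changes it by $o(\sqrt d)$, all of which vanish against the $\Theta(\sqrt d)$ fluctuations. What your argument buys is independence from the external reference for the limit statement actually used here, built entirely on machinery the paper already develops; what the citation buys is the stronger quantitative form in \citet{CKSS72} (a genuinely finer analysis than a bare CLT), which the paper does not need.
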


The final statement follows.

\begin{lemma}
\label{thm:sum_rejection_sample}
    If $k = \frac{d}{e} - 1$, then rejection sampling $V$ using $kB_1^d$ or $B_\infty^d$ requires at least $C_{3}e^{C_{2}d}$ samples in expectation, where $C_3 > 0$ is independent of $d$.
\end{lemma}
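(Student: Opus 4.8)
The plan is to use the fact that the expected number of draws needed to rejection-sample a body $V$ from an enclosing body $B \supseteq V$ is exactly $|B|/|V|$, so the claim reduces to showing that both $|kB_1^d|/|V|$ and $|B_\infty^d|/|V|$ grow exponentially in $d$ when $k = d/e - 1$. The remaining work is then just to compute $|V|$, compute the two ball volumes (from \Cref{lem:lp_volume}), and divide.

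First I would pin down $|V|$. Since $V = kB_1^d \cap B_\infty^d$ is symmetric across orthants, $|V| = 2^d \, |V^+|$ where $V^+ = \{x \in [0,1]^d : \|x\|_1 \leq k\}$ is the slice of the unit cube below the hyperplane $\sum_i x_i = k$. By \Cref{lem:R_volume} and the slice decomposition $V^+ = \bigcup_{j\leq k} R_j$, this single-orthant volume is the normalized partial sum of Eulerian numbers $|V^+| = \sum_{j} A_{d,j-1}/d!$; equivalently, it is the Irwin--Hall CDF value $\Pr[S_d \leq k]$ for $S_d = \sum_{i=1}^d U_i$ a sum of $d$ i.i.d.\ $\mathrm{Unif}[0,1]$ variables. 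This is precisely the quantity whose CLT is recorded in \Cref{lem:Eulerian number asymptotics}, but because here $k/d = 1/e - 1/d \to 1/e$, which is strictly below the mean $1/2 = \mathbb{E}[U_i]$, what is needed is the corresponding large-deviation (not central-limit) estimate.

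The heart of the argument is therefore a quantitative tail bound: a Chernoff estimate gives $|V^+| = \Pr[S_d \leq k] \leq \inf_{\theta>0}\!\left(e^{\theta k/d}\cdot\tfrac{1-e^{-\theta}}{\theta}\right)^{\!d} \leq q^{d}$ for an explicit constant $q = q(1/e) \in (0,1)$, where verifying $q<1$ reduces to the elementary observation that the logarithm of the infimand behaves like $\theta\,(\tfrac1e - \tfrac12) < 0$ for small $\theta>0$; this is consistent with, and is the large-deviation refinement of, \Cref{lem:Eulerian number asymptotics}, since feeding $x \sim (\tfrac1e-\tfrac12)\sqrt{12d}$ into the Gaussian tail reproduces the same exponential rate $\approx 6(\tfrac12-\tfrac1e)^2$ to leading order. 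I expect this Chernoff step --- selecting $\theta$ and certifying a numerical value $q<1$ --- to be the main technical point, though it is routine; everything else is bookkeeping against the already-established volume formulas.

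Finally I would assemble the two ratios. For the $\ell_\infty$ ball, $|B_\infty^d| = 2^d$ by \Cref{lem:lp_volume}, so $|B_\infty^d|/|V| = 1/\Pr[S_d \leq k] \geq q^{-d}$, already of the claimed form with $C_2 = \ln(1/q)$ and $C_3 = 1$. For the $\ell_1$ ball, $|kB_1^d| = (2k)^d/d!$ by \Cref{lem:lp_volume} at $p=1$, and Stirling's formula gives $(2k)^d/d! = 2^d\,(d/e-1)^d/d! \geq 2^d\, c/\sqrt{d}$ for a constant $c>0$ and all large $d$ (since $(d/e-1)^d/d! \to e^{-e}/\sqrt{2\pi d}$), whence $|kB_1^d|/|V| \geq (c/\sqrt d)\, q^{-d}$, which exceeds $C_3 e^{C_2 d}$ for any $C_2 < \ln(1/q)$ and a suitable $C_3>0$ once the finitely many small $d$ (note $k = d/e-1>0$ already forces $d\geq 3$) are absorbed into $C_3$. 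In both cases this proves the stated bound; moreover, since \Cref{lem:1_or_inf} identifies $kB_1^d$ and $B_\infty^d$ as the volume-minimizing $\ell_p$ balls enclosing $V$, the same conclusion extends to rejection sampling $V$ with any $\ell_p$ ball.
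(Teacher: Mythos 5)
Your proposal is correct, and it reaches the paper's conclusion by a genuinely different route at the one non-trivial step. You share the same skeleton as the paper --- expected rejection-sampling cost equals the volume ratio, the single-orthant volume of $V$ is the normalized partial sum of Eulerian numbers (equivalently the Irwin--Hall probability $\Pr[\sum_{i=1}^d U_i \leq k]$), the $\ell_\infty$ comparison is trivial, and Stirling handles the $\ell_1$ ball --- but where the paper bounds this volume by invoking the CLT-type statement of \Cref{lem:Eulerian number asymptotics} (from \citet{CKSS72}) and then applying a Gaussian tail estimate at $x \sim C\sqrt{d}$, you instead apply a Chernoff/MGF bound directly to the uniform sum, obtaining $\Pr[S_d \leq k] \leq \left(e^{\theta k/d}\cdot \tfrac{1-e^{-\theta}}{\theta}\right)^{d} \leq q^d$ with an explicit $q<1$ (e.g.\ $\theta = 1$ gives $q \approx 0.91$). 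Your route buys a self-contained, non-asymptotic bound valid for every $d$ (and for non-integer $k = d/e - 1$), and it sidesteps the delicate point that the paper evaluates a limit statement in a moderate-deviation regime ($x$ growing like $\sqrt d$), where a bare CLT does not by itself give a quantitative tail bound; the paper's route, in exchange, leans on an existing citation and recovers the same exponential rate, as you note via the matching leading-order exponent $6(\tfrac12-\tfrac1e)^2$. Two small points: your use of radius $k$ for the enclosing $\ell_1$ ball (per \Cref{lem:sum_enclosing}) is the correct minimal radius, whereas the paper rounds up to $d/e$, a constant-factor difference absorbed into $C_3$; and your Eulerian-sum identity for $|V^+|$ implicitly assumes integer $k$, but since your main argument runs entirely through the probabilistic formulation this is immaterial.
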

\begin{proof}
    For $y > 0$,
    \begin{align*}
        \int_{-\infty}^{-y}e^{-t^2/2}dt =
        \int_{y}^{\infty}e^{-t^2/2}dt \leq 
        \frac{1}{y}\int_{y}^{\infty}te^{-t^2/2}dt =
        \frac{e^{-y^2/2}}{y}.
    \end{align*}
    Setting $x = ((\frac{2\sqrt{3}}{d+1})(\frac{d}{e}-1) - \sqrt{3})\sqrt{d+1}$ gives $k_{x,d} = \frac{d}{e} - 1$. Since we are taking a limit as $d\rightarrow \infty$, we can write $x \sim (\frac{2\sqrt{3}}{e} - \sqrt{3})\sqrt{d+1} \sim C\sqrt{d}$ where $C < 0$. Then since $x < 0$,
    \begin{align*}
        W_x \leq \frac{1}{\sqrt{2\pi}}\int_{-\infty}^{x}e^{\frac{-t^2}{2}}dt \leq \frac{e^{\frac{-x^2}{2}}}{|x|\sqrt{2\pi}} = \frac{e^{\frac{-(C^{2}(d+1))}{2}}}{-C\sqrt{2\pi d}} = \frac{C_{1}e^{-C_{2}d}}{\sqrt{d}}
    \end{align*}
    for some positive constants $C_1,C_2$ independent of $d$. 
    
    Since the single-orthant volume of the minimum enclosing $\ell_1$ ball of radius $\tfrac{d}{e}$ is $\tfrac{(d/e)^d}{d!} \sim \frac{1}{\sqrt{2\pi d}}$ by Stirling's approximation, the ratio of the volume between the sum sampling region and the $l_1$ ball of radius $d/e$ is $C_{3}e^{-C_{2}d}$ for some $C_3 > 0$ independent of $d$. Note that the $\ell_1$ ball of radius $d/e$ is indeed the lowest-volume $\ell_p$ ball, since the single-orthant volume of the minimum enclosing $\ell_\infty$ ball (of radius 1) is $1$.
\end{proof}
\section{Proofs For Count}
\label{sec:appendix_count}

\subsection{Proofs For Count Sampler}
We start by defining some terms that will repeatedly appear in the analysis. The following is an expanded version of \Cref{def:j_stuff} from the main body. Throughout, we often shorthand $\countball$ as $T$ for neatness and superscript the dimension $d$ when desired for emphasis.

\begin{definition}
\label{def:expanded_j_stuff}
    Let $J_{0}^{d} = (1,...,1)$ be the vector of $d$ 1s. Given $J \in \{-1,1\}^d$, we define:
    \begin{itemize}
        \item orthant $O(J_0^d) = \{x \in \mathbb{R}^d \mid x_1, \ldots, x_d \geq 0\}$ and orthant $O(J) = \{J * v: v \in O(J_{0}^{d})\}$ where $*$ is element-wise multiplication;
        \item $J_+, J_- \subseteq [d]$ are the sets of coordinates at which $J$ equals 1 and -1, respectively; and
        \item $T_{+}^{d} = \countball^{d} \cap O(J_{0}^{d})$ and $T_{-}^{d} = \countball^{d} \cap O(-J_{0}^{d})$ are the restrictions of $\countball$ to the positive and negative orthants, and $T^{d} = CH(T_{+}^{d} \cup T_{-}^{d}) = \countball^d$ is their convex hull.
    \end{itemize}
\end{definition}

We first determine the vertices $V_J$ of $\countball$ in an orthant indexed by $J \in \{-1,1\}^d$.

\characterizationVJ*
\begin{proof}
    $T^{d} = CH(T_{+}^{d} \cup T_{-}^{d})$, so its vertices are a subset of $V_{J_0} \cup V_{-J_0}$. Every vertex in $V_{J_0} \cup V_{-J_0}$ has all nonzero coordinates sharing a sign, so every vertex in $V_J \cap (V_{J_0} \cup V_{-J_0})$ has this property as well. $O(J)$ is the set of all points $p$ such that the positive coordinates of $p$ lie in $J_{+}$ and the negative coordinates of $p$ lie in $J_{-}$; call this property the sign condition of $J$. Then the elements of $V_J \cap (V_{J_0} \cup V_{-J_0})$ are the origin, vertices in $V_{J_0}$ with support contained in $J_+$, and vertices in $V_{-J_0}$ with support contained in $J_-$. It remains to show that there are no other vertices of $V_J$.
    
    Suppose $z \in V_J - (V_{J_0} \cup V_{-J_0})$. Then $z$ is not a vertex of $T^d$. Moreover, since $T = CH(T_+ \cup T_-)$, every point in $T_J$ lies on some line $L$ between distinct elements of $T_J \cap (T_+ \cup T_-)$ such that $L \subset T_J$. Therefore no vertex of $T_J$ can lie in the interior of $O(J)$. Define the standard bounding hyperplanes to be the $(d-1)$-dimensional subspaces that are orthogonal to the standard axes. We say that a shape $X$ fully intersects another shape $Y$ if the dimension of $X$ is equal to the dimension of $X \cap Y$. Then each of the $(d-1)$-dimensional standard bounding planes $P$ of $\partial O(J)$ fully intersects $T^{d}$ because $T^{d}$ contains a small ball $B$ around the origin and $P$ fully intersects $B$. In summary, $z$ lies on a $(d-1)$-dimensional polyhedron $S \subset P_{S} \cap T^{d}$ where $P_{S}$ is a bounding hyperplane of $\partial(O(J))$.
    
    Since vertices are extreme points, $z$ must be a vertex of $S$. Since $z$ is not in $V_{J_0} \cup V_{-J_0}$ and $z$ is a vertex of $S$, $z$ must be the interior of some edge $e = (v,w)$, where $v,w \in V_{J_0} \cup V_{-J_0}$, that intersects one of the standard bounding hyperplanes. To see this more explicitly, note that $z$ is not an extreme point of $T^{d}$, so there must be a small $j$-dimensional ball $b$, where $j \geq 1$, around $z$ such that $b \subset T^{d}$. If $j \geq 2$, then $P_{S} \cap b$ has dimension at least $j-1$ since at most one of the dimensions of $b$ can live in the one-dimensional complement of $P_{S}$. But then $P_{S} \cap b$ is a small ball of dimension at least 1 around $z$ in $S$, contradicting the fact that $z$ is an extreme point of $S$. So $j = 1$, or equivalently $z$ is an interior point of an edge $(v,w)$ where $v,w \in V_{J_0} \cup V_{-J_0}$. 
    
    If both $v$ and $w$ are in $V_{J_0}$ then each of their supports must be contained in $J_{+}$ or else a convex combination of $v$ and $w$ would have a positive value in a coordinate of $J_{-}$, violating the sign condition of $J$. Then $v,w \in O(J)$. If either $v$ or $w$ lie in the interior of $O(J)$, then the interior of $e$ lies in the interior of $O(J)$, contradicting the fact that $z$ lies on a standard bounding hyperplane of $O(J)$. It follows that both $v$ and $w$ lie on a standard bounding hyperplane of $O(J)$. If $v$ and $w$ lie on different bounding hyperplanes of $O(J)$, then the interior of $e$ once again lies in the interior of $O(J)$, leading to the same contradiction. But if $v$ and $w$ lie on the same bounding hyperplane of $O(J)$, then $v,w \in P_S$ since $z \in P_S$. Then $S$ contains $(v,w)$, so $z$ is not an extreme point of $S$, another contradiction. So it cannot be that $v,w$ are both in $V_{J_0}$, and similarly it cannot be that $v,w$ are both in $V_{-J_0}$.
    
    We can therefore assume that $v \in V_{J_0}$ and $w \in V_{-J_0}$. We take advantage of the fact that $(v,w)$ is an actual edge of $T^{d}$. This means that there exists a linear functional of the form $h:(x_1,...,x_d)\rightarrow(a_{1}x_{1} + ... + a_{d}x_{d})$, such that $h$ is maximized at $v$ and $w$ and at no other vertex of $T^{d}$. We say that $v$ and $w$ have a sign disagreement if there exists $1 \leq i \leq d$ where $v_i$ and $w_i$ have opposite sign.
    
    We show that $v$ and $w$ do not have a sign disagreement. Suppose they do, $v_i = 1$ and $w_i = - 1$. Since $h(v)$ is maximal, it must be that $a_i > 0$, or else we could construct the vertex $v' \in T^{d}$ formed from $v$ by zeroing out the $i$th coordinate, and then $h(v') \geq h(v)$. Similarly, since $h(w)$ is maximal, it must be that $a_i < 0$ or else we could construct the vertex $w' \in T^{d}$ formed from $w$ by zeroing out the $i$th coordinate, and then $h(w') \geq h(w)$. Since $a_i$ cannot be positive and negative simultaneously, this is a contradiction, so $v$ and $w$ have no sign disagreement. This means that the support of $v$ and $w$ are disjoint since $v$ has only positive non-zero coordinates and $w$ has only negative non-zero coordinates. Since $z$ is a convex combination of $v$ and $w$, and $z$ obeys the sign condition of $J$, it must be that the support of $v$ lies in $J_{+}$ and the support of $w$ lies in $J_{-}$. But then $v,w \in O(J)$, and we have previously shown this to be a contradiction.
\end{proof}

Next, we derive the volumes 
$|T(J)|$ of $\countball$ in different orthants. This involves reasoning about the faces of $\countball$ in different orthants.

\begin{definition}
\label{def:faces_etc}
    Let $T_{+,k}^d$ be the sum shape with ambient dimension $d$ and contribution $k$ restricted to the positive orthant $J_0$. Let $H_k$ be the hyperplane $x_1 + ... + x_d = k$. Let $G_0$ be the set of equations $\{x_i = 0\}_{i=1}^{d}$, and let $G_1$ be the set of equations $\{x_i = 1\}_{i=1}^{d}$. Index the faces $G$ of $[0,1]^{d}$ by $G_0 \cup G_1$.  
    
    Let $f$ be a map defined as follows. For each face $F \in G$, define $f(F)$ to be the set of points formed by starting with $F$ and deleting all points with $\ell_1$-norm larger than $k$. Then the faces of $T_{+,k}^d$ are $\{f(F): F \in G\} \cup ([0,1]^{d} \cap H_k)$.
\end{definition}

\begin{lemma}
\label{lem: characterization of G faces}
    If a face $F \in G_0$ is modified by $f$, then it is congruent to $T_{+,k}^{d-1}$. If a face $F \in G_1$ is modified by $f$, then it is congruent to $T_{+,k-1}^{d-1}$.
\end{lemma}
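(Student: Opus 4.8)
The plan is to unwind the definition of $f$ and then exhibit an explicit isometry in each case. By definition $f(F)$ is $F$ with every point of $\ell_1$-norm exceeding $k$ deleted, so $f(F) = \{x \in F \mid \|x\|_1 \le k\}$. Since $F$ is a facet of $[0,1]^d$ indexed by $G_0 \cup G_1$, it pins exactly one coordinate, either to $0$ (if $F \in G_0$) or to $1$ (if $F \in G_1$). Because permuting coordinates is an isometry of $[0,1]^d$ that preserves the $\ell_1$-norm and commutes with $f$, I would first reduce to the case where the pinned coordinate is the last one.

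For $F \in G_0$, so $F = \{x \in [0,1]^d \mid x_d = 0\}$, I would take $\pi \colon \mathbb{R}^d \to \mathbb{R}^{d-1}$ to be the map dropping the last coordinate. Restricted to the hyperplane $\{x_d = 0\}$, $\pi$ is an isometry onto $\mathbb{R}^{d-1}$, and since $x_d = 0$ on $F$ we have $\|x\|_1 = \|\pi(x)\|_1$ there. Hence $\pi$ carries $f(F) = \{x \mid x_d = 0,\ 0 \le x_j \le 1,\ \|x\|_1 \le k\}$ bijectively and isometrically onto $\{y \in [0,1]^{d-1} \mid \|y\|_1 \le k\}$, which is exactly $T_{+,k}^{d-1}$. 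This gives the first claim.

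For $F \in G_1$, so $F = \{x \in [0,1]^d \mid x_d = 1\}$, I would instead use the affine isometry $\phi \colon \{x_d = 1\} \to \mathbb{R}^{d-1}$ given by $\phi(x) = (x_1, \ldots, x_{d-1})$ (translate by $-e_d$, then drop the last coordinate — still an isometry). For $x \in F$ the coordinates $x_1, \ldots, x_{d-1}$ are nonnegative, so $\|x\|_1 = 1 + \sum_{j < d} x_j = 1 + \|\phi(x)\|_1$, and therefore $\|x\|_1 \le k$ if and only if $\|\phi(x)\|_1 \le k-1$. Thus $\phi$ maps $f(F)$ isometrically onto $\{y \in [0,1]^{d-1} \mid \|y\|_1 \le k-1\} = T_{+,k-1}^{d-1}$, giving the second claim.

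I expect no substantive obstacle here: the statement is essentially a bookkeeping exercise. The two points that deserve care are (i) verifying that the coordinate-forgetting maps above are genuine isometries of the relevant affine hyperplanes, so that ``congruent'' — and in particular the $(d-1)$-dimensional volume, which is what \Cref{lem:volume calculation} and the volume computation for $\countball$ ultimately consume — is preserved; and (ii) using the nonnegativity of the unpinned coordinates to split the $\ell_1$-norm additively across the pinned coordinate in the $G_1$ case. The hypothesis that $F$ is \emph{modified} by $f$ is not actually needed for the congruence itself — it merely guarantees the truncation is nontrivial (equivalently $k < d-1$ in the $G_0$ case and $k < d$ in the $G_1$ case); when $f$ acts trivially the identification still holds since the target $T_{+,\cdot}^{d-1}$ then degenerates to the full cube $[0,1]^{d-1}$.
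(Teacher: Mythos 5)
Your proof is correct and follows essentially the same route as the paper's: identify the unpinned coordinates of the facet with $[0,1]^{d-1}$ via an isometry, note that the $\ell_1$ constraint restricts to $\|y\|_1\le k$ when the pinned coordinate is $0$ and to $\|y\|_1\le k-1$ when it is $1$, and conclude congruence with $T_{+,k}^{d-1}$ or $T_{+,k-1}^{d-1}$. Your version merely makes the coordinate-dropping isometries explicit, which the paper leaves implicit.
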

\begin{proof}
Any $F \in G_0$ is $(d-1)$-dimensional since one of its coordinates is constantly 0. The subset $Z$ of the rest of the coordinates are congruent to $[0,1]^{d-1}$ so if $F$ gets modified by the cutting plane $H_k$ as $F \rightarrow f(F)$ then $Z$ gets modified as $Z \rightarrow Z \cap H_k \sim T_{+,k}^{d-1}$. Similarly, a face $F \in G_1$ that is modified by $f$ has that $f(F) \sim T_{+,k-1}^{d-1}$ since the fixed coordinate  contributes 1 to the $\ell_1$ norm.
\end{proof}

The next result derives the (lower-dimensional) volume of the ``cut'' face of $T_+^d$ contained in the hyperplane $H_k$.

\begin{lemma}
\label{lem:count_weird_face}
    Let $\Delta_{d,k}$ be $[0,1]^{d} \cap H_k$. Then $|\Delta_{d,k}| = |R_{d-1,k}|\sqrt{d} = \frac{A_{d-1,k-1}}{(d-1)!}\sqrt{d}$.
\end{lemma}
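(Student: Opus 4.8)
The plan is to realize $\Delta_{d,k}=[0,1]^d\cap H_k$ as a linear image of the slice $R_{d-1,k}$ and track how the $(d-1)$-dimensional volume scales. Since every point of $\Delta_{d,k}$ satisfies $x_d=k-\sum_{i=1}^{d-1}x_i$, the coordinate projection $\pi(x_1,\dots,x_d)=(x_1,\dots,x_{d-1})$ is a bijection from $\Delta_{d,k}$ onto $\{x\in[0,1]^{d-1}\mid k-1\le \sum_{i=1}^{d-1}x_i\le k\}$, because the condition $x_d\in[0,1]$ is exactly $k-1\le\sum_{i<d}x_i\le k$. This image agrees with $R_{d-1,k}=I_{d-1,k}-I_{d-1,k-1}$ up to a set of $(d-1)$-dimensional Lebesgue measure zero: the only discrepancy is the cube boundary together with the two bounding hyperplanes $\sum x_i=k-1$ and $\sum x_i=k$, which are negligible (cf.\ \Cref{lem:lebesgue} and \Cref{lem:assumption}). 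So it remains to compute the volume-scaling factor of the inverse parametrization $\phi(x_1,\dots,x_{d-1})=(x_1,\dots,x_{d-1},\,k-\sum_{i=1}^{d-1}x_i)$ from $R_{d-1,k}\subset\mathbb{R}^{d-1}$ onto $\Delta_{d,k}\subset H_k$.

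The map $\phi$ is affine, with linear part the $d\times(d-1)$ matrix $J$ whose $i$th column is $e_i-e_d$. By the area formula, $|\Delta_{d,k}|=\sqrt{\det(J^\top J)}\cdot|R_{d-1,k}|$, so the only real computation is the Gram determinant. Here $(J^\top J)_{ij}=(e_i-e_d)\cdot(e_j-e_d)=\delta_{ij}+1$, i.e.\ $J^\top J=I_{d-1}+\mathbf{1}\mathbf{1}^\top$ with $\mathbf{1}\in\mathbb{R}^{d-1}$ the all-ones vector, and the matrix determinant lemma gives $\det(I_{d-1}+\mathbf{1}\mathbf{1}^\top)=1+\mathbf{1}^\top\mathbf{1}=d$. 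Hence the scaling factor is $\sqrt{d}$, and $|\Delta_{d,k}|=\sqrt{d}\,|R_{d-1,k}|$. Finally, \Cref{lem:R_volume} applied in ambient dimension $d-1$ with slice index $k$ evaluates $|R_{d-1,k}|=A_{d-1,k-1}/(d-1)!$, which yields the claim.

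I do not expect a genuine obstacle here; the calculation is short. The one point requiring care is conceptual rather than technical: one must fix that ``$|\cdot|$'' applied to the codimension-one set $\Delta_{d,k}$ denotes its $(d-1)$-dimensional Hausdorff (Lebesgue) measure inside the affine hyperplane $H_k$, so that the area/change-of-variables formula legitimately introduces the Gram-determinant factor $\sqrt{\det(J^\top J)}$; and that the projection image must be compared to $R_{d-1,k}$ only up to measure zero before invoking \Cref{lem:R_volume}. Once these are in place, everything else is routine.
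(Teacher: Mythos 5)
Your proof is correct, but it takes a different route from the paper. The paper computes $|\Delta_{d,k}|$ via the projection ("shadows") formula of \citet{CB74}: for a measurable set $Z$ in a $(d-1)$-dimensional affine subspace of $\mathbb{R}^d$, $|Z|^2 = \sum_{j=1}^d |\pi_j(Z)|^2$ where $\pi_j$ projects onto $x_j = 0$; it then observes that every one of the $d$ coordinate shadows of $\Delta_{d,k}$ is congruent to $R_{d-1,k}$, which immediately gives the factor $\sqrt{d}$. You instead use a single chart: project out $x_d$, identify the image with $R_{d-1,k}$ up to measure zero, and recover the $\sqrt{d}$ as the Gram-determinant factor $\sqrt{\det(J^\top J)}$ of the graph parametrization $\phi(x) = (x, k - \sum_i x_i)$, with $J^\top J = I_{d-1} + \mathbf{1}\mathbf{1}^\top$ and $\det = d$ by the matrix determinant lemma. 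The two arguments are cousins (both rest on Cauchy--Binet in the end; the paper even sketches the CB74 result that way), but yours is more self-contained, replacing the citation to an external geometric identity with a short explicit computation, while the paper's version exploits the symmetry that all $d$ shadows are congruent and would extend unchanged to situations where no single coordinate chart is as convenient. Your attention to the two bookkeeping points --- that $|\Delta_{d,k}|$ means $(d-1)$-dimensional measure within $H_k$, and that the projected image matches $R_{d-1,k}$ only up to a null set --- is exactly what is needed, and your application of \Cref{lem:R_volume} in ambient dimension $d-1$ matches the paper's.
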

\begin{proof}
    By the main result of \citet{CB74}, for any measurable set $Z$ in a $(d-1)$-dimensional affine subspace of $\mathbb{R}^d$, letting $\{\pi_j\}_{j=1}^d$ be the projection operations onto $x_j=0$,
    \begin{equation}
    \label{eq:shadows}
        |Z| = \sqrt{\sum_{j=1}^d |\pi_j(Z)|^2}.
    \end{equation}
    We briefly discuss some intuition for this result, starting with the special case of a parallelipiped $P$. The measure of $P$ is given by the square root of the Gram determinant of the matrix of vertices defining $P$, and we can compute this Gram determinant using the Cauchy-Binet formula to get the result. In the general case of a measurable set $Z$, we approximate $Z$ to arbitrary precision by covering it with little cubes and then show that applying the result for the parallelepiped to each cube individually and summing the resulting equations gives the desired general result.
    
    By \Cref{eq:shadows}, we can compute $|\Delta_{d,k}|$ by summing over the shadows in the $(d-1)$-dimensional subspaces that are orthogonal to the standard bases. The projection of $\Delta_{d,k}$ onto any one of these subspaces, say $x_j$ = 0, is congruent to $R_{d-1,k}$. This is because any point $x \in \Delta_{d,k}$ has $x_1 + ... + x_d = k$, so its projection onto $x_j = 0$ has $k-1 \leq x_1 + ... x_{j-1} + x_{j+1} + ... + x_d \leq k$. Then, recalling the definition of $R_{d-1,k}$ as the slice of the cube $[0,1]^{d-1}$ containing points with $\ell_1$ norm in $[k-1,k]$ and using \Cref{lem:R_volume},
    \begin{equation*}
        |\Delta_{d,k}| = \sqrt{\sum_{i=1}^{d}|\pi_{i}(\Delta_{d,k})|^{2}}
        = \sqrt{\sum_{i=1}^{d}|R_{d-1,k}|^{2}}
        = \frac{A_{d-1,k-1}}{(d-1)!}\sqrt{d}.
    \end{equation*}
\end{proof}

Next, we derive the volume of $T$ in each orthant $O(J)$.

\volumeOrthant*
\begin{proof}
    Let $V_{J,+} = V_J \cap V_{J_0}$ and $V_{J,-} = V_J \cap V_{-J_0}$. By \Cref{lem: characterization of V_J}, $V_J = V_{J,+} \cup V_{J,-}$, and $V_{J,+}$ is the set of vertices of $V_{J_0}$ with support contained in $J_+$, while $V_{J,-}$ is the set of vertices of $V_{-J_0}$ with support contained in $J_-$. Thus $CH(V_{J,+})$ is $T_{+,k}^{|J_+|}$ embedded at the coordinates of $J_+$ in the ambient space of $\mathbb{R}^d$, which is congruent to $T_{+,k}^j$. Similarly, $CH(V_{J,-}) \sim T_{+,k}^{d-j}$.
    
    We can think of every point $p \in T_J$ as belonging to a (not necessarily unique) convex combination of shapes, of the form $tCH(V_{J,+}) \oplus (1-t)CH(V_{J,-}) = tT_{+,k}^j \oplus (1-t)(-1)T_{+,k}^{d-j}$ for some $t \in [0,1]$. Let $t_p \in [0,1]$ be the smallest $t$ for which $p \in tT_{+,k}^j \oplus (1-t)(-1)T_{+,k}^{d-j}$. Define the shell $Y_{j,k}$ of $T_{+,k}^j$ to be the $(j-1)$-dimensional faces in $f(G_1)$ unioned with the cutting face $[0,1]^{j} \cap H_{k} = \Delta_{j,k}$. By the minimality of $t_p$ we know that the first summand factor of $p$ must be on the subset of the boundary of $t_{p}T_{+,k}^j$ since $t_{1}T_{+,k}^j \subset t_{2}T_{+,k}^j$ for $0 \leq t_1 < t_2 \leq 1$, i.e. $p \in t_{p}Y_{j,k} \oplus (1-t_{p})(-1)T_{+,k}^{d-j}$. We can therefore partition the points of $T_J$ into equivalence classes where $p$ is mapped to the class $t_p$.
    
    To see that each class $t \in [0,1]$ is nonempty, consider any point in $tCH(V_{J,+})$ that is a linear combination of the points of $V_{J,+}$ with no weight on the origin. Then we have the disjoint union $T_J = \sqcup_{t \in [0,1]}tY_{j,k} \oplus (1-t)(-1)T_{+,k}^{d-j}$, and we can set up the integral
    \begin{align*}
        |T_J| &= \int_{0}^{1}|tY_{j,k} \oplus (1-t)(-1)T_{+,k}^{d-j}|\partial t \\ &= \int_0^1 |tY_{j,k}||(1-t)T_{+,k}^{d-j}|\partial t.
    \end{align*}
    We then compute the shell volume $|tY_{j,k}|$ by interpreting it as the rate of change of the volume of $|tT_{+,k}^{j}|$
    \begin{equation*}
        |tY_{j,k}| = \frac{\partial}{\partial t}|tT_{+,k}^j| = \frac{\partial}{\partial t}\left(t^{j}|T_{+,k}^j|\right) = \frac{\partial}{\partial t}\left(t^{j}\sum_{i=1}^{k}|R_{j,i}|\right) = jt^{j-1}\sum_{i=1}^{k}\left(\frac{A_{j,i-1}}{j!}\right)
    \end{equation*}
    where $R_{j,i}$ is a slice of the cube $[0,1]^j$ containing points with $\ell_1$ norm in $[i-1,i]$, per \Cref{lem:R_volume}. Continuing the integral
    \begin{align*}
        |T_J| &= \int_{0}^{1}\left[jt^{j-1}\sum_{i=1}^{k}\frac{A_{j,i-1}}{j!}\right]\left[(1-t)^{d-j}|T_{+,k}^{d-j}|\right]\partial t \\
        &= j\left(\sum_{i=1}^{k}\frac{A_{j,i-1}}{j!}\right)\left(\sum_{i=1}^{k}\frac{A_{d-j,i-1}}{(d-j)!}\right)\int_{0}^{1}t^{j-1}(1-t)^{d-j}\partial t \\
        &= \left(\sum_{i=1}^{k}\frac{A_{j,i-1}}{j!}\right)\left(\sum_{i=1}^{k}\frac{A_{d-j,i-1}}{(d-j)!}\right)\left(\frac{j}{d}\right)\binom{d-1}{j-1}^{-1} \\
        &= \left(\sum_{i=1}^{k}A_{j,i-1}\right)\left(\sum_{i=1}^{k}A_{d-j,i-1}\right)\frac{1}{d!}
    \end{align*}
    where the third equality follows from repeated integration by parts. To see that, let $f(j) = \int_{0}^{1}x^{j-1}(1-x)^{d-j}dx$. Then setting $u(x) = x^{j-1}$ and $v(x) = -\frac{1}{d-j+1}(1-x)^{d-j+1}$ lets us write
    \begin{equation*}
        f(j) = [u(x)v(x)]_{0}^{1} + \frac{j-1}{d-j+1} \int_{0}^{1}x^{j-2}(1-x)^{d-j+1}dx = \frac{j-1}{d-j+1}f(j-1)
    \end{equation*}
    until
    \begin{equation*}
        f(1) = \int_0^1 (1-t)^{d-1}dt = \left(-\frac{(1-t)^d}{d}\right]_0^1 = \frac{1}{d}.
    \end{equation*}
\end{proof}

The next result shows how to draw a uniform random sample from $T(J)$, the restriction of $\countball$ to orthant $O(J)$.

\begin{lemma}
\label{lem: sampling from orthant}
    Let $J \in \{-1,1\}^{d}$ correspond to an orthant. Suppose $|J_{+}| = j$ and $|J_{-}| = d - j$. Sampling from $T(J)$ reduces to sampling $\betad{j}{d-j+1}$ and then calling the Sum sampler (\Cref{alg:sum_sampler}) twice. In total, this takes time $O(d^2)$.
\end{lemma}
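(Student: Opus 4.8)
The plan is to replace the ``interpolation'' description of $T(J)=CH(V_J)$ by an explicit Minkowski-gauge description, disintegrate the uniform law on $T(J)$ in polar coordinates, and then recognize each resulting angular factor as a single call to the Sum sampler. Reorder coordinates so $[d]=J_+\sqcup J_-$ and flip signs so that $O(J)=\mathbb{R}_{\ge 0}^{J_+}\times\mathbb{R}_{\ge 0}^{J_-}$. By \Cref{lem: characterization of V_J}, the vertices of $T(J)$ with support in $J_+$ span a copy of $P:=T_{+,k}^{j}$ in the $J_+$-block and those with support in $J_-$ span a copy of $-Q$, with $Q:=T_{+,k}^{d-j}$, in the $J_-$-block, so $T(J)$ is the convex hull of these two bodies lying in complementary coordinate subspaces. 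Writing $\|\cdot\|_P$ and $\|\cdot\|_Q$ for the corresponding gauges --- for $x$ in the nonnegative orthant, $\|x\|_{T_{+,k}^m}=\max\!\bigl(\|x\|_\infty,\,\|x\|_1/k\bigr)$ --- a short argument shows
\[
  T(J)=\bigl\{(w,-v)\ :\ w\in\mathbb{R}_{\ge 0}^{J_+},\ v\in\mathbb{R}_{\ge 0}^{J_-},\ \|w\|_P+\|v\|_Q\le 1\bigr\}:
\]
any $p=(\lambda a,-(1-\lambda)b)$ with $a\in P$, $b\in Q$, $\lambda\in[0,1]$ has $\|w\|_P+\|v\|_Q\le\lambda+(1-\lambda)=1$, and conversely such a $p$ is rebuilt by taking any $\lambda\in[\|w\|_P,\,1-\|v\|_Q]$.

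Next I would sample uniformly from this body via the standard polar decomposition of Lebesgue measure. For a convex body $K\subset\mathbb{R}^m$ with $0\in K$, if $X$ is uniform on $K$ then $R:=\|X\|_K$ and $\hat X:=X/\|X\|_K$ are independent, $R$ has density $mr^{m-1}$ on $[0,1]$, and $\hat X$ lies on the gauge sphere $\{x:\|x\|_K=1\}$. Applying this to $P$ (whose gauge sphere is exactly the shell $Y_{j,k}$ of \Cref{lem: volume of orthant}) and to $Q$, a uniform sample of $T(J)$ is $(s\hat w,-s'\hat v)$, where $\hat w,\hat v$ are the independent angular draws and $(s,s')$ has density $\propto s^{j-1}(s')^{d-j-1}$ on $\{s,s'\ge 0,\ s+s'\le 1\}$. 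This is precisely the change of variables behind \Cref{eq:beta} and \Cref{lem: volume of orthant}, read pointwise rather than integrated. Integrating out $s'$ leaves $s\sim\betad{j}{d-j+1}$, and conditionally on $s$ we have $s'=(1-s)\,U^{1/(d-j)}$ with $U$ uniform on $(0,1)$; both are sampled in $O(d)$.

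The two angular draws each reduce to one call of the Sum sampler. \Cref{alg:sum_sampler} produces, just before its random-sign step, a uniform point $X$ in $\sumball^{j}$ restricted to the nonnegative orthant, which is exactly $P$; by the polar fact above, $\hat w:=X/\|X\|_P$ is then the desired angular draw, and $\|X\|_P=\max(\|X\|_\infty,\|X\|_1/k)$ costs $O(d)$. The draw $\hat v$ is identical with $d-j$ replacing $j$. Hence sampling $T(J)$ uses two sign-step-suppressed Sum-sampler calls plus $O(d)$ arithmetic and the $\betad{j}{d-j+1}$ draw; by \Cref{thm:sample_sum} each Sum-sampler call is $O(d^2)$ (and the Eulerian table of \Cref{lem:R_volume} can be reused between them), so the total runtime is $O(d^2)$.

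The step I expect to be most delicate is the second paragraph: checking that ``draw $s\sim\betad{j}{d-j+1}$, then draw the two angular parts'' really reproduces the uniform law on $T(J)$, and in particular that the facets of the gauge spheres are weighted by cone measure (distance-to-origin times facet area) rather than by area alone. This is a routine polar change of variables once the identity $T(J)=\{\|w\|_P+\|v\|_Q\le 1\}$ is in hand, and it is exactly what \Cref{lem: volume of orthant} already verifies at the level of volumes. The only mild subtlety, flagged in the main text, is that one facet of the gauge sphere $Y_{j,k}$ is the hypersimplex $[0,1]^{j}\cap\{\sum_i x_i=k\}$ rather than a lower-dimensional Sum ball, but the ``sample the solid Sum ball, then normalize'' route above avoids triangulating that facet directly.
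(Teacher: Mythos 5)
Your proposal is correct, and while it shares the paper's skeleton (the cross-section index $t=s\sim\betad{j}{d-j+1}$, then two independent block samples, each reduced to the Sum sampler), it handles the angular part genuinely differently. The paper samples the shell $Y_{j,k}=\{w\ge 0:\|w\|_{T_{+,k}^{j}}=1\}$ by explicitly enumerating its facets, computing cone-type weights $w_1,\dots,w_j$ (and the extra weight $w_{j+1}$ for the hypersimplex facet of \Cref{lem:count_weird_face}), and then calling the Sum sampler with parameters $(j-1,k-1)$ or the $R$-slice subroutine on the chosen facet, while taking the second block as $(1-t)$ times a uniform draw from the solid $T_{+,k}^{d-j}$. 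You instead rewrite $T(J)$ as the gauge-sum body $\{\|w\|_P+\|v\|_Q\le 1\}$, disintegrate Lebesgue measure into the joint radial density $\propto s^{j-1}(s')^{d-j-1}$ on the simplex times the two cone measures, and obtain each cone-measure draw by uniformly sampling the \emph{solid} Sum ball and radially projecting $X\mapsto X/\|X\|_P$ with $\|X\|_P=\max(\|X\|_\infty,\|X\|_1/k)$. This buys a cleaner argument: the facet weights and the special hypersimplex case disappear automatically, and your second component ($s'\hat v$ with $s'=(1-s)U^{1/(d-j)}$) is distributionally identical to the paper's $(1-s)\times(\text{uniform on } T_{+,k}^{d-j})$, so correctness and the $O(d^2)$ bound go through; what the paper's route buys is a more self-contained computation that reuses the volume formulas already proved, without invoking the polar/cone-measure disintegration. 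Two points you should state carefully if writing this up: (i) the polar fact is applied to $P=T_{+,k}^{j}$, which has the origin on its boundary (a corner), not in its interior --- the factorization still holds because $P$ is star-shaped about $0$ and the gauge is positively homogeneous on the nonnegative orthant, but the standard statement assumes $0$ in the interior, so give the one-line Fubini/scaling argument; (ii) the identity $T(J)=\{\|w\|_P+\|v\|_Q\le 1\}$ uses that the convex hull of two convex sets containing $0$ in complementary coordinate blocks is exactly the union of segments between them, which is where \Cref{lem: characterization of V_J} enters, exactly as in the paper.
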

\begin{proof}
    By \Cref{eq:beta}, as derived in the proof of \Cref{lem: volume of orthant}, the cross sections of $V_J$, for $t \in [0,1]$, have volume proportional to $t^{j-1}(1-t)^{d-j}dt$. We can therefore pick a cross section $t \in [0,1]$ by sampling $\betad{j}{d-j+1}$.
    
    It then remains to sample a point from the cross-section $tY_{j, k} \oplus (1-t)(-1)T_{+,k}^{d-j}$. Recall from the definition of $Y_{j,k}$ and \Cref{lem: characterization of G faces} that $Y_{j,k}$ contains $j$ shapes congruent to $tT_{+,k-1}^{j-1}$, and if $k < j$, then $Y_{j,k}$ additionally contains one shape congruent to $t\Delta_{j,k}$ (\Cref{lem:count_weird_face}). We will sample from $tY_{j,k}$ by defining weights proportional to the volumes of the sub-shapes of $Y_{j,k}$.
    
    If $j = 1$, then $tY_{1,k} = \{t\}$ and we are done. If $j > 1$, define function $q(t) = t$ to measure the perpendicular distance between the $x_i = t$ and $x_i=0$ planes. Using the fact that $\frac{\partial q}{\partial t} = 1$, we define weights for the $j$ shapes congruent to $tT_{+,k-1}^{j-1}$:
    \begin{align*}
        w_1 = ... = w_j =&\ |tT_{+,k}^{j-1}\partial q| \\
        =&\ t^{j-1}\sum_{i=1}^{k-1}|R_{j-1,i}|\partial t \\
        =&\ \tfrac{t^{j-1}}{(j-1)!}\left(\sum_{i=1}^{k-1}A_{j-1,i-1}\right)\partial t
    \end{align*}
    by \Cref{lem:R_volume}, noting that we apply it with $d=j-1$. Additionally, if $k < j$, then we need to define a weight $w_{j+1}$ for the $\Delta_{j,k}$ face. Define function $s(t) = t \cdot \tfrac{k}{\sqrt{j}}$ to be the perpendicular distance from the plane $tH_k$ (containing $\Delta_{j,k}$) to $H_0$. Then since $\tfrac{\partial s}{\partial t} = \tfrac{k}{\sqrt{j}}$,
    \begin{align*}
        w_{j+1} =&\ \left\lvert t\Delta_{j,k}\partial s\right\rvert \\
        =&\ \left\lvert t\Delta_{j,k}\left(\frac{k\partial t}{\sqrt{j}}\right)\right\rvert \\
        =&\ t^{j-1}\left\lvert R_{j-1,k}\sqrt{j}\left(\frac{k\partial t}{\sqrt{j}}
        \right)\right\rvert \\
        =&\ t^{j-1}\left\lvert R_{j-1,k}k\partial t\right\rvert \\
        =&\ \tfrac{t^{j-1}k}{(j-1)!}A_{j-1,k-1}\partial t
    \end{align*}
    where we have used \Cref{lem:count_weird_face} and the fact that scaling a $(j-1)$-dimensional object $\Delta_{j,k}$ by $t$ changes its measure by a factor of $t^{j-1}$.
    
    After selecting one of the indices $i \in \{1,...,j+1\}$ via the normalized $w_i$ weights, if $1 \leq i \leq j$ then we sample a point $p_1 \in T_{+,k-1}^{j-1}$ by calling the Sum sampler (\Cref{alg:sum_sampler}). If $i = j+1$ we can sample a point $p_1 \in \Delta_{j,k} \sim R_{j-1,k}$ (the isomorphism from \Cref{lem:count_weird_face} induced by forgetting the last coordinate) by calling the portion of the Sum sampler that samples from a particular $R$ slice (\Cref{alg:sum_sampler}). In either case, we sample a point $p_2 \in T_{+,k}^{d-j}$ using the Sum sampler. Finally, let $y_1$ be formed starting with the all zeros vector by embedding $tp_1$ at $J_{+}$, and let $y_2$ be formed starting with the all zeros vector by embedding $(1-t)(-1)p_2$ at $J_{-}$. Then $y_1 \oplus y_2 \in tY_{j,k} \oplus (1-t)(-1)T_{+,k}^{d-j}$ is a point uniformly sampled from the $t$ cross section of $V_J$.
    
    Sampling a Beta distribution takes time $O(d)$, and each call to the Sum sampler costs $O(d^2)$. This yields overall time $O(d^2)$.
\end{proof}
The last step is putting these results together to obtain the final algorithm (\Cref{alg:count_sampler}) and guarantee.
\begin{theorem}
\label{thm:count_sampler}
There is an algorithm to sample a point from $\countball$ in time $O(d^2)$.
\end{theorem}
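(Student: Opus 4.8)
The plan is to collect the structural results of this section into a single sampler (\Cref{alg:count_sampler}) and verify correctness and an $O(d^2)$ running time. The sampler runs in three phases. \emph{Setup:} compute the $d\times d$ Eulerian table $A$ via the recurrence of \Cref{lem:R_volume} in time $O(d^2)$, and from it the prefix sums $\Sigma_j=\sum_{i=1}^{k}A_{j,i-1}$ for $0\le j\le d$. \emph{Orthant selection:} group the $2^d$ orthants $O(J)$, $J\in\{-1,1\}^d$, into $d+1$ classes by $j=|J_+|$. By \Cref{lem: characterization of V_J} any two orthant shapes $T(J)=CH(V_J)$ with the same $j$ differ by a coordinate permutation, hence are congruent, and by \Cref{lem: volume of orthant} each has volume $\Sigma_j\Sigma_{d-j}/d!$; since there are $\binom{d}{j}$ orthants in class $j$, that class has total volume $w_j:=\binom{d}{j}\Sigma_j\Sigma_{d-j}/d!$. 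Sample a class index $j$ with probability proportional to $w_j$ (equivalently, to $\binom{d}{j}\Sigma_j\Sigma_{d-j}$, since the $1/d!$ is common), then pick a uniformly random $j$-element subset of $[d]$ to fix the concrete sign vector $J$. \emph{In-orthant sampling:} return a uniform point of $T(J)$ via \Cref{lem: sampling from orthant}, i.e., draw $t\sim\betad{j}{d-j+1}$ and invoke the Sum sampler (\Cref{alg:sum_sampler}) twice.

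For correctness, observe that $\countball=\bigcup_{J}T(J)$ (the orthants cover $\mathbb{R}^d$ and, by \Cref{lem: characterization of V_J}, $T(J)=\countball\cap O(J)$). For distinct $J\ne J'$ the shapes $T(J)$ and $T(J')$ meet only inside some coordinate hyperplane $\{x_i=0\}$, a $d$-dimensional null set, so by the convention of \Cref{lem:assumption} we may treat $\countball$ as the disjoint union $\bigsqcup_J T(J)$. Hence $|\countball|=\sum_J|T(J)|$, and drawing $J$ with probability $|T(J)|/|\countball|$ followed by a uniform point of $T(J)$ yields a uniform point of $\countball$. Because all orthants in a class are congruent with equal volume, this two-stage draw factors exactly as the sampler does it --- first a class $j$ with probability $\binom{d}{j}|T(J)|/|\countball|\propto w_j$, then a uniform member of the class --- and \Cref{lem: sampling from orthant} supplies the uniform draw from $T(J)$. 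The boundary classes $j=0,d$ need no separate handling: for $j=d$, say, the interpolation underlying \Cref{lem: sampling from orthant} degenerates to the single shape $t\,T_{+,k}^{d}$ with $t\sim\betad{d}{1}$, whose union over $t\in[0,1]$ is $T_{+,k}^{d}=T(J_0^d)$, and the lone Sum-sampler call covers it.

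The running time is $O(d^2)$ to build the Eulerian table and prefix sums (\Cref{lem:R_volume}), $O(d)$ to sample the class index and the random $j$-subset, and $O(d^2)$ for the in-orthant draw by \Cref{lem: sampling from orthant}, hence $O(d^2)$ overall. I expect the only genuine subtlety to be the uniformity argument of the previous paragraph: one must confirm that the shared orthant boundaries carry no $d$-dimensional mass (so the cover behaves like a disjoint union) and that ``sample a class, then a uniform orthant in it, then a uniform point'' composes to the uniform law on $\countball$. Both are routine --- the first is \Cref{lem:lebesgue}/\Cref{lem:assumption}, the second is the law of total probability together with the equal orthant volumes within a class --- so the bulk of the theorem is bookkeeping on top of \Cref{lem: characterization of V_J}, \Cref{lem: volume of orthant}, and \Cref{lem: sampling from orthant}.
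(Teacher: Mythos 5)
Your proposal is correct and follows essentially the same route as the paper's proof: build the Eulerian table in $O(d^2)$, weight the congruence classes $C_j$ by $\binom{d}{j}\,|T(J)| = \bigl(\sum_{i=1}^k A_{j,i-1}/j!\bigr)\bigl(\sum_{i=1}^k A_{d-j,i-1}/(d-j)!\bigr)$ (your $w_j$ is the same quantity), pick a random orthant within the class, and finish with \Cref{lem: sampling from orthant}. Your extra remarks on measure-zero orthant overlaps and the degenerate classes $j\in\{0,d\}$ are sound and only make explicit what the paper leaves implicit.
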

\begin{proof}
    The first step is to pick an orthant $J \in \{-1,1\}^{d}$. Suppose $|J_{+}| = j$ and $|J_{-}| = d - j$. There are $\binom{d}{j}$ orthants $J'$ where $T_J$ is isometric to $T_{J'}$. Let $\{C_0,C_1,...,C_d\}$ be the equivalence classes of orthants partitioned by isometry where each orthant $J \in C_j$ has $|J_{+}| = j$ and $|J_{-}| = d - j$. For $0 \leq j \leq d$, let $z_{j}'$ be the total volume of the orthants in $C_j$, and let $z_{j}$ be the normalized $z_{j}'$ weights. By \Cref{lem: volume of orthant},
    \begin{align*}
        z_{j}' =&\ \frac{d!}{j!(d-j)!} \left(\sum_{i=1}^k A_{j,i-1}\right)\left(\sum_{i=1}^k A_{d-j,i-1}\right)\frac{1}{d!} \\
        =&\ \left(\sum_{i=1}^{k}\frac{A_{j,i-1}}{j!}\right)\left(\sum_{i=1}^{k}\frac{A_{d-j,i-1}}{(d-j)!}\right).
    \end{align*}
    After computing the table of Eulerian numbers up to the row $d$ (time $O(d^2)$), we can make one pass across rows $j$ and $d-j$ to compute the partial sums required for $z_{j}'$ (time $O(d)$). Thus, computing the $z_j$ weights costs $O(d^2)$ overall. 

    We can therefore choose an orthant by sampling a class $C_j$ with weight $z_j$ and then choosing a random vector with $j$ 1s and $d-j$ $-1$s, which takes time $O(d)$, so picking a random orthant takes $O(d^2)$. After choosing an orthant, we sample a point uniformly from it by \Cref{lem: sampling from orthant} in $O(d^2)$.
\end{proof}

\begin{algorithm}
    \begin{algorithmic}[1]
    \STATE {\bfseries Input:} Dimension $d$ and $\ell_0$ bound $k$
    \STATE Compute the $\{z_0,...,z_{d}\}$ weights corresponding to $\{C_0,...,C_{d}\}$ using \Cref{thm:count_sampler}
    \STATE Sample a class $C_j$ according to the $z$ weights
    \STATE Sample an orthant $J \in C_j$
    \STATE Sample cross section index $t \sim \betad{j}{d-j+1}$
    \STATE Compute the $\{w_1,...,w_{j}\}$ weights using \Cref{lem: sampling from orthant}
    \IF{$k < j$}
        \STATE Compute weight $w_{j+1}$ using \Cref{lem: sampling from orthant}
    \ENDIF
    \STATE Sample cross section face index $i$ according to the $w$ weights
    \STATE If $1 \leq i \leq j$, sample point $p_1 \in T_{+,k-1}^{j-1}$ using the Sum sampler (\Cref{alg:sum_sampler})
    \IF{$i=j+1$}
        \STATE Sample point $q \in R_{j-1,k}$ by \Cref{alg:sum_sampler}
        \STATE Let $q_{j} = k - \sum_{i=1}^{j-1}q_{i}$
        \STATE Define uniform sample $p_1 = q \oplus q_{j} \in \Delta_{j,k}$ using the isomorphism from \Cref{lem:count_weird_face}
    \ENDIF
    \STATE Sample point $p_2 \in T_{+,k}^{d-j}$ by a call to sum sampler \Cref{alg:sum_sampler}
    \STATE Define $y_1$ by embedding $tp_1$ at $J_{+}$ in the all zeros vector of length $d$
    \STATE Define $y_2$ by embedding $(1-t)(-1)p_2$ at $J_{-}$ in the all zeros vector of length $d$
    \STATE Return $y_1 \oplus y_2$
    \caption{Count Sampler}
    \label{alg:count_sampler}
    \end{algorithmic}
\end{algorithm}

\subsection{Proofs For Count Ellipse}

\ellipseCentering*
\begin{proof}
    Suppose not. Let $E$ be a minimum ellipse of $\countball$ that is not origin-centered. Let $U$ be the unique linear operator that maps $E$ to a $(d-1)$-dimensional unit sphere. Linear transformations preserve symmetry around the origin, so $U(\countball)$ is origin-centered, and $U(E)$ is not. For any point $x \in U(\countball)$, $U(\countball)$ contains the line segment $(x,-x)$ of length $2\|x\|_2$, so $U(E)$ encloses it as well. $U(E)$ is a sphere of radius 1, so if $\|x\|_2=1$, $U(E)$ can only enclose $(x,-x)$ by being origin-centered, a contradiction. It follows that $U(\countball)$ lies in an origin-centered ball of radius $R < 1$.
    
    Let $E_c = E - v$ be the ellipse formed by translating the center of $E$ to the origin. First, we show that $E_c$ has smaller average squared $\ell_2$ norm than $E$. Let $p(X)$ be a point sampled uniformly from the space enclosed by some ellipse $X$. Then
    \begin{align*}
        \E{}{\|p(E)\|_2^2} =&\ \E{}{\|p(E_c+v)\|_2^2} \\
        =&\ \E{}{p(E_c + v)^Tp(E_c+v)} \\
        =&\ \E{}{(v+p(E_c))^T(v+p(E_c))} \\
        =&\ \|v\|_2^2 + 2v^T\E{}{p(E_c)} + \E{}{\|p(E_c)\|_2^2} \\
        =&\ \|v\|_2^2 + \E{}{\|p(E_c)\|_2^2}
    \end{align*}
    so $\mathbb{E}[\|p(E)\|_{2}^{2}] > \mathbb{E}[\|p(E_c)\|_{2}^{2}]$. 
    
    Finally, we show that $E_c$ still contains $\countball$. Note that $U(E_c) = U(E - v) = U(E) - U(v)$ and since $U(E)$ is a unit sphere, then $U(E_c)$ is a translated unit sphere. Furthermore, since $E_c$ is origin-centered, then any linear transformation of $E_c$ is also origin-centered. Then $U(E_c)$ is an origin-centered unit sphere. Since $U(\countball)$ lies in an origin-centered ball of radius $R < 1$, $U(\countball) \subset U(E_c)$, and applying $U^{-1}$ over this statement gives that $\countball \subset E_c$. But then $E_c$ is a ``more optimal'' ellipse than $E$, a contradiction.
\end{proof}

\ellipseNorm*
\begin{proof}
    We first analyze the expected squared $\ell_2$ norm of a sample from $E_A$ itself. Let $V = \{v_1,...,v_d\}$ be an orthonormal basis of eigenvectors of $A$. Let $X$ be a uniform sample from the sphere defined by the equation $x_{1}^{2} + ... + x_{d}^{2} = 1$ where $(x_1,...,x_d)$ is written in the $V$ basis. Let $Y$ be a uniform sample from $E_A$. We can draw $Y$ by sampling a uniformly random point on the unit sphere and then scaling the directions of the eigenvectors of $A$ by the axes lengths $a_i$. This procedure produces a uniform sample from $E_A$ because the above scaling is a linear transformation. Then $Y = \sum_{i=1}^{d}a_{i}X_{i}v_{i}$ where $X_i$ is the random variable for the $i^{th}$ coordinate, and
    \begin{align*}
        \mathbb{E}[\|Y\|_{2}^{2}] &= \mathbb{E}\left[\left(\sum_{i=1}^{d}a_{i}X_{i}v_{i}\right)^{T}\left(\sum_{i=1}^{d}a_{i}X_{i}v_{i}\right)\right] \\
        &= \mathbb{E}\left[\sum_{i=1}^{d}\sum_{j=1}^{d}a_{i}a_{j}X_{i}X_{j}v_{i}^{T}v_{j}\right] \\
        &= \mathbb{E}\left[\sum_{i=1}^{d}a_{i}^{2}X_{i}^{2}v_{i}^{T}v_{i}\right] \\
        &= \sum_{i=1}^{d}a_{i}^{2}\mathbb{E}[X_{i}^{2}] \\
        &= \frac{1}{d}\sum_{i=1}^{d}a_{i}^{2}.
    \end{align*}
    
    We now analyze $Z$, a sample from $\enc{E_A}$. Let $\Omega_{d} = \frac{\pi^{d/2}}{\Gamma(1+d/2)}$ be the volume of the unit ball. Then $|tE_A| = t^{d}\Omega_{d}\prod_{j=1}^{d}a_j$, and $\frac{\partial |tE|}{\partial t} = dt^{d-1}\Omega_{d}\prod_{j=1}^{d}a_j$. For $t \in [0,1]$, let $L_t$ be the expected squared $\ell_2$ norm of a uniform sample from the $t^{th}$ ellipse shell $\partial(tE_A)$. By the above analysis of $Y$, $L_t = \frac{1}{d}\sum_{j=1}^{d}(ta_j)^{2} = \frac{t^{2}}{d}\left(\sum_{j=1}^{d}a_j^{2}\right)$. The density for a small neighborhood of $\partial(tE_A)$ is $p_t = \frac{\partial|tE_A|}{|E_A|} = dt^{d-1}\partial t$. Then
    \begin{equation*}
        \E{}{\|Z\|_2^2} = \int_{0}^{1} L_{t}p_{t} = \int_{0}^{1} \frac{t^{2}}{d}\left(\sum_{j=1}^{d}a_j^{2}\right)dt^{d-1}\partial t = \left(\sum_{j=1}^{d}a_j^{2}\right)\int_{0}^{1}t^{d+1}\partial t = \frac{\sum_{j=1}^da_j^2}{d+2}.
    \end{equation*}
\end{proof}

\ellipseUnique*
\begin{proof}
    Suppose we have minimum ellipses $E_A$ and $E_B$. We argue that the ``average" ellipse given by $x^{T}(A+B)x = 2$ has a lower expected squared $\ell_2$ norm than $E_A$, a contradiction. Note that this average ellipse would automatically contain $\countball$ since points that satisfy both equations separately will satisfy the sum of the two equations.
    
    By \Cref{lem:centering_lemma}, $E_A$ and $E_B$ are origin-centered, so we can apply \Cref{lem:ellipse_squared_l2} to relate their average squared $\ell_2$ norms to their squared axes lengths. By \Cref{def:ellipse}, the squared axes lengths of $A$ are equal to the reciprocals of their eigenvalues, and the same holds for $B$. It therefore suffices to show that $A+B$ has smaller sum of reciprocal eigenvalues than that of $A$ to reach a contradiction. To analyze the eigenvalues of $A+B$, we apply the Courant-Fischer theorem.
    
    \begin{lemma}[Courant-Fischer]
    \label{lem:Courant-Fischer}
        Let $M$ be a real symmetric positive definite $d \times d$ matrix with eigenvalues $0 < \lambda_1(M) \leq \cdots \leq \lambda_d(M)$. Then for each $j \in [d]$, 
        \begin{equation}
            \lambda_j(M) = \min\{\max\{R_M(x) \mid x \in U, x \neq 0\} \mid \dim(U) = j\} 
        \end{equation}
        where $R_M(x) = \frac{x^TMx}{x^Tx}$.
    \end{lemma}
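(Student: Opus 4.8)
The plan is to prove the two inequalities $\lambda_j(M)\le \min_{\dim U=j}\max_{0\neq x\in U}R_M(x)$ and $\lambda_j(M)\ge \min_{\dim U=j}\max_{0\neq x\in U}R_M(x)$ separately, using the spectral theorem. Since $M$ is real symmetric, fix an orthonormal eigenbasis $v_1,\ldots,v_d$ with $Mv_i=\lambda_i v_i$ and $\lambda_1\le\cdots\le\lambda_d$ (positive definiteness plays no role here, only symmetry). Writing a general vector as $x=\sum_i c_i v_i$, orthonormality gives $x^Tx=\sum_i c_i^2$ and $x^TMx=\sum_i\lambda_i c_i^2$, so $R_M(x)=\left(\sum_i\lambda_i c_i^2\right)/\left(\sum_i c_i^2\right)$ is a convex combination of the eigenvalues with weights proportional to $c_i^2$. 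The whole argument then reduces to choosing subspaces on which these weights are supported in a favorable range.

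For the upper bound I would exhibit one good subspace: take $U^\star=\mathrm{span}(v_1,\ldots,v_j)$, which is $j$-dimensional. For $x\in U^\star$ only $c_1,\ldots,c_j$ can be nonzero, so $R_M(x)\le\lambda_j$, with equality at $x=v_j$; hence $\max_{0\neq x\in U^\star}R_M(x)=\lambda_j$, and the minimum over all $j$-dimensional subspaces is at most this value.

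For the lower bound I would show that every $j$-dimensional subspace $U$ satisfies $\max_{0\neq x\in U}R_M(x)\ge\lambda_j$. Let $W=\mathrm{span}(v_j,v_{j+1},\ldots,v_d)$, which has dimension $d-j+1$. Since $\dim U+\dim W=d+1>d$, the intersection $U\cap W$ contains a nonzero vector $x$; writing $x=\sum_{i\ge j}c_iv_i$ gives $R_M(x)=\left(\sum_{i\ge j}\lambda_i c_i^2\right)/\left(\sum_{i\ge j} c_i^2\right)\ge\lambda_j$, so the maximum of $R_M$ over $U$ is at least $\lambda_j$. Combining the two bounds gives the claimed equality, which is then applied to $M=A+B$ in the proof of \Cref{lem:count_ellipse_unique}. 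The only step needing any real care is the dimension count $\dim(U\cap W)\ge\dim U+\dim W-d=1$ used to produce the nonzero vector in $U\cap W$; this standard subspace-intersection inequality is the crux of the lower bound, and everything else is routine linear algebra.
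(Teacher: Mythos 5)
Your proof is correct: it is the standard min--max argument via the spectral theorem, with $\mathrm{span}(v_1,\ldots,v_j)$ witnessing the upper bound and the dimension count $\dim(U\cap\mathrm{span}(v_j,\ldots,v_d))\ge 1$ giving the lower bound, and you rightly note that only symmetry (not positive definiteness) is needed. The paper itself offers no proof to compare against --- it invokes Courant--Fischer as a classical result inside the proof of the uniqueness lemma for the minimum ellipse --- so your argument simply supplies the standard textbook justification for that cited fact, and it does so without error.
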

    We have $R_{A+B}(x) = \frac{x^{T}(A+B)x}{x^{T}x} = \frac{x^{T}Ax}{x^{T}x} + \frac{x^{T}Bx}{x^{T}x} = R_{A}(x) + R_{B}(x)$. Because $A, B, A+B$ are positive definite, $R_{A}(x),R_{B}(x)$, and $R_{C}(x)$ are positive. Thus $R_{A+B}(x) > \max\{R_{A}(x),R_{B}(x)\}$, so by \Cref{lem:Courant-Fischer}, $\lambda_{j}(A+B) > \max\{\lambda_j(A), \lambda_j(B)\}$, and $\sum_{j=1}^d \frac{1}{\lambda_j(A+B)} < \sum_{j=1}^d \frac{1}{\lambda_j(A)}$.
\end{proof}

\ellipseAxes*
\begin{proof}
    Let $\sigma_{i,j}$ be the reflection that switches coordinates $i$ and $j$. To see that it's a reflection, let $\Pi_{i,j}$ be the hyperplane that passes through $\{e_1,...,e_d\} - \{e_i, e_j\}$, the point $\frac{1}{2}(e_i + e_j)$, and the origin. Then $\sigma_{i,j}$ is the operator whose action is reflection across $\Pi_{i,j}$. Since $\sigma_{i,j}$ is an isometry that fixes the origin, the expected squared $\ell_2$ norm of points enclosed by $\sigma_{i,j}(E)$ is equal to that of $E$. By \Cref{lem:count_ellipse_unique}, $\sigma_{i,j}(E) = E$. 
    
    This means that $E$ has reflection symmetry over $\Pi_{i,j}$. We show that the orthonormal vector $v_{i,j}$ to $\Pi_{i,j}$ is an eigenvector, and thus a valid axis direction of $A$.
    \begin{claim}
    \label{claim:reflection_eigenvectors}
        Let $E$ be an ellipse with associated matrix $A$. If $w$ is a vector pointing from the origin to a point in $E$, and $w$ orthogonal to a hyperplane $H_w$ for which $E$ has reflection symmetry, then $w$ is an eigenvector of $A$.
    \end{claim}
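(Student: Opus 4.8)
The plan is to convert the geometric symmetry of $E$ into an algebraic commutation relation for $A$ and then extract the eigenvector directly. Write $E = E_A = \{x : x^T A x = 1\}$ with $A$ the positive definite matrix of \Cref{def:ellipse}. In the setting where this claim is used, $E$ is origin-centered (\Cref{lem:centering_lemma}), and any isometry mapping $E$ onto itself must fix its center; hence the reflection hyperplane $H_w$ passes through the origin, and reflection across $H_w$ is the linear involution $\rho = I - 2\hat{w}\hat{w}^T$, where $\hat{w} = w/\|w\|_2$ is a unit normal to $H_w$. Note $\rho^T = \rho$ and $\rho^2 = I$, and that $w$ is an eigenvector of $A$ if and only if $\hat{w}$ is, so it suffices to work with $\hat{w}$.

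The first step is to show $\rho A \rho = A$. Reflection symmetry means $\rho(E) = E$. Substituting, $x^T(\rho^T A\rho)x = (\rho x)^T A(\rho x)$, so $E_{\rho^T A \rho} = \rho^{-1}(E_A) = \rho(E_A) = E_A$. A positive definite matrix is determined by the ellipsoid it defines --- the solid body bounded by $E_A$ has Minkowski gauge $x \mapsto \sqrt{x^T A x}$, and polarization recovers the quadratic form --- so $\rho^T A \rho = A$, i.e.\ $\rho A \rho = A$.

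The second step turns this into a statement about $\hat{w}$. Multiplying $\rho A \rho = A$ on the right by $\rho$ gives $\rho A = A \rho$; expanding $\rho = I - 2\hat{w}\hat{w}^T$ and cancelling the $A$ terms yields $\hat{w}\hat{w}^T A = A\hat{w}\hat{w}^T$. Applying both sides to $\hat{w}$ and using $\hat{w}^T\hat{w} = 1$ gives $A\hat{w} = (\hat{w}^T A\hat{w})\,\hat{w}$, so $\hat{w}$ (and hence $w$) is an eigenvector of $A$. When $w$ additionally lies on $E$, the eigenvalue is forced to be $1/\|w\|_2^2$, so $w$ is exactly an axis endpoint of $E$ --- the form of the statement used in \Cref{lem:axes_directions}.

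The only point needing a little care is the uniqueness of the positive definite matrix representing an ellipsoid, which I would record as a short standard lemma rather than belabor; everything else is one-line linear algebra, so I foresee no real obstacle.
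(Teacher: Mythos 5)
Your proof is correct, but it takes a genuinely different route from the paper's. The paper proves \Cref{claim:reflection_eigenvectors} by induction on the dimension: it projects out a direction orthogonal to $w$, argues the reflection descends to the lower-dimensional ellipse, and settles the $d=2$ base case by an explicit coordinate computation ($w^T A w' = 0$, then $c_1 c_2(\lambda_1 - \lambda_2) = 0$). You instead turn the symmetry directly into the algebraic identity $\rho A \rho = A$ for the linear reflection $\rho = I - 2\hat{w}\hat{w}^T$, using that the hyperplane must pass through the center (which is the origin here, consistent with \Cref{def:ellipse} and \Cref{lem:centering_lemma}) and that a positive definite matrix is determined by its unit-level ellipsoid (polarization); then $\rho A = A\rho$ gives $\hat{w}\hat{w}^T A = A \hat{w}\hat{w}^T$, and evaluating at $\hat{w}$ yields $A\hat{w} = (\hat{w}^T A \hat{w})\hat{w}$. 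All steps check out: $\rho(E_A) = E_{\rho A \rho}$ since $\rho^{-1} = \rho^T = \rho$, and the uniqueness lemma you flag is standard and genuinely the only nontrivial ingredient. What your argument buys is brevity and uniformity in the dimension — no induction, no restriction operator $A_{w'}$ (whose meaning in the paper's inductive step takes some care to parse), and an explicit eigenvalue $\hat{w}^T A\hat{w}$ (equal to $1/\|w\|_2^2$ when $w \in E$, matching the axis-length convention used in \Cref{lem:axes_directions}); the paper's approach is more elementary in that it never invokes the quadratic-form uniqueness fact, only direct manipulation of $x^T A x = 1$, at the cost of a longer two-stage argument.
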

    \begin{proof}
        We use induction on the dimension $d$. Let $w'$ be a vector orthogonal to $w$, and define basis $\{w, w',u_1,...,u_{d-2}\}$ of $\mathbb{R}^{d}$. Let $E'$ be the ellipse that is the image of $E$ under the linear map $p$ where $p(w') = 0$ and $p$ is the identity map on $w', u_1, \ldots, u_{d-2}$, and let $\pi$ be the reflection operator where $\pi(w) = -w$ and $\pi$ is the identity map on $w', u_1, \ldots, u_d$.
        
        Since $v \in E$ implies $\pi(v) \in E$, applying $p$ over this statement gives that $p(v) \in p(E)$ implies $p(\pi(v)) \in p(E)$. Since $p(E) = E'$ and $p$ and $\pi$ commute, we can write this as $p(v) \in E'$ implies $\pi(p(v)) \in E'$. In other words, $\pi$ is a reflection operator for $E'$, and $p(w) = w$ is the reflecting vector for $\pi$ in $E'$.
        
        Let $A_{w'}$ be the restriction of $A$ to the orthogonal complement of $\mbox{span}(w')$. Since $E'$ has dimension one less than $E$, by the inductive hypothesis, as the reflecting vector for $\pi$, $w$ is an eigenvector of $A_{w'}$ where $w$ is viewed as living in the domain of $A_{w'}$. But since $A_{w'}$ is a restriction of $A$, then $w$ is also an eigenvector of $A$ when $w$ is viewed as living in the domain of $A$.
    
        It remains to show the base case of $d = 2$. Let $\{v_1, v_2\}$ be orthonormal eigenvectors of $A$ with eigenvalues $\{\lambda_1, \lambda_2\}$. Write $w = c_{1}v_{1} + c_{2}v_{2}$, and define $w' = c_{2}v_{1} - c_{1}v_{2}$. Then $\{w,w'\}$ is an orthogonal basis. Let $v = b_{1}w + b_{2}w'$ be a point on $E$. Then since $E$ has reflection symmetry over $w$ and is defined by $x^TAx=1$, we have the following two equalities:
        \begin{align*}
            (b_{1}w + b_{2}w')^{T}A(b_{1}w + b_{2}w') &= 1 \\
            (-b_{1}w + b_{2}w')^{T}A(-b_{1}w + b_{2}w') &= 1
        \end{align*}
        Subtracting and simplifying these gives $w'^{T}Aw + w^{T}Aw' = 0$. Since $A$ is positive definite, it induces the inner product defined by $(x,y)_{A} \rightarrow x^{T}Ay$. Inner products pairings are symmetric, so $w'^{T}Aw = w^{T}Aw'$, so $2w^{T}Aw' = 0$ and $w^{T}Aw' = 0$. Expanding the last equation gives $(c_{1}v_{1} + c_{2}v_{2})^{T}(c_{2}\lambda_{1}v_{1} - c_{1}\lambda_{2}v_{2}) = 0$, and since the cross terms are zero this simplifies to $c_{1}c_{2}\lambda_{1}v_{1}^{T}v_{1} - c_{1}c_{2}\lambda_{2}v_{2}^{T}v_{2}$ or $c_{1}c_{2}(\lambda_{1} - \lambda_{2}) = 0$. If $c_{1} = 0$ or $c_{2} = 0$, then we are done as $w$ is a scaling of eigenvector $v_1$ or $v_2$ and so is an eigenvector as well. Otherwise $\lambda_{1} = \lambda_{2}$ which means that $E$ is a circle, so $A$ is a multiple of the identity and every vector is an eigenvector. In particular, $w$ is an eigenvector.
    \end{proof}
    
    $E$ has reflection symmetry over $\Pi_{1,j}$ for $2 \leq j \leq d$, so each element $v_{1,j}$ of of $\{v_{1,2},...,v_{1,d}\}$ corresponds to an eigenvector of $A$ with eigenvalue $a_j$. We show that no pair among $\{v_{1,2},...,v_{1,d}\}$ is orthogonal. Each $\Pi_{1,j}$ orthogonally bisects the edge between $e_1$ and $e_j$, so the direction of $v_{1,j}$ is parallel to the vector $e_j - e_1$; however, there are no pairs of orthogonal edges among $\{e_2 - e_1,...,e_d - e_1\}$ since $(e_i - e_1)^{T}(e_j - e_1) = 1$ for all $2 \leq i < j \leq d$. Since the eigenspaces of symmetric PSD matrices (the class of matrices containing $A$) are orthogonal, all of these principal axes correspond to the same eigenvalue. In other words, $a_2 = ... = a_d$.
    
    It remains to determine the final eigenvector with eigenvalue $a_1$. If $a_1 = a_2$, then $A$ is a multiple of the identity, so in particular $(1,...,1)$ is an eigenvector of $A$. If $a_1 \neq a_2$, then the final eigenvector must be orthogonal to each $v_{1,j}$ since distinct eigenspaces are orthogonal. The $(1,...,1)$ vector spans the orthogonal complement of $\mbox{span}(v_{1,2},...,v_{1,d})$ since $v_{1,j}^{T}(1,...,1) = (e_j - e_1)^{T}(1,...,1) = 0$, so $(1,...,1)$ is the final eigenvector.
\end{proof}

\ellipseContacts*
\begin{proof}
    Define $v_1(j) = \tfrac{j}{d}(1, 1, \ldots, 1)$ and let $v_2(j)$ be a vector that points from $v_1(j)$ to an arbitrary point with $j$ 1s and $d-j$ 0s. Then $v_2(j)$ consists of $j$ coordinates with $\tfrac{d-j}{d}$ and $d-j$ coordinates with $-\tfrac{j}{d}$, so $v_{2}(j)$ is orthogonal to $v_{1}(j)$, and
    \begin{equation*}
        \|v_{2}(j)\|_2 = \sqrt{j\left(1-\frac{j}{d}\right)^{2} + (d-j)\left(\frac{j}{d}\right)^{2}} = \sqrt{j - \frac{2j^2}{d} + \frac{j^3}{d^2} + \frac{j^2}{d} - \frac{j^3}{d^2}} = \sqrt{\frac{j(d-j)}{d}}.
    \end{equation*}
    The expression inside the root is a down-ward facing parabola maximized at $j=d/2$. The minimum ellipse has an axis along $(1, 1, \ldots, 1)$ (\Cref{lem:axes_directions}), must contact vertices of $\countball$ by its minimality, and has ellipse cross-section radius decreasing away from the origin. Therefore if the ellipse intersects any of the points $v_{1}(j) + v_{2}(j)$ where $0 < j < k$, then it does not enclose $v_{1}(k) + v_{2}(k)$ since $\|v_{2}(j)\|_2$ is increasing for $0 \leq j \leq d/2$, a contradiction.
\end{proof}

\ellipseThm*

\begin{proof}
     By \Cref{lem:centering_lemma} and \Cref{lem:ellipse_squared_l2}, to compute $E$ with axes lengths $a_1, \ldots, a_d$, we minimize objective function $\sum_{j=1}^d a_j^2$. By \Cref{lem:axes_directions}, this reduces to $a_1^2 + (d-1)a_2^2$. Let $e_{v_1} = \frac{1}{\sqrt{d}}(1,...,1)$, and let $e_{v_2} = \frac{1}{\sqrt{2}}(-1, 1, 0,...,0)$. Extend $\{e_{v_1}, e_{v_2}\}$ to a full orthonormal basis $B = \{e_{v_1},...,e_{v_d}\}$. By \Cref{lem:ellipse_contacts}, $k \leq d/2$ means that the minimum ellipse $E$ intersects $v_{1}(j) + v_{2}(j) = \|v_{1}(j)\|_2e_{v_1} + \|v_{2}(j)\|_2e_{v_2}$ which is written as $(\|v_{1}(j)\|_2, \|v_{2}(j)\|_2,0,...,0)$ in the $B$ basis.
     
     Consider the program whose objective function is $f(a_1, a_2) = a_{1}^{2} + (d-1)a_{2}^{2}$, and whose constraint in the $B$ basis can be written as $g(a_1, a_2) = \frac{\|v_{1}(k)\|_2^{2}}{a_{1}^{2}} + \frac{\|v_{2}(k)\|_2^{2}}{a_{2}^{2}} - 1 = 0$. Define the Lagrangian $\mathcal{L}(a_1, a_2, \lambda) = f(a_1, a_2) + \lambda g(a_1, a_2)$. Any optimal point of $\mathcal{L}$ satisfies that $\nabla\mathcal{L} = 0$, so calculating yields
    \begin{align*}
        \frac{\partial\mathcal{L}}{\partial a_1} = 2a_{1} - 2\lambda\frac{\|v_{1}(k)\|_2^{2}}{a_{1}^{3}} &= 0 \\ a_1 &= \left(\lambda\|v_{1}(k)\|_2^{2}\right)^{1/4} \\
        \frac{\partial\mathcal{L}}{\partial a_2} = 2(d-1)a_{2} - 2\lambda\frac{\|v_{2}(k)\|_2^{2}}{a_{2}^{3}} &= 0
        \\ a_2 &= \left(\frac{\lambda\|v_{2}(k)\|_2^{2}}{d-1}\right)^{1/4} \\
        \frac{\partial\mathcal{L}}{\partial \lambda} = g(a_1, a_2) &= 0 \\
        \frac{\|v_{1}(k)\|_2^{2}}{(\lambda\|v_{1}(k)\|_2^{2})^{1/2}} + \frac{\|v_{2}(k)\|_2^{2}}{\left(\frac{\lambda\|v_{2}(k)\|_2^{2}}{d-1}\right)^{1/2}} - 1 &= 0 \\
        \frac{\|v_{1}(k)\|_2}{\sqrt{\lambda}} + \frac{\|v_{2}(k)\|_2\sqrt{d-1}}{\sqrt{\lambda}} - 1 &= 0 \\ \lambda &= (\|v_{1}(k)\|_2 + \|v_{2}(k)\|_2\sqrt{d-1})^{2}
    \end{align*}
    and we plug in $\|v_1(k)\|_2 = \frac{k}{\sqrt{d}}$ and $\|v_2(k)\|_2 = \sqrt{k(d-k)/d}$ from the proof of \Cref{lem:ellipse_contacts} to get
    \begin{align*}
        \lambda = \left(\frac{k}{\sqrt{d}} + \sqrt{\frac{k(d-k)(d-1)}{d}}\right)^2 =&\ \frac{k}{d}\left(\sqrt{k} + \sqrt{(d-k)(d-1)}\right)^2 \\
        a_1 =&\ \left(\frac{\lambda k^2}{d}\right)^{1/4} \\
        a_2 =&\ \left(\frac{\lambda k (d-k)}{d(d-1)}\right)^{1/4}.
    \end{align*}
\end{proof}
\section{Proofs For Vote}
\label{sec:appendix_vote}

\subsection{Proofs For Vote Sampler}
We start with a result characterizing the edges of $CH(P_d)$.

\begin{lemma}[\cite{GG77}]
\label{lem:transposition}
    For a fixed vertex $(v_1,...,v_d) \in CH(P_d)$, each neighboring vertex is formed by picking a value $i \in \{0,...,d-2\}$ and then switching the coordinate containing value $i$ and the coordinate containing value $i+1$.
\end{lemma}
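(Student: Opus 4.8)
The plan is to prove the statement directly from two classical facts — that every face of a polytope is the set of maximizers of some linear functional and equals the convex hull of its vertices, and the rearrangement inequality — without invoking the facet description or simplicity of the permutohedron. Identify a vertex of $CH(P_d)$ with the bijection $v\colon[d]\to\{0,\dots,d-1\}$ recording its coordinates. The vertices of $CH(P_d)$ are exactly the elements of $P_d$, since each $v\in P_d$ is the unique maximizer of $\langle v,\cdot\rangle$ over $P_d$ (rearrangement inequality, using that the entries of $v$ are distinct) and hence is extreme. For $i\in\{0,\dots,d-2\}$ write $\tau_i v$ for the vertex obtained from $v$ by exchanging the coordinates holding values $i$ and $i+1$; these $d-1$ vertices are pairwise distinct because $v^{-1}$ is injective, so the unordered pair of swapped positions $\{v^{-1}(i),v^{-1}(i+1)\}$ determines $\{i,i+1\}$. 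It suffices to show that the neighbors of $v$ are exactly $\tau_0 v,\dots,\tau_{d-2}v$; the lemma as stated is the ``$\subseteq$'' direction, but ``$\supseteq$'' falls out of the same argument.

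\emph{Every neighbor of $v$ has the claimed form.} Let $w$ be a neighbor, so the edge $e=[v,w]$ is a face of $CH(P_d)$ and hence $e=\argmax_{x\in CH(P_d)}\langle c,x\rangle$ for some $c\in\mathbb{R}^d$; in particular $v$ and $w$ both lie in $\argmax_{x\in P_d}\langle c,x\rangle$. Let $\gamma_1>\dots>\gamma_r$ be the distinct values taken by $c$, with $\gamma_t$ occurring $m_t$ times, and set $M_0=0$, $M_t=m_1+\dots+m_t$. By the rearrangement inequality, a permutation $x$ maximizes $\langle c,x\rangle$ over $P_d$ iff for each $t$ it assigns, to the set of coordinates where $c=\gamma_t$, the value set $\{\,d-1-M_{t-1},\dots,d-M_t\,\}$ (the assignment across distinct $c$-values is forced, within a block it is arbitrary). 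Hence there are $\prod_t m_t!$ maximizers, each a vertex of $CH(P_d)$. Since $e$ is an edge it has exactly two vertices, so $\prod_t m_t!=2$: if instead there were three or more maximizing vertices they could not be collinear (the middle one would be a convex combination of the others, hence not extreme), forcing $\dim e\ge 2$. Thus exactly one block has $m_t=2$ and all others have $m_t=1$. Every coordinate outside that block receives the same value in $v$ as in $w$, while the two coordinates of the size-two block receive the two values $\{\,d-1-M_{t-1},\,d-M_t\,\}$, which — because $M_t=M_{t-1}+2$ — are a pair of consecutive values $i,i+1$. So $w=\tau_i v$.

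\emph{Every $\tau_i v$ is a neighbor.} Fix $i$, let $p=v^{-1}(i)$ and $q=v^{-1}(i+1)$, and define $c$ by $c_j=v_j$ for $j\notin\{p,q\}$ and $c_p=c_q=i$. Then $c$ has a single repeated value, occurring at $p$ and $q$, and $v$ is co-sorted with $c$ up to that tie. By the same rearrangement analysis, the maximizers of $\langle c,\cdot\rangle$ over $P_d$ are exactly the two permutations that agree with $v$ off $\{p,q\}$ and place $\{i,i+1\}$ on $\{p,q\}$ in either order, namely $v$ and $\tau_i v$. Since a face is the convex hull of its vertices, $\argmax_{x\in CH(P_d)}\langle c,x\rangle=[v,\tau_i v]$ is an edge, so $\tau_i v$ is a neighbor of $v$. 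Combining the two directions, the neighbors of $v$ are precisely $\{\tau_i v : 0\le i\le d-2\}$, which is the assertion of the lemma.

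The only real care needed is in the polytope bookkeeping of the first direction — that a face is the convex hull of its vertices, that an edge has exactly two vertices, and that three or more maximizing permutations force the maximizing face to have dimension at least two because distinct vertices of a polytope are never collinear — together with pinning down the rearrangement-inequality block structure, in particular the observation that a tied pair of coordinates is always forced to receive a pair of \emph{consecutive} values. Everything else is a routine computation.
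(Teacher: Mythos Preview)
Your proof is correct. The paper does not actually prove this lemma; it is stated with a citation to \cite{GG77} and used as a black box in the proof of \Cref{lem:face_decomposition}. So you have supplied a complete self-contained argument where the paper simply appeals to the literature.

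Your approach is the standard one for the edge structure of permutation polytopes: characterize faces as maximizers of linear functionals, use the rearrangement inequality to describe exactly which permutations maximize $\langle c,\cdot\rangle$ in terms of the level sets of $c$, and then count. The key observations --- that an edge forces $\prod_t m_t!=2$, hence a single tied pair, and that a tied pair of positions necessarily receives a pair of \emph{consecutive} values because the blocks partition $\{0,\dots,d-1\}$ into contiguous ranges --- are all handled cleanly. The converse direction, exhibiting an explicit $c$ whose maximizer set is $\{v,\tau_i v\}$, is also correct and gives the full ``if and only if'' even though the lemma as stated only needs one direction.

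One minor remark: your justification that three or more maximizing permutations force $\dim e\ge 2$ is fine, but the more direct route is simply that the vertices of a face of a polytope are exactly the vertices of the polytope lying in that face, and a one-dimensional polytope (segment) has exactly two vertices. Either way the conclusion is the same.
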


Next, we prove the full version of \Cref{lem:simple_face_decomposition}, originally given without proof as Proposition 2.6 of~\citet{P09}.

\begin{lemma}
\label{lem:face_decomposition}
    Given integer $k$ such that $0 \leq k \leq d-1$, there is a bijection between the $k$-dimensional faces of $CH(P_d)$ and the collection of sequences of $d-k$ subsets partitioning $[d]$. Let $T_1$ be the top $|B_1|$ elements of $\{0,...,d-1\}$. For $2 \leq i \leq d-k$ let $T_i$ be the top $|B_i|$ elements of $\{0,...,d-1\} - \cup_{j=1}^{i-1}T_j$. If $F$ is a $k$-dimensional face of $CH(P_d)$ corresponding to subsets $B_1,...,B_{d-k}$, then $F$ has direct sum decomposition $\oplus_{i=1}^{d-k}(CH(P_{B_i}) + \min(T_i)I_{B_i})$.
\end{lemma}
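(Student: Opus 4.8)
The plan is to identify each face of $CH(P_d)$ with the set of maximizers of a linear functional and to read an ordered set partition off the level sets of that functional. First I would record the facet description of the permutohedron. Since $CH(P_d)$ is the convex hull of all coordinate permutations of $(0,1,\ldots,d-1)$, it lies in the hyperplane $\sum_{i=1}^d x_i = \binom d2$, and for every nonempty proper subset $S$ of $[d]$ a standard exchange (rearrangement) argument shows that $\max_{x\in P_d}\sum_{i\in S}x_i = \sum_{j=d-|S|}^{d-1} j = \binom d2 - \binom{d-|S|}{2}$, attained exactly by the permutations placing the top $|S|$ values on the coordinates of $S$. Hence $CH(P_d)$ equals the set of $x\in\mathbb R^d$ with $\sum_{i=1}^d x_i = \binom d2$ and $\sum_{i\in S}x_i \le \binom d2 - \binom{d-|S|}{2}$ for every nonempty proper $S\subset[d]$, and these inequalities give its facets.

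Next I would build the bijection. Given an ordered partition $(B_1,\ldots,B_m)$ of $[d]$, pick any $c\in\mathbb R^d$ that is constant on each $B_i$ with values strictly decreasing in $i$, and let $F(B_1,\ldots,B_m)$ be the face of $CH(P_d)$ maximizing $c\cdot x$. The exchange argument shows $x\in P_d$ lies in this face iff it places the top $|B_1|$ values on $B_1$, the next $|B_2|$ on $B_2$, and so on, which holds iff the facet constraints indexed by the prefix sets $S_j := B_1\cup\cdots\cup B_j$, $j=1,\ldots,m-1$, are all tight; since the blocks are nonempty one also checks that no facet constraint indexed by a set that is not such a prefix is tight at a generic point of the face, so the chain $S_1\subset\cdots\subset S_{m-1}$ is both determined by and determines the face. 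Conversely every face of a polytope is $F_c$ for some $c$, and grouping coordinates by the level sets of $c$ sorted by decreasing value produces an ordered partition whose associated face is $F_c$. Distinct ordered partitions yield distinct chains of prefix sets, hence distinct faces; and a chain $S_1\subset\cdots\subset S_{m-1}$ of nonempty proper sets recovers the blocks via $B_i = S_i\setminus S_{i-1}$ (with $S_0 = \emptyset$, $S_m = [d]$). This establishes the claimed bijection, with the number of blocks $m$ to be matched to the dimension $k$ below.

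To obtain the direct sum decomposition and the dimension count, observe that on $F(B_1,\ldots,B_m)$ the coordinates in block $B_i$ range over exactly the permutations of the consecutive integer set $T_i = \{\,d-|S_i|,\ldots,d-1-|S_{i-1}|\,\}$, independently across blocks: this is precisely the meaning of ``the top $|B_i|$ of the values not used by $B_1,\ldots,B_{i-1}$'', and in particular $\min(T_i) = d-|S_i|$, which matches the recursive definition of $T_i$ in the statement. Permuting the values $\{\min(T_i),\ldots,\min(T_i)+|B_i|-1\}$ over the coordinates in $B_i$ is exactly taking the embedded copy $CH(P_{B_i})$ and translating it by $\min(T_i)I_{B_i}$, so since the blocks occupy disjoint coordinate sets we get $F = \bigoplus_{i=1}^m (CH(P_{B_i}) + \min(T_i)I_{B_i})$. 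Each summand is $(|B_i|-1)$-dimensional and they lie in orthogonal coordinate subspaces, so $\dim F = \sum_{i=1}^m(|B_i|-1) = d-m$; taking $m = d-k$ identifies the $k$-dimensional faces with the sequences of $d-k$ subsets partitioning $[d]$, which is the assertion. The case $m=2$ is exactly \Cref{lem:simple_face_decomposition}.

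The main obstacle I expect is the bookkeeping inside the exchange argument: proving that the maximizers of $c\cdot x$ over $P_d$ are characterized exactly by tightness of the prefix-set facet constraints, and that no non-prefix facet constraint is tight on the entire face, which is what makes the map (face $\mapsto$ ordered partition) well defined and injective and rules out spurious faces. A secondary point needing care is keeping the $\{0,\ldots,d-1\}$ normalization straight so that the translation vectors come out as exactly $\min(T_i)I_{B_i} = (d-|S_i|)I_{B_i}$ rather than shifted by one block. Both are routine once the facet description of $CH(P_d)$ above is in hand.
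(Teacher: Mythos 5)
Your proof is correct in outline, but it takes a genuinely different route from the paper. You work from the H-description: the classical facet inequalities $\sum_{i\in S}x_i \le \binom{d}{2}-\binom{d-|S|}{2}$ (Rado's description of the permutohedron), the general fact that every face is the maximizer set of a linear functional and is the intersection of the facets containing it, and an exchange argument identifying the maximizers of a block-constant functional; the ordered set partition is then read off the chain of tight prefix sets, and the direct-sum decomposition follows because the maximizing vertices factor independently across blocks. The paper instead argues intrinsically from the vertex--edge structure: it uses the adjacency lemma (\Cref{lem:transposition}) to associate to each edge of the face a coordinate transposition, forms a graph on $[d]$ whose connected components are the blocks $B_i$, shows group-theoretically that the stabilizer of the face's vertex set is $\prod_i S_{|B_i|}$ (transpositions along a spanning tree generate the block's symmetric group), deduces contiguity of the value ranges $T_i$, and proves the converse by exhibiting a linear functional maximized exactly on the candidate set to certify it lies on the boundary. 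Your approach is shorter and leverages standard polytope theory, but it outsources two facts you flag yourself: that the listed inequalities are exactly the facets, and that precisely the prefix constraints are tight on the face (needed for well-definedness and injectivity); both are standard but do require proof, and the second could be bypassed entirely by recovering $(B_1,\ldots,B_m)$ directly from the face's vertex set (e.g.\ $B_1$ is the set of coordinates attaining the value $d-1$ at some vertex of the face), which would make your bijection argument independent of the facet description. The paper's route is more self-contained relative to its own toolkit (it only cites the edge characterization) at the cost of heavier combinatorial and group-theoretic bookkeeping; your route buys brevity and a cleaner conceptual picture (faces $\leftrightarrow$ ordered partitions via the normal fan) at the cost of importing the H-description. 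The value normalization and the identification $\min(T_i)=d-|S_i|$ in your write-up are consistent with the statement, and your dimension count matching $m=d-k$ is correct.
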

\begin{proof}
    Let $F$ be a $k$-dimensional face of $CH(P_d)$, and let $V_F$ be the vertices of $F$. Let $\{v_1,...,v_{d-k}\}$ be $d-k$ linearly independent vectors such that each $v_i$ is orthogonal to $F$. Since $\dim(F) = k$, there exist $d-k$ relations $r = \{r_1,...,r_{d-k}\}$ where $r_i$ is $v_i \cdot x = c_i$. Every vector of the (possibly affine) subspace containing $F$ satisfies each relation in $r$.

    Let the symmetric group $S_d$ act on $\mathbb{R}^{d}$ in the standard way. By \Cref{lem:transposition}, any edge $(y_1, y_2)$ of $F$ corresponds to some coordinate transposition $\sigma$ with $\sigma(a) = b, \sigma(b) = a$. Then since $y_1$ and $y_2$ satisfy all the relations in $R$, $y_{1_a} = y_{2_b} \neq y_{2_a} = y_{1_b}$ , and $y_{1_c} = y_{2_c}$ for $c \neq a, b$, it follows that $v_{i_a} = v_{i_b}$ for all $1 \leq i \leq d-k$. This means that for any $y \in V_F$, $\sigma(y) \in V_F$, i.e., $\sigma$ fixes $V_F$.
    
    Define graph $g_F$ with vertices $[d]$ and, for each edge of $F$, define an edge in $g_F$ between the pair of coordinates transposed by its corresponding $\sigma$. Edges in $F$ therefore correspond to (adjacent) value transpositions, and edges in $g_F$ correspond to (possibly non-adjacent) coordinate transpositions; for example, $(y_1, y_2)$ above would yield an edge $(a,b)$ in $g_F$. We can group the edges of $F$ into equivalence classes where two edges are equivalent if and only if they belong to the same connected component in $g_F$. Say the connected components of $g_F$ are $B = \{B_1,...,B_n\}$, where the $B_i$ partition $[d]$. We begin decomposing $F$ in the following claim.
    
    \begin{claim}{3.25.1}
    \label{claim:G_F}
        Let $G_F$ be the set of permutations such that fix the vertices of $F$, i.e., $\sigma(V_F) = V_F$ for all $\sigma \in G_{F}$. Then $G_F$ is a subgroup of $S_d$, and it admits the group direct product decomposition $G_F = \prod_{j=1}^{n}S_{|B_i|}$. 
    \end{claim}
    \begin{proof}
        For any $\sigma \in G_{F}$, we see that $\sigma^{-1} = \sigma^{\ord(\sigma) - 1}$, where $\ord$ denotes group element order. Since powers of $\sigma$ fix $V_F$, $\sigma^{-1} \in G_{F}$. Clearly, the identity is in $G_F$. If $\sigma_1,\sigma_2 \in G_F$ then $\sigma_1(\sigma_{2}(V_F)) = \sigma_1(V_F) = V_F$, so $\sigma_{1}\sigma_{2} \in G_F$. It follows that $G_F$ is a subgroup.
        
        For each $B_i$, $G_F$ contains a collection of coordinate transpositions that form a spanning tree $t_{B_i}$. We show that the these coordinate transpositions generate the subgroup $S_{|B_i|} \subset S_d$ acting on the coordinates of $B_i = \{i_1,...,i_{|B_i|}\}$. Let $i_j$ be a vertex of $t_{B_i}$, and let $\sigma \in S_{|B_i|}$ transpose $i_j$ and some $i_k$. Let $p = (\sigma_{1},\sigma_{2},...,\sigma_{q})$ be a path of edges from $i_j$ to $i_k$. Then $\sigma = (\sigma_{1} \sigma_{2} ... \sigma_{q-1}) \sigma_{q} (\sigma_{q-1} \sigma_{q-2} ... \sigma_{1})$. Since the set of all transpositions in $S_{|B_i|}$ generates $S_{|B_i|}$, so do the transpositions in $t_{B_i}$. Moreover, since every edge of $t_{B_i}$ fixes $V_F$, and the edges of $t_{B_i}$ generate $S_{|B_i|}$ then every $\sigma \in S_{|B_i|}$ fixes $V_F$. This yields the group direct product decomposition $G_F = \prod_{j=1}^{n}S_{|B_i|}$. 
    \end{proof}

    The set of values in $\{0,1,...,d-1\}$ that appear at coordinates in $B_i$ must be a contiguous range of integers, denoted $R_i$, because by \Cref{lem:transposition} all edges of $F$ switch two (possibly non-neighboring) coordinates with neighboring values. Let $T_i$ be the $i$th largest range in $R = \{R_1, \ldots, R_n\}$. Relabel the indices of $B$ so that $B_i$ corresponds to the range $T_i$. Since $S_{|B_i|}$ fixes the coordinates of $B_i$, recalling the definition of $I_J$ from \Cref{lem:simple_face_decomposition}, $F$ restricted to the coordinates in $B_i$ is $CH(P_{B_i}) + \min(T_i) I_{B_i}$, and $F$ has direct sum decomposition $\oplus_{i=1}^{n}[CH(P_{B_i}) + \min(T_i) I_{B_i}]$. Since $CH(P_{B_i}) + \min(T_i) I_{B_i}$ has dimension $|B_i| - 1$, $F$ has dimension $\sum_{i=1}^{n}(|B_i| - 1) = d - n$. Because $F$ has dimension $k$, $n = d - k$.
    
    We have shown that every $k$-dim face of $CH(P_d)$ corresponds to a sequence of subsets $B_1,...,B_{d-k}$ that partition $[d]$. Next, we will complete the claimed bijection by showing the converse. Let $B_1,...,B_{d-k}$ be a sequence of subsets partitioning $[d]$. Let $v$ be the vector with the values of $T_i$ at the coordinates of $B_i$ in any order. Then define $V_F$ to be the orbit of $v$ under the group action $\prod_{i=1}^{d-k}S_{|B_i|}$. Then $CH(V_F) = \prod_{i=1}^{d-k}(CH(P_{B_i}) + \min(T_i)I_{B_i})$ and since $\dim(CH(P_{B_i})) = |B_i| - 1$ then $\dim(CH(V_F)) = \sum_{i=1}^{d-k}(|B_i| - 1) = d - (d - k) = k$. It remains to show that $CH(V_F)$ lies on the boundary of $CH(P_d)$. Let $C_i = \cup_{j=1}^{i}B_i$ and let $U_i = \sum_{j=1}^{i}\sum_{y \in T_j} y$. For $1 \leq i \leq d-k$, let $r_{i}$ be the relation $x \cdot I_{C_i} = U_i$. First, any point of $CH(V_F)$ satisfies all these relations by the bilinearity of the $\cdot$ operator since any vertex of $V_F$ satisfies all these relations. Second, any vertex $w \in P_d$ will have that for all $i$, $0 \leq w \cdot I_{C_i} \leq U_i$, because the $(d-k)$ relations $r_1,...,r_{d-k}$ can only be satisfied by vectors where the top $|C_i|$ elements of $[d]$ appear at the coordinates of $C_i$ for all $i$. By the bilinearity of the $\cdot$ operator, this statement is also true for any point $w \in CH(P_d)$ since it is a convex combination of points in $P_d$. Define the continuous linear functional $f(x) = x \cdot (\sum_{i=1}^{d-k}I_{C_i})$. As $CH(P_d)$ is compact, $f$ is bounded on $CH(P_d)$. The points in $CH(V_F)$ maximize $f$ attaining the value $\sum_{i=1}^{d-k}U_i$. But $f$ cannot attain a maximum value on the interior of $CH(P_d)$ because if it did, say at point $p$, then we can slightly shift $p$ in the direction of $I_{C_i}$ while staying in the interior of $CH(P_d)$, which would increase the value of $f$. It follows that $CH(V_F)$ is on the boundary of $CH(P_d)$. 
\end{proof}

\volumeCalculation*
\begin{proof}
    By \Cref{lem:face_decomposition}, we can write $F$ as $(CH(P_{B_1}) + (\min{T_1})I_{B_1}) \oplus CH(P_{B_2})$. The $(d-2)$-dimensional faces of $CH(P_d)$ with this decomposition are exactly the faces with first subset having size $|B_1|$ and second subset having size $|B_2|$, so there are $\binom{d}{|B_1|}$ faces congruent to $F$.
    
    Turning to volume, $F$ has $(d-2)$-volume
    \begin{equation*}
        |CH(P_{B_1}) + (\min{T_1})I_{B_1}||CH(P_{B_2})| = |CH(P_{B_1})||CH(P_{B_2})|.
    \end{equation*}    
    Previous work has established that $|CH(P_d)| = d^{d-2}V_\diamondsuit$~\citep{ASV21, S86}, where $V_\diamondsuit$ is the volume of the primitive paralleletope $\diamondsuit$ of the lattice $L = \mathbb{Z}^{d} \cap H$ where $H$ is the hyperplane $x_1 + ... + x_d = \frac{d(d-1)}{2}$. It remains to calculate $V_\diamondsuit$.
    
    \begin{claim}
    \label{claim:parallelotope}
        $V_\diamondsuit = \sqrt{d}$.
    \end{claim}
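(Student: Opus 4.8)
The plan is to identify the volume of $\diamondsuit$ with the covolume (lattice determinant) of the root-type lattice $L_{0} = \{x \in \mathbb{Z}^{d} : \sum_{i} x_{i} = 0\}$, and then pin down that covolume by a base-times-height comparison against the known covolume $1$ of $\mathbb{Z}^{d}$. First I would reduce from $L = \mathbb{Z}^{d} \cap H$ to $L_{0}$: since $\tfrac{d(d-1)}{2}$ is an integer, $H$ contains the lattice point $(0,1,\ldots,d-1)$, so $L$ is a translate of $L_{0} = \mathbb{Z}^{d} \cap H_{0}$ with $H_{0} = \{x : \sum_{i} x_{i} = 0\}$. Translation does not change the volume of a fundamental parallelotope, so $V_{\diamondsuit}$ equals the $(d-1)$-dimensional covolume of $L_{0}$ inside $H_{0}$ with the Euclidean metric induced from $\mathbb{R}^{d}$.

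Next I would establish the splitting $\mathbb{Z}^{d} = L_{0} \oplus \mathbb{Z} e_{1}$ as abelian groups, via the unique decomposition $x = \bigl(x - (\sum_{i} x_{i}) e_{1}\bigr) + (\sum_{i} x_{i}) e_{1}$ in which the first summand lies in $L_{0}$. Consequently, if $b_{1}, \ldots, b_{d-1}$ is a $\mathbb{Z}$-basis of $L_{0}$, then $b_{1}, \ldots, b_{d-1}, e_{1}$ is a $\mathbb{Z}$-basis of $\mathbb{Z}^{d}$, so the $d$-dimensional volume of the parallelepiped they span is $\det(\mathbb{Z}^{d}) = 1$. I would then compute this same volume as base times height: the base is the $(d-1)$-parallelepiped spanned by $b_{1}, \ldots, b_{d-1}$ in $H_{0}$, whose $(d-1)$-volume is exactly the covolume of $L_{0}$; the height is the distance from $e_{1}$ to $H_{0}$, which is $e_{1} \cdot \tfrac{1}{\sqrt{d}}(1,\ldots,1) = \tfrac{1}{\sqrt{d}}$ using the unit normal $\tfrac{1}{\sqrt{d}}(1,\ldots,1)$ to $H_{0}$. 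Equating the two expressions gives $1 = V_{\diamondsuit} \cdot \tfrac{1}{\sqrt{d}}$, i.e., $V_{\diamondsuit} = \sqrt{d}$.

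There is no substantive obstacle here; the only care needed is bookkeeping — matching the paper's notion of ``primitive parallelotope of the lattice'' to the lattice determinant, and verifying that the base-times-height factorization is legitimate because the component of $e_{1}$ lying inside $H_{0}$ does not affect the orthogonal (height) direction. Both the distance computation and the direct-sum splitting are elementary, so the proof is short.
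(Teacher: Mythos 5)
Your proof is correct, but it follows a genuinely different route from the paper's. The paper exhibits the explicit generating set $v_i = e_i - e_{i+1}$, $1 \leq i \leq d-1$, forms its Gram matrix (the tridiagonal matrix with $2$ on the diagonal and $-1$ on the off-diagonals), and proves $\det(M_\diamondsuit) = d$ by induction via expansion by minors, yielding $V_\diamondsuit = \sqrt{d}$. You instead avoid any determinant computation: after translating $L = \mathbb{Z}^d \cap H$ to $L_0 = \{x \in \mathbb{Z}^d : \sum_i x_i = 0\}$, you use the splitting $\mathbb{Z}^d = L_0 \oplus \mathbb{Z}e_1$ to augment a $\mathbb{Z}$-basis of $L_0$ by $e_1$ into a unimodular basis of $\mathbb{Z}^d$, and then factor the resulting unit volume as (covolume of $L_0$) $\times$ (distance from $e_1$ to $H_0$) $= V_\diamondsuit \cdot \tfrac{1}{\sqrt{d}}$. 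All the steps check out: the decomposition $x = \bigl(x - (\sum_i x_i)e_1\bigr) + (\sum_i x_i)e_1$ does give the direct sum, the augmented family is a $\mathbb{Z}$-basis (so its parallelepiped has volume $1$), the base-times-height factorization is legitimate because the $b_i$ span $H_0$ and only the normal component of $e_1$ contributes, and basis-independence of the covolume ties your statement to the paper's ``primitive parallelotope.'' Your argument is shorter and more conceptual (it is the standard index/covolume computation for the $A_{d-1}$ lattice), while the paper's buys an explicit generating set and a concrete Gram-matrix identity at the cost of a determinant induction; either suffices for the use made of $V_\diamondsuit$ in \Cref{lem:volume calculation}.
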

    \begin{proof}
        A primitive parallelotope of a lattice is a minimal collection of vectors that generates the lattice under addition. Pick any point of the lattice to be the origin. Any of the origin's closest neighbors in $L$ is reached from the origin by adding 1 to one coordinate and subtracting 1 from a different coordinate, as this preserves the sum of points required to stay in $H$. For $1 \leq i \leq d-1$, let $v_i$ consist of zeros with 1 at coordinate $i$ and $-1$ at coordinate $i+1$. Then $\{v_1, \ldots, v_{d-1}\}$ generates $L$, so we compute the volume of the resulting parallelotope.
        
        We use the general fact that the $m$-volume of an $m$-parallelotope embedded in $\mathbb{R}^n$ for $n \geq m$ is given by the square root of its Gram determinant, where the Gram determinant of a set of vectors $v_1, \ldots, v_m$ is the determinant of Gram matrix $M$, defined by $M_{i,j} = \langle v_i, v_j \rangle$. The Gram matrix $M_\diamondsuit$ associated with $\diamondsuit$ is a $(d-1)\times(d-1)$ matrix with 2s on the diagonal, -1s on the superdiagonal and subdiagonal, and 0s elsewhere.
        
        We show that $\det(M_\diamondsuit) = d$ by induction on $d$. We apply determinant expansion by minors. For $i \in [d]$, let $M_{\diamondsuit, \neg i}$ denote the $(d-1-i) \times (d-1-i)$ matrix consisting of $M_\diamondsuit$ the last $d-i$ rows and columns of $M_\diamondsuit$. Similarly, let $M_{\diamondsuit, \neg ij}$ denote $M_\diamondsuit$ with row $i$ and column $j$ removed. Applying expansion by minors twice, we get
        \begin{align*}
            \det(M_\diamondsuit) =&\ \sum_{j=1}^d (-1)^{1+j} M_{\diamondsuit, 1j} \det(M_{\diamondsuit,\neg 1j}) \\
            =&\ 2\det(M_{\diamondsuit, \neg 1}) + \det(M_{\diamondsuit, \neg 12}) \\
            =&\ 2\det(M_{\diamondsuit, \neg 1}) - \det(M_{\diamondsuit, \neg 2}) 
        \end{align*}
        Then by the inductive hypothesis, $\det(M_\diamondsuit) = 2(d-1) - (d-2) = d$. The base case $d=3$ has a $2 \times 2$ Gram matrix with determinant $2 \cdot 2 - (-1)(-1) = 3$.
    \end{proof}
    Thus $|CH(P_d)| = d^{d-2}V_{\diamondsuit} = d^{d-3/2}$. It follows that $F$ has volume $|B_1|^{|B_1|-3/2}|B_2|^{|B_2|-3/2}$. 
\end{proof}

\altitudeCalculation*
\begin{proof}
    First, $c(CH(P_d)) = \frac{d-1}{2}I_{[d]}$. Second,
    \begin{align*}
        c(F) =&\ c(CH(P_{B_1}) + (\min{T_1})I_{B_1}) + c(CH(P_{B_2})) \\
        =&\  \left(\frac{|B_1|-1}{2}\right)I_{B_1} + (d-|B_1|)I_{B_1} + \left(\frac{|B_2|-1}{2}\right)I_{B_2} \\
        =&\  \left(\frac{2d - |B_1|-1}{2}\right)I_{B_1} + \left(\frac{|B_2|-1}{2}\right)I_{B_2}.
    \end{align*}
    Let $w$ be the vector pointing from $c(CH(P_d))$ to $c(F)$. Then
    \begin{align*}
        w =&\ c(F) - c(CH(P_d)) \\
        =&\ \left(\frac{d-|B_1|}{2}\right)I_{B_1} + \left(\frac{|B_2|-d}{2}\right)I_{B_2} \\
        =&\ \left(\frac{|B_2|}{2}\right)I_{B_1} - \left(\frac{|B_1|}{2}\right)I_{B_2},
    \end{align*}
    and the length of $w$ is  $\frac{1}{2}\sqrt{|B_1||B_2|^{2} + |B_2||B_1|^{2}}$.

    Let $F_{B_1}$ be $F$ restricted to the coordinates in $B_1$. Write $B_1 = \{i_1,...,i_{|B_1|}\}$ in increasing order. For $1 \leq j \leq |B_1|-1$ let $v_{j} \in \mathbb{R}^{d}$ be the vector 1 at coordinate $i_j$, $-1$ at coordinate $i_{j+1}$, and 0 elsewhere. $V = \{v_1,...,v_{|B_1|-1}\}$ is linearly independent. Moreover, each $v_j$ is equal to a difference of adjacent vertices of $F_{B_1}$, so $v_j$ lies in the same $(|B_1|-1)$-dimensional subspace as $F_{B_1}$. It follows that $V$ is a basis for this subspace. Next, $v_j \cdot w = 0$ for all $j$, so $w$ is orthogonal to $F_{B_1}$. Similarly, $w$ is orthogonal to $F_{B_2}$, so $w$ is orthogonal to $F$.
\end{proof}

\simplexProductTriangulation*
\begin{proof}
    First, we will change basis so that $\Delta_x$ and $\Delta_y$ can be viewed as fundamental simplices (\Cref{def:fundamental_simplex}). Define the sequences $B_x = (x_{n} - x_{n-1},...,x_{2} - x_{1}, x_{1})$ and $B_y = (y_{m} - y_{m-1},...,y_{2} - y_{1}, y_{1})$. Then for $1 \leq i \leq n$ we can write $x_i$ as the sum of the last $i$ vectors in $B_x$. Equivalently, $x_i$ can be written in the $B_x$ basis as the vector which starts with $n-i$ zeros, is followed by $i$ ones, and ends with $m$ zeros, i.e. $\Delta_x$ is a fundamental simplex embedded in $V_x$. Similarly, we can write $\Delta_y$ in the $B_y$ basis as a fundamental simplex embedded in $V_y$. Then any point $p \in \Delta_x \oplus \Delta_y$ in the $B_x, B_y$ bases takes the form $p = (a_1,...,a_n) \oplus (b_{1},...,b_{m})$ where $a_i < a_{i+1}$ and $b_{i} < b_{i+1}$ for all $i$ (\Cref{lem:fundamental_simplex_increasing}).
    
    Note that when we write the direct sum $\Delta_x \oplus \Delta_y$, we are technically talking about an internal direct sum, so we can equivalently represent $p = (a_1,...,a_n) \oplus (b_{1},...,b_{m}) \in \Delta_x \oplus \Delta_y$ as $(a_1,...,a_n,b_1,...,b_m) \in \mathbb{R}^{n+m}$ in the ambient space. In the remainder of the proof, we will use the first representation of $p$ when we want to emphasize which coordinates of $p$ belong to each of $\Delta_x$ and $\Delta_y$, and we will use the second representation when we need to consider the relationship between all the coordinates together. Moreover, we can assume that $\{a_1,...,a_n,b_1,...,b_m\}$ contains no duplicates by excluding a set of points of measure zero, as in \Cref{lem:assumption}.
    
    We want to determine an equivalence relation on the points of $\Delta_x \oplus \Delta_y$ that will decompose it into equal volume simplices. Given $p = (a_1, \ldots, a_n) \oplus (b_1, \ldots, b_m)$ as in the preceding paragraph, let $p'$ be $(a_1, \ldots, a_n, b_1, \ldots, b_m)$ sorted in decreasing order. We define the type vector of $p$ to be the following vector in $\{t_x, t_y\}^{n+m}$: the $i^{th}$ position of the type vector of $p$ is $t_x$ if $p_i' = a_j$ for some $j$, and $t_y$ if $p_i' = b_j$ for some $j$.
    
    Similarly, the $n+m+1$ vertices of any $\Delta \in D$ can be written as $\{x_{f(0)} \oplus y_{g(0)},...,x_{f(n+m)}\oplus y_{g(n+m)}\}$, where $f$ and $g$ denote some interleaving of the form described in the lemma statement, so we define the type vector of $\Delta$: the $i^{th}$ position of the type vector of $\Delta$ is $t_x$ if $f(i) > f(i-1)$, and type $t_y$ if $g(i) > g(i-1)$. We use the following result.
    
    \begin{claim}
    \label{claim:type}
        Let $p \in \Delta_x \oplus \Delta_y$ and $\Delta \in D$. Then $p \in \Delta$ if and only if $p$ and $\Delta$ have the same type vectors.
    \end{claim}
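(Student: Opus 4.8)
The plan is to prove both implications by comparing a convex representation of $p$ over the vertices of a simplex $\Delta = \Delta_\pi \in D$ with the sorted order of $p$'s coordinates. Work in the $B_x, B_y$ bases, so $\Delta_x$ and $\Delta_y$ are fundamental simplices, write $p = (a_1,\dots,a_n)\oplus(b_1,\dots,b_m)$ with $0 < a_1 < \cdots < a_n < 1$ and $0 < b_1 < \cdots < b_m < 1$, and (per \Cref{lem:assumption}) assume the $n+m$ values $a_1,\dots,a_n,b_1,\dots,b_m$ are pairwise distinct; let $s_1 > \cdots > s_{n+m}$ be their decreasing sort, padded with $s_0 = 1$ and $s_{n+m+1} = 0$. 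Let $\pi$ have $x$-steps at positions $i_1 < \cdots < i_n$ and $y$-steps at positions $j_1 < \cdots < j_m$, with vertices $v_\ell = x_{f(\ell)}\oplus y_{g(\ell)}$. Since in the $B_x$ basis $x_{f(\ell)}$ has a $1$ in its $r$-th coordinate exactly when $f(\ell)\ge n-r+1$, and since $f$ is nondecreasing with $f(\ell)\ge n-r+1 \iff \ell \ge i_{n-r+1}$, any convex combination $q = \sum_{\ell} c_\ell v_\ell$ satisfies, writing $S_t := \sum_{\ell\ge t} c_\ell$, that its $r$-th $x$-coordinate is $S_{i_{n-r+1}}$ and (symmetrically) its $r$-th $y$-coordinate is $S_{j_{m-r+1}}$. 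In particular, $q = p$ iff $a_r = S_{i_{n-r+1}}$ and $b_r = S_{j_{m-r+1}}$ for all $r$.

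For the ``$\Leftarrow$'' direction, assume $\pi$ and $p$ have equal type vectors, i.e.\ $s_{i_k} = a_{n-k+1}$ and $s_{j_k} = b_{m-k+1}$ for all $k$. Put $c_\ell = s_\ell - s_{\ell+1}$; these are strictly positive because $1 = s_0 > s_1 > \cdots > s_{n+m} > s_{n+m+1} = 0$, and they sum to $s_0 - s_{n+m+1} = 1$. Telescoping gives $S_{i_{n-r+1}} = s_{i_{n-r+1}} = a_r$ and $S_{j_{m-r+1}} = s_{j_{m-r+1}} = b_r$, so $\sum_\ell c_\ell v_\ell = p$ by the criterion above; hence $p$ lies in $\Delta_\pi$ (in fact in its relative interior).

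For the ``$\Rightarrow$'' direction, suppose $p \in \Delta_\pi$ and fix any convex representation $p = \sum_\ell c_\ell v_\ell$. The sequence $(S_1,\dots,S_{n+m})$ is nonincreasing, since consecutive differences are $S_t - S_{t+1} = c_t \ge 0$. By the bookkeeping above it takes the value $a_{n-k+1}$ at position $i_k$ and $b_{m-k+1}$ at position $j_k$, and since $\{i_k\}_{k\le n}\sqcup\{j_k\}_{k\le m} = \{1,\dots,n+m\}$, the multiset $\{S_1,\dots,S_{n+m}\}$ equals $\{a_1,\dots,a_n,b_1,\dots,b_m\}$, a set of distinct reals. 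A nonincreasing sequence attaining $n+m$ distinct values can only be their decreasing sort, so $S_t = s_t$ for every $t$; therefore $S_t$ is one of the $a$'s exactly at the positions $\{i_1,\dots,i_n\}$, which are exactly the positions where $s_t$ is one of the $a$'s, i.e.\ exactly the $t_x$-positions of $p$'s type vector. Thus $\pi$'s type vector equals that of $p$.

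The main obstacle is the indexing in the opening paragraph: two off-by-one conventions interact --- the fundamental-simplex vertex $x_i$ carries its $1$'s on the \emph{last} $i$ coordinates, while the type vector sorts coordinates \emph{decreasingly} --- so one must be careful to get the dictionary $a_r \leftrightarrow S_{i_{n-r+1}}$ and the ``nondecreasing $f \Rightarrow$ suffix index set'' observation exactly right. Once that dictionary is pinned down, each direction is a short telescoping computation, and the only genuinely nonformal input is the uniqueness of a nonincreasing shuffle of two decreasing sequences of pairwise distinct reals.
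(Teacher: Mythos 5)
Your proof is correct and is essentially the paper's argument in more explicit index notation: your suffix sums $S_t=\sum_{\ell\ge t}c_\ell$ are exactly the coordinates $p_{h(t)}$ that the paper tracks via its nested supports, and your ``$\Leftarrow$'' weights $c_\ell=s_\ell-s_{\ell+1}$ (with the padding $s_0=1$, $s_{n+m+1}=0$) coincide with the paper's telescoping coefficients $d_j$ built from the decreasingly sorted coordinates. Both directions thus match the paper's proof, with your version merely making the dictionary between step positions and sorted coordinates, and the role of distinctness, more explicit.
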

    \begin{proof}
        We can view the vertices $\{x_{f(0)} \oplus y_{g(0)},...,x_{f(n+m)}\oplus y_{g(n+m)}\}$ of $\Delta$ as being iteratively constructed from left to right as follows. In the $B_x, B_y$ basis, vertex $x_{f(0)} \oplus y_{g(0)} = x_0 \oplus y_0 \in \mathbb{R}^{n}\oplus \mathbb{R}^{m}$ is written as $(0,...,0)\oplus(0,...,0)$. For $i > 0$, each subsequent vertex $x_{f(i)} \oplus y_{g(i)}$ is formed from the previous vertex $x_{f(i-1)} \oplus y_{g(i-1)}$ by first picking either the subvector corresponding to $\Delta_x$ (the first $n$ coordinates) or the subvector corresponding to $\Delta_y$ (the last $m$ coordinates), and then replacing the rightmost 0 by 1 in that subvector. For $i \in [n+m]$, define $h(i) \in [n+m]$ to be the coordinate that is replaced at step $i$ in the iterative construction of the vertices of $\Delta$. Define the support $S(h(i))$ of $h(i)$ to be the subset of the vertices of $\Delta$ where the value at $h(i)$ is 1. Then $S(h(1)) \supset S(h(2)) \supset ... \supset S(h(n+m))$.
            
        Any $p \in \Delta$ is some convex combination of the vertices of $\Delta$, so in the $B_x, B_y$ bases $p_{h(1)} \geq p_{h(2)} \geq ... \geq p_{h(n+m)}$. Let $t_{p}$ be the type vector of $p$, and let $t_{\Delta}$ be the type vector of $\Delta$. If $h(i) \in [n]$ then $t_{p_i} = t_x$ by the chain of inequalities above and $t_{\Delta_i} = t_x$ since the replacement of the rightmost 0 by 1 in the subvector corresponding to $\Delta_x$ at step $i$ is equivalent to $f(i) > f(i-1)$. Similarly, if $h(i) \in \{n+1,...,n+m\}$ then $t_{p_i} = t_y = t_{\Delta_i}$. So $t_{p} = t_{\Delta}$ for all $p \in \Delta$.
            
        Conversely, given any point $p \in \Delta_x \oplus \Delta_y$, let $\Delta \in D$ be the simplex with the same type vector as $p$. As before, we can write $p$ in the $B_x, B_y$ bases as $(a_1,...,a_n) \oplus (b_{1},...,b_{m})$ and sorted in descending order as $(p_1',...,p_{n+m}')$, and write the vertices of $\Delta$ as $\{x_{f(0)} \oplus y_{g(0)},...,x_{f(n+m)}\oplus y_{g(n+m)}\}$. Note that $0 \leq a_i \leq 1$ and $0 \leq b_i \leq 1$ for all $i$ since $\Delta_x$ and $\Delta_y$ are fundamental simplices in the $B_x,B_y$ basis, so $0 \leq p_{i}' \leq 1$ for all $i$. Define $d_0 = 1 - p_{1}', d_{n+m} = p_{n+m}'$, and for $1 \leq i \leq n+m-1$ define $d_i = p_{i}' - p_{i+1}'$. Since $0 \leq p_{i}' \leq 1$ for all $i$ and the $p_{i}$'s are descending, $0 \leq d_{j} \leq 1$ for all $j$. Then $p = \sum_{j=0}^{n+m}d_{j}(x_{f(j)}\oplus y_{g(j)})$ and since $\sum_{j=0}^{n+m}d_{j} = 1$ then $p$ is a convex combination of vertices of $\Delta$, so $p \in \Delta$.
    \end{proof}

    It follows that $D$ decomposes $\Delta_x \oplus \Delta_y$ into simplices. For any simplex in $D$, if we consider the matrix whose rows are its vertices, there is some permutation of its columns such that the resulting matrix's rows are the vertices of the fundamental simplex in $\mathbb{R}^{n+m}$, so every simplex in $D$ has the same volume.
\end{proof}

\begin{lemma}
\label{lem:uniform_CHPD}
    We can sample a point uniformly at random from $CH(P_d)$ in time $O(d^2\log(d))$.
\end{lemma}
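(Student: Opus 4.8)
The plan is to reduce sampling $CH(P_d)$ to sampling a single random simplex. I would build a recursive subroutine $\textsc{Samp}(B)$ that, for any $B\subseteq[d]$, returns a $(|B|-1)$-simplex $\tau$ from a fixed triangulation $\mathcal{T}(B)$ of the embedded permutohedron $CH(P_B)$, with $\tau$ drawn with probability exactly $|\tau|/|CH(P_B)|$. Applying \Cref{lem:simplex_sample} to sample uniformly inside the returned simplex then yields a uniform point of $CH(P_B)$: since almost every point $p$ lies in exactly one simplex of $\mathcal{T}(B)$, its density equals $\sum_{\tau\ni p}\frac{|\tau|}{|CH(P_B)|}\cdot\frac{1}{|\tau|}=\frac{1}{|CH(P_B)|}$. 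Running $\textsc{Samp}([d])$ then solves the problem.

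Next I would spell out the recursion. For $|B|\le 2$ the permutohedron is a point or a segment and is returned directly (designating one vertex as an ``apex''). For $|B|=m\ge 3$, I would first use \Cref{lem:simple_face_decomposition} to star-decompose $CH(P_B)$ into one pyramid over each $(m-2)$-face $F$, all sharing the apex $a=c(CH(P_B))$; \Cref{lem:altitude calculation} shows $a$ does not lie in the affine hull of any facet (the vector from $a$ to $c(F)$ is orthogonal to $F$ of positive length $h(F)$), so each pyramid has $(m-1)$-volume $\frac{1}{m-1}h(F)|F|$ and the construction is a genuine subdivision. By \Cref{lem:volume calculation}, the facets with $|B_1|=j$ form one congruence class of size $\binom{m}{j}$, all with base volume $V(j)=j^{j-3/2}(m-j)^{m-j-3/2}$ and height $h(j)=\frac{1}{2}\sqrt{j(m-j)^2+(m-j)j^2}$, so I would sample a class $j$ with probability proportional to $\binom{m}{j}h(j)V(j)$, then a uniform $j$-subset $B_1$ (with $B_2=B\setminus B_1$). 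To sample uniformly inside the chosen pyramid it suffices to sample a $(m-2)$-simplex $\tau_F$ of a triangulation of $F$ proportionally to its volume and return the cone of $\tau_F$ with $a$, since that cone has volume $\frac{h(j)}{m-1}|\tau_F|$. Because \Cref{lem:simple_face_decomposition} writes $F$ as the direct sum of $CH(P_{B_1})+(m-|B_1|)I_{B_1}$ and $CH(P_{B_2})$ in complementary coordinate subspaces, I would recursively draw $\sigma_1\gets\textsc{Samp}(B_1)$ and $\sigma_2\gets\textsc{Samp}(B_2)$, translate each so its apex vertex is the origin of its subspace, apply \Cref{lem:simplex_product_triangulation} to split $\sigma_1\oplus\sigma_2$ into equal-volume simplices, and pick one uniformly as $\tau_F$. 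This composes correctly because $|\sigma_1\oplus\sigma_2|=|\sigma_1|\,|\sigma_2|$ and the number of simplices in the lattice-path subdivision is $\binom{m-2}{|B_1|-1}$, independent of $\sigma_1,\sigma_2$; hence drawing $\sigma_1,\sigma_2$ proportionally to volume and then a uniform member of the subdivision yields $\tau_F$ proportionally to its $(m-2)$-volume, as needed, and coning to $a$ then yields a simplex of $\mathcal{T}(B)$ proportionally to its own volume.

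For correctness I would argue by induction on $|B|$, using the three structural lemmas above (star decomposition of $CH(P_B)$ into facet-pyramids, direct-sum decomposition of each facet into two subpermutohedra, equal-volume lattice-path subdivision of a direct sum of simplices) together with the standard facts that coning a triangulated facet from an interior apex triangulates the pyramid and that the facet-cones from an interior point tile the polytope, to conclude that (a) the simplices reachable by $\textsc{Samp}(B)$ form a triangulation $\mathcal{T}(B)$ of $CH(P_B)$ and (b) $\textsc{Samp}(B)$ returns each $\tau\in\mathcal{T}(B)$ with probability $|\tau|/|CH(P_B)|$. For the running time, the recursion tree has the property that a node labelled $m$ has two children whose labels sum to $m$, so the leaf labels sum to $d$ and there are $O(d)$ calls in all; each call on a label-$m$ node is dominated by computing the $m-1$ class weights --- the binomials incrementally in $O(m)$ field operations and the terms $j^{j-3/2}=j^{j-2}\sqrt{j}$ by repeated squaring in $O(m\log m)$ field operations --- plus $O(m)$-time steps to sample a subset, a uniform lattice path, and the simplex vertices, so $O(d\log d)$ per call; with the final $O(d\log d)$ call to \Cref{lem:simplex_sample} this totals $O(d^2\log d)$.

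I expect the main obstacle to be the bookkeeping behind the composition claim: one must check that the per-level distributions multiply to exactly volume-proportional sampling on the single triangulation $\mathcal{T}([d])$, which requires verifying that the facet star decomposition really is a subdivision (the role of \Cref{lem:altitude calculation}'s orthogonality and positive height), that all facets in a congruence class genuinely share base volume and height, and that the lattice-path count $\binom{m-2}{|B_1|-1}$ of \Cref{lem:simplex_product_triangulation} is constant over the constituent simplices, so that the uniform choice there introduces neither bias nor a need for reweighting.
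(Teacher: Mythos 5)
Your proposal is correct and follows essentially the same route as the paper's proof: a star decomposition of $CH(P_d)$ into facet-pyramids with apex $c(CH(P_d))$, facet selection via the congruence-class weights from \Cref{lem:volume calculation} and \Cref{lem:altitude calculation}, recursion into the two subpermutohedra given by \Cref{lem:simple_face_decomposition}, a uniform choice among the equal-volume simplices of \Cref{lem:simplex_product_triangulation}, and a final call to \Cref{lem:simplex_sample}, with the same $O(d^2\log d)$ accounting. Your write-up is in fact slightly more explicit than the paper's about the volume-proportional composition argument and the translation needed before applying \Cref{lem:simplex_product_triangulation}, but these are refinements of the same argument rather than a different one.
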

\begin{proof}
    First partition $CH(P_d)$ into pyramids whose bases are the $(d-2)$-dimensional faces and whose shared apex is $c(CH(P_d))$. By \Cref{lem:volume calculation} and \Cref{lem:altitude calculation} we know the $(d-2)$-volume $A$  of each base, their multiplicity, and the height of each altitude $h$, so we can sample a pyramid with weight proportional to its volume $\frac{Ah}{d}$.
    
    Explicitly, define equivalence classes of $(d-2)$-dimensional faces $\{F_1,...,F_{d-1}\}$ partitioned by congruence. Specifically, $F_j$ is the set of faces corresponding to a sequence of subsets $B_1, B_2$ with $|B_1| = j, |B_2| = d-j$. Then assign weight $w_j = M_jV_jH_j$ to each equivalence class $F_j$ where $M_j = \binom{d}{j}, V_j = j^{j-3/2}(d-j)^{(d-j)-3/2}, H_j = \frac{1}{2}\sqrt{(j)(d-j)^{2} + (d-j)j^{2}}$. Sample a class $F_j$ according to $w_j$. Then sample a particular member $F \in F_j$ by first drawing a random permutation $\sigma$ of $[d]$ and then setting $B_{1}$ to be the first $i$ elements of $\sigma$, and assigning $B_2 = [d] - B_1$, as in \Cref{lem:face_decomposition}. Then $F$ has direct sum decomposition $(CH(P_{B_1}) + (\min{T_1})I_{B_1}) \oplus CH(P_{B_2})$.

    Having sampled a pyramid, the next step is to decompose the pyramid into simplices. Recursively sample a simplex $\Delta_1$ with the appropriate probability from a star decomposition of $CH(P_{B_1})$ and sample a simplex $\Delta_2$ with the appropriate probability from a star decomposition of $CH(P_{B_2})$. By \Cref{lem:simplex_product_triangulation}, we can decompose $(\Delta_1 + (d-|B_1|)I_{B_1}) \oplus \Delta_2$ into equal volume simplices that are indexed by a type vector in $\{t_{\Delta_1}, t_{\Delta_2}\}^{|B_1|+|B_2|-2}$ where $t_{\Delta_1}$ appears $|B_1|-1$ times and $t_{\Delta_2}$ appears $|B_2|-1$ times. Uniformly sample a simplex $\Delta_3 \in (\Delta_1 + (d-|B_1|)I_{B_1} \oplus \Delta_2$. Then the pyramid $K$ with base $\Delta_3$ and apex $c(CH(P_d))$ is a simplex sampled from a star decomposition of $CH(P_d)$ with the appropriate probability. In the base case of $d=1$, a star decomposition of the single point set $CH(P_1) = P_1$ is itself, so we just return the point. To sample a point uniformly at random from $CH(P_d)$, we return a point uniformly sampled from the simplex $K$. 
    
    We now consider running time. Pre-computing all possible values of $\binom{d}{i}$ takes time $O(d^2)$. For each iteration, computing the $M_i$ (from the pre-computed binomials) and $H_i$ takes constant time. For $V_i$, it suffices to consider the time to compute $d^d$. Consider the binary expansion $d = b_0 + 2b_1 + 4b_2 + \ldots + 2^kb_k$ for bits $b_i$ and $k = \lfloor \log(d) \rfloor$. Then we can compute each successive $d^{b_0}, \ldots, d^{2^kb_k}$ using the previously computed term with a nonzero $b_i$ in a single pass of time $O(\log(d))$, and multiplying them together to compute $d^d$ takes another $\log(d)$ operations. It therefore takes $O(d\log(d))$ time overall to compute $\{w_1, \ldots, w_d\}$. Drawing a random permutation takes time $O(d)$, and this suffices to sample a pyramid. Once we have sampled the pair of simplices $\Delta_1$ and $\Delta_2$, uniformly sampling $\Delta_3$ corresponds to uniformly sampling a type vector in $\{t_{\Delta_1}, t_{\Delta_2}\}^{|B_1|+|B_2|-2}$ where $t_{\Delta_1}$ appears $|B_1|-1$ times and $t_{\Delta_2}$ appears $|B_2|-1$ times which costs the time it takes to pick a subset of $|B_1|-1$ indices from $[|B_1| + |B_2| - 2]$, which we can do by picking a random permutation of $[|B_1| + |B_2| - 2]$ and then picking the first $|B_1|-1$ indices, which costs $O(d)$. We recurse $O(d)$ times, so the overall time is $O(d^2\log(d))$.
\end{proof}

\begin{theorem}
\label{thm:vote}
    We can sample a point uniformly at random from the cylinder $C$ with bases $CH(P_d)$ and $-CH(P_d)$ in time $O(d^2\log(d))$.
\end{theorem}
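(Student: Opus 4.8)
The plan is to reduce uniform sampling from the cylinder $C$ to uniform sampling from $CH(P_d)$, which is already available via \Cref{lem:uniform_CHPD}, by observing that $C$ is a right prism over $CH(P_d)$. The first step is the elementary identity $-P_d = P_d - (d-1)I_{[d]}$: negating a permutation of $\{0,1,\dots,d-1\}$ and translating by $(d-1)I_{[d]}$ recovers the multiset $\{0,1,\dots,d-1\}$. Hence $-CH(P_d) = CH(P_d) - (d-1)I_{[d]}$, and taking convex hulls yields
\begin{equation*}
    C = \{\, z - u(d-1)I_{[d]} \;:\; z \in CH(P_d),\ u \in [0,1] \,\},
\end{equation*}
because any convex combination $\lambda a + (1-\lambda)\bigl(b' - (d-1)I_{[d]}\bigr)$ with $a, b' \in CH(P_d)$ equals $z - u(d-1)I_{[d]}$ for $z = \lambda a + (1-\lambda)b' \in CH(P_d)$ and $u = 1-\lambda$.

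Next I would check that $\Phi(z,u) = z - u(d-1)I_{[d]}$ is a bijection $CH(P_d)\times[0,1] \to C$ whose pushforward of (uniform on $CH(P_d)$) $\times$ (uniform on $[0,1]$) is uniform on $C$. Injectivity is immediate since $CH(P_d)$ lies in the hyperplane $\sum_i x_i = \binom{d}{2}$ while the translation shifts this sum by $d(d-1)$, so $\Phi(z,u)$ determines $u$ and then $z$. For the measure statement, the key point is that $I_{[d]}$ is orthogonal to the affine hull of $CH(P_d)$, so $\Phi$ is the restriction of an affine map with constant Jacobian: for measurable $A \subseteq CH(P_d)$ and $u \in [0,1]$ the $d$-dimensional volume of $\Phi(A\times[0,u])$ is $|A|_{d-1}\cdot u\cdot (d-1)\sqrt{d}$, a fixed scalar times the product measure of $A\times[0,u]$. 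Pushing the two uniform laws through $\Phi$ therefore gives the uniform law on $C$ (ignoring, as usual, a measure-zero overlap on boundaries).

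The resulting algorithm is then: sample $z$ uniformly from $CH(P_d)$ by \Cref{lem:uniform_CHPD}, sample $u$ uniformly from $[0,1]$, and return $z - u(d-1)I_{[d]}$. The dominant cost is the $O(d^2\log d)$ call to \Cref{lem:uniform_CHPD}; drawing $u$ is $O(1)$ and forming the output is $O(d)$, giving $O(d^2\log d)$ overall.

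I do not expect a real obstacle here: the substance of the theorem was already carried out in \Cref{lem:uniform_CHPD}, and the only thing requiring care is making the ``right prism'' reduction precise — specifically verifying that the translation direction $I_{[d]}$ is orthogonal to the affine span of $CH(P_d)$ so that the product-measure pushforward is genuinely uniform, and, if one wants to be fully rigorous about the bijection, discarding a measure-zero set as in \Cref{lem:assumption}.
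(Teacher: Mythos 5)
Your proposal is correct and is essentially the paper's own argument: the paper also samples $p$ uniformly from $CH(P_d)$ via \Cref{lem:uniform_CHPD} and then uniformly along the segment to its translate $p' = p - (d-1)I_{[d]}$, which is exactly your map $\Phi(z,u)$. Your extra verification that $I_{[d]}$ is orthogonal to the affine hull of $CH(P_d)$, so the product-measure pushforward is uniform, is a fuller justification of a step the paper states only implicitly, but the route is the same.
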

\begin{proof}
    As $-CH(P_d)$ is a reflection of $CH(P_d)$ across the hyperplane $x_1 + ... + x_d = 0$, the distance between a point $p \in CH(P_d)$ and its reflection $p' \in -CH(P_d)$ is constant. Explicitly, $p' = p - (d-1)I_{[d]}$. To sample uniformly from $C$, it suffices to uniformly sample a point $p \in CH(P_d)$, which we can do by \Cref{lem:uniform_CHPD}, and then uniformly sample a point on the line segment joining $p$ to $p'$. 
\end{proof}

\begin{algorithm}
    \begin{algorithmic}[1]
    \STATE {\bfseries Input:} Dimension $d$
    \IF{d = 1}
        \STATE return $\{(0)\}$
    \ENDIF
    \FOR{$j=1, \ldots, d-1$}
        \STATE Compute permutohedron face class weight $w_j = M_{j}V_{j}H_{j}$ as in the proof of \Cref{lem:uniform_CHPD}
    \ENDFOR
    \STATE Sample face class $j \propto w_j$
    \STATE Uniformly sample a random permutation $\sigma$ of $[d]$
    \STATE Let $B_1$ be the first $j$ elements of $\sigma$ and let $B_2 = [d] - B_1$
    \STATE Recursively call Algorithm 2 with input $|B_1|$ to sample $(j-1)$-simplex $\Delta_1 \in CH(P_{B_1})$
    \STATE Recursively call Algorithm 2 with input $|B_2|$ to sample $(d-j-1)$-simplex $\Delta_2 \in CH(P_{B_2})$
    \STATE Uniformly sample type vector $t$ in $\{t_{\Delta_1}, t_{\Delta_2}\}^{d-2}$ with $j-1$ instances of $t_{\Delta_1}$ and $d-j-1$ instances of $t_{\Delta_2}$
    \STATE Compute $(d-2)$-simplex $\Delta_3 \in (\Delta_1 + (d-j)I_{B_1}) \oplus \Delta_2$ corresponding to type vector $t$ as in \Cref{lem:simplex_product_triangulation} 
    \STATE Let $K$ be the $(d-1)$-simplex formed by appending $c(CH(P_d))$ to the list of vertices of $\Delta_3$
    \STATE Return $K$
    \caption{Vote Sampler}
    \label{alg:vote_sampler}
    \end{algorithmic}
\end{algorithm}

\subsection{Proofs For Vote Rejection Sampling}

\VoteEnclosing*
\begin{proof}
    The maximum $\ell_p$ norm of a point in $CH(P_d)$ is achieved at any of the vertices, which are permutations of $(0,1,\ldots, d-1)$. These have $\ell_p$ norm $\left(\sum_{j=0}^{d-1} j^p\right)^{1/p}$ for $p \in [1,\infty)$ and $\ell_\infty$ norm $d-1$.
\end{proof}

\voteRejectionSample*
\begin{proof}
    Recall from the analysis of \Cref{lem:volume calculation} that the cylinder $V$ has base $(d-1)$-volume $d^{d-3/2}$ and height $(d-1)\sqrt{d}$, for a total volume upper bounded by $d^d$. We split into two cases, depending on the relationship between $d$ and $p$, and show in each that the enclosing $\ell_p$ ball volume is much larger than $d^d$. Both cases start by applying \Cref{lem:lp_volume} and \Cref{lem:vote_enclosing} to get
    \begin{equation*}
        V_p^d = \frac{2^d\left(\sum_{j=0}^{d-1} j^p\right)^{d/p}\Gamma(1+\frac{1}{p})^d}{\Gamma(1+\frac{d}{p})}.
    \end{equation*}
    
    \underline{Case 1}: $d \leq p$. Then by \Cref{lem:lp_volume} and \Cref{lem:vote_enclosing},
    \begin{align*}
        V_p^d >&\ 2^d \cdot (d-1)^d \cdot 0.885^d \\
        =&\ (1.77(d-1))^d
    \end{align*}
    where the inequality uses the fact that $\Gamma(x) \geq 0.885$ and $0 < \Gamma(1+\frac{d}{p}) < 1$. The minimum enclosing $\ell_\infty$ ball has volume $(2(d-1))^d$. Note that $\frac{V}{V_{p}^{d}} \leq  1.77^{-d}(\frac{d}{d-1})^{d} \leq 4(1.77^{-d})$ where we have used that $(\frac{d}{d-1})^{d}$ is monotonically decreasing. Then it takes at least an expected $\frac{(1.77)^{d}}{4}$ samples to hit a success.
    
    \underline{Case 2}: $d > p$. Consider the Riemann sum
    \begin{align*}
        \lim_{d\rightarrow \infty} \frac{1}{d}\sum_{j=0}^{d-1}\left(\frac{j}{d}\right)^{p} = \int_{0}^{1} x^{p} dx = \frac{1}{p+1}
    \end{align*}
    
    Define 
    \begin{equation*}
        U = \frac{1}{d}\sum_{j=1}^{d}\left(\frac{j}{d}\right)^{p} \text{ and }
        L = \frac{1}{d}\sum_{j=0}^{d-1}\left(\frac{j}{d}\right)^{p} \text{ and }
        I = \int_{0}^{1} x^{p} dx = \frac{1}{p+1}.
    \end{equation*}
    
    Since $f(x) = x^{p}$ is convex on $x \in [0,1]$, the trapezoidal sum is an upper bound for the integral, i.e. $\frac{1}{2}(L + U) \geq I$. We also have $\frac{1}{2}(L - U) = -\frac{1}{2d}$. Summing the inequality and the equation, we get $L \geq I -\frac{1}{2d} = \frac{1}{p+1} - \frac{1}{2d} \geq \frac{1}{2(p+1)}$. This gives the lower bound $\sum_{j=0}^{d-1}j^{p} \geq d^{p+1}/[2(p+1)]$.
    
    We use the following bounds to analyze the $\Gamma$ terms in $V_p^d$.
    
    \begin{claim}[\cite{N23}]
        Let $x > 0$ and $\alpha = \sqrt{2\pi} \cdot x^{x-1/2}e^{-x}$. Then 
        \begin{equation*}
        \alpha < \Gamma(x) < \alpha \cdot \exp\left(\frac{1}{12x}\right).
        \end{equation*}
    \end{claim}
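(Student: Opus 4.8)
This is the standard sharpened form of Stirling's approximation, so the plan is essentially to invoke \cite{N23} (the same two-sided bound also appears in classical references on the Gamma function and follows from Binet's integral formulas). For completeness I would include a self-contained sketch. Write $\mu(x) = \log\Gamma(x) - \bigl(x-\tfrac12\bigr)\log x + x - \tfrac12\log(2\pi)$, so that $\Gamma(x) = \alpha\,e^{\mu(x)}$ with $\alpha = \sqrt{2\pi}\,x^{x-1/2}e^{-x}$; the claim is then exactly equivalent to the bound $0 < \mu(x) < \tfrac{1}{12x}$ for all $x > 0$, and one only needs to establish this.

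The key step is to derive Binet's second integral representation
\[
  \mu(x) = 2\int_0^\infty \frac{\arctan(t/x)}{e^{2\pi t}-1}\,dt, \qquad x>0.
\]
The cleanest route is to apply the Abel--Plana summation formula to $f(n) = \log(x+n)$ (equivalently, to take the Euler--Maclaurin expansion of $\sum_{n\ge 0}\log(x+n)$ and identify the remainder integral), and then pin down the additive constant $\tfrac12\log(2\pi)$ by matching a known value such as $\Gamma(1/2) = \sqrt{\pi}$ or via the Wallis product. Alternatively one starts from Malmsten's integral for $\log\Gamma$ and transforms it. This is the only genuinely technical ingredient.

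Once the representation is in hand the bounds are one line each. For $t,x>0$ we have $0 < \arctan(t/x) < t/x$ and $e^{2\pi t}-1 > 0$, so the integrand is positive, giving $\mu(x) > 0$; and
\[
  \mu(x) < \frac{2}{x}\int_0^\infty \frac{t}{e^{2\pi t}-1}\,dt = \frac{2}{x}\cdot\frac{1}{4\pi^2}\,\Gamma(2)\zeta(2) = \frac{2}{x}\cdot\frac{1}{24} = \frac{1}{12x},
\]
using the elementary evaluation $\int_0^\infty t/(e^{2\pi t}-1)\,dt = \zeta(2)/(4\pi^2) = 1/24$. Exponentiating $0 < \mu(x) < \tfrac{1}{12x}$ yields the claim. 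A variant instead uses Binet's first formula $\mu(x) = \int_0^\infty\bigl(\tfrac12-\tfrac1t+\tfrac1{e^t-1}\bigr)\tfrac{e^{-xt}}{t}\,dt$, after which the work shifts to the kernel inequality $0 < \tfrac12-\tfrac1t+\tfrac1{e^t-1} < \tfrac{t}{12}$ (i.e. $0 < \tfrac t2\coth\tfrac t2 - 1 < \tfrac{t^2}{12}$) and the same integration finishes.

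The main obstacle is thus not the estimates themselves but establishing one of Binet's integral formulas (or, in the first-formula version, verifying the kernel bound); both are classical but a little fiddly. Since the statement is quoted verbatim from \cite{N23}, in the paper this is a citation rather than a proof, and I would leave it that way.
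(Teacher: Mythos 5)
Your proposal is correct and matches the paper, which states this claim purely as a citation to \cite{N23} and gives no proof of its own. Your supplementary sketch via Binet's second formula (positivity of the integrand for the lower bound, $\arctan(t/x) < t/x$ together with $\int_0^\infty t/(e^{2\pi t}-1)\,dt = 1/24$ for the upper bound) is the standard correct derivation, but it is not needed here.
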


    Then by $d > p$,
    \begin{align*}
        \Gamma\left(1 + \frac{d}{p}\right) &= \frac{d}{p}\Gamma\left(\frac{d}{p}\right) \\ 
        &< e^{1/12}\left(\frac{d}{p}\right)\left(\frac{ p}{d}\right)^{1/2}\left(\frac{d}{pe}\right)^{d/p} \\
        &\leq e^{1/12}\left(\frac{d}{p}\right)\left(\frac{d}{pe}\right)^{d/p}.
    \end{align*}
    
    Finally, we lower bound $V_p^d$.
    \begin{align*}
        V_p^d &= \frac{2^d\left(\sum_{j=0}^{d-1} j^p\right)^{d/p}\Gamma(1+\frac{1}{p})^d}{\Gamma(1+\frac{d}{p})} \\
        &\geq 
        \frac{2^d\left[\left(\frac{1}{2(p+1)}\right)d^{p+1}\right]^{d/p}\Gamma(1+\frac{1}{p})^d}{\Gamma(1+\frac{d}{p})} \\
        &\geq 
        \frac{2^d\left[\left(\frac{1}{2(p+1)}\right)d^{p+1}\right]^{d/p}\Gamma(1+\frac{1}{p})^d}{e^{1/12}\left(\frac{d}{p}\right)\left(\frac{d}{pe}\right)^{\frac{d}{p}}}
    \end{align*}
    by our lower bound on $\sum_{j=0}^{d-1} j^p$ and upper bound on $\Gamma(1+\tfrac{d}{p})$, respectively. We continue 
    \begin{align*}
        \frac{2^d\left[\left(\frac{1}{2(p+1)}\right)d^{p+1}\right]^{d/p}\Gamma(1+\frac{1}{p})^d}{e^{1/12}\left(\frac{d}{p}\right)\left(\frac{d}{pe}\right)^{\frac{d}{p}}} &\geq 
        e^{-1/12}2^{d}d^{d}\left(\frac{p}{d}\right)(pe)^{d/p}\left(\frac{1}{2(p+1)}\right)^{d/p}\Gamma\left(1+\frac{1}{p}\right)^d \\
        &\geq 
        e^{-1/12}2^{d}d^{d}\left(\frac{p}{d}\right)e^{d/p}\left(\frac{p}{2(p+1)}\right)^{d/p}\Gamma\left(1+\frac{1}{p}\right)^d \\
        &\geq
        e^{-1/12}2^{d}d^{d}\left(\frac{p}{d}\right)e^{d/p}\left(\frac{1}{4}\right)^{d/p}(0.885)^d \\
        &\geq
        e^{-1/12}2^{d}d^{d}\left(\frac{p}{d}\right)(0.679)^{d/p}(0.885)^d \\
        &\geq
        e^{-1/12}2^{d}d^{d}\left(\frac{1}{d}\right)(0.679)^{d}(0.885)^d \\
        &\geq
        e^{-1/12}(1.2)^{d}d^{d-1} \\
        &\geq
        (1.2d)^{d-1}
    \end{align*}
    Then $\frac{V}{V_{p}^{d}} \leq d(1.2)^{-d+1}$ so that it takes an expected $\frac{(1.2)^{d-1}}{d}$ samples before hitting a success.
\end{proof}

\subsection{Proofs For Vote Ellipse}
Since proofs that the minimum ellipse of $\voteball$ is origin-centered, unique, and has the same axis directions as the minimum ellipse of $\countball$, we need only solve its program.
\voteEllipseTheorem*
\begin{proof}
    Let $e_{w_1} = \frac{1}{\sqrt{d}}(1,...,1)$, and let $e_{w_2} = \frac{1}{\sqrt{2}}(-1, 1, 0,...,0)$. Extend $\{e_{w_1}, e_{w_2}\}$ to a full orthonormal basis $B = \{e_{w_1},...,e_{w_d}\}$. Define $w_{1} = (\frac{d-1}{2},...,\frac{d-1}{2})$ and
    \begin{equation*}
        w_2 = (0,1,...,d-1) - w_1 = \left(-\frac{d-1}{2}, -\frac{d-3}{2},...,\frac{d-1}{2}\right)
    \end{equation*}
    so $\|w_{1}\|_2 = \frac{(d-1)\sqrt{d}}{2}$ and
    \begin{align*}
        \|w_2\|_2 =&\ \sqrt{\sum_{i=0}^{d-1} \left[i - \frac{d-1}{2}\right]^2} \\
        =&\ \sqrt{\sum_{i=0}^{d-1} \left[i^2 - i(d-1) + \frac{(d-1)^2}{4}\right]} \\
        =&\ \sqrt{\sum_{i=0}^{d-1} i^2 - (d-1)\sum_{i=0}^{d-1} i + \frac{d(d-1)^2}{4}} \\
        =&\ \sqrt{\frac{d(d-1)(2d-1)}{6} - \frac{d(d-1)^2}{2} + \frac{d(d-1)^2}{4}} \\
        =&\ \sqrt{\frac{d(d-1)}{12} \cdot [2(2d-1) - 6(d-1) +  3(d-1)]} \\
        =&\ \sqrt{\frac{d(d^2-1)}{12}}.
    \end{align*}
    Note that \Cref{lem:centering_lemma} and \Cref{lem:axes_directions}  hold for the cylinder of $CH(P_d)$ because it is symmetric about its center and contains all the symmetries that $T^{d}$ contains.
    We can rotate the ellipse so that it intersects $\|w_1\|_2e_{w_1} + \|w_2\|_2e_{w_2}$ since every vertex of $CH(P_d)$ contacts $E$. This point is written as $(\|w_1\|_2,\|w_2\|_2,0,...,0)$ in the $B$ basis.
    
    Consider the program whose objective function is $f(a_1, a_2) = a_{1}^{2} + (d-1)a_{2}^{2}$, and whose constraint in the $B$ basis can be written as $g(a_1, a_2) = \frac{\|w_1\|_2^{2}}{a_{1}^{2}} + \frac{\|w_2\|_2^{2}}{a_{2}^{2}} - 1 = 0$. This program can be solved via Lagrange multipliers and there is a unique solution. 
    
    Define the Lagrangian $\mathcal{L}(a_1, a_2, \lambda) = f(a_1, a_2) + \lambda g(a_1, a_2)$. Any optimal point of $\mathcal{L}$ satisfies that $\nabla\mathcal{L} = 0$. Following the same calculation as in \Cref{thm:count_ellipse} with $w_i$ in place of $v_i$,
    \begin{align*}
         a_1 &= \left(\lambda\|w_1\|_2^{2}\right)^{1/4} \\
         a_2 &= \left(\frac{\lambda\|w_2\|_2^{2}}{d-1}\right)^{1/4} \\ \lambda &= (\|w_1\|_2 + \|w_2\|_2\sqrt{d-1})^{2}
    \end{align*}
    and expressions for $a_1$ and $a_2$ in terms of $d$ follow by substituting the closed form for $\lambda$.
\end{proof}
\section{Parallelized Elliptic Gaussian Noise}
\label{sec:appendix_parallel}
We want to sample from a random ellipse $RE$ (see \Cref{def:random_ellipse}) in a parallelized manner.
\begin{lemma}
    There is a parallelized algorithm to sample a point uniformly from the random ellipse RE in parallel runtime $O(\log(d))$.
\end{lemma}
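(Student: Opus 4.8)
The plan is to unwind the reduction of \Cref{def:random_ellipse}, which expresses a uniform sample from $RE$ (with $R \sim \chi_d$) as $C^{T}DX$ for $X \sim \mathcal{N}(0, I_d)$, and to observe that every step of this construction is either embarrassingly parallel with constant depth or a single logarithmic-depth reduction. Throughout we work in the standard parallel model with $\mathrm{poly}(d)$ processors in which one field operation --- as well as one univariate Gaussian draw --- costs $O(1)$ time, in the spirit of the complexity conventions in \Cref{subsec:prelims_k_norm}.

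Concretely, I would proceed in three steps. (i) Sample $X \sim \mathcal{N}(0, I_d)$ by assigning processor $i$ to draw $X_i \sim \mathcal{N}(0,1)$ independently; this has depth $O(1)$. (ii) Apply the diagonal matrix $D$ by having processor $i$ overwrite $X_i$ with $a_i X_i$; again depth $O(1)$. (iii) Apply $C^{T}$: the $j$th output coordinate is $\sum_{i=1}^{d} C_{ij}\, a_i X_i$, and each such $d$-term sum is evaluated by a balanced binary addition tree of depth $\lceil \log_2 d \rceil$, with the $d$ output coordinates handled simultaneously. Since (iii) is the only superconstant-depth step, the total parallel runtime is $O(\log d)$; rescaling the result by $1/\sqrt{2\rho}$ to obtain the noise of \Cref{lem:elliptic_gaussian} adds only $O(1)$ further depth. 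For the ellipses actually used in this paper, step (iii) simplifies further: by \Cref{lem:axes_directions} and \Cref{lem:vote_axes_directions}, the minimum ellipse of $\countball$ (resp.\ $\voteball$) has one distinguished axis along $u := \tfrac{1}{\sqrt d}(1,\dots,1)$ with all other axis lengths equal to a common value $a_2$, so $MM^{T} = a_2^{2} I_d + (a_1^{2}-a_2^{2})\,uu^{T}$ and a sample may be written as $a_2 X + (a_1 - a_2)\langle u, X\rangle u$, whose only nontrivial subcomputation is the single inner product $\langle u, X\rangle = \tfrac{1}{\sqrt d}\sum_{i=1}^{d} X_i$ --- again a depth-$O(\log d)$ reduction.

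The only genuine ``obstacle'' here is bookkeeping: confirming that a dense $d \times d$ matrix--vector product (or, in the structured case, one length-$d$ inner product) has parallel depth $O(\log d)$, which is the classical balanced-summation-tree fact, and then checking that nothing else --- Gaussian sampling, diagonal scaling, the $O(1)$-sparse combination in the structured case, or the $1/\sqrt{2\rho}$ rescaling --- exceeds constant depth. Collecting these observations, together with the correctness of the $C^{T}DX$ representation already established in the proof of \Cref{def:random_ellipse}, completes the argument.
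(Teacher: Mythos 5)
Your proposal is correct and, in its structured-case form, is essentially the paper's own proof: the paper's per-worker update $Z_j = a_2 X_j + s\left(-1 + \frac{a_1}{a_2}\right)$ with $s = \frac{1}{d}\sum_{i} a_2 X_i$ is exactly your $a_2 X + (a_1 - a_2)\langle u, X\rangle u$ written coordinatewise, and in both arguments the single $O(\log d)$-depth reduction (the projection onto the $(1,\ldots,1)$ direction) is the only superconstant-depth step, with correctness inherited from the $C^{T}DX$ representation of \Cref{def:random_ellipse}. Your additional general route via a dense $C^{T}D$ matrix--vector product is a harmless generalization requiring $O(d^2)$ processors, whereas the paper confines itself to the one-distinguished-axis ellipses of \Cref{lem:axes_directions} and \Cref{lem:vote_axes_directions}, for which your simplified computation is precisely the argument given.
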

\begin{proof}
    Let $W_1,...,W_d$ be parallel workers. Let $M$ be the central manager. In the following pseudo-code, the for loops over the workers are done in parallel.

\begin{algorithm}
    \begin{algorithmic}[1]
    \STATE {\bfseries Input:} Dimension $d$, $\ell_0$ bound $k$, axis lengths $a_1$ and $a_2$ 
    \FOR{$j=1, \ldots, d$}
        \STATE Worker $W_j$ samples $X_j \sim \mathcal{N}(0,1)$
    \ENDFOR
    \STATE Manager $M$ computes $s = \frac{1}{d}\sum_{j=1}^{d}a_{2}X_j$
    \STATE Manager $M$ distributes a copy of $s$ to each worker $W_j$
    \FOR{$j=1, \ldots, d$}
        \STATE Worker $W_j$ computes $Z_j = a_{2}X_j + s(-1 + \frac{a_1}{a_2})$
    \ENDFOR
    \RETURN $Z$
    
    \caption{Parallelized Ellipse Gaussian Noise Sampler}
    \label{alg:parallelized_ellipse_gaussian_noise_sampler}
    \end{algorithmic}
\end{algorithm}

At a high level, the strategy will be to:
\begin{enumerate}
    \item Generate a sample from $\mathcal{N}(0,I_d)$ centered at the origin.
    \item Scale it by the axis length $a_2$. This step will scale all the directions among $\{v_2,...,v_d\}$ correctly but will scale the direction $v_1 = (1,...,1)$ incorrectly.
    \item Correct scaling in the $v_1$ direction.
\end{enumerate}
The rest of the proof verifies that this produces the appropriate $Z$. For step 1, let $X \sim \mathcal{N}(0, I_d) =  (\mathcal{N}(0,1),...,\mathcal{N}(0,1))$. We first let worker $W_j$ generate $X_j \sim \mathcal{N}(0,1)$ in parallel runtime $O(1)$. Write $X = RY$ where $R \sim \chi_{d}$ and $Y$ is a uniform sample of the origin centered unit sphere.

For step 2, each worker $W_j$ will compute $a_{2}X_{j}$. Then $a_{2}X$ is a uniform sample from the distribution $a_{2}RY$. At this point, each of the directions in $\{v_2,...,v_d\}$ have been scaled by $a_2$.

In step 3, the component of $a_{2}X$ in the $v_1 = (1,...,1)$ direction is given by $s = \frac{1}{d}\sum_{j=1}^{d}a_{2}X_{j}$ which can be computed by manager $M$ by a reduce in parallel runtime $O(\log(d))$. The correction to $a_{2}X$ to account for the proper $a_{1}$ length would then be $a_{2}X -s(1,..,1) + s(\frac{a_1}{a_2})(1,...,1)$. To compute this in a parallel way, manager $M$ sends $s$ to each worker $W_j$ who takes the result of step 2 and computes $Z_{j} = a_{2}X_{j} + s(-1 + \frac{a_1}{a_2}) = R\left(a_{2}Y_{j} + \frac{s}{R}(-1 + \frac{a_1}{a_2})\right)$. Since $a_{2}Y_{j} + \frac{s}{R}(-1 + \frac{a_1}{a_2})$ is a uniform sample from $E$, $Z$ is a uniform sample from $RE$.
\end{proof}

\end{document}